\keywords{proof nets, deep inference, pomset logic, system BV, cographs,
  dicographs, series-parallel orders, relation webs}
\tikzstyle{vertex}=[circle,fill=black,minimum size=4pt,inner sep=0pt]
\tikzstyle{bigvertex}=[circle,draw,thick,fill=black!5,minimum size=16pt,inner sep=1pt]
\tikzstyle{bigvertext}=[draw, ,thick,fill=black!5,regular polygon,regular polygon sides = 3,shape border rotate=180,inner sep=-10pt]
\tikzstyle{matching edge}=[blue, ultra thick]
\tikzstyle{non matching edge}=[red]
\definecolor{lavenderindigo}{rgb}{0.58, 0.34, 0.92}
\definecolor{amber}{rgb}{1.0, 0.75, 0.0}
\def\minus{-}
\def\cL{{\mathcal L}}
\def\cV{{\mathcal V}}
\def\Nat{\mathbb{N}}
\def\SKS{\mathsf{SKS}}
\def\MLL{\mathsf{MLL}}
\def\set#1{\{#1\}}
\def\bigset#1{\big\{#1\big\}}
\def\cons#1{\{#1\}}
\def\conhole      {\cons{\enspace}}%
\def\tuple#1{\langle#1\rangle}
\def\grammareq {\mathrel{\raise.4pt\hbox{::}{=}}}%
\newcommand{\dotto}[1][]{\mathrel{\!\xy\ar@{.>}^-{#1}(5,0)\endxy\!}}
\newcommand{\solto}[1][]{\mathrel{\!\xy\ar@{->}^-{#1}(5,0)\endxy\!}}
\newcommand{\longsolto}[1][]{\mathrel{\!\xy\ar@{->}^-{#1}(11,0)\endxy\!}}
\newcommand{\longdotto}[1][]{\mathrel{\!\xy\ar@{.>}^-{#1}(11,0)\endxy\!}}
\newcommand{\xldotto}[2][]{\mathrel{\!\xy\ar@{.>}^-{#1}(#2,0)\endxy\!}}
\def\rr{\mathsf{r}}
\def\swir{\mathsf{s}}
\def\medr{\mathsf{m}}
\def\MLL{\mathsf{MLL}}
\newcommand{\ird}{\mathsf{i}\mathord{\downarrow}}
\newcommand{\iru}{\mathsf{i}\mathord{\uparrow}}
\newcommand{\seqrd}{\mathsf{q}\mathord{\downarrow}}
\newcommand{\seqru}{\mathsf{q}\mathord{\uparrow}}
\def\conldel {\{}%
\def\conrdel {\}}%
\def\lrgldel {\mathchoice{(}{(}{\langle}{\langle}}%
\def\lrgrdel {\mathchoice{)}{)}{\rangle}{\rangle}}%
\def\aprldel {\mathchoice
    {\mathopen {\setbox0=\hbox{$\displaystyle     \lrgldel$}\hbox to\wd0
                         {\hfil$\displaystyle     (       $\hfil}}}%
    {\mathopen {\setbox0=\hbox{$\textstyle        \lrgldel$}\hbox to\wd0
                         {\hfil$\textstyle        (        $\hfil}}}%
    {\mathopen {\setbox0=\hbox{$\scriptstyle      \lrgldel$}\hbox to\wd0
                         {\hfil$\scriptstyle      (        $\hfil}}}%
    {\mathopen {\setbox0=\hbox{$\scriptscriptstyle\lrgldel$}\hbox to\wd0
                         {\hfil$\scriptscriptstyle(        $\hfil}}}}%
\def\aprrdel {\mathchoice
    {\mathclose{\setbox0=\hbox{$\displaystyle     \lrgrdel$}\hbox to\wd0
                         {\hfil$\displaystyle     )       $\hfil}}}%
    {\mathclose{\setbox0=\hbox{$\textstyle        \lrgrdel$}\hbox to\wd0
                         {\hfil$\textstyle        )        $\hfil}}}%
    {\mathclose{\setbox0=\hbox{$\scriptstyle      \lrgrdel$}\hbox to\wd0
                         {\hfil$\scriptstyle      )        $\hfil}}}%
    {\mathclose{\setbox0=\hbox{$\scriptscriptstyle\lrgrdel$}\hbox to\wd0
                         {\hfil$\scriptscriptstyle)        $\hfil}}}}%
\def\seqldel {\mathchoice
    {\mathopen {\setbox0=\hbox{$\displaystyle     \lrgldel$}\hbox to\wd0
                         {\hfil$\displaystyle     \langle  $\hfil}}}%
    {\mathopen {\setbox0=\hbox{$\textstyle        \lrgldel$}\hbox to\wd0
                         {\hfil$\textstyle        \langle  $\hfil}}}%
    {\mathopen {\setbox0=\hbox{$\scriptstyle      \lrgldel$}\hbox to\wd0
                         {\hfil$\scriptstyle      \langle  $\hfil}}}%
    {\mathopen {\setbox0=\hbox{$\scriptscriptstyle\lrgldel$}\hbox to\wd0
                         {\hfil$\scriptscriptstyle\langle  $\hfil}}}}%
\def\seqrdel {\mathchoice
    {\mathclose{\setbox0=\hbox{$\displaystyle     \lrgrdel$}\hbox to\wd0
                         {\hfil$\displaystyle     \rangle  $\hfil}}}%
    {\mathclose{\setbox0=\hbox{$\textstyle        \lrgrdel$}\hbox to\wd0
                         {\hfil$\textstyle        \rangle  $\hfil}}}%
    {\mathclose{\setbox0=\hbox{$\scriptstyle      \lrgrdel$}\hbox to\wd0
                         {\hfil$\scriptstyle      \rangle  $\hfil}}}%
    {\mathclose{\setbox0=\hbox{$\scriptscriptstyle\lrgrdel$}\hbox to\wd0
                         {\hfil$\scriptscriptstyle\rangle  $\hfil}}}}%
\def\parldel {\mathchoice
    {\mathopen {\setbox0=\hbox{$\displaystyle     \lrgldel$}\hbox to\wd0
                         {\hfil$\displaystyle     [       $\hfil}}}%
    {\mathopen {\setbox0=\hbox{$\textstyle        \lrgldel$}\hbox to\wd0
                         {\hfil$\textstyle        [        $\hfil}}}%
    {\mathopen {\setbox0=\hbox{$\scriptstyle      \lrgldel$}\hbox to\wd0
                         {\hfil$\scriptstyle      [        $\hfil}}}%
    {\mathopen {\setbox0=\hbox{$\scriptscriptstyle\lrgldel$}\hbox to\wd0
                         {\hfil$\scriptscriptstyle[        $\hfil}}}}%
\def\parrdel {\mathchoice
    {\mathclose{\setbox0=\hbox{$\displaystyle     \lrgrdel$}\hbox to\wd0
                         {\hfil$\displaystyle     ]       $\hfil}}}%
    {\mathclose{\setbox0=\hbox{$\textstyle        \lrgrdel$}\hbox to\wd0
                         {\hfil$\textstyle        ]        $\hfil}}}%
    {\mathclose{\setbox0=\hbox{$\scriptstyle      \lrgrdel$}\hbox to\wd0
                         {\hfil$\scriptstyle      ]        $\hfil}}}%
    {\mathclose{\setbox0=\hbox{$\scriptscriptstyle\lrgrdel$}\hbox to\wd0
                         {\hfil$\scriptscriptstyle]        $\hfil}}}}%
\def\eightpoint{\small}                         
\def\pluldel {\mathchoice
   {\mathopen {\setbox0=\hbox{$\displaystyle     \lrgldel$}\hbox to\wd0
                        {\hfil$\displaystyle     [       $\hfil}%
                        \kern-\wd0\hbox to\wd0
                        {\hss$\vcenter{\hbox{\eightpoint$\scriptscriptstyle\bullet$}}$\hss}}}%
   {\mathopen {\setbox0=\hbox{$\textstyle        \lrgldel$}\hbox to\wd0
                        {\hfil$\textstyle        [       $\hfil}%
                        \kern-\wd0\hbox to\wd0
                        {\hss$\vcenter{\hbox{\eightpoint$\scriptscriptstyle\bullet$}}$\hss}}}%
   {\mathopen {\setbox0=\hbox{$\scriptstyle      \lrgldel$}\hbox to\wd0
                        {\hfil$\scriptstyle      [       $\hfil}%
                        \kern-\wd0\hbox to\wd0
                        {\hss$\raise.1ex\hbox{\eightpoint$\scriptscriptstyle\bullet$}$\hss}}}%
   {\mathopen {\setbox0=\hbox{$\scriptscriptstyle\lrgldel$}\hbox to\wd0
                        {\hfil$\scriptscriptstyle[       $\hfil}%
                        \kern-\wd0\hbox to\wd0
                        {\hss$\raise.03ex\hbox{\eightpoint$\scriptscriptstyle\bullet$}$\hss}}}}%
\def\plurdel {\mathchoice
   {\mathclose{\setbox0=\hbox{$\displaystyle     \lrgldel$}\hbox to\wd0
                        {\hfil$\displaystyle     ]       $\hfil}%
                        \kern-\wd0\hbox to\wd0
                        {\hss$\vcenter{\hbox{\eightpoint$\scriptscriptstyle\bullet$}}$\hss}}}%
   {\mathclose{\setbox0=\hbox{$\textstyle        \lrgldel$}\hbox to\wd0
                        {\hfil$\textstyle        ]       $\hfil}%
                        \kern-\wd0\hbox to\wd0
                        {\hss$\vcenter{\hbox{\eightpoint$\scriptscriptstyle\bullet$}}$\hss}}}%
   {\mathclose{\setbox0=\hbox{$\scriptstyle      \lrgldel$}\hbox to\wd0
                        {\hfil$\scriptstyle      ]       $\hfil}%
                        \kern-\wd0\hbox to\wd0
                        {\hss$\raise.1ex\hbox{\eightpoint$\scriptscriptstyle\bullet$}$\hss}}}%
   {\mathclose{\setbox0=\hbox{$\scriptscriptstyle\lrgldel$}\hbox to\wd0
                        {\hfil$\scriptscriptstyle]       $\hfil}%
                        \kern-\wd0\hbox to\wd0
                        {\hss$\raise.03ex\hbox{\eightpoint$\scriptscriptstyle\bullet$}$\hss}}}}%
\def\witldel {\mathchoice
   {\mathopen {\setbox0=\hbox{$\displaystyle     \lrgldel$}\hbox to\wd0
                        {\hfil$\displaystyle     (       $\hfil}%
                        \kern-\wd0\hbox to\wd0
                        {\hss$\vcenter{\hbox{\eightpoint$\scriptscriptstyle\bullet\mkern3.2mu$}}$\hss}}}%
   {\mathopen {\setbox0=\hbox{$\textstyle        \lrgldel$}\hbox to\wd0
                        {\hfil$\textstyle        (       $\hfil}%
                        \kern-\wd0\hbox to\wd0
                        {\hss$\vcenter{\hbox{\eightpoint$\scriptscriptstyle\bullet\mkern3.2mu$}}$\hss}}}%
   {\mathopen {\setbox0=\hbox{$\scriptstyle      \lrgldel$}\hbox to\wd0
                        {\hfil$\scriptstyle      (       $\hfil}%
                        \kern-\wd0\hbox to\wd0
                        {\hss$\raise.1ex\hbox{\eightpoint$\scriptscriptstyle\bullet\mkern3.2mu$}$\hss}}}%
   {\mathopen {\setbox0=\hbox{$\scriptscriptstyle\lrgldel$}\hbox to\wd0
                        {\hfil$\scriptscriptstyle(       $\hfil}%
                        \kern-\wd0\hbox to\wd0
                        {\hss$\raise.03ex\hbox{\eightpoint$\scriptscriptstyle\bullet\mkern3.2mu$}$\hss}}}}%
\def\witrdel {\mathchoice
   {\mathclose{\setbox0=\hbox{$\displaystyle     \lrgldel$}\hbox to\wd0
                        {\hfil$\displaystyle     )       $\hfil}%
                        \kern-\wd0\hbox to\wd0
                        {\hss$\vcenter{\hbox{\eightpoint$\scriptscriptstyle\mkern3.2mu\bullet$}}$\hss}}}%
   {\mathclose{\setbox0=\hbox{$\textstyle        \lrgldel$}\hbox to\wd0
                        {\hfil$\textstyle        )       $\hfil}%
                        \kern-\wd0\hbox to\wd0
                        {\hss$\vcenter{\hbox{\eightpoint$\scriptscriptstyle\mkern3.2mu\bullet$}}$\hss}}}%
   {\mathclose{\setbox0=\hbox{$\scriptstyle      \lrgldel$}\hbox to\wd0
                        {\hfil$\scriptstyle      )       $\hfil}%
                        \kern-\wd0\hbox to\wd0
                        {\hss$\raise.1ex\hbox{\eightpoint$\scriptscriptstyle\mkern3.2mu\bullet$}$\hss}}}%
   {\mathclose{\setbox0=\hbox{$\scriptscriptstyle\lrgldel$}\hbox to\wd0
                        {\hfil$\scriptscriptstyle)       $\hfil}%
                        \kern-\wd0\hbox to\wd0
                        {\hss$\raise.03ex\hbox{\eightpoint$\scriptscriptstyle\mkern3.2mu\bullet$}$\hss}}}}%
\newbox\ldelbox
\newbox\rdelbox
\def\aprs #1{\aprldel #1\aprrdel}%
\def\pars #1{\parldel #1\parrdel}%
\def\seqs #1{\seqldel #1\seqrdel}%
\def\cons #1{\conldel #1\conrdel}%
\def\ltens{\mathop{\varotimes}}
\def\lpar{\mathop\bindnasrepma}
\def\lneg{^\bot}
\def\lseq{\mathop\vartriangleleft}
\def\quand {\quad\mbox{and}\quad}%
\def\qquand {\qquad\mbox{and}\qquad}%
\def\quand {\quad\mbox{and}\quad}%
\def\qquand {\qquad\mbox{and}\qquad}%
\def\qquiff {\qquad\mbox{iff}\qquad}%
\def\clap#1{\hbox to 0pt{\hss#1\hss}}
\def\sqlap#1{\hbox to .5em{\hss#1\hss}}
\def\qlap#1{\hbox to 1em{\hss#1\hss}}
\def\qqlap#1{\hbox to 2em{\hss#1\hss}}
\def\qqqlap#1{\hbox to 3em{\hss#1\hss}}
\def\qqqqlap#1{\hbox to 4em{\hss#1\hss}}
\def\qqqqqlap#1{\hbox to 5em{\hss#1\hss}}
\def\qqqqqqlap#1{\hbox to 6em{\hss#1\hss}}
\def\qqqqqqqlap#1{\hbox to 7em{\hss#1\hss}}
\def\qqqqqqqqlap#1{\hbox to 8em{\hss#1\hss}}
\def\qqqqqqqqqlap#1{\hbox to 9em{\hss#1\hss}}
\newcommand{\wlap}[2][10ex]{\hbox to#1{\hss#2\hss}}
\newcommand{\wlapm}[2][10ex]{\hbox to#1{\hss$#2$\hss}}
\def\rlapm#1{\hbox to 0pt{$#1$\hss}}
\def\llapm#1{\hbox to 0pt{\hss$#1$}}
\def\qqqquad{\qquad\qquad}
\newcommand{\vclap}[2][0pt]{\hbox to #1{\hss#2\hss}}
\newcommand{\vclapm}[2][0pt]{\hbox to #1{\hss$#2$\hss}}
\def\interdisplayskip{.5ex}
\newskip\mydisplaywidth
\newcommand{\twolinedisplay}[3][10pt]{%
  \mydisplaywidth=\displaywidth
  \advance\mydisplaywidth-#1
  \begin{array}{c}
    \clap{\hbox to\mydisplaywidth{$\displaystyle#2$\hss}}\\[\interdisplayskip]
    \clap{\hbox to\mydisplaywidth{\hss$\displaystyle#3$}}
  \end{array}
}
\renewcommand{\tuple}[1]{(#1)}
\renewcommand{\conhole}{\cons{\cdot}}
\newcommand{\Sconhole}{S\conhole}
\newcommand{\Scons}[1]{S\cons{#1}}
\newcommand{\aird}{\mathsf{ai}{\downarrow}}
\newcommand{\aiord}{\mathsf{ai^\circ\mkern-2mu}{\downarrow}}
\newcommand{\aitrd}{\mathsf{ai^{\ltens}\mkern-3mu}{\downarrow}}
\newcommand{\aiprd}{\mathsf{ai^{\lpar}\mkern-3mu}{\downarrow}}
\newcommand{\aislrd}{\mathsf{ai^{\lseq}_L\mkern-2mu}{\downarrow}}
\newcommand{\aisrrd}{\mathsf{ai^{\lseq}_R\mkern-2mu}{\downarrow}}
\newcommand{\airu}{\mathsf{ai}{\uparrow}}
\newcommand{\aioru}{\mathsf{ai^\circ\mkern-2mu}{\uparrow}}
\newcommand{\aitru}{\mathsf{ai^{\ltens}\mkern-3mu}{\uparrow}}
\newcommand{\aipru}{\mathsf{ai^{\lpar}\mkern-3mu}{\uparrow}}
\newcommand{\aislru}{\mathsf{ai^{\lseq}_L\mkern-2mu}{\uparrow}}
\newcommand{\aisrru}{\mathsf{ai^{\lseq}_R\mkern-2mu}{\uparrow}}
\newcommand{\itru}{\mathsf{i^{\ltens}\mkern-3mu}{\uparrow}}
\newcommand{\ipru}{\mathsf{i^{\lpar}\mkern-3mu}{\uparrow}}
\newcommand{\islru}{\mathsf{i^{\lseq}_L\mkern-2mu}{\uparrow}}
\newcommand{\isrru}{\mathsf{i^{\lseq}_R\mkern-2mu}{\uparrow}}
\newcommand{\qrd}{\mathsf{q}{\downarrow}}
\newcommand{\qard}{\mathsf{q_2}{\downarrow}}
\newcommand{\qblrd}{\mathsf{q_3^L}{\downarrow}}
\newcommand{\qbrrd}{\mathsf{q_3^R}{\downarrow}}
\newcommand{\qblprd}{\mathsf{\hat q_3^L}{\downarrow}}
\newcommand{\qbrprd}{\mathsf{\hat q_3^R}{\downarrow}}
\newcommand{\qcrd}{\mathsf{q_4}{\downarrow}}
\newcommand{\qaprd}{\mathsf{\hat q_2}{\downarrow}}
\newcommand{\qru}{\mathsf{q}{\uparrow}}
\newcommand{\qaru}{\mathsf{q_2}{\uparrow}}
\newcommand{\qblru}{\mathsf{q_3^L}{\uparrow}}
\newcommand{\qbrru}{\mathsf{q_3^R}{\uparrow}}
\newcommand{\qcru}{\mathsf{q_4}{\uparrow}}
\newcommand{\sar}{\mathsf{s_2}}
\newcommand{\sbr}{\mathsf{s_3}}
\newcommand{\sapr}{\mathsf{\hat s_2}}
\newcommand{\sbpr}{\mathsf{\hat s_3}}
\newcommand{\wsw}{\mathsf{ws}}
\newcommand{\sysS}{\mathsf{S}}
\newcommand{\Deri}{\delta} 
\newcommand{\atoms}{\cV}
\newcommand{\derives}[2]{\vdash_{#1}^{#2}}
\renewcommand{\ltens}{\vlte}
\renewcommand{\lpar}{\vlpa}
\renewcommand{\lseq}{\vlse}
\newcommand{\lqes}{\vlbin\triangleright}
\newcommand{\lunit}{\mathord{\mathbb{I}}}
\newcommand{\NEL}{\mathsf{NEL}}
\newcommand{\MLLm}{\mathord{\mathsf{MLL\!+\!mix}}}
\newcommand{\BVu}{\mathsf{BVu}}
\newcommand{\SBVu}{\mathsf{SBVu}}
\newcommand{\BVup}{\mathsf{BV\hat u}}
\newcommand{\MAV}{\mathsf{MAV}}
\newcommand{\CCS}{\mathsf{CCS}}
\newcommand{\sqnp}[1]{[#1]}
\newcommand{\sqns}[1]{\langle#1\rangle}
\newcommand{\sempty}{\varnothing}
\newcommand{\fequ}{\equiv}
\newcommand{\fequp}{\mathrel{\equiv'\mkern-2mu}}
\newcommand{\graph}[1]{\mathcal{#1}}
\newcommand{\gG}{\graph{G}}
\newcommand{\gH}{\graph{H}}
\newcommand{\Pfour}{\mathbf{P_4}}
\newcommand{\NN}{\mathbf{N}}
\newcommand{\vertices}[1][]{\ifthenelse{\isempty{#1}}{V}{V_{\graph{#1}}}}
\newcommand{\edges}[1][]{\ifthenelse{\isempty{#1}}{E}{E_{\graph{#1}}}}
\newcommand{\redges}[1][]{\ifthenelse{\isempty{#1}}{R}{R_{\graph{#1}}}}
\newcommand{\matching}[1][]{\ifthenelse{\isempty{#1}}{B}{B_{\graph{#1}}}}
\newcommand{\edgepairs}[1][]{\ifthenelse{\isempty{#1}}{\mathfrak{P}}{\mathfrak{P}_{\graph{#1}}}}
\newcommand{\linking}{\ell}
\newcommand{\unfold}[1]{#1^\flat}
\newcommand{\subfold}[1]{#1^\sharp}
\newcommand{\predges}[1][]{\ifthenelse{\isempty{#1}}{R^{\lpar}}{R^{\lpar}_{\graph{#1}}}}
\newcommand{\tredges}[1][]{\ifthenelse{\isempty{#1}}{R^{\ltens}}{R^{\ltens}_{\graph{#1}}}}
\newcommand{\sredges}[1][]{\ifthenelse{\isempty{#1}}{R^{\lseq}}{R^{\lseq}_{\graph{#1}}}}
\newcommand{\zredges}[1][]{\ifthenelse{\isempty{#1}}{R^{\lqes}}{R^{\lqes}_{\graph{#1}}}}
\newcommand{\zet}[1]{z_{#1}}
\newcommand{\nzet}[1]{z\lneg_{#1}}
\newcommand{\vA}{\vertices[A]}
\newcommand{\vG}{\vertices[G]}
\newcommand{\vH}{\vertices[H]}
\newcommand{\eG}{\edges[G]}
\newcommand{\rA}{\redges[A]}
\newcommand{\rG}{\redges[G]}
\newcommand{\rH}{\redges[H]}
\newcommand{\bA}{\matching[A]}
\newcommand{\bG}{\matching[G]}
\newcommand{\bH}{\matching[H]}
\newcommand{\gempty}{\varnothing}
\newcommand{\tograph}[1]{\llbracket#1\rrbracket}
\newcommand{\toRBgraph}[1]{(\mkern-4mu\llbracket#1\rrbracket\mkern-4mu)}
\newcommand{\proofif}[1]{\Pi(#1)}
\newcommand{\sizeof}[1]{\left|#1\right|}
\newcommand{\lmap}{\ell}
\newcommand{\pseudosub}{\sqsubset}
\newcommand{\pseudosubeq}{\sqsubseteq}
\newcommand{\relRB}[1]{\rho(#1)}
\newcommand{\treeRB}[1]{\tau(#1)}
\newcommand{\rbtree}{\mathcal{T}_{\mathrm{RB}}}
\newcommand{\binseq}[2]{\langle #1 \lseq #2 \rangle}
\newcommand{\conjugate}[1]{{#1}^\dagger}
\newcommand{\true}{\mathtt{true}}
\newcommand{\false}{\mathtt{false}}
\newcommand{\rmcl}{\mathrm{cl}}
\newcommand{\rmvar}{\mathrm{var}}
\newcommand{\rmocc}{\mathrm{occ}}
\newcommand{\Ptime}{\mathbf{P}}
\newcommand{\NP}{\mathbf{NP}}
\newcommand{\coNP}{\mathbf{coNP}}
\newcommand{\sigmatwop}{\mathbf{\Sigma_2^p}}
\newcommand{\pitwop}{\mathbf{\Pi_2^p}}
\newcommand{\cnfsat}{\textsc{cnf-sat}}
\newcommand{\pitwocnfsat}{\textsc{$\forall\exists$-cnf-sat}}
\newcommand{\decor}{\mathfrak{d}}
\newcommand{\Sdecor}{\mathfrak{S}}
\newcommand{\Fdecor}{\mathfrak{F}}
\def\arrayskip{1ex}
\begin{document}

\title[The Complexity of System BV and Pomset Logic]{A System of Interaction and Structure III:\texorpdfstring{\\}{}
  The Complexity of System BV and Pomset Logic{\rsuper*}}
\titlecomment{{\lsuper*}This article extends our previous CSL 2022
  paper~\cite{csl} with a substantial amount of new material. It has been
  invited to a special issue of LMCS}

\thanks{L.~T.~D.~Nguy\~{\^e}n was supported by the LambdaComb project
  (ANR-21-CE48-0017) while working at École polytechnique and by the LABEX MILYON
  (ANR-10-LABX-0070) of Université de Lyon, within the program
  \enquote{Investissements d'Avenir} operated by the French National Research
  Agency (ANR)}

\author[L.~T.~D.~{Nguy\~{\^e}n}]{{Lê Thành D\~ung (Tito) Nguy\~{\^e}n}\lmcsorcid{0000-0002-6900-5577}}[a]
\author[L.~Straßburger]{Lutz Straßburger\lmcsorcid{0000-0003-4661-6540}}[b]

\address{Laboratoire de l'informatique du parallélisme (LIP), École normale supérieure de Lyon, France}
\email{nltd@nguyentito.eu}
\address{Inria Saclay \& Laboratoire d'informatique de l'École polytechnique (LIX), Palaiseau, France}

\begin{abstract}
  \noindent Pomset logic and $\BV$ are both logics that extend
  multiplicative linear logic (with mix) with a third connective that
  is self-dual and non-commutative. Whereas pomset logic originates
  from the study of coherence spaces and proof nets, $\BV$ originates
  from the study of series-parallel orders, cographs, and proof
  systems. Both logics enjoy a cut-admissibility result, but for
  neither logic can this be done in the sequent calculus.  Provability
  in pomset logic can be checked via a proof net correctness criterion
  and in $\BV$ via a deep inference proof system. It has long been
  conjectured that these two logics are the same.

  In this paper we show that this conjecture is false. We also investigate the
  complexity of the two logics, exhibiting a huge gap between the
  two. Whereas provability in $\BV$ is $\NP$-complete, provability in Pomset
  logic is $\sigmatwop$-complete. We also make some observations with
  respect to possible sequent systems for the two logics.
\end{abstract}

\maketitle

\tableofcontents


\vlupdate{\tograph}
\vlupdate{\unfold}
\vlupdate{\subfold}
\vlupdate{\sqnp}
\vlupdate{\sqns}
\vlupdate{\framebox}
\vlupdate{\dbox}
\vlupdate{\UnaryInfC}
\section{Introduction}

There are two ways to put this paper into perspective. First, it is
the third paper in a series of five papers: the first
two~\cite{SIS,SIS-II} introduce a logic called
    \emph{system $\BV$}, prove cut elimination
for it, and show that \emph{deep inference} is necessary for having a
cut-free deductive system for $\BV$; and the last
two~\cite{SIS-IV,SIS-V} study system~$\NEL$, a conservative extension
of $\BV$ with the exponentials of linear logic. The second way to look
at this paper is that it finally answers the longstanding open
question whether $\BV$ and pomset logic~\cite{retore:2021} are the
same.

Pomset logic has been discovered by Christian
Retoré~\cite{retorePhD,retore:pomset} through the study of coherence spaces
which form a semantics of proofs for linear logic~\cite{girardLL}, by
observing\footnote{According to Retoré
  (cf.\ \url{http://www.lirmm.fr/~retore/PRESENTATIONS/pomset4michele.pdf}), this
  semantic observation is due to Girard.}
that next to the two operations $\ltens$ (\emph{tensor} or
\emph{multiplicative conjunction}) and $\lpar$ (\emph{par} or
\emph{multiplicative disjunction}) on coherence spaces there are two other operations
$\lseq$ and $\lqes$, which are non-commutative, obey $A\lseq B=B\lqes
A$, and are self-dual, i.e., $\vlsbr<A;B>\lneg = \vls<A\lneg;B\lneg>$.
Furthermore, there are canonical linear maps from
$A\ltens B$ to $A\lseq B$, and from $A\lseq B$ to $A\lpar B$.  From this
semantic observation, Retoré derived a a \emph{proof net} syntax, i.e., a graph-like
    representation of proofs together with a combinatorial \emph{correctness
      criterion} that distinguishes actual proofs among the considered space of
    graphs, together with a cut elimination theorem. However, he
could not provide a sound and complete cut-free sequent calculus for
this logic.

System $\BV$ was found by Alessio Guglielmi~\cite{SIS,SIS:99}\footnote{The initial ideas of the inference rules have also been present in  \cite{pomsetReport} even though not formulated as a proof system admitting cut-elimination.} through a
syntactic investigation of the connectives of pomset logic and a graph
theoretic study of series-parallel orders and cographs. The difficulty
of presenting this combination of commutative and non-commutative
connectives in the sequent calculus triggered the development of the
\emph{calculus of structures}~\cite{BV-CSL}, the first proper deep inference proof
formalism.\footnote{An introductory survey on deep inference can be found in~\cite{AAT:str:esslli19}.}

This leads to the strange situation that we have two logics, pomset
logic and $\BV$, which are both conservative extensions of
multiplicative linear logic with mix ($\MLLm$)~\cite{Mix} with a noncommutative
connective $\lseq$ such that $A\ltens B\vdash A\lseq B$ and $A\lseq B\vdash A\lpar B$,
and which both obey a cut elimination result---the only difference
being that pomset logic only had a proof net syntax but no deductive
proof system, and $\BV$ only had a deductive proof system but no proof
nets. This naturally led to the conjecture that both logics are the
same~\cite{dissvonlutz}.

In this paper we show that this conjecture is false. It can easily be
shown~\cite{dissvonlutz,NEL-undec,pomsetReport} that every theorem in
$\BV$ is also a theorem of pomset logic, and for the sake of clarity,
we also give a proof in this paper (in
Section~\ref{sec:BVinPomset}). However, the converse is not true, and
we give an example of a formula that is a theorem of pomset logic but
not provable in~$\BV$ (in Section~\ref{sec:counterexample}).

This naturally leads to the question of complexity. Do both logics
have the same or different complexity? It has been observed
in~\cite{BV-NPc} that provability in $\BV$ is $\NP$-complete. The reason
is that $\NP$-hardness is inherited from $\MLLm$, and containment in
$\NP$ follows from the fact that the size of every proof in $\BV$ is
polynomial in the size of the conclusion, and the correctness of such
a proof can be checked in time which is linear in the size of the
proof (see Section~\ref{sec:np-membership} for more details). However,
provability in pomset logic is $\sigmatwop$-complete. Even though the
size of a pomset logic proof net is polynomial in the size of its
conclusion, checking correctness of such a proof net is
$\coNP$-complete. The details for these results are discussed in
Sections~\ref{sec:coNP}--\ref{sec:sigmatwop-complete}.

This complexity result explains why it is impossible to give a
deductive proof system (in the sense of Cook and
Reckhow~\cite{cook:reckhow:79}) for pomset logic. Nonetheless, there
is a recent proposal by Slavnov~\cite{slavnov:19} for a decorated
sequent calculus for pomset logic. We look at this calculus in
Section~\ref{sec:slavnov} and relate it to our complexity
result. Before that, we show in Section~\ref{sec:sequent-retore} that the
sequent calculus with cut, that Retoré proposed for pomset logic is in
fact a sound and complete sequent calculus for~$\BV$.\footnote{In
  fact, in various publications on pomset logic, Retoré proposes
  different sequent calculi (one in his PhD thesis
  \cite[Chapitre~8]{retorePhD}, one in
  \cite[Section~7]{retore:pomset}, and more recently another on
  in~\cite{retore:2021}); we work here with the one
  in~\cite{retore:2021}.}

In summary, the paper makes the following contributions:
\begin{itemize}
\item In Section~\ref{sec:prelim}, we give a gentle and easy
  accessible introduction to pomset logic and $\BV$. We unify the
  notation and terminology, and present some important properties,
  that will be needed in later sections.
\item In Section~\ref{sec:comparing}, we are showing that $\BV$ is properly contained in pomset logic, by showing that every theorem of $\BV$ is a theorem of pomset logic, but not vice versa. These results have already been presented in~\cite{csl}, of which this article is an extended version.
\item In Section~\ref{sec:complexity}, we are discussing the complexity of pomset logic and $\BV$. More precisely, we are recalling Kahramanoğulları's result on the $\NP$-completeness of   $\BV$, and we show that checking correctness of a pomset logic proof is
 $\coNP$-complete and that provability in pomset logic is $\sigmatwop$-complete. 
\item Finally, in Section~\ref{sec:sequents}, we come back to the sequent calculus, discussing the difficulties that the two logics pose, and how they can or cannot be overcome.
\end{itemize}
 

\section{Preliminaries on Pomset Logic and \texorpdfstring{$\BV$}{BV}}\label{sec:prelim}

In this section we will introduce the the two logics, $\BV$ and pomset
logic, together with some basic underlying graph theoretical and proof
theoretical concepts. Even though pomset logic was discovered through
the study of coherence semantics, we will here only discuss its
syntax, as coherence spaces are not needed for the results of this
paper.

\subsection{Formulas, Duality and Sequents}\label{sec:formulas}

The \emph{formulas} of pomset logic and $\BV$ are in this paper denoted by
capital Latin letters $A,B,C,\ldots$ and are generated from propositional
variables $a,b,c,\ldots$, their \emph{duals} $a\lneg,b\lneg,c\lneg\ldots$ and
the \emph{unit} $\lunit$ via the three binary connectives \emph{tensor}
$\ltens$, \emph{par} $\lpar$, and \emph{seq} $\lseq$. We shall also be led to
consider terms built with those connectives whose leaves are taken in arbitrary
sets, which justifies the following more general definition.


\begin{defi}
  \label{def:formulas}
  The \emph{generalized formulas} over the set $X$ are generated by the grammar
  \[ A,B\quad\grammareq\quad \lunit\mid x \mid \vlsbr(A;B) \mid
    \vlsbr[A;B]\mid \vlsbr<A;B> \qquad\text{where}\ x \in X \]
  A generalized formula is \emph{linear} when it contains at most one
  occurrence of each $x \in X$.

  We fix a countable set $\atoms=\set{a,b,c,\ldots}$ of \emph{propositional
    variables}. To each variable $a$ we injectively associate a \emph{dual}
  $a^\perp$; we write $\atoms\lneg = \{a\lneg \mid a \in \atoms\}$, and require
  that $\atoms \cap \atoms\lneg = \varnothing$. An \emph{atom} is either a
  variable (\emph{positive} atom) or the dual of a variable (\emph{negative}
  atom). A \emph{formula} is a generalized formula over the set of atoms $\atoms
  \cup \atoms\lneg$.
\end{defi}
\begin{defi}\label{def:size}
  The \emph{size} of a generalized formula $A$ over the set $X$, denoted by
  $\sizeof{A}$, is the number of occurrences of elements of $X$ in $A$.
\end{defi}

For better readability of large formulas, we use here different kinds
of parentheses for the different connectives.\footnote{Note that this
  is redundant and carries no additional meaning. The only use is
  better readability.}  In the following, we omit outermost
parentheses for better readability.

\begin{defi}
  We define the \emph{relation $\fequ$ on generalized formulas} to be the smallest
  congruence generated by the rules of \Cref{fig:equ}. Those correspond to
  associativity of $\ltens,\lpar,\lseq$, commutativity of $\ltens,\lpar$, and
  unit equations ($\lunit$ behaves as unit for all three connectives).
\end{defi}
\begin{figure}
  \[
  \begin{array}{rcl}
    \vls(A;(B;C))&\fequ&\vls((A;B);C)\\
    \vls[A;[B;C]]&\fequ&\vls[[A;B];C]\\
    \vls<A;<B;C>>&\fequ&\vls<<A;B>;C>
  \end{array}
  \qquad
  \begin{array}{rcl}
    \vls(A;B)&\fequ&\vls(B;A)\\
    \vls[A;B]&\fequ&\vls[B;A]\\
    &&
  \end{array}
  \qquad
  \begin{array}{rcccl}
    \vls(\lunit;A)&\fequ&A\\
    \vls[\lunit;A]&\fequ&A\\
    \vls<\lunit;A>&\fequ&A&\fequ&\vls<A;\lunit>\\
  \end{array}
  \]
  \caption{The equations defining $\fequ$.}
  \label{fig:equ}
\end{figure}

As usual in the linear logic tradition, negation is defined not as a connective
but as a mapping from formulas to formulas. Note that this does not make sense for
generalized formulas.\looseness=-1

\begin{defi}
\def\myskip{\hskip1.4em}
  The involutive \emph{(linear) negation} or \emph{duality} $(-)\lneg$ is
  extended from propositional variables to formulas by taking De Morgan's laws
  as its inductive definition:
  \[ (a\lneg)\lneg = a \myskip \lunit\lneg = \lunit \myskip \vlsbr(A;B)\lneg =
    \vls[A\lneg;B\lneg] \myskip \vlsbr[A;B]\lneg = \vls(A\lneg;B\lneg) \quad
    \vlsbr<A;B>\lneg = \vls<A\lneg;B\lneg> \]
  The first clause means that linear negation defines a fixed-point-free
  involution on the set of atoms. The last clause is what we mean when we say
  that seq is \emph{self-dual}; note that the right-hand side is indeed
  $\vls<A\lneg;B\lneg>$ and not $\vls<B\lneg;A\lneg>$.\footnote{In that respect,
    pomset logic and $\BV$ are different from other non-commutative variants of
    linear logic where $\ltens$ and $\lpar$ are non-commutative with $(A \ltens
    B)\lneg = B\lneg \lpar A\lneg$, see \Cref{sec:related-work}.}
\end{defi}

We will also need the notion of \emph{sequent} in pomset logic. While traditional
sequent calculi use multisets\footnote{This holds for commutative logics.
  Indeed, using multisets is equivalent to using lists and adding the exchange
  rule to the system.} of formulas as sequents, pomset logic is thus named
because its sequents are \emph{\underline{p}artially \underline{o}rdered
  \underline{m}ulti\underline{sets}} of formulas. While Retoré's early
work~\cite{retorePhD,retore:pomset} involved arbitrary partial orders, we
consider here the simplified version from~\cite{retore:2021} where sequents are
equipped with \emph{series-parallel} orders. We shall define those orders in the
next subsection (see Proposition~\ref{prop:sp-orders} and Remark~\ref{rem:sp-sequent}); for
now, let us just say that those orders admit a syntactic description that we
give here.

\begin{defi}\label{def:sequent}
  We denote a \emph{sequent} in pomset logic by capital Greek letters
  $\Gamma,\Delta,\ldots$ and they are generated as follows:
  \[
    \Gamma,\Delta\quad\grammareq\quad \sempty\mid
    A\mid\sqnp{\Gamma,\Delta}\mid\sqns{\Gamma;\Delta}
    \]
  where $\sempty$ stands for the empty sequent and $A$ can be any formula. As before, we use different
  kinds of brackets for better readability.
  Furthermore, we consider sequents equal modulo commutativity of
  $\sqnp{\cdot,\cdot}$ and associativity of $\sqnp{\cdot,\cdot}$ and
  $\sqns{\cdot;\cdot}$, and the unit laws for the empty sequent:
  \[
    \begin{array}{c}
      \sqnp{\Gamma,\Delta}=\sqnp{\Delta,\Gamma}
      \qquad
      \sqnp{\Gamma,\sempty}=\Gamma
      \qquad
      \sqns{\Gamma;\sempty}=\Gamma=\sqns{\sempty;\Gamma}
      \\[\arrayskip]
      \sqnp{\Gamma,\sqnp{\Delta,\Lambda}}=\sqnp{\Gamma,\Delta,\Lambda}=\sqnp{\sqnp{\Gamma,\Delta},\Lambda}
      \qquad
      \sqns{\Gamma;\sqns{\Delta;\Lambda}}=\sqns{\Gamma;\Delta;\Lambda}=\sqns{\sqns{\Gamma;\Delta};\Lambda}
    \end{array}
  \]
  In the remainder of this paper we will always omit redundant brackets.
\end{defi}
\begin{defi}
  A sequent is called \emph{flat} iff it does not contain the
  $\sqns{\cdot;\cdot}$ constructor. Therefore, flat sequents can be described by
  multisets of formulas.
\end{defi}
\begin{rem}
  Pomset logic is not the only system that features \enquote{non-flat} sequents
  with two distinct connectives. Another famous example is the logic
  $\mathsf{BI}$ of bunched implications~\cite{BunchedImplications}.
\end{rem}

The operations $\sqnp{\cdot,\cdot}$ and $\sqns{\cdot;\cdot}$ serve as
counterparts on sequents to the connectives $\lpar$ and~$\lseq$ on formulas
(just as the sequent $\vdash A,B,C$ morally means $A \lor B \lor C$ in classical
logic). It is then natural to define a collapse operation\footnote{This can be seen as
  the multiplication of the monad that sends $X$ to the generalized formulas
  over $X$.} from sequents into formulas.
\begin{defi}
  \label{def:formula-sequent-corr}
  We say that a formula $A$ \emph{corresponds} to a sequent $\Gamma$ when:
  \begin{itemize}
  \item either $\Gamma = A$;
  \item or, inductively, there exist $B$ and $C$ that correspond respectively to
    $\Delta$ and $\Lambda$ such that
    \begin{itemize}
    \item either $A = \vls[B;C]$ and $\Gamma = \sqnp{\Delta,\Lambda}$;
    \item or $A = \vls<B;C>$ and $\Gamma = \sqns{\Delta;\Lambda}$.
    \end{itemize}
  \end{itemize}
  We shall also say that $\Gamma$ corresponds to $A$ in this case.
\end{defi}
For instance, $\vls[a;a\lneg]$ and $\vls[a\lneg;a]$ both correspond to
$\sqnp{a,a\lneg}$, but any formula also corresponds to itself as a singleton
sequent. Working modulo $\fequ$, we have:
\begin{prop}\label{prop:corr-functional-mod-fequ}
  If the formulas $A$ and $B$ both correspond to a sequent $\Gamma$, then
  $A\fequ B$.
\end{prop}
In the other direction, the same formula may correspond to multiple sequents.
That is no longer the case, however, for a more restricted relation between
\enquote{generalized sequents} over a set $X$, i.e.\
$\Gamma,\Delta\grammareq\sempty\mid
X\mid\sqnp{\Gamma,\Delta}\mid\sqns{\Gamma;\Delta}$ modulo the same equations as
Definition~\ref{def:sequent}, and generalized $\ltens$-free formulas modulo
$\fequ$ over $X$. This relation, obtained by replacing the base case of the
inductive Definition~\ref{def:formula-sequent-corr} by \enquote{$\Gamma = x$ for
  $x\in X$}, is bijective. When $X$ is taken to be the set of usual formulas,
this gives us:
\begin{prop}
  \label{prop:sequent-gen-formula}
  A sequent is the same as a generalized $\ltens$-free formula over the
  set of formulas, \emph{modulo $\fequ$}, writing $\varnothing$,
  $\sqnp{\Gamma,\Delta}$ and $\sqns{\Gamma;\Delta}$ instead of $\lunit$,
  $\Gamma \lpar \Delta$ and $\Gamma \lseq \Delta$ respectively.
\end{prop}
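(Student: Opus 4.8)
The plan is to exhibit the evident structural bijection between the two kinds of object and to check that it identifies the two equational theories. Let $X$ be the set of all formulas. On \emph{raw} syntax, the grammar for sequents, namely $\Gamma,\Delta\grammareq\sempty\mid A\mid\sqnp{\Gamma,\Delta}\mid\sqns{\Gamma;\Delta}$, is literally the grammar for generalized $\ltens$-free formulas over $X$, up to renaming its four productions. So I would define $\phi$ from raw sequents to raw $\ltens$-free generalized formulas over $X$ by induction on the sequent grammar: $\phi(\sempty)=\lunit$; $\phi(A)=A$, a single formula being sent to itself, now read as a leaf $x\in X$; $\phi(\sqnp{\Gamma,\Delta})=\vlsbr[\phi(\Gamma);\phi(\Delta)]$; and $\phi(\sqns{\Gamma;\Delta})=\vlsbr<\phi(\Gamma);\phi(\Delta)>$. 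This $\phi$ is a bijection on raw terms, its inverse being the reverse renaming.

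First I would check that $\phi$ carries the generating equations of the sequent equational theory exactly onto those of $\fequ$ restricted to $\ltens$-free terms: commutativity of $\sqnp{\cdot,\cdot}$ matches commutativity of $\lpar$; associativity of $\sqnp{\cdot,\cdot}$ and of $\sqns{\cdot;\cdot}$ match associativity of $\lpar$ and of $\lseq$; and the unit laws $\sqnp{\Gamma,\sempty}=\Gamma$ and $\sqns{\Gamma;\sempty}=\Gamma=\sqns{\sempty;\Gamma}$ match the three unit equations of \Cref{fig:equ} not involving $\ltens$ (using commutativity of $\lpar$ to recover both orientations of the $\lpar$ unit law). This is a finite inspection of \Cref{fig:equ}.

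The only point requiring a word of care is that $\fequ$, when restricted to $\ltens$-free generalized formulas over $X$, is generated by precisely these $\lpar$- and $\lseq$-equations --- i.e.\ the $\ltens$-equations play no role. This holds because every $\ltens$-equation in \Cref{fig:equ} has an occurrence of $\ltens$ on both sides, so it can neither be applied inside an $\ltens$-free term nor turn a non-$\ltens$-free term into an $\ltens$-free one; hence any $\fequ$-chain between two $\ltens$-free terms can be taken to remain $\ltens$-free and to use only $\lpar$- and $\lseq$-equations. It follows that $\phi$ descends to a bijection from $\fequ$-classes of $\ltens$-free generalized formulas over the set of formulas onto equality classes of sequents, which is exactly the claimed identification, and it justifies freely swapping between the two notations --- and between writing $\varnothing,\sqnp{\Gamma,\Delta},\sqns{\Gamma;\Delta}$ and $\lunit,\Gamma\lpar\Delta,\Gamma\lseq\Delta$ --- throughout the paper. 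There is no genuine obstacle here: the proposition is a bookkeeping lemma, and the above is its entire content.
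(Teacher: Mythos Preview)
Your argument is correct and complete. The paper itself gives no proof of this proposition: it is stated as an immediate observation and the text moves on directly to the next definition. Your verification is therefore more detailed than what the paper provides; in particular, your remark that the $\ltens$-equations of \Cref{fig:equ} cannot interfere when restricting $\fequ$ to $\ltens$-free terms is the only point that is not entirely tautological, and you handle it correctly.
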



\subsection{Dicographs, Relation Webs and Series-Parallel Orders}\label{sec:cograph}

In~\cite{retore:pomset}, Retoré presents proof nets for pomset logic as
\emph{RB-digraphs}, that is, \emph{directed} graphs
equipped with perfect matchings, extending his reformulation of $\MLLm$ proof
nets as \emph{undirected} RB-graphs~\cite{handsomeTCS}. We recall these notions
below.
\begin{defi}
  A \emph{digraph} $\graph{G} = \tuple{\vertices[G],\redges[G]}$
  consists of a finite set of \emph{vertices} $\vertices[G]$ and a set
  of \emph{edges} $\redges[G] \subseteq \vertices[G]^2 \setminus
  \set{(u,u) \mid u \in \vertices[G]}$. A \emph{labeled digraph} is a digraph
  $\gG$ equipped with a map
  $\lmap\colon\vG\to\cL$ assigning each vertex $v$ of $\vG$ a
  \emph{label} $\lmap(v)\in\cL$ in the \emph{label set} $\cL$. If
  $\cL$ is the set $\atoms\cup\atoms\lneg$ of atoms,
  we speak of an \emph{atom-labeled} digraph.
\end{defi}

\begin{defi}
  \label{def:isomorphism}
  An \emph{isomorphism} between two digraphs $\gG$ and $\gH$ is a bijection on
  vertices $f : \vG \xrightarrow{\;\sim\;} \vH$ such that $f(\rG) = \rH$, where
  $f(\rG) = \{(f(u),f(v)) \mid (u,v) \in \rG\}$. As usual, if such an $f$
  exists, $\gG$ and $\gH$ are said to be \emph{isomorphic}. Furthermore, if
  $\gG$ and $\gH$ are endowed with the vertex labelings $\linking_\gG$ and
  $\linking_\gH$ respectively, and $\linking_\gG = \linking_\gH \circ f$, then
  we say that $f$ is an \emph{isomorphism of labeled digraphs}.
\end{defi}

\begin{defi}
  \label{def:digraph-relations}
  For a given digraph $\graph{G} = (\vertices[G],\redges[G])$ we define the following four sets:
  \begin{equation*}
    \begin{array}{rcl}
      \tredges[G]& =& \set{(u,v)\mid (u,v) \in \redges[G] \mbox{ and } (v,u) \in \redges[G]}\\[1ex]
      \sredges[G]& =&  \set{(u,v)\mid (u,v) \in \redges[G] \mbox{ and } (v,u) \notin \redges[G]}\\[1ex]
      \zredges[G]& =&  \set{(u,v)\mid (u,v) \notin \redges[G] \mbox{ and } (v,u) \in \redges[G]}\\[1ex]
      \predges[G]& =&  \set{(u,v)\mid (u,v) \notin \redges[G] \mbox{ and } (v,u) \notin \redges[G]\mbox{ and }u\neq v}\\[1ex]
    \end{array}
  \end{equation*}
\end{defi}

Note that these sets can be seen as binary relations and as sets of
edges in the graph. We will use them interchangeably in both settings.
\begin{prop}
  For every digraph $\graph{G} = (\vertices[G],\redges[G])$ we have
  that the relations $\tredges[G]$, $\sredges[G]$, $\zredges[G]$,
  $\predges[G]$ are pairwise disjoint and
  $\tredges[G]\cup\sredges[G]\cup\zredges[G]\cup\predges[G]=
  \vertices[G]\times\vG \setminus \set{(u,u) \mid u \in
    \vertices[G]}$. Furthermore, we have that $(u,v)\in \sredges[G]$
  iff $(v,u)\in \zredges[G]$.
\end{prop}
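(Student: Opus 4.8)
The plan is to verify the three assertions directly from \Cref{def:digraph-relations}, treating each ordered pair $(u,v)$ with $u \neq v$ according to the truth values of the two propositions $P \equiv \bigl((u,v) \in \redges[G]\bigr)$ and $Q \equiv \bigl((v,u) \in \redges[G]\bigr)$. There are exactly four cases: $(P,Q) = (\mathsf{tt},\mathsf{tt})$ puts $(u,v)$ in $\tredges[G]$; $(\mathsf{tt},\mathsf{ff})$ puts it in $\sredges[G]$; $(\mathsf{ff},\mathsf{tt})$ puts it in $\zredges[G]$; and $(\mathsf{ff},\mathsf{ff})$ puts it in $\predges[G]$.

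For \emph{pairwise disjointness}, I would observe that the four defining conditions impose pairwise incompatible constraints on the pair $(P,Q)$: any two of the four sets require different truth values for $P$ or for $Q$, so a single pair $(u,v)$ cannot satisfy two of them simultaneously. (For instance, membership in $\tredges[G]$ requires $P$ true while membership in $\sredges[G]$, $\zredges[G]$, $\predges[G]$ all require $P$ false or $Q$ false in a way incompatible with $P\wedge Q$.) For the \emph{covering} equation, note that for any $(u,v)$ with $u \neq v$ the pair $(P,Q)$ takes one of the four possible boolean values, and each value is exactly the defining condition of one of the four sets; conversely each of the four sets is by definition contained in $\vertices[G]\times\vG \setminus \set{(u,u) \mid u \in \vertices[G]}$ since each definition explicitly excludes $u = v$ (the condition $u \neq v$ is explicit for $\predges[G]$, and implied by the edge-membership clauses for the other three, as $\redges[G]$ contains no loops by \Cref{def:digraph-relations}). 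Hence the union is all of $\vertices[G]\times\vG \setminus \set{(u,u) \mid u \in \vertices[G]}$.

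For the last claim, $(u,v)\in \sredges[G]$ means $(u,v) \in \redges[G]$ and $(v,u) \notin \redges[G]$; swapping the roles of $u$ and $v$, this is precisely the condition $(v,u) \notin \redges[G]$ and $(u,v) \in \redges[G]$, i.e.\ $(v,u) \in \zredges[G]$. The converse direction is symmetric. This is an immediate unfolding of definitions.

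There is essentially no obstacle here: the statement is a bookkeeping exercise on the four-way case split induced by the two relevant membership questions. The only point requiring a moment of care is making sure that each of the four sets really does exclude the diagonal, which follows either from the irreflexivity of $\redges[G]$ built into \Cref{def:digraph-relations} or, for $\predges[G]$, from the explicit side condition $u \neq v$; once that is noted, both the disjointness and the covering fall out of the truth table, and the $\sredges[G]$/$\zredges[G]$ symmetry is definitional.
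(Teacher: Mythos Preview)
Your proposal is correct and is exactly the natural verification; the paper itself gives no proof for this proposition, treating it as immediate from the definitions. One tiny citation quibble: the fact that $\redges[G]$ contains no loops comes from the definition of \emph{digraph} rather than from \Cref{def:digraph-relations}, but the mathematical content is fine.
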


\begin{rem}\label{rem:relweb}
  Conversely, a set $V$ and four binary relations $\predges,\sredges,\zredges,\tredges$, such that 
  \begin{enumerate}
  \item $\predges,\sredges,\zredges,\tredges$ are pairwise disjoint,
  \item $\predges$ and $\tredges$ are symmetric,
  \item $\sredges=(\zredges)^{-1}$,
  \item $\tredges\cup\sredges\cup\zredges\cup\predges=V\times V\setminus\set{(v,v) \mid v \in V}$    
  \end{enumerate}
  uniquely determine a digraph $\tuple{V,\sredges\cup\tredges}$.
\end{rem}

\begin{nota}
  Following~\cite{retore:pomset}, when drawing digraphs, we use
  (red/regular) arrows to denote edges in $\sredges[G]$ (and
  $\zredges[G]$), and arrow-free edges for $\tredges[G]$. For
  $\predges[G]$ we use no lines at all, as in
  the following five examples:
  \begin{equation}
    \label{eq:ex-digraph}
    \hskip-1em
    \vcenter{\hbox{%
    \begin{tikzpicture}
      \node (a) at (0,0) {a};
      \node (b) at (0,1) {b};
      \node (c) at (1,0) {c};
      \node (d) at (1,1) {d};
      \draw[non matching edge] (a)--(b);
      \draw[non matching edge] (a)--(c);
      \draw[non matching edge] (a)--(d);
      \draw[non matching edge,-Latex] (b)--(d);
    \end{tikzpicture}
    \qquad
    \qquad
    \begin{tikzpicture}
      \node (a) at (0,0) {a};
      \node (b) at (0,1) {b};
      \node (c) at (1,0) {c};
      \node (d) at (1,1) {d};
      \draw[non matching edge] (a)--(b);
      \draw[non matching edge] (a)--(c);
      \draw[non matching edge,-Latex] (a)--(d);
      \draw[non matching edge,-Latex] (b)--(d);
    \end{tikzpicture}
    \qquad
    \qquad
    \begin{tikzpicture}
      \node (a) at (0,0) {a};
      \node (b) at (0,1) {b};
      \node (c) at (1,0) {c};
      \node (d) at (1,1) {d};
      \draw[non matching edge,-Latex] (a)--(b);
      \draw[non matching edge] (a)--(c);
      \draw[non matching edge] (a)--(d);
      \draw[non matching edge,-Latex] (b)--(d);
    \end{tikzpicture}
    \qquad
    \qquad
    \begin{tikzpicture}
      \node (a) at (0,0) {a};
      \node (b) at (0,1) {b};
      \node (c) at (1,0) {c};
      \node (d) at (1,1) {d};
      \draw[non matching edge] (a)--(b);
      \draw[non matching edge] (d)--(c);
      \draw[non matching edge] (a)--(d);
      \draw[non matching edge,-Latex] (b)--(d);
    \end{tikzpicture}
    \qquad
    \qquad
    \begin{tikzpicture}
      \node (a) at (0,0) {a};
      \node (b) at (0,1) {b};
      \node (c) at (1,0) {c};
      \node (d) at (1,1) {d};
      \draw[non matching edge] (a)--(b);
      \draw[non matching edge,-Latex] (a)--(d);
      \draw[non matching edge,-Latex] (b)--(d);
    \end{tikzpicture}
    }}
  \end{equation}
  This allows us to see $\tuple{\vG,\tredges[G]}$ and
  $\tuple{\vG,\predges[G]}$ as undirected graphs.
\end{nota}

Let us now relate (generalized) formulas and digraphs.

\begin{defi}
  \label{def:digraph-connectives}
  Let $\graph{G} = (\vertices[G],\redges[G])$ and
  $\graph{H}=(\vertices[H],\redges[H])$ be \emph{disjoint} digraphs (i.e.\
  $\vG\cap\vH=\varnothing$). We can define the following operations that
  correspond to the connectives of pomset logic and $\BV$:
  \[
    \begin{array}{rcl}
      \vls[\gG;\gH]&=&\tuple{\vG\cup\vH,\; \rG\cup\rH}\\[1ex]
      \vls<\gG;\gH>&=&\tuple{\vG\cup\vH,\; \rG\cup\rH\cup\vG\times\vH}\\[1ex]
      \vls(\gG;\gH)&=&\tuple{ \vG\cup\vH,\; \rG\cup\rH\cup
                       (\vG\times\vH)\cup(\vH\times\vG)}
    \end{array}
  \]
\end{defi}

\begin{defi}\label{def:tograph}
  The mapping $\tograph{\cdot}$ from \emph{linear generalized} formulas over a
  set $X$ to digraphs with vertices in $X$ is defined inductively as follows:
  \[
    \tograph{\lunit}=\varnothing
    \quad
    \tograph{x} =\bullet_x
    \quad
    \tograph{\vls[A;B]}=\vls[\tograph{A};\tograph{B}]
    \quad
    \tograph{\vls<A;B>}=\vls<\tograph{A};\tograph{B}>
    \quad
    \tograph{\vls(A;B)}=\vls(\tograph{A};\tograph{B})
  \]
  where $\gempty$ is an abbreviation for the empty graph
  $(\varnothing,\varnothing)$ and $\bullet_x = (\{x\},\varnothing)$ is the
  unique digraph that has $x$ as its single vertex. Note that linearity of formulas is
  required to fulfill the disjointness assumption of the previous definition.
\end{defi}
\begin{defi}
  \label{def:tograph-nonlinear}
  If $A$ is a generalized formula over a set $X$ that is \emph{not assumed to be
    linear}, we may translate it to a digraph as follows. Choose a \emph{linear}
  generalized formula $A'$, a set $Y$ and a map $\lmap : Y \to X$ such that $A$
  is obtained from $A'$ by the substitution induced by $\lmap$. Then the labeled
  digraph $(\tograph{A'}, \lmap)$ corresponds to $A$. We may write it $\tograph
  A$, keeping in mind that it is only defined up to isomorphism of labeled
  digraphs.
\end{defi}

\begin{prop}
  For every linear generalized formula $A$ we have\footnote{Recall that
    the notation $\sizeof{A}$ has been introduced in Definition~\ref{def:size}.} $\sizeof{A} =
  \sizeof{\vertices[\tograph A]}$.
\end{prop}
\begin{exa}
  The last digraph in~\Cref{eq:ex-digraph} is $\tograph{\vls[<(b;a);d>;c]}$.
\end{exa}

The point of the  map $\tograph{\cdot}$ is that it gives an intrinsic representation of (generalized)
formulas modulo $\fequ$, as the following property shows.

\begin{thmC}[{\cite[Theorem~2.2.9]{SIS}}]\label{thm:fequ}
  For any two generalized formulas $A$ and $B$ over some set $X$, we have
  $A\fequ B$ if and only if $\tograph A$ and $\tograph B$ are isomorphic labeled
  digraphs.
  If $A$ and $B$ are linear, this can be stated as $A \fequ B
  \iff \tograph{A}=\tograph{B}$.
\end{thmC}
In~\cite{SIS}, this property is stated
in terms of \enquote{structures} and \enquote{relation webs}. The former are
just formulas modulo $\fequ$ while the latter are digraphs defined through a
quadruple of relations as in Remark~\ref{rem:relweb}. Further discussion of relation
webs will take place later in this subsection (Definition~\ref{def:relation-web}).

An immediate consequence of Theorem~\ref{thm:fequ}, combined with
Proposition~\ref{prop:corr-functional-mod-fequ}, is that we can also translate sequents into
labeled digraphs.
\begin{cor}
  The extension of the map $\tograph{\cdot}$ to sequents by
  \[
    \tograph{\sqnp{\Gamma,\Delta}} = \vls[\tograph{\Gamma};\tograph{\Delta}]
    \qquand
    \tograph{\sqns{\Gamma;\Delta}} = \vls<\tograph{\Gamma};\tograph{\Delta}>
  \]
  is well-defined up to isomorphism of labeled digraphs. Furthermore, if the
  formula $A$ corresponds to the sequent $\Gamma$ according to
  Definition~\ref{def:formula-sequent-corr}, then $\tograph A$ and $\tograph\Gamma$ are
  isomorphic.
\end{cor}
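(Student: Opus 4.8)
The plan is to reduce the statement to \Cref{thm:fequ} together with the compatibility of $\tograph{\cdot}$ with the connectives. First I would unfold the defining clauses. Given a sequent $\Gamma$, regard it via \Cref{prop:sequent-gen-formula} as a generalized $\ltens$-free formula over the set of formulas, modulo $\fequ$, and pick a representative term $t$ built from formula-leaves and the constructors $\sempty$, $\sqnp{\cdot,\cdot}$, $\sqns{\cdot;\cdot}$. Let $\flat(t)$ be the generalized formula over $\atoms\cup\atoms\lneg$ obtained from $t$ by flattening: replace $\sempty$ by $\lunit$, $\sqnp{\cdot,\cdot}$ by $\lpar$, $\sqns{\cdot;\cdot}$ by $\lseq$, and leave the formula-leaves in place (this is an instance of the monad multiplication mentioned after \Cref{prop:sequent-gen-formula}). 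Using the compatibility of $\tograph{\cdot}$ with the connectives --- literal equality for linear formulas by \Cref{def:tograph}, and equality up to isomorphism of labeled digraphs in general by \Cref{def:tograph-nonlinear}, obtained by linearizing each argument over disjoint preimages and gluing --- the recursion $\tograph{\sqnp{\Gamma,\Delta}}=\vls[\tograph{\Gamma};\tograph{\Delta}]$, $\tograph{\sqns{\Gamma;\Delta}}=\vls<\tograph{\Gamma};\tograph{\Delta}>$ precisely computes (a representative of the isomorphism class of) $\tograph{\flat(t)}$.

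Well-definedness then rests on two easy observations. First, the three operations on digraphs from \Cref{def:digraph-connectives} preserve isomorphisms of labeled digraphs: an isomorphism on each (disjoint) component assembles into one on the composite, directly from the definitions. Hence for a fixed representative $t$ the isomorphism class of $\tograph{\flat(t)}$ is well-defined. Second, if $t$ and $t'$ represent the same sequent, they are related by the congruence generated by the sequent equations; since $\flat$ commutes with all constructors and sends each generating sequent equation to an instance of $\fequ$ --- commutativity and associativity of $\sqnp{\cdot,\cdot}$ to those of $\lpar$, associativity of $\sqns{\cdot;\cdot}$ to that of $\lseq$, and the unit laws for $\sempty$ to the unit equations for $\lunit$ --- we get $\flat(t)\fequ\flat(t')$. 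By \Cref{thm:fequ} the labeled digraphs $\tograph{\flat(t)}$ and $\tograph{\flat(t')}$ are isomorphic, so $\tograph{\Gamma}$ is well-defined up to isomorphism of labeled digraphs.

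For the last sentence, assume $A$ corresponds to $\Gamma$ in the sense of \Cref{def:formula-sequent-corr} and argue by induction on the structure of that correspondence. If $\Gamma=A$ there is nothing to show, since the extension sends a singleton sequent to $\tograph{A}$ itself. If $A=\vls[B;C]$ and $\Gamma=\sqnp{\Delta,\Lambda}$ with $B,C$ corresponding to $\Delta,\Lambda$, then by the induction hypothesis $\tograph{B}\cong\tograph{\Delta}$ and $\tograph{C}\cong\tograph{\Lambda}$, and by the compatibility statement and the preservation of isomorphisms above, $\tograph{A}\cong\vls[\tograph{B};\tograph{C}]\cong\vls[\tograph{\Delta};\tograph{\Lambda}]=\tograph{\sqnp{\Delta,\Lambda}}=\tograph{\Gamma}$; the case $A=\vls<B;C>$ is identical with $\lseq$ in place of $\lpar$.

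The only point requiring genuine care, and the one I would watch, is the bookkeeping around \enquote{defined only up to isomorphism}: once formula-leaves may repeat atoms, $\flat(t)$ is non-linear, so $\tograph{\flat(t)}$ is itself merely an isomorphism class, and every compatibility statement for $\tograph{\cdot}$ must be phrased and used as an isomorphism rather than an equality --- which is why one must first have checked that the digraph connectives preserve such isomorphisms, or else the induction above and the appeal to \Cref{thm:fequ} would not go through. Everything else is a mechanical unfolding of the definitions.
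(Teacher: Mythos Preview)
Your proposal is correct and follows exactly the route the paper intends: the corollary is stated there as an immediate consequence of \Cref{thm:fequ} combined with \Cref{prop:sequent-gen-formula}, with no further proof given, and you have simply written out the details of that reduction (flattening via the monad multiplication, invariance under $\fequ$, and the induction on the correspondence). The care you take with the ``up to isomorphism'' bookkeeping for non-linear formulas is appropriate and matches the spirit of \Cref{def:tograph-nonlinear}.
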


\begin{exa}
  The last digraph in~\Cref{eq:ex-digraph} is also $\tograph{\sqnp{\sqns{\vls(b;a);d},c}}$.
\end{exa}

The next interesting question is how we can characterize the graphs that are
translations of formulas or sequents. In fact, they form the class of
\emph{directed cographs} (which we shall abbreviate as \enquote{dicograph} as
in~\cite{retore:2021}, cf.\ Def.~\ref{def:dicograph}); we refer to
\cite[Section~11.6]{ClassesDigraphs} for a survey. It admits three
characterizations that have been found independently
in~\cite{GBR:rta97,SIS,CrespellePaul}. They all rely on the notion of induced
subgraph.

\begin{defi}
  \label{def:induced}
  Let $\gG=\tuple{\vG,\rG}$ be a digraph and let
  $\vH\subseteq\vG$. The \emph{subdigraph} of $\gG$ \emph{induced by}
  $\vH$ is $\gH=\tuple{\vH,\rH}$ 
  where $\rH = \rG \cap (\vH \times \vH)$.
  In this case we also
  say that $\gH$ is an \emph{induced subgraph} of $\gG$ and denote
  that by $\gH\pseudosubeq\gG$. If additionally $\vH\subsetneq\vG$ then
  we write $\gH\pseudosub\gG$.
\end{defi}

We give detailed statements for the characterizations from~\cite{GBR:rta97,SIS}
below, for the sake of completeness, and to show that they are very similar.

\begin{defi}\label{def:Nfree}
  An undirected graph is \emph{$\Pfour$-free} if it does not contain a $\Pfour$
  (shown on the left below) as induced subgraph, and a directed graph is
  \emph{$\NN$-free} if it does not contain an $\NN$ (shown on the right below)
  as induced subgraph.
  \begin{equation}
    \label{eq:forbidden}
    \vcenter{\hbox{%
    \begin{tikzpicture}
      \node[vertex] (a) at (0,0) {};
      \node[vertex] (b) at (0,1) {};
      \node[vertex] (c) at (1,0) {};
      \node[vertex] (d) at (1,1) {};
      \draw[non matching edge] (a)--(b);
      \draw[non matching edge] (a)--(c);
      \draw[non matching edge] (c)--(d);
    \end{tikzpicture}
    \qquad
    \qquad
    \qquad
    \qquad
    \begin{tikzpicture}
      \node[vertex] (a) at (0,0) {};
      \node[vertex] (b) at (0,1) {};
      \node[vertex] (c) at (1,0) {};
      \node[vertex] (d) at (1,1) {};
      \draw[non matching edge,-Latex] (a)--(b);
      \draw[non matching edge,-Latex] (c)--(b);
      \draw[non matching edge,-Latex] (c)--(d);
    \end{tikzpicture}
  }}
  \end{equation}
\end{defi}

\begin{defiC}[{\cite[Prop.~5.3]{GBR:rta97}}]
  \label{def:dicograph}
  A \emph{dicograph}  is a digraph  $\graph{G} = (\vertices[G],\redges[G])$ such that
  \begin{enumerate}
  \item  the undirected graph $\tuple{\vG,\tredges[G]}$ is  $\Pfour$-free,
  \item the directed graph $\tuple{\vG,\sredges[G]}$ is  $\NN$-free, and
  \item the relation $\rG$ is \emph{weakly transitive}: 
    \begin{itemize}
    \item
      if $(u,v)\in \sredges[G]$ and $(v,w)\in \redges[G]$ then $(u,w)\in \redges[G]$, and
    \item
      if $(u,v)\in \redges[G]$ and $(v,w)\in \sredges[G]$ then $(u,w)\in \redges[G]$.
    \end{itemize}
  \end{enumerate}
\end{defiC}

\begin{defiC}[{\cite[Theorem~2.2.4]{SIS}}]\label{def:relation-web}
  A \emph{relation web} is a tuple
  $\tuple{\vertices,\predges,\sredges,\zredges,\tredges}$ obeying the four
  conditions in Remark~\ref{rem:relweb} together with the following
  conditions:
  \begin{enumerate}\setcounter{enumi}{4}
  \item the relations $\sredges$ and $\zredges$ are transitive,
  \item \emph{Triangular property:} for any
    $R_1,R_2,R_3\in\set{\sredges\cup\zredges,\tredges,\predges}$,
    it holds:\\
    if $(u,v)\in R_1$ and $(v,w)\in R_2$ and $(w,u)\in R_3$ then $R_1=R_2$ or
    $R_2=R_3$ or $R_3=R_1$,
  \item \emph{Square property:} $\tuple{V,\tredges}$ and $\tuple{V,\predges}$ are
    $\Pfour$-free, and $\tuple{V,\sredges}$ is $\NN$-free.
  \end{enumerate}
\end{defiC}

\begin{thm}\label{thm:dicograph}
  Let $\gG$ be a digraph. Then the following are equivalent:
  \begin{enumerate}
  \item There is a linear generalized formula $A$ with $\gG=\tograph{A}$.
  \item $\gG$ is a dicograph.
  \item $\tuple{\vertices[G],\predges[G],\sredges[G],\zredges[G],\tredges[G]}$ is a relation web.
  \end{enumerate}
\end{thm}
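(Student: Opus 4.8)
The plan is to prove the cyclic chain $(1)\Rightarrow(3)\Rightarrow(2)\Rightarrow(1)$; the first two links are routine and all the real content sits in $(2)\Rightarrow(1)$. For $(1)\Rightarrow(3)$ I would induct on the structure of the linear generalized formula $A$. The base cases $A=\lunit$ (the empty digraph) and $A=x$ (a single vertex) satisfy every condition of \Cref{def:relation-web} vacuously, and for the inductive step one checks that each operation $\vls[\gG;\gH]$, $\vls<\gG;\gH>$, $\vls(\gG;\gH)$ of \Cref{def:digraph-connectives} preserves the property of being a relation web. What makes all these verifications uniform is the observation that in each composite, every \emph{cross pair} $(u,v)$ with $u\in\vG$ and $v\in\vH$ lies in $\predges$, in $\sredges\cup\zredges$ (with a fixed orientation, from $\vG$ to $\vH$), or in $\tredges$, respectively --- so all cross pairs share one of the three \enquote{colours} $\predges$, $\sredges\cup\zredges$, $\tredges$. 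Now any forbidden configuration --- an induced $\Pfour$ in $\tuple{\vG,\tredges[G]}$ or $\tuple{\vG,\predges[G]}$, an induced $\NN$ in $\tuple{\vG,\sredges[G]}$, a failure of transitivity of $\sredges$, or a failure of the triangular property --- is, once its shape is inspected, either entirely contained in $\gG$ or in $\gH$ (where the induction hypothesis applies) or else forces two of its pairs to be cross pairs of the \emph{same} colour, which is incompatible with the configuration being forbidden; for the $\NN$ case one uses instead that a vertex on the \enquote{wrong} side would have in- or out-degree larger than $2$. Carrying this out case by case is tedious but unproblematic; care is needed only with orientations, since a cross pair of $\vls<\gG;\gH>$ sits in $\sredges$ when read from $\vG$ to $\vH$ and in $\zredges$ when read the other way.

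For $(3)\Rightarrow(2)$: the square property (condition~7 of \Cref{def:relation-web}) is precisely conditions~1 and~2 of \Cref{def:dicograph}, so only weak transitivity (condition~3) needs an argument. Writing $\redges=\sredges\cup\tredges$ for the associated digraph (\Cref{rem:relweb}), so that $(u,v)\in\redges$ iff $(u,v)\in\sredges$ or $(u,v)\in\tredges$, suppose $(u,v)\in\sredges$ and $(v,w)\in\redges$. If $(v,w)\in\sredges$ then transitivity of $\sredges$ gives $(u,w)\in\sredges\subseteq\redges$. If $(v,w)\in\tredges$ but $(u,w)\notin\redges$, then $(u,w)\in\zredges$ or $(u,w)\in\predges$: in the first case $(w,u),(u,v)\in\sredges$, whence $(w,v)\in\sredges$ by transitivity, contradicting $(v,w)\in\tredges$; in the second case the triangle $(u,v)\in\sredges\cup\zredges$, $(v,w)\in\tredges$, $(w,u)\in\predges$ has its three edges in the three pairwise distinct classes, contradicting the triangular property. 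The symmetric half of weak transitivity is handled the same way.

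The main work is $(2)\Rightarrow(1)$, which I would deduce from the \emph{decomposition lemma}: every dicograph $\gG$ with $\sizeof{\vG}\geq2$ is of the form $\vls[\gG_1;\gG_2]$, $\vls<\gG_1;\gG_2>$, or $\vls(\gG_1;\gG_2)$ for nonempty $\gG_1,\gG_2$ that are induced subgraphs of $\gG$ (and hence again dicographs, all three conditions of \Cref{def:dicograph} being hereditary). Granting the lemma, $(2)\Rightarrow(1)$ is an induction on $\sizeof{\vG}$: the empty digraph is $\tograph{\lunit}$, a one-vertex digraph is $\tograph{x}$, and in the step the lemma plus the induction hypothesis give $\gG_i=\tograph{A_i}$, hence $\gG=\tograph{\vls[A_1;A_2]}$ (or $\tograph{\vls<A_1;A_2>}$, $\tograph{\vls(A_1;A_2)}$). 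To prove the lemma, consider the three undirected graphs on $\vG$ that join $u$ and $v$ exactly when the pair $\{u,v\}$ (with $u\neq v$) is \emph{not} in $\predges$, is \emph{not} in $\tredges$, and \emph{is} in $\tredges\cup\predges$, respectively; since every such pair lies in exactly one of $\predges$, $\tredges$, $\sredges\cup\zredges$, each pair is a non-edge of exactly one of these three graphs. If all three were connected, then --- by adapting to the directed setting the classical argument that a $\Pfour$-free graph on $\geq2$ vertices, or its complement, is disconnected --- one could extract an induced $\Pfour$ in $\tuple{\vG,\tredges[G]}$ or $\tuple{\vG,\predges[G]}$, an induced $\NN$ in $\tuple{\vG,\sredges[G]}$, or a violation of weak transitivity, contradicting that $\gG$ is a dicograph. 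So one of the three graphs is disconnected; let $D$ be one connected component and $D'=\vG\setminus D$, both nonempty, and let $\gG_1,\gG_2$ be the subgraphs induced by $D$ and $D'$. All cross pairs between $D$ and $D'$ then lie in $\predges$, in $\tredges$, or in $\sredges\cup\zredges$, according to which graph was disconnected, giving $\gG=\vls[\gG_1;\gG_2]$ resp.\ $\gG=\vls(\gG_1;\gG_2)$ in the first two cases. In the third case one uses transitivity of $\sredges$ to see that $\sredges$ induces a \emph{transitive} tournament on the connected components of the third graph, so choosing $D$ to be its source component yields $\gG=\vls<\gG_1;\gG_2>$. This decomposition lemma is exactly the structure theorem for directed cographs of~\cite{GBR:rta97,CrespellePaul} (and, in the relation-web phrasing, of~\cite{SIS}), so it is legitimate to cite it rather than reprove it --- after observing, which is immediate from \Cref{def:digraph-connectives}, that the three operations there are the parallel, order (directed) and series (symmetric) compositions used in those references.

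The hard part is of course the decomposition lemma itself --- concretely, the claim that the three auxiliary graphs cannot all be connected, i.e.\ that there is no \enquote{prime} dicograph on two or more vertices. This is the sole genuinely combinatorial step; if one prefers a self-contained treatment to a citation, it is proved by a Seinsche-style analysis of $\Pfour$-free graphs adapted to handle the three relations $\predges,\tredges,\sredges$ simultaneously, the one delicate point being that in the third case the component quotient must be shown to be a \emph{transitive} tournament (so that a source component exists), which is precisely where transitivity of $\sredges$ --- equivalently, weak transitivity --- is used in an essential way.
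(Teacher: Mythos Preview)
Your proposal is correct and aligns with the paper's own treatment: the paper does not give a self-contained proof either, but simply cites \cite[Section~5]{GBR:rta97} for $1\iff 2$ and \cite[Theorems~2.2.4 and~2.2.7]{SIS} for $1\iff 3$, leaving the direct verification of $2\iff 3$ as an exercise. Your cyclic argument $(1)\Rightarrow(3)\Rightarrow(2)\Rightarrow(1)$ organises the same material, with the decomposition lemma for $(2)\Rightarrow(1)$ being precisely the cited structure theorem; your $(3)\Rightarrow(2)$ is a clean solution to the exercise the paper sets, and your $(1)\Rightarrow(3)$ is the routine induction that underlies \cite{SIS}.

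One minor imprecision: in the $(1)\Rightarrow(3)$ step for the $\NN$ case under $\vls<\gG;\gH>$, your degree argument (\enquote{larger than $2$}) handles the $1$--$3$ split of the four vertices, but for the $2$--$2$ split you need instead that both vertices on the $\gG$-side would have $\sredges$-out-degree~$2$ in the induced subgraph, whereas the $\NN$ has only one such vertex. This is easily patched and does not affect the argument.
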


The equivalence $1\iff 2$ has been shown in~\cite[Section~5]{GBR:rta97}, while
the equivalence $1\iff 3$ can be found in~\cite[Theorems~2.2.4 and~2.2.7]{SIS}.
A direct proof of $2\iff 3$ is a straightforward exercise, that we encourage our readers to do by themselves.

Let us also briefly mention the characterization of dicographs
from~\cite{CrespellePaul}, that uses forbidden induced subgraphs as its only
condition.

\begin{thmC}[{\cite[Thm.~2]{CrespellePaul}}]
  \label{thm:forbidden-subgraphs}
  There exists a set $F$ of 8 isomorphism classes of digraphs such that the
  class of digraphs whose induced subgraphs are not in $F$ coincides with the
  class inductively generated by the operations of
  Definition~\ref{def:digraph-connectives} starting from single-vertex graphs.
\end{thmC}
It follows from the definitions that this inductively generated class
corresponds to the first item of Theorem~\ref{thm:dicograph}, so this is indeed a
characterization of dicographs.

\begin{cor}
  Let $\gG=\tuple{\vG,\rG}$ be a dicograph. Then any induced
  subdigraph of $\gG$ is also a dicograph.
\end{cor}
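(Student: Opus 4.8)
The plan is to reduce the claim to the fact that the defining conditions of a dicograph are \emph{hereditary} under passing to induced subgraphs. The quickest route uses the forbidden-subgraph characterization of \Cref{thm:forbidden-subgraphs}. First I would record the elementary observation that ``being an induced subgraph of'' is transitive: if $\graph{K}\pseudosubeq\gH\pseudosubeq\gG$ with $\gH$ induced by $\vH\subseteq\vG$ and $\graph{K}$ induced by $\vertices[K]\subseteq\vH$, then $\redges[K]=\redges[H]\cap(\vertices[K]\times\vertices[K])=\bigl(\rG\cap(\vH\times\vH)\bigr)\cap(\vertices[K]\times\vertices[K])=\rG\cap(\vertices[K]\times\vertices[K])$, since $\vertices[K]\subseteq\vH$; so $\graph{K}\pseudosubeq\gG$. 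Consequently, if some member of the forbidden set $F$ occurs as an induced subgraph of an induced subgraph $\gH$ of $\gG$, it already occurs as an induced subgraph of $\gG$. As $\gG$ is a dicograph, by \Cref{thm:forbidden-subgraphs} together with the equivalence $1\iff 2$ of \Cref{thm:dicograph} no member of $F$ embeds into $\gG$ as an induced subgraph; hence none embeds into $\gH$, and $\gH$ is a dicograph.

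If one prefers a proof that does not invoke \Cref{thm:forbidden-subgraphs}, I would instead work directly from \Cref{def:dicograph}. The one auxiliary fact to establish is that taking an induced subgraph commutes with the four derived relations: if $\gH=\tuple{\vH,\rH}$ is the subdigraph of $\gG$ induced by $\vH\subseteq\vG$, then $\tredges[H]=\tredges[G]\cap(\vH\times\vH)$, and likewise $\sredges[H]=\sredges[G]\cap(\vH\times\vH)$, $\zredges[H]=\zredges[G]\cap(\vH\times\vH)$, $\predges[H]=\predges[G]\cap(\vH\times\vH)$. This is immediate from \Cref{def:digraph-relations} and $\rH=\rG\cap(\vH\times\vH)$: for $u,v\in\vH$, the membership of $(u,v)$ and of $(v,u)$ in $\rH$ is equivalent to their membership in $\rG$.

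Given this fact, the three conditions of \Cref{def:dicograph} transfer from $\gG$ to $\gH$. For conditions~1 and~2, note that $\tuple{\vH,\tredges[H]}$ is precisely the induced subgraph of $\tuple{\vG,\tredges[G]}$ on $\vH$, and $\tuple{\vH,\sredges[H]}$ the induced subgraph of $\tuple{\vG,\sredges[G]}$ on $\vH$; since $\Pfour$-freeness and $\NN$-freeness are by definition preserved under passing to induced subgraphs (using again that an induced subgraph of an induced subgraph is an induced subgraph), these conditions hold for $\gH$. For condition~3 (weak transitivity), suppose $u,v,w\in\vH$ with $(u,v)\in\sredges[H]$ and $(v,w)\in\rH$; then $(u,v)\in\sredges[G]$ and $(v,w)\in\rG$ by the auxiliary fact, so $(u,w)\in\rG$ by weak transitivity of $\gG$, and since $u,w\in\vH$ this yields $(u,w)\in\rG\cap(\vH\times\vH)=\rH$. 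The symmetric case $(u,v)\in\rH$, $(v,w)\in\sredges[H]$ is handled identically. Hence $\gH$ is a dicograph.

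There is no genuine obstacle here: the statement is essentially bookkeeping. The only points that require a line of care are the auxiliary fact that the four derived relations of an induced subgraph are the restrictions of those of the ambient digraph, and — in the first approach — the transitivity of the induced-subgraph relation; both follow directly from the defining identity $\rH=\rG\cap(\vH\times\vH)$ and the associativity of intersection.
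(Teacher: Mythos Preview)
Your first approach is exactly the paper's: the corollary is placed immediately after \Cref{thm:forbidden-subgraphs}, and the paper's one-line proof simply invokes the transitivity of $\pseudosubeq$, leaving the forbidden-subgraph reasoning implicit.

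Your second approach, working directly from \Cref{def:dicograph}, is a genuinely different route. It is more self-contained in that it avoids the external reference~\cite{CrespellePaul} underlying \Cref{thm:forbidden-subgraphs}, at the cost of the small amount of bookkeeping you describe (checking that the four derived relations restrict correctly and that each of the three defining conditions is hereditary). Both arguments are correct; the paper chose brevity over self-containment.
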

\begin{proof}
  This is because the induced subgraph relation $\pseudosubeq$ is transitive.
\end{proof}

An important special case of relation webs is when $\tredges = \varnothing$.
\begin{prop}\label{prop:sp-orders}
  Let $\gG$ be a digraph. The following are equivalent:
  \begin{enumerate}
  \item There is a linear generalized formula $A$ that does not contain $\ltens$
    and such that $\gG=\tograph{A}$.
  \item $\gG$ is a dicograph and $\tredges[G] = \varnothing$, i.e.\ all edges are
    unidirectional.
  \end{enumerate}
  Furthermore, when these conditions hold, $\gG =
  (\vertices[\gG],\sredges[\gG])$. For a fixed $\vertices[\gG]$, the possible
  values for $\sredges[\gG]$ are exactly those for which $\gG$ is $\NN$-free and
  $\sredges[\gG]$ is transitive. Such relations are called \emph{series-parallel
    orders}\footnote{Strictly speaking, series-parallel orders are a special
    case of \emph{partial orders}, i.e.\ \emph{reflexive} and transitive
    relations, whereas $\sredges[\gG]$ is \emph{irreflexive} and transitive. So
    it is the reflexive closure of $\sredges[\gG]$ which is a series-parallel
    order, but we often harmlessly identify (reflexive) posets with the
    corresponding (irreflexive) digraphs.} in the literature.
\end{prop}
\begin{proof}
  Let $A$ be a linear generalized formula. By structural induction on $A$, one
  can see that $\tredges[\tograph{A}] = \varnothing$ if and only if $A$ is
  $\ltens$-free. The claimed equivalence then follows from the case of general
  dicographs we saw earlier. And by definition, $\sredges[\gG]$ contains all the
  unidirectional edges of $\gG$.
  The characterization as $\NN$-free partial orders is just the specialization
  of either Definition~\ref{def:dicograph} or Definition~\ref{def:relation-web}, but it is in fact
  an old classical result~\cite[Section~4]{SPdigraphs} (see also~\cite{mohring:orders}).
\end{proof}
The operations $\lseq$ and $\lpar$ on digraphs (Definition~\ref{def:digraph-connectives})
are respectively the \enquote{series composition} and \enquote{parallel
  composition} of partial orders. For more on series-parallel orders, see the
beginning of \cite[Section~11.6]{ClassesDigraphs} which points to many further
references.

\begin{rem}\label{rem:sp-sequent}
  Combining Propositions~\ref{prop:sequent-gen-formula}
  and~\ref{prop:sp-orders}, we see that sequents are equivalent to
  series-parallel posets labeled by formulas up to label-preserving isomorphism,
  i.e.\ to series-parallel \emph{partially ordered multisets (pomsets)} of
  formulas, as claimed in the previous subsection.
\end{rem}

\begin{rem}\label{rem:cographs}
  Another noteworthy special case of dicographs is those with
  $\sredges[G]=\zredges[G]=\varnothing$; they are the digraphs of the form
  $\tograph A$ where $A$ does not contain $\lseq$. They correspond to a class of
  \emph{undirected} graphs called \emph{cographs}, that can be defined
  equivalently as $\Pfour$-free graphs (see~\cite{cographs}). Cographs therefore
  provide a representation of $\MLLm$ formulas that can be used to build proof
  nets (analogously to the next two sections for pomset logic),
  see~\cite{handsomeReport,handsomeTCS}.

  The class of \enquote{directed cographs} has been given this name
  independently by~\cite{pomsetReport} and~\cite{CrespellePaul}. This arguably attests that
  it is a natural generalization of cographs.
\end{rem}


\subsection{Perfect Matchings, RB-digraphs and Alternating Elementary Cycles}

We have just seen representation of \emph{formulas} as graphs. In this
subsection and the next one, we recall how to graphically represent pomset logic
\emph{proofs} as well. This requires a few graph-theoretic definitions first.

Recall that an undirected graph is 1-regular if every vertex is incident to
exactly one edge, and a perfect matching of some undirected graph is a 1-regular
spanning subgraph (so the notion of perfect matching is usually defined
relatively to some ambient graph). As a slight abuse of language, we use
\enquote{perfect matching} in this paper to designate a directed counterpart to
1-regular graphs.

\begin{defi}
  We say that a digraph $\gG=\tuple{\vG,\eG}$ is a \emph{perfect matching} when
  \begin{enumerate}
  \item any vertex has exactly one outgoing edge in $\eG$ and exactly one
    incoming edge in $\eG$, i.e., for every $u \in \vertices[G]$ there is a
    single pair $(v,w) \in \vG^2$ such that $(u,v) \in \eG$ and $(w,u) \in \eG$,
    and
  \item all edges are bidirectional, i.e.\ for all $u,v \in \vertices[G]$, we
    have that $(u,v) \in \eG$ iff $(v,u) \in \eG$. This means, in particular,
    that in the previous item, $v=w$.
  \end{enumerate}
\end{defi}

\begin{defi}
  An \emph{RB-digraph} $\gG=\tuple{\vG,\rG,\bG}$ is a triple where
  $\tuple{\vG,\rG}$ is an arbitrary digraph and $(\vG,\bG)$ is a perfect
  matching. We call the edges in $\bG$ the \emph{matching edges} or
  \emph{$B$-edges}, and those in $\rG$ the \emph{non-matching edges} or
  \emph{$R$-edges}.
  We say that $\gG$ is an \emph{RB-dicograph} iff $\tuple{\vG,\rG}$ is a dicograph.

  Following Definition~\ref{def:isomorphism}, we define an \emph{isomorphism} between
  $\gG$ and another RB-digraph $\gH=\tuple{\vH,\rH,\bH}$ to be a bijection $f :
  \vG \xrightarrow{\;\sim\;} \vH$ such that $f(\rG)=\rH$ and $f(\bG)=\bH$. This
  extends in the only sensible way to a notion of isomorphism between labeled
  RB-digraphs.
\end{defi}

\begin{nota}
  In all figures representing RB-digraphs, we will
  (following~\cite{pomsetReport}) draw the matching edges
  \textbf{\color{blue}bold and blue}, while the non-matching edges will be drawn
  {\color{red}regular and red}.
\end{nota}

\begin{exa}\label{exa:RB}
  \def\myskip{\hskip3.5em}
  Here are six examples of RB-digraphs.
  \begin{equation*}
    \begin{tikzpicture}
      \node[vertex] (a) at (0,0) {};
      \node[vertex] (b) at (0,1) {};
      \node[vertex] (c) at (1,0) {};
      \node[vertex] (d) at (1,1) {};
      \draw[non matching edge,-Latex] (a)--(b);
      \draw[non matching edge] (a)--(d);
      \draw[non matching edge,-Latex] (c)--(d);
      \draw[matching edge] (a)--(c);
      \draw[matching edge] (b)--(d);
    \end{tikzpicture}
    \myskip
    \begin{tikzpicture}
      \node[vertex] (a) at (0,0) {};
      \node[vertex] (b) at (0,1) {};
      \node[vertex] (c) at (1,0) {};
      \node[vertex] (d) at (1,1) {};
      \draw[non matching edge] (a)--(b);
      \draw[non matching edge] (c)--(b);
      \draw[non matching edge] (a)--(d);
      \draw[non matching edge,-Latex] (c)--(d);
      \draw[matching edge] (a)--(c);
      \draw[matching edge] (b)--(d);
    \end{tikzpicture}
    \myskip
    \begin{tikzpicture}
      \node[vertex] (a) at (0,0) {};
      \node[vertex] (b) at (0,1) {};
      \node[vertex] (c) at (1,0) {};
      \node[vertex] (d) at (1,1) {};
      \draw[non matching edge] (a)--(b);
      \draw[non matching edge,-Latex] (c)--(d);
      \draw[matching edge] (a)--(c);
      \draw[matching edge] (b)--(d);
    \end{tikzpicture}
    \myskip
    \begin{tikzpicture}
      \node[vertex] (a) at (0,0) {};
      \node[vertex] (b) at (0,1) {};
      \node[vertex] (c) at (1,0) {};
      \node[vertex] (d) at (1,1) {};
      \draw[non matching edge,-Latex] (b)--(a);
      \draw[non matching edge,-Latex] (c)--(d);
      \draw[matching edge] (a)--(c);
      \draw[matching edge] (b)--(d);
    \end{tikzpicture}
    \myskip
    \begin{tikzpicture}
      \node[vertex] (a) at (0,0) {};
      \node[vertex] (b) at (0,1) {};
      \node[vertex] (c) at (1,0) {};
      \node[vertex] (d) at (1,1) {};
      \draw[non matching edge,-Latex] (a)--(b);
      \draw[non matching edge,-Latex] (c)--(d);
      \draw[matching edge] (a)--(c);
      \draw[matching edge] (b)--(d);
    \end{tikzpicture}
    \myskip
    \begin{tikzpicture}
      \node[vertex] (a) at (0,0) {};
      \node[vertex] (b) at (0,1) {};
      \node[vertex] (c) at (1,0) {};
      \node[vertex] (d) at (1,1) {};
      \draw[non matching edge] (a)--(b);
      \draw[non matching edge] (c)--(d);
     \draw[non matching edge] (c)--(b);
      \draw[non matching edge] (a)--(d);
      \draw[matching edge] (a)--(c);
      \draw[matching edge] (b)--(d);
    \end{tikzpicture}
  \end{equation*}
  In the first two, the underlying digraphs are not dicographs. But the other four examples are RB-dicographs.
\end{exa}

\begin{defi}
  Let $\gG = \tuple{\vG,\eG}$ be a digraph and $n \in \mathbb{N}$ with $n \geq
  2$. An \emph{elementary path} of length $n$ in $\gG$ is a sequence of vertices
  $u_0,\ldots,u_n \in \vG$ \emph{without repetitions} such that $(u_i,u_{i+1})
  \in \eG$ for all $i \in \{0,\dots,n-1\}$ (so the length counts the number of
  edges, not of vertices). An \emph{elementary cycle} is defined in the same way
  except there is the single repetition $u_n = u_0$ (so the length of an
  elementary cycle is both its number of vertices and its number of edges).

  An \emph{alternating elementary path} (or \emph{\ae-path}) in an RB-digraph
  $\tuple{\vG,\rG,\bG}$ is an elementary path $u_0,\ldots,u_n$ in
  the digraph $\tuple{\vG,\rG\cup\bG}$ such that\footnote{One could be tempted to simplify
    this definition by saying that for any two consecutive edges, one is in
    $\rG$ and the other in $\bG$; the issue, however, is that $\rG$ and $\bG$
    are not required to be disjoint.}:
  \begin{itemize}
  \item either $(u_i,u_{i+1}) \in \rG$ when $i$ is odd and $(u_i,u_{i+1}) \in
    \bG$ when $i$ is even,
  \item or $(u_i,u_{i+1}) \in \rG$ when $i$ is even and $(u_i,u_{i+1}) \in \bG$
    when $i$ is odd.
  \end{itemize}
  An \emph{alternating elementary cycle} (or \emph{\ae-cycle}) in
  $\tuple{\vG,\rG,\bG}$ is an elementary cycle \emph{of even length} in
  $\tuple{\vG,\rG\cup\bG}$ that satisfies the above alternation condition.
  Morally, the parity condition ensures that the cycle also alternates between
  $(u_{n-1},u_n)$ and $(u_1,u_2)$, or equivalently that the \enquote{change of
    base points} $u'_i = u_{i+k\, \textrm{mod}\, n}$ sends æ-cycles to æ-cycles.
\end{defi}
\begin{exa}
  The first and the fifth graph in Example~\ref{exa:RB} do not contain
  an \ae-cycle. In all other graphs in that example, the four vertices
  form an \ae-cycle.
\end{exa}

The existence or non-existence of \ae-cycles in RB-dicographs will play a
central role in this paper. We note in passing that in the classical theory of
matchings in \emph{undirected} graphs, the absence of \ae-cycles admits
alternative characterizations and entails deep structural properties, which have
been applied to $\MLLm$ proof nets~\cite{handsomeTCS,uniquePM}.
\begin{defi}
  Let $u_0, \dots, u_{n-1}, u_n = u_0$ be an elementary cycle in 
  $\gG=(\vG,\eG)$. An edge $(v,w) \in \eG$ is a \emph{chord} for this cycle when
  $v=u_i$ and $w=u_j$ for some $i,j \in \{0,\dots,n-1\}$ such that $i \not\equiv
  j+1\!\mod n$ and $j \not\equiv i+1 \mod n$. That is, both vertices $v$ and $w$
  occur in the cycle but the edges $(v,w)$ and $(w,v)$ are not part of the
  cycle. A cycle is \emph{chordless} if it does not admit any chord in $\gG$.
  A \emph{chordless æ-cycle} in an RB-digraph $(\vG,\rG,\bG)$ is an æ-cycle
  which has no chord in the total digraph $(\vG,\rG\cup\bG)$. Note that since
  $\bG$ is a perfect matching, if an æ-cycle admits a chord, then this chord
  is necessarily in $\rG\setminus\bG$.
\end{defi}

\begin{exa}
  In Example~\ref{exa:RB}, the \ae-cycles in the second and sixth
  graph admit chords. The \ae-cycles in the third and fourth graph are
  chordless.
\end{exa}


\subsection{Pomset Logic, Proof Nets and Balanced Formulas}

A proof in multiplicative linear logic ($\MLL$) is given by its
conclusion (a formula or a sequent) and an \emph{axiom linking}. This
can be drawn as a graph, which consists of the formula tree (or
sequent forest) with additional edges representing the axiom links,
i.e., connecting those leaves of the tree which are matched by an
axiom in the proof. In order to distinguish the actual proofs in the
set of all such graphs, a so-called \emph{correctness criterion} is
employed, and the graphical structures that obey this criterion are
called \emph{proof nets}.

In~\cite{retore:pomset}, Retoré generalized this idea from the
formulas/sequents of $\MLL$ to those that we introduced
in Section~\ref{sec:formulas}. There exist many different equivalent
correctness criteria for $\MLL$, and one of Retoré's main achievements
was to figure out which criterion allows for a generalization to
include the seq connective $\lseq$. There are in fact two such criteria, and
their versions for $\MLL$ and $\MLLm$ have been presented in~\cite{handsomeTCS}.  They
are both based on RB-graphs, and we are now going to give an exposition of their extension
to RB-digraphs, as presented in~\cite{retore:pomset,pomsetReport,retore:2021}.

First, the notion of axiom linking is the same as for $\MLL$.

\begin{defi}
  A \emph{pomset logic pre-proof} of a formula or a sequent is an involution
  $\linking$ on its set of atom occurrences such that an atom is always mapped
  to its dual. This $\linking$ is also called an \emph{axiom linking}.
\end{defi}

\begin{exa}
  The sequent $\sqnp{ \vls[a\lneg;a\lneg] , \vls(a;a) }$ has two possible axiom linkings.
\end{exa}

Throughout this paper, the class of formulas for which there is a single
possible choice of axiom linking will play an important role.
\begin{defi}
  \label{def:balanced}
  A formula is \emph{balanced} if every propositional variable that occurs in
  $A$ occurs exactly once positively and exactly once negatively. (Equivalently,
  a formula is balanced when it is \emph{linear} and its set of occurring atoms
  is closed under duality.) Using the correspondence of
  Definition~\ref{def:formula-sequent-corr}, this extends to a notion of \emph{balanced
    sequent}.
  A balanced formula $A$ \emph{uniquely determines} an axiom linking on $A$,
  that we denote by $\linking(A)$. Similarly, we write $\linking(\Gamma)$ for
  the unique axiom linking on a balanced sequent $\Gamma$.\footnote{Balanced formulas for classical logic are discussed in detail in~\cite[Section~7]{ExtensionCut}.}
\end{defi}

As we have seen in the previous subsections, one can represent a formula as a
dicograph, where the atom occurrences are the vertices and the edges are
determined by the connectives. More traditionally, one can also associate a
syntax tree to a formula. Each of these two graphical representations has an
associated correctness criterion, based on æ-cycles in RB-digraphs. We now
present those two criteria, starting with dicographs.

\begin{defi}\label{def:relRBprenet}
  Let $\Gamma$ be a sequent and $\linking$ be an axiom linking for
  $\Gamma$. The \emph{cographic RB-prenet} of
  $\Gamma$ and $\linking$, denoted by $\relRB{\Gamma,\linking}$, is the
  RB-dicograph $\gG=\tuple{\vG,\eG,\bG}$ where
  $\tuple{\vG,\eG}=\tograph{\Gamma}$, and we have $(x,y)\in\bG$
  iff the atom occurrences in $\Gamma$ that correspond to $x$ and $y$ are mapped
  to each other by the axiom linking $\linking$.
\end{defi}

\begin{exa}
  The sequent $\sqnp{\sqns{a;a},\sqns{a\lneg;a\lneg}}$ admits two possible axiom linkings, and the two corresponding cographic RB-prenets are the forth and the fifth graph in Example~\ref{exa:RB}. 
\end{exa}

\begin{defi}
  If $A$ is a balanced formula, we write $\toRBgraph{A}$ for the cographic RB-prenet $\relRB{A,\linking(A)}$,
  i.e., $\toRBgraph{A}=\tuple{\vA,\rA,\bA}$, where $\tuple{\vA,\rA}=\tograph{A}$
  and $\bA$ is the matching associated to $\linking(A)$.
\end{defi}

\begin{prop}\label{prop:relRB}
  Every RB-dicograph is isomorphic to some $\toRBgraph{A}$ where $A$ is a
  balanced formula.
\end{prop}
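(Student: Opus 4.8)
The plan is to show that, given an arbitrary RB-dicograph $\gG = \tuple{\vG,\rG,\bG}$, one can construct a balanced formula $A$ with $\toRBgraph{A} \cong \gG$. By Theorem~\ref{thm:dicograph} (the equivalence $1 \iff 2$), since $\tuple{\vG,\rG}$ is a dicograph there is a linear generalized formula $A_0$ over the set $\vG$ with $\tograph{A_0} = \tuple{\vG,\rG}$. So the only work left is to turn the vertex set $\vG$ into a set of atoms in a way that is (i) compatible with the given perfect matching $\bG$ and (ii) makes the resulting formula balanced. First I would use that $\bG$ is a perfect matching of a \emph{directed} graph in which every edge is bidirectional: this means the edge relation $\bG$, viewed as an undirected graph, is $1$-regular, so it partitions $\vG$ into $|\vG|/2$ unordered pairs $\{x,x'\}$. (In particular $|\vG|$ is even, which is automatic from the perfect-matching hypothesis.)

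Next I would pick, for each such pair, a fresh propositional variable: choose a bijection from the set of matched pairs to a finite subset of $\atoms$, say sending the pair $\{x,x'\}$ to a variable $p_{\{x,x'\}}$, with distinct pairs getting distinct variables (possible since $\atoms$ is countably infinite). Then define a labeling $\lmap : \vG \to \atoms \cup \atoms\lneg$ by arbitrarily orienting each pair: set $\lmap(x) = p_{\{x,x'\}}$ and $\lmap(x') = p_{\{x,x'\}}\lneg$. Applying $\lmap$ to the leaves of $A_0$ yields a formula $A$. Because $A_0$ is linear and $\lmap$ hits each variable exactly once positively and exactly once negatively, $A$ is balanced in the sense of Definition~\ref{def:balanced}. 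Moreover, by construction the labeled digraph $(\tograph{A_0}, \lmap)$ is exactly $\tograph{A}$ up to isomorphism of labeled digraphs (Definition~\ref{def:tograph-nonlinear}), so its underlying digraph is $\tuple{\vG,\rG}$.

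It then remains to check that the matching induced by the unique axiom linking $\linking(A)$ of the balanced formula $A$ coincides with $\bG$. This is immediate from the definitions: the linking $\linking(A)$ pairs the (unique) occurrence of $p$ with the (unique) occurrence of $p\lneg$ for each variable $p$ occurring in $A$, and under the identification of atom occurrences with vertices of $\vG$ via $\lmap$, this is precisely the pair $\{x,x'\}$, i.e.\ an edge of $\bG$; conversely every edge of $\bG$ is a matched pair and hence an axiom link. Thus $\bA = \bG$ and $\toRBgraph{A} = \tuple{\vA,\rA,\bA} = \tuple{\vG,\rG,\bG} = \gG$, as required.

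I do not foresee a serious obstacle here; the proposition is essentially a bookkeeping lemma assembling Theorem~\ref{thm:dicograph} with the definition of a perfect matching of a (bidirectional-edge) digraph. The only point requiring a little care is the bidirectionality clause in the definition of perfect matching: it is what guarantees that $\bG$ descends to a genuine $1$-regular \emph{undirected} graph and hence to an unordered partition into pairs, which is exactly the shape of data needed to read off an axiom linking on a balanced formula. One should also note explicitly that $A$ is far from unique — different choices of the bijection to $\atoms$ and of the orientation of each pair give different balanced formulas with the same cographic RB-prenet — but the statement only asserts existence, so this causes no difficulty.
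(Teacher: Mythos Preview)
Your proposal is correct and follows essentially the same approach as the paper: label the vertices with distinct atoms so that two atoms are dual iff they are matched by $\bG$, then invoke Theorem~\ref{thm:dicograph} to obtain the formula $A$ from the dicograph $\tuple{\vG,\rG}$. The paper's proof is a two-sentence sketch of exactly this idea; you have merely spelled out the bookkeeping (in a slightly different order, first extracting the generalized formula and then labeling, rather than labeling first), which is fine.
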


\begin{proof}
  Let $\gG=\tuple{\vG,\rG,\bG}$ be given. We can label the vertices in $\vG$
  with distinct atoms, such that two atoms are dual if and only if they are
  matched by $\bG$. Then $A$ exists by Theorem~\ref{thm:dicograph}, since
  $\tuple{\vG,\rG}$ is a dicograph.
\end{proof}

\begin{defi}
  A cographic RB-prenet is \emph{correct} if it does not contain any
  chordless \ae-cycle. A correct cographic RB-prenet is also called a
  \emph{cographic RB-net}.
\end{defi}

\begin{defi}
  A sequent $\Gamma$ is \emph{provable in pomset logic} if there is an
  axiom linking $\linking$ for $\Gamma$, such that
  $\relRB{\Gamma,\linking}$ is correct. In that case, $\relRB{\Gamma,\linking}$ (or simply $\linking$) is a
  \emph{pomset logic proof} of $\Gamma$.
\end{defi}

\begin{exa}
  The last two graphs in Example~\ref{exa:RB} are pomset logic proofs for the balanced formulas $\vls[<a;b>;<a\lneg;b\lneg>]$ and $\vls([a;a\lneg];[b;b\lneg])$, respectively.
\end{exa}

The following theorem says that pomset logic is a conservative extension of $\MLLm$.

\begin{thmC}[{\cite[Theorem~7]{handsomeTCS}} (see also Remark~\ref{rem:cographs})]
  Let $\Gamma$ be a flat sequent without any occurrence of $\lseq$, and let $\linking$ be an axiom linking for $\Gamma$. Then $\linking$ is the axiom linking of an $\MLLm$ sequent proof iff $\relRB{\Gamma,\linking}$ is correct. 
\end{thmC}

The second correctness criterion, that we are going to show now, is more in the tradition of other
known correctness criteria for $\MLLm$, as it works on a structure that is
directly derived from the formula trees. More precisely, we define
inductively for each formula $C$ its \emph{RB-tree}, denoted as
$\rbtree(C)$, as shown in Figure~\ref{fig:rb-tree}.  Technically
speaking this not a tree in the graph-theoretical sense, but we use
the name as it carries the structure of the formula tree.\looseness=-1

\begin{figure}[!t]
  \newcommand{\intriangle}[1]{\raisebox{2.6ex}{\small~#1~}}
  \centering
  \begin{tabular}{|c|c|c|c|c|}
    \hline
    $\rbtree(\lunit)$ &$\rbtree(a)$ 
    &
    $\strut^{\strut}$
    $\rbtree(\vls(A;B))$ & $\rbtree(\vls[A;B])$ & $\rbtree(\vls<A;B>)$
    \\[.5ex]
    \begin{tikzpicture}
      \node[vertex] (cli) at (0,1) {};
      \node[vertex] (clo) at (0,0) {};
      \draw[matching edge] (cli) -- node[right, text=black] {$\lunit$} ++ (clo);
    \end{tikzpicture}
    &
    \begin{tikzpicture}
      \node[vertex] (clo) at (0,0) {};
    \end{tikzpicture}
    &
    \begin{tikzpicture}
      \node[bigvertext] (Li) at (0,3) {\intriangle{$\rbtree(A)$}};
      \node[vertex] (Lo) at (0,2) {};
      \draw[matching edge] (Li) -- (Lo);

      \node[bigvertext] (Ri) at (2,3) {\intriangle{$\rbtree(B)$}};
      \node[vertex] (Ro) at (2,2) {};
      \draw[matching edge] (Ri) -- (Ro);

      \draw[non matching edge] (Lo) -- (Ro);

      \node[vertex] (cli) at (1,1) {};
      \node[vertex] (clo) at (1,0) {};
      \draw[matching edge] (cli) -- node[right, text=black] {$A \ltens B$} ++ (clo);
      \draw[non matching edge] (cli) -- (Lo);
      \draw[non matching edge] (cli) -- (Ro);
    \end{tikzpicture}
    &
    \begin{tikzpicture}
      \node[bigvertext] (Li) at (0,3) {\intriangle{$\rbtree(A)$}};
      \node[vertex] (Lo) at (0,2) {};
      \draw[matching edge] (Li) -- (Lo);

      \node[bigvertext] (Ri) at (2,3) {\intriangle{$\rbtree(B)$}};
      \node[vertex] (Ro) at (2,2) {};
      \draw[matching edge] (Ri) -- (Ro);

      \node[vertex] (cli) at (1,1) {};
      \node[vertex] (clo) at (1,0) {};
      \draw[matching edge] (cli) -- node[right, text=black] {$A \lpar B$} ++ (clo);
      \draw[non matching edge] (cli) -- (Lo);
      \draw[non matching edge] (cli) -- (Ro);
    \end{tikzpicture}
    &
    \begin{tikzpicture}
      \node[bigvertext] (Li) at (0,3) {\intriangle{$\rbtree(A)$}};
      \node[vertex] (Lo) at (0,2) {};
      \draw[matching edge] (Li) -- (Lo);

      \node[bigvertext] (Ri) at (2,3) {\intriangle{$\rbtree(B)$}};
      \node[vertex] (Ro) at (2,2) {};
      \draw[matching edge] (Ri) -- (Ro);

      \draw[non matching edge, -{Latex}] (Lo) -- (Ro);

      \node[vertex] (cli) at (1,1) {};
      \node[vertex] (clo) at (1,0) {};
      \draw[matching edge] (cli) -- node[right, text=black] {$A \lseq B$} ++ (clo);
      \draw[non matching edge] (cli) -- (Lo);
      \draw[non matching edge] (cli) -- (Ro);
    \end{tikzpicture}
    \\
    \hline
  \end{tabular}
  \caption{Inductive definition of RB-trees (which are not quite trees in the
    sense of graph theory, though they resemble the syntax trees of formulas).
    The root vertex is at the bottom.}
  \label{fig:rb-tree}
\end{figure}

If we have a sequent $\Gamma$, then $\rbtree(\Gamma)$ is obtained from
the RB-trees of the formulas in $\Gamma$ which are connected at the roots
via the edges corresponding to the series-parallel order of the sequent
structure (see the discussion following Proposition~\ref{prop:sp-orders}). In order to
obtain an RB-digraph, we need to add the
$B$-edges corresponding to the linking~$\linking$. We denote this
RB-digraph by $\treeRB{\Gamma,\linking}$ and call it the
\emph{tree-like RB-prenet} of $\Gamma$ and
$\linking$.

\begin{exa}
  The two graphs in Figure~\ref{fig:ps-example} show the two \emph{tree-like RB-prenets}, corresponding to the two axiom linkings for $\vls[<a;a>;<a\lneg;a\lneg>]$.
\end{exa}

\begin{figure}
  \centering
  \begin{tikzpicture}
    \node[vertex] (A) at (0,4) {};
    \node[vertex] (B) at (1.5,4) {};
    \node[vertex] (notA) at (2.5,4) {};
    \node[vertex] (notB) at (4,4) {};


    \draw[matching edge] (A) to [bend left = 40] (notA);
    \draw[matching edge] (B) to [bend left = 40] (notB);
    \draw[non matching edge, -{Latex}] (notA) -- (notB);

    \node[vertex] (AtBi) at (0.75,3) {};
    \node[vertex] (AtBo) at (0.75,2) {};
    \draw[matching edge] (AtBi) -- node[right, text=black] {$a \lseq a$} ++ (AtBo);
    \draw[non matching edge] (AtBi) -- (A);
    \draw[non matching edge] (AtBi) -- (B);
    \draw[non matching edge, -{Latex}] (A) -- (B);

    \node[vertex] (ApBi) at (3.25,3) {};
    \node[vertex] (ApBo) at (3.25,2) {};
    \draw[matching edge] (ApBi) -- node[right, text=black] {$a\lneg \lseq a\lneg$} ++ (ApBo);
    \draw[non matching edge] (ApBi) -- (notA);
    \draw[non matching edge] (ApBi) -- (notB);

    \node[vertex] (ccli) at (2,1) {};
    \node[vertex] (cclo) at (2,0) {};
    \draw[matching edge] (ccli) -- node[right, text=black] {$\seqs{a \lseq a} \lpar \seqs{a\lneg \lseq
      a\lneg}$} ++ (cclo);
    \draw[non matching edge] (ccli) -- (AtBo);
    \draw[non matching edge] (ccli) -- (ApBo);
  \end{tikzpicture}
  \hskip5em
  \begin{tikzpicture}
    \node[vertex] (A) at (0,4) {};
    \node[vertex] (B) at (1.5,4) {};
    \node[vertex] (notA) at (2.5,4) {};
    \node[vertex] (notB) at (4,4) {};


    \draw[matching edge] (A) to [bend left = 30] (notB);
    \draw[matching edge] (B)  to [bend left = 20] (notA);
    \draw[non matching edge, -{Latex}] (notA) -- (notB);

    \node[vertex] (AtBi) at (0.75,3) {};
    \node[vertex] (AtBo) at (0.75,2) {};
    \draw[matching edge] (AtBi) -- node[right, text=black] {$a \lseq a$} ++ (AtBo);
    \draw[non matching edge] (AtBi) -- (A);
    \draw[non matching edge] (AtBi) -- (B);
    \draw[non matching edge, -{Latex}] (A) -- (B);

    \node[vertex] (ApBi) at (3.25,3) {};
    \node[vertex] (ApBo) at (3.25,2) {};
    \draw[matching edge] (ApBi) -- node[right, text=black] {$a\lneg \lseq a\lneg$} ++ (ApBo);
    \draw[non matching edge] (ApBi) -- (notA);
    \draw[non matching edge] (ApBi) -- (notB);

    \node[vertex] (ccli) at (2,1) {};
    \node[vertex] (cclo) at (2,0) {};
    \draw[matching edge] (ccli) -- node[right, text=black] {$\seqs{a \lseq a} \lpar \seqs{a\lneg \lseq
      a\lneg}$} ++ (cclo);
    \draw[non matching edge] (ccli) -- (AtBo);
    \draw[non matching edge] (ccli) -- (ApBo);
  \end{tikzpicture}
  \caption{Two tree-like RB-prenets for the formula $\seqs{a \lseq a} \lpar \seqs{a\lneg \lseq
      a\lneg}$. The left one is a correct proof net, while the right one contains
    an \ae-cycle involving the 4 topmost matching edges.}
  \label{fig:ps-example}
\end{figure}

Below we give an alternative, more formal definition of
$\treeRB{\Gamma,\linking}$.

\begin{defi}\label{def:unfold}
  Let $A$ be a formula. We define two \emph{unfoldings} of $A$, denoted by
  $\unfold{A}$ and $\subfold{A}$, to be flat sequents which are obtained as
  follows. For each non-atomic subformula occurrence $B$ of $A$, we
  introduce a fresh propositional variable $\zet B$. If $A$ is atomic, then $\unfold A= A$ and $\subfold A=\sempty$. Otherwise
  $\unfold{A}=\sqnp{\zet A,\subfold{A}}$ and $\subfold{A}$ is defined inductively as follows:
  \begin{equation}
    \label{eq:sharp}
    \begin{array}{rcl}
      \subfold{\lunit}&=&\nzet{\lunit}\\[\arrayskip]
      \subfold{\vlsbr[B;C]}&=&\sqnp{\vls(\nzet{\vls[B;C]};[\zet B;\zet C]),\subfold B,\subfold C}\\[\arrayskip]
      \subfold{\vlsbr<B;C>}&=&\sqnp{\vls(\nzet{\vls<B;C>};<\zet B;\zet C>),\subfold B,\subfold C}\\[\arrayskip]
      \subfold{\vlsbr(B;C)}&=&\sqnp{\vls(\nzet{\vls(B;C)};(\zet B;\zet C)),\subfold B,\subfold C}\\[\arrayskip]
    \end{array}
  \end{equation}
  Let $\Gamma$ be a sequent, and let $A_1,\ldots,A_n$ be the formula
  occurrences of $\Gamma$. Then we define
  $\unfold\Gamma=\sqnp{\Gamma_0,\subfold{A_1},\ldots,\subfold{A_n}}$,
  where $\Gamma_0$ is $\Gamma$ with every $A_i$ replaced by $\zet {A_i}$.
\end{defi}

\begin{rem}
  It is important to note that in Definition~\ref{def:unfold} every
  subformula occurrence gets a fresh variable, in particular,
  in~\Cref{eq:sharp} every occurrence of $\lunit$ in $A$ is assigned
  a fresh $\zet {\lunit}$, and these variables have to be indexed
  accordingly. Then the graph $\rbtree(\Gamma)$ is in fact
  $\tograph{\unfold{\Gamma}}$ equipped with a $B$-edge for every fresh
  $\zet{}$--$\nzet{}$ pair. When drawn as in Figure~\ref{fig:ps-example}, we label the vertical $B$-edges with the corresponding subformula $A$, and the lower vertex represents $\zet A$ and the upper vertex $\nzet A$.
\end{rem}

\begin{defi}
  Let $\Gamma$ be a sequent and $\linking$ an axiom linking for
  $\Gamma$. We define the linking $\unfold\linking$ for
  $\unfold\Gamma$ to be the linking obtained from $\linking$ by
  mapping each fresh $\zet A$ to $\nzet A$, and vice versa.  Then the
  \emph{tree-like RB-prenet} of $\Gamma$ and $\linking$ is defined as
  $\treeRB{\Gamma,\linking}=\relRB{\unfold\Gamma,\unfold\linking}$.
\end{defi}

The correctness criterion for tree-like RB-prenets is exactly the same
as for cographic RB-prenets, except that in a tree-like RB-prenet
every \ae-cycle is automatically chordless. Therefore we have the following definition.

\begin{defi}
  A tree-like RB-prenet is \emph{correct} if and only if it does not contain any
  \ae-cycle. A correct tree-like RB-prenet is also called a \emph{tree-like
  RB-net}.
\end{defi}

\begin{exa}
  In Figure~\ref{fig:ps-example}, the left RB-prenet is correct, while the one on the right is not.
\end{exa}

In~\cite{pomsetReport}, Retoré has shown the equivalence of the two correctness criteria.

\begin{thmC}[{\cite[Theorem~7]{pomsetReport}}]
  \label{thm:correctness-equivalence}
  For every sequent $\Gamma$ and linking $\linking$, we have that
  $\relRB{\Gamma,\linking}$ is correct if and only if
  $\treeRB{\Gamma,\linking}$ is correct.
\end{thmC}

In the remainder of this paper we will use the term \emph{pomset logic
  proof net}, or \emph{proof net} for short, for correct prenets of both
kinds, i.e., for cographic RB-nets and tree-like RB-nets. This is
justified, as the two can be trivially transformed into each other in
linear time.

Pomset logic proof nets can easily be extended with cut. A \emph{cut} in a sequent is a formula of the shape $C\ltens C\lneg$. Then the cut elimination theorem can be stated as follows:

\begin{thmC}
  [{\cite[Theorem~7]{retore:pomset}}] Let $\Gamma$ be a sequent formed by the formulas $A_1,\ldots,A_n,$ $C_1\ltens C_1\lneg,\ldots,C_k\ltens C_k\lneg$, and let $\Gamma'$ be obtained from $\Gamma$ by removing $C_1\ltens C_1\lneg,\ldots,C_k\ltens C_k\lneg$.  

  If there is a pomset logic proof net for $\Gamma$, then there is also one for $\Gamma'$.
\end{thmC}


\subsection{System \texorpdfstring{$\BV$}{BV} and the Calculus of Structures}

\emph{System $\BV$}, introduced by Guglielmi in~\cite{SIS:99,SIS},
is a deductive system for the formulas defined in~\Cref{sec:formulas}. It
is defined in the formalism called the \emph{calculus of structures},
and it works similar to a rewriting system, modulo the equational
theory defined in~\Cref{fig:equ}.

The inference rules of \emph{system $\BV$} and its symmetric version
\emph{system $\SBV$} are shown in Figure~\ref{fig:BV}. These rules
have to be read as rewriting rule schemes, meaning that (1) the
variable $a$ can be substituted by any atom, and the variables
$A,B,C,D$ can be substituted by any formula, and that (2) the rules
can be applied inside any (positive) context.

More formally, a \emph{context} $S\conhole$ is a formula which
contains exactly one occurrence of the hole $\conhole$ in place of an
atom:
\begin{equation*}
  \Sconhole \quad\grammareq \quad\conhole\mid
  \vls[\Sconhole;A]\mid[A;\Sconhole]\mid
  \vls<\Sconhole;A>\mid<A;\Sconhole>\mid
  \vls(\Sconhole;A)\mid(A;\Sconhole)
\end{equation*}
Given a context $\Sconhole$ and a formula $A$, we write $\Scons A$ to
denote the formula that is obtained from $\Sconhole$ by replacing the
hole $\conhole$ with $A$.

\begin{exa}
  Let the context $\Sconhole=\vls[<a;\conhole;b>;(c;a\lneg)]$ and the formula $A=\vls(c\lneg;d)$
  be given. Then $S\cons A=\vls[<a;(c\lneg;d);b>;(c;a\lneg)]$.
\end{exa}

\begin{figure}[!t]
  \[ 
  \dbox{$
  \begin{array}{c}
    \vlinf{\aird}{}{\vls[a;a\lneg]}{\lunit}
    \\ \\
    \vlinf{\seqrd}{}{\vls[<A;B>;<C;D>]}{\vls<[A;C];[B;D]>}
  \end{array}
  \qqqquad
  \begin{array}{c}
    \vlinf{\fequ}{~\scriptstyle(\mbox{\scriptsize provided }A\fequ B)}{B}{A}
    \\ \\
    \vlinf{\swir}{\hskip5em}{\vls[(A;B);C]}{\vls([A;C];B)}
  \end{array}
  $}%
  \qquad
  \begin{array}{c}
    \vlinf{\airu}{}{\lunit}{\vls(a;a\lneg)}
    \\ \\
    \vlinf{\seqru}{}{\vls<(A;C);(B;D)>}{\vls(<A;B>;<C;D>)}
  \end{array}
  \]
  \caption{System $\BV$ (the first two columns) and $\SBV$ (all 3 columns).}
  \label{fig:BV}
\end{figure}

If $\vlinf{\rr}{}{B}{A}$ is an inference rule and $\Sconhole$ a
context, then $\vlinf{\rr}{}{\Scons B}{\Scons A}$ is an instance of
the rule. A \emph{(proof) system} is a set of inference rules. We
write $\upsmash{\vlderivation{\vlde{\sysS}{\Deri}{B}{\vlhy{A}}}}$, or more
concisely $A\derives{\sysS}{\Deri}B$, if there is a derivation from
$A$ to $B$ using only rules from the system $\sysS$, and that
derivation is named~$\Deri$. If in that situation $A=\lunit$, then we
write it as $\vlderivation{\vlpr{\sysS}{\Deri}{B}}$ or simply as
$~\derives{\sysS}{\Deri}B$ and call $\Deri$ a \emph{proof} of $B$.
In this case we say that $B$ is \emph{provable} in $\sysS$.
We also omit the annotation $\delta$ when we want to assert the mere existence
of a derivation or proof, without giving it a name.

\begin{figure}[!t]
  \begin{prooftree}
    \AxiomC{$\lunit$}
    \LeftLabel{\footnotesize$\aird$}
    \UnaryInfC{$\vls{\color{red}[e\lneg;e_{\;}]}$}
    \LeftLabel{\footnotesize$\aird$}
    \UnaryInfC{$\vls<[e\lneg;e];{\color{red}[b\lneg;b]}>$}
    \LeftLabel{\footnotesize$\seqrd$}
    \UnaryInfC{$\vls{\color{red}[<e\lneg;b\lneg>;<e;b>]}$}
    \LeftLabel{\footnotesize$\seqrd$}
    \UnaryInfC{$\vls[{\color{red}e\lneg;b\lneg};<e;b>]$}
    \LeftLabel{\footnotesize$\aird$}
    \UnaryInfC{$\vls[<{\color{red}[c;c\lneg]};[b\lneg;e\lneg]>;<e;b>]$}
    \LeftLabel{\footnotesize$\seqrd$}
    \UnaryInfC{$\vls[{\color{red}<c;b\lneg>;<c\lneg;e\lneg>};<e;b>]$}
    \LeftLabel{\footnotesize$\aird$}
    \UnaryInfC{$\vls[({\color{red}[a;a\lneg]};<c;b\lneg>);<c\lneg;e\lneg>;<e;b>]$}
    \LeftLabel{\footnotesize$\swir$}
    \UnaryInfC{$\vls[{\color{red}(a;<c;b\lneg>);a\lneg};<c\lneg;e\lneg>;<e;b>]$}
    \LeftLabel{\footnotesize$\aird$}
    \UnaryInfC{$\vls[(a;<c;b\lneg>);<[a\lneg;<c\lneg;e\lneg>];{\color{red}[f;f\lneg]}>;<e;b>]$}
    \LeftLabel{\footnotesize$\seqrd$}
    \UnaryInfC{$\vls[(a;<c;b\lneg>);{\color{red}<a\lneg;f>;<c\lneg;e\lneg;f\lneg>};<e;b>]$}
    \LeftLabel{\footnotesize$\aird$}
    \UnaryInfC{$\vls[(a;<c;b\lneg>);<a\lneg;f>;<c\lneg;({\color{red}[d\lneg;d]};<e\lneg;f\lneg>)>;<e;b>]$}
    \LeftLabel{\footnotesize$\swir$}
    \UnaryInfC{$\vls[(a;<c;b\lneg>);<a\lneg;f>;<c\lneg;{\color{red}[d\lneg;(d;<e\lneg;f\lneg>)]}>;<e;b>]$}
    \LeftLabel{\footnotesize$\seqrd$}
    \UnaryInfC{$\vls[(a;<c;b\lneg>);<a\lneg;f>;{\color{red}<c\lneg;d\lneg>;(d;<e\lneg;f\lneg>)};<e;b>]$}
    \hskip5em
\end{prooftree}
\caption{A proof in $\BV$. We highlight the principal subformula in
  the conclusion of each rule instance. Furthermore, instances of $\fequ$ are left implicit
  and are omitted. For example, in the premise of the bottommost $\aird$, we have
  $\seqs{c\lneg\lseq e\lneg \lseq f\lneg}\fequ\seqs{c\lneg\lseq\aprs{\lunit\ltens\seqs{e\lneg\lseq f\lneg}}}$, involving the equations of unit for $\ltens$, and associativity of $\lseq$. Another example is the conclusion of the second~$\seqrd$ from the top, where we have $e\lneg\lpar b\lneg\lpar \seqs{e\lseq b}\fequ\seqs{\lunit\lseq\pars{b\lneg\lpar e\lneg}}\lpar \seqs{e\lseq b}$, involving the equations of associativity and commututativity of $\lpar$, and unit for $\lseq$.
}
  \label{fig:exaBV}
\end{figure}
\Cref{fig:exaBV} shows an example for a proof in $\BV$. 

We now recall some basic properties of $\BV$ and $\SBV$. First,
observe that the rules $\aird$ (called \emph{atomic interaction down}
or \emph{axiom}) and $\airu$ (called \emph{atomic interaction up} or
\emph{cut}) are in atomic form. Their general forms
\begin{equation}
  \label{eq:id}
  \vlinf{\ird}{}{\vls[A;A\lneg]}{\lunit}
  \qquand
  \vlinf{\iru}{}{\lunit}{\vls(A;A\lneg)}
\end{equation}
are derivable.

\begin{defi}
  An inference rule $\rr$ is \emph{derivable} in a system $\sysS$ iff
  for every instance $\vlinf{\rr}{}{B}{A}$ there is a derivation
  $A\derives{\sysS}{}B$. An inference rule $\rr$ is \emph{admissible}
  for a system $\sysS$ iff for every proof
  $\derives{\sysS\cup\set\rr}{}{A}$ there is a proof
  $\derives{\sysS}{}B$.
\end{defi}

\begin{prop}\label{prop:i}
  The rule $\ird$ is derivable in $\BV$, and the rule $\iru$ is derivable in $\SBV$.
\end{prop}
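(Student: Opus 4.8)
The plan is to show that $\ird$ is derivable in $\BV$ by induction on the number of connectives of the formula $A$, and then to deduce the derivability of $\iru$ in $\SBV$ by a duality argument. Since every rule of $\BV$ can be applied inside an arbitrary context, it suffices to produce, for each $A$, a derivation $\lunit \derives{\BV}{} \vls[A;A\lneg]$: replaying such a derivation inside a context $S$ turns it into a derivation $\Scons{\lunit} \derives{\BV}{} \Scons{\vls[A;A\lneg]}$, which is the general instance of $\ird$. The base cases are immediate: if $A = \lunit$ then $\vls[A;A\lneg] = \vls[\lunit;\lunit] \fequ \lunit$ by the unit equations, and if $A$ is an atom then $\vls[A;A\lneg]$ is an instance of $\aird$ up to a $\fequ$-step (for $A = a\lneg$ use $\vls[a\lneg;a] \fequ \vls[a;a\lneg]$).

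For the inductive step, write $A$ as $\vls(B;C)$, $\vls[B;C]$, or $\vls<B;C>$; the induction hypothesis will be applied to the (strictly smaller) formulas $B$, $C$, $B\lneg$, $C\lneg$. The case $A = \vls<B;C>$ is the easiest, with $\seqrd$ as the decisive rule: starting from $\lunit \fequ \vls<\lunit;\lunit>$, expand the two units to $\vls[B;B\lneg]$ and $\vls[C;C\lneg]$ by the induction hypothesis (inside contexts), reaching $\vls<[B;B\lneg];[C;C\lneg]>$, and then apply $\seqrd$ to obtain $\vls[<B;C>;<B\lneg;C\lneg>] = \vls[A;A\lneg]$ --- here it is essential that $\lseq$ is self-dual \emph{without} reversal of arguments. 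The case $A = \vls[B;C]$ is the mirror image of the tensor case below: since $A\lneg = \vls(B\lneg;C\lneg)$ one has $\vls[A;A\lneg] \fequ \vls[(B\lneg;C\lneg);[B;C]]$, which has the form $\vls[D;D\lneg]$ for the tensor formula $D = \vls(B\lneg;C\lneg)$, and the tensor-case construction applied to $D$ settles it (legitimately, because $B\lneg$ and $C\lneg$ have fewer connectives than $A$).

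The case $A = \vls(B;C)$ is the heart of the proof, and the key observation is that the only $\BV$ rule or equation that increases the number of tensor occurrences is the unit equation $A' \fequ \vls(\lunit;A')$; so a tensor can only be introduced by ``conjuring'' a unit as one of its arguments and then expanding it. Concretely: take $\lunit \derives{\BV}{} \vls[B;B\lneg]$ (induction hypothesis for $B$); rewrite by $\fequ$ to $\vls[(B;\lunit);B\lneg]$; expand the $\lunit$ by the induction hypothesis for $C$ to get $\vls[(B;[C;C\lneg]);B\lneg]$; commute the tensor to $\vls[([C;C\lneg];B);B\lneg]$ and apply one instance of $\swir$ (which rewrites the subterm $\vls([C;C\lneg];B)$ to $\vls[(C;B);C\lneg]$) to reach $\vls[(C;B);C\lneg;B\lneg]$; and finally rearrange by $\fequ$ to $\vls[(B;C);[B\lneg;C\lneg]] = \vls[A;A\lneg]$. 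I expect this to be the main obstacle: a naive attempt to build the tensor ``at the par level'' gets stuck, and one has to realise that switch is what reshuffles the conjured tensor into the desired shape.

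For $\iru$ in $\SBV$, I would use that $\SBV$ is closed under the involution $(\cdot)^\circ$ that negates every formula (by De Morgan) and swaps the premise and conclusion of each rule: one checks that $\aird^\circ$ is $\airu$, that $\seqrd^\circ$ is $\seqru$, that $\swir^\circ$ is $\swir$, and that $\fequ^\circ$ is $\fequ$. Applying $(\cdot)^\circ$ to the $\BV$-derivation of $\ird$ just constructed --- which is in particular an $\SBV$-derivation --- yields an $\SBV$-derivation from the dual of $\vls[A;A\lneg]$, namely $\vls(A\lneg;A) \fequ \vls(A;A\lneg)$, to $\lunit\lneg = \lunit$; this is an instance of $\iru$, and replaying it in context completes the proof.
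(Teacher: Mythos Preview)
Your proof is correct and is precisely the standard argument the paper alludes to (the paper does not give its own proof, merely citing \cite{SIS,BV-CSL,AAT:str:esslli19}). The induction on the size of $A$, with the $\swir$-based construction for the tensor case, the $\seqrd$-based construction for the seq case, and the duality involution for $\iru$, is exactly what one finds in those references.
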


The proof is standard and can be found in many papers (e.g.,
\cite{SIS,BV-CSL,AAT:str:esslli19}).
We are now going to state the cut elimination property for $\BV$.

\begin{defi}
  Two system $\sysS_1$ and $\sysS_2$ are \emph{equivalent} if they
  prove the same formulas.
\end{defi}

\begin{thmC}[\cite{BV-CSL,SIS}]\label{thm:SBV}
  Systems $\BV$ and $\SBV$ are equivalent.
\end{thmC}

A proof of this result can be found in~\cite{SIS} and in~\cite{dissvonlutz}.

\begin{thmC}[\cite{BV-CSL,SIS}]\label{thm:cutelimBV}
  The general cut rule $\iru$ is admissible for $\BV$.
\end{thmC}

This is an immediate corollary of Proposition~\ref{prop:i} and
Theorem~\ref{thm:SBV}. Finally, the relation between $\BV$ and $\SBV$
can be strengthened to the following statement:

\begin{cor}
  For any two formulas $A$ and $B$, we have
  $~\derives{\BV}{}\vls[A\lneg;B]~$ iff $~A\derives{\SBV}{}B~$.
\end{cor}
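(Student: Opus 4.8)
The plan is to prove the two implications separately, relying throughout on the fact that derivations in the calculus of structures may be performed inside an arbitrary (positive) context, together with Proposition~\ref{prop:i} (which makes the general identity $\ird$ and the general cut $\iru$ available) and Theorem~\ref{thm:SBV} (which lets us move between provability in $\BV$ and in $\SBV$). Note that every rule of $\BV$ is also a rule of $\SBV$, so every $\BV$-derivation is in particular an $\SBV$-derivation.

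For the implication from right to left: given $A\derives{\SBV}{}B$, I would perform this derivation inside the context $\vls[A\lneg;\conhole]$, obtaining $\vls[A\lneg;A]\derives{\SBV}{}\vls[A\lneg;B]$. Since $\ird$ is derivable in $\BV$ by Proposition~\ref{prop:i}, there is a $\BV$-proof of $\vls[A\lneg;A]$, hence an $\SBV$-proof; prepending it to the previous derivation yields an $\SBV$-proof of $\vls[A\lneg;B]$, and Theorem~\ref{thm:SBV} then converts this into a $\BV$-proof, i.e.\ $\derives{\BV}{}\vls[A\lneg;B]$.

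For the implication from left to right: given $\derives{\BV}{}\vls[A\lneg;B]$ — which is in particular an $\SBV$-derivation from $\lunit$ to $\vls[A\lneg;B]$ — I would perform it inside the context $\vls(\conhole;A)$, obtaining an $\SBV$-derivation from $\vls(\lunit;A)$ to $\vls([A\lneg;B];A)$; since $\vls(\lunit;A)\fequ A$ this is a derivation $A\derives{\SBV}{}\vls([A\lneg;B];A)$. Then one instance of the switch rule $\swir$ rewrites $\vls([A\lneg;B];A)$ into $\vls[(A\lneg;A);B]$, and one instance of the general cut $\iru$ (derivable in $\SBV$ by Proposition~\ref{prop:i}), with its schematic formula instantiated to $A\lneg$ and applied inside the context $\vls[\conhole;B]$, rewrites $\vls[(A\lneg;A);B]$ into $\vls[\lunit;B]\fequ B$. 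Composing all of this gives $A\derives{\SBV}{}B$.

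I do not expect any individual step to be hard; the point that needs care is conceptual rather than computational — although the left-hand side of the equivalence mentions only $\BV$, the natural construction genuinely takes place in the symmetric system $\SBV$ (we use the up-rule $\airu$ through $\iru$ to perform the cut, and in the converse direction the hypothesis already lives in $\SBV$), so the passage back to $\BV$ rests squarely on the nontrivial Theorem~\ref{thm:SBV}. Beyond that, one only has to check that the displayed rewritings are legitimate instances of the rules of Figure~\ref{fig:BV} in the indicated contexts, and that the unit equation $\vls(\lunit;A)\fequ A$ and the involutivity $(A\lneg)\lneg=A$ are invoked correctly.
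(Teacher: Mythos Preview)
Your proof is correct and is precisely the standard argument the paper has in mind: the corollary is stated without proof, as it follows directly from Proposition~\ref{prop:i} and Theorem~\ref{thm:SBV} via the contextual closure of deep inference, exactly as you carry out. Both directions are handled correctly, including the instance of $\swir$ and the use of $\iru$ inside the context $\vls[\conhole;B]$.
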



\subsection{Unit-Free Versions of \texorpdfstring{$\BV$}{BV} and \texorpdfstring{$\SBV$}{SBV}}

One of the main reasons to study cut-free systems is to have a
deductive system that is suitable for doing proof search. However, due
to the versatility of the unit~$\lunit$ in formulas, proof search in
plain $\BV$ as it is shown in Figure~\ref{fig:BV} is not practical. In
order to reduce the non-determinism in $\BV$, Kahramanoğulları
proposed in~\cite{BVu} a unit-free version of $\BV$ which is
better suited for proof search. As this system is also easier to
handle for some results we show in this paper, we introduce $\BVu$
below. For didactic reasons, we also introduce its symmetric version
$\SBVu$.

The formulas for $\BVu$ are the same as defined
in~\Cref{sec:formulas}, except that we do not allow any occurrence of
the unit~$\lunit$. This means that we have to restrict the equivalence
defined in~\Cref{fig:equ} to the unit-free formulas. We define the
relation~$\fequp$ to be the smallest congruence generated by
\begin{equation}
  \label{eq:fequp}
  \begin{array}{rcl}
    \vls(A;(B;C))&\fequp&\vls((A;B);C)\\
    \vls[A;[B;C]]&\fequp&\vls[[A;B];C]\\
    \vls<A;<B;C>>&\fequp&\vls<<A;B>;C>
  \end{array}
  \qqqquad
  \begin{array}{rcl}
    \vls(A;B)&\fequp&\vls(B;A)\\
    \vls[A;B]&\fequp&\vls[B;A]\\
    &&
  \end{array}
\end{equation}
The inference rules for $\BVu$ and $\SBVu$ are then shown in
Figure~\ref{fig:BVu}. Note that the rule $\aiord$ has no premise. It
is an axiom that is used exactly once in a \emph{proof} which in $\BVu$ or
$\SBVu$ is a derivation without premise (as the unit $\lunit$ is not
present and cannot take this role). Likewise, the rule $\aioru$ has no
conclusion. It is used exactly once in a \emph{refutation}, which is a
derivation with empty conclusion.

\begin{figure}[!t]
  \newbox\sbvubox
  \setbox\sbvubox\hbox{
  \dbox{$
  \begin{array}{c@{}c}
    \vlinf{\aiord}{}{\vls[a;a\lneg]}{}
    &
    \vlinf{\aitrd}{}{\vls([a;a\lneg];B)}{B}
    \\\\
    \vlinf{\aislrd}{}{\vls<[a;a\lneg];B>}{B}
    &
    \vlinf{\aisrrd}{}{\vls<B;[a;a\lneg]>}{B}
    \\ \\
    \vlinf{\qblrd}{}{\vls[<A;B>;C]}{\vls<[A;C];B>}
    &
    \vlinf{\qbrrd}{}{\vls[<A;B>;C]}{\vls<A;[B;C]>}
    \\ \\
    \vlinf{\qcrd}{}{\vls[<A;B>;<C;D>]}{\vls<[A;C];[B;D]>}
    &
    \vlinf{\qard}{}{\vls[A;B]}{\vls<A;B>}
  \end{array}%
  \begin{array}{@{\!}c}
    \vlinf{\quad\fequp}{\mbox{\scriptsize(provided $A\fequp B$)}}{B}{A}
    \\ \\
    \phantom{\vlinf{}{}{}{}}
    \\ \\
    \vlinf{\sbr}{}{\vls[(A;B);C]}{\vls([A;C];B)}
    \\ \\
    \vlinf{\sar}{}{\vls[A;B]}{\vls(A;B)}
  \end{array}
  $}%
  $\begin{array}{c@{}c}
    \vlinf{\aitru}{}{B}{\vls[(a;a\lneg);B]}
    &
    \vlinf{\aioru}{}{}{\vls(a;a\lneg)}
    \\ \\
    \vlinf{\aislru}{}{B}{\vls<(a;a\lneg);B>}
    &
    \vlinf{\aisrru}{}{B}{\vls<B;(a;a\lneg)>}
    \\ \\
    \vlinf{\qblru}{}{\vls<(A;C);B>}{\vls(<A;B>;C)}
    &
    \vlinf{\qbrru}{}{\vls<A;(B;C)>}{\vls(<A;B>;C)}
    \\ \\
    \vlinf{\qaru}{}{\vls<A;B>}{\vls(A;B)}
    &
    \vlinf{\qcru}{}{\vls<(A;C);(B;D)>}{\vls(<A;B>;<C;D>)}
  \end{array}$
  }
  \scalebox{.885}{\copy\sbvubox}
    \caption{System $\BVu$ (first 3 columns) and system $\SBVu$ (all 5 columns).}
  \label{fig:BVu}
\end{figure}

We have the following immediate results.

\begin{prop}\label{prop:SBVu}
  Let $A$ and $B$ be unit-free formulas. We have $~A\derives{\SBVu}{}B~$ iff $~A\derives{\SBV}{}B~$.
\end{prop}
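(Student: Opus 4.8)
The plan is to prove the two implications separately: the direction from $\SBVu$ to $\SBV$ is a routine rule-by-rule simulation, while the converse is the real work, because an $\SBV$-derivation between unit-free formulas may use the unit $\lunit$ in intermediate steps. For the first direction, note that since $A$ and $B$ are genuine (nonempty) formulas, the premise-free rule $\aiord$ and the conclusion-free rule $\aioru$ cannot occur in a derivation $A\derives{\SBVu}{}B$ (as the paper observes, those are used only at the top of a proof, resp.\ the bottom of a refutation). It then suffices to check that each remaining $\SBVu$-rule, together with $\fequp$, is derivable in $\SBV$ inside an arbitrary context, and to concatenate these local simulations. This is routine: $\fequp$ is a sub-relation of $\fequ$; the rules $\qcrd$ and $\sbr$ are literally $\seqrd$ and $\swir$, and dually $\qcru=\seqru$; and every other rule is an instance modulo $\fequ$ of $\aird$, $\airu$, $\seqrd$, $\seqru$ or $\swir$ obtained by substituting $\lunit$ for one or more of its schematic formulas --- for example $\aitrd$ is $\aird$ in the context $\vls(\conhole;B)$, $\aislrd$ and $\aisrrd$ are $\aird$ in $\vls<\conhole;B>$ and $\vls<B;\conhole>$, $\qblrd$ is $\seqrd$ with its last schematic formula set to $\lunit$, $\qard$ is $\seqrd$ with two of them set to $\lunit$, and $\sar$ is $\swir$ with its first schematic formula set to $\lunit$ --- the up-rules being dual.

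For the direction from $\SBV$ to $\SBVu$ the plan is a unit-elimination transformation on derivations. First I would show that any $\SBV$-derivation $A\derives{\SBV}{}B$ between nonempty unit-free formulas can be rewritten so that no occurring formula is $\fequ$-equivalent to $\lunit$. The key observation is that, inspecting the rules, the $\fequ$-class of $\lunit$ can be entered only by an $\airu$-step collapsing some $\vls(a;a\lneg)$ and left only by an $\aird$-step creating some $\vls[b;b\lneg]$: for every other rule, a premise or conclusion that is $\fequ\lunit$ forces the other side to be $\fequ\lunit$ as well (or the step is an $\fequ$-step). Such a detour can be rerouted: instead of first firing the cut and then the axiom, one conjures a fresh $\lunit$ next to a subformula that survives the cut, fires the axiom there, and only then fires the cut, so that none of the resulting intermediate formulas lies in the $\lunit$-class. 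Iterating removes the $\lunit$-class entirely.

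Given such a $\lunit$-avoiding derivation, I would replace each occurring formula $F$ by its unit-free reduct $F^\circ$, i.e.\ the unit-free formula with $F\fequ F^\circ$, which is unique up to $\fequp$ since $F\not\fequ\lunit$ (by Theorem~\ref{thm:fequ}). Each step $S\cons{P}\to S\cons{C}$ then becomes a short $\SBVu$-derivation from $(S\cons{P})^\circ$ to $(S\cons{C})^\circ$, by case analysis on the rule and on which of its schematic formulas reduce to the empty structure: an $\aird$-step whose surrounding connective in the reduct is $\lpar$, $\ltens$, or $\lseq$ becomes, deep in the reduced context, $\aislrd$ followed by $\qard$, resp.\ $\aitrd$, resp.\ $\aislrd$ or $\aisrrd$; dually an $\airu$-step becomes $\aitru$ (possibly preceded by $\sar$), $\aislru$, or $\aisrru$; a $\seqrd$-step becomes one of $\qcrd$, $\qblrd$, $\qbrrd$, $\qard$ or an identity; a $\swir$-step becomes $\sbr$, $\sar$ or an identity; the up-rules are dual; and an $\fequ$-step becomes an $\fequp$-step, since the two relations agree on unit-free formulas (an easy consequence of Theorem~\ref{thm:fequ}). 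Concatenating these yields the desired $A\derives{\SBVu}{}B$.

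The step I expect to be the main obstacle is precisely the interaction of the unit with the atomic rules $\aird$ and $\airu$: $\SBVu$ has no way to represent a derivation that transiently collapses everything to $\lunit$, because its only empty derivations are the single-rule $\aiord$-proof and $\aioru$-refutation, which are unavailable between genuine endpoints --- so the rerouting argument of the second paragraph is exactly what makes the whole transformation go through, and getting its bookkeeping right (in particular checking that rerouting never reintroduces a formula $\fequ\lunit$) is the delicate part. Alternatively, this converse direction is essentially Kahramanoğulları's soundness-and-completeness result for $\BVu$~\cite{BVu}, which one could simply invoke.
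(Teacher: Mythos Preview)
Your proposal is correct and follows the same core strategy as the paper: the forward direction is the routine simulation you describe, and the backward direction is unit-deletion, replacing each line of the $\SBV$-derivation by its unit-free reduct and checking that each rule instance survives as a short $\SBVu$-derivation. The paper's proof is terser --- it simply says ``delete the unit $\lunit$ everywhere'' and then does the same case analysis you do (deriving the residual $\aiprd$/$\aipru$ via $\{\aitrd,\sar\}$ and $\{\aitru,\sar\}$ rather than your $\{\aislrd,\qard\}$ route, but both work). Your rerouting step to eliminate intermediate formulas $\fequ\lunit$ is an addition the paper does not make explicit; it is a genuine subtlety (without it, the reduct of such a line is undefined), and your fix is correct, so in this respect your argument is more careful than the paper's.
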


\begin{proof}
  If we have a derivation $A\derives{\SBVu}{}B$, then we immediately
  have a derivation $A\derives{\SBV}{}B$, as every rule in $\SBVu$
  (except for $\aiord$ and $\aioru$) is derivable in
  $\SBV$. Conversely, assume we have a derivation
  $A\derives{\SBV}{\Deri}B$. Then, in $\Deri$, the unit~$\lunit$ can
  occur. Let $\Deri'$ be obtained from $\Deri$ by deleting the unit
  $\lunit$ everywhere. Then every instance of the rule $\fequ$ becomes
  an instance of $\fequp$; every instance of $\seqrd$ becomes an
  instance of $\qard$ or $\qblrd$ or $\qbrrd$ or $\qcrd$ or trivial
  (i.e., premise and conclusion of the rule instance become equal); and
  similarly for $\swir$ and $\seqru$. However, an instance of $\aird$
  can become an instance of $\aitrd$ or $\aislrd$ or $\aisrrd$ (which
  are in $\SBVu$), or $\aiprd$ which is shown on the left
  below. Similarly, an instance of $\airu$ can become an instance of
  $\aitru$ or $\aislru$, or $\aisrru$, or $\aipru$ which is shown on
  the right below:
  \begin{equation}
    \label{eq:aip}
    \vlinf{\aiprd}{}{\vls[a;a\lneg;B]}{B}
    \hskip6em
    \vlinf{\aipru}{}{B}{\vls(a;a\lneg;B)}
  \end{equation}
  These rules are not in $\SBVu$, but they can be derived with
  $\set{\aitrd,\sar}$ and $\set{\aitru,\sar}$, respectively.
\end{proof}

\begin{propC}[\cite{BVu}]\label{prop:BVu}
  The systems $\BVu$ and $\BV$ are equivalent.
\end{propC}

\begin{proof}
  First, if we have a proof $~\derives{\BVu}{}A~$ then we can simply
  replace the top instance of $\aiord$ by $\aird$ and have a proof of
  $\BV$. Conversely, a proof in $\BV$ can be transformed into a proof
  of $\BVu$ by replacing the axiom $\aiord$ by a $\aird$ and then
  following the same procedure as in the previous proof.
\end{proof}

From these propositions we can immediately obtain the cut
elimination results for $\BVu$ via the corresponding results for $\BV$:

\begin{cor}
  The systems $\BVu$ and $\SBVu$ are equivalent.
\end{cor}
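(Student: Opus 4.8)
The plan is simply to chain together the equivalences already established, using the definition that two systems are \emph{equivalent} when they prove the same (here unit-free) formulas. The only point requiring a moment's thought is reconciling the notion of \emph{proof} in $\SBVu$ (a derivation with empty premise) with that in $\SBV$ (a derivation from $\lunit$).

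First I would dispose of the trivial inclusion: every inference rule of $\BVu$ is also a rule of $\SBVu$ (the first three columns of Figure~\ref{fig:BVu} are contained in all five), so any $\BVu$ proof of a formula $A$ is already a $\SBVu$ proof of $A$; hence every $\BVu$-provable formula is $\SBVu$-provable.

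For the converse, suppose a unit-free formula $A$ is provable in $\SBVu$. The only rule of $\SBVu$ with empty premise is $\aiord$, so such a proof decomposes as an instance of $\aiord$ producing $\vls[a;a\lneg]$ followed by a $\SBVu$ derivation $\vls[a;a\lneg]\derives{\SBVu}{}A$; note also that $\aioru$, having empty conclusion, cannot occur in a proof. Since $\vls[a;a\lneg]$ and $A$ are unit-free, Proposition~\ref{prop:SBVu} upgrades this to a $\SBV$ derivation $\vls[a;a\lneg]\derives{\SBV}{}A$. Prepending the step $\lunit\derives{\SBV}{\aird}\vls[a;a\lneg]$ yields $\derives{\SBV}{}A$, i.e.\ $A$ is provable in $\SBV$; by Theorem~\ref{thm:SBV} it is then provable in $\BV$, and by Proposition~\ref{prop:BVu} it is provable in $\BVu$. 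This closes the loop and gives the equivalence.

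I do not expect any genuine obstacle --- the statement is an immediate corollary of the preceding results, which is precisely how the excerpt frames it. The single point deserving care is the bookkeeping in the last step: trading the top $\aiord$ of a $\SBVu$ proof for the $\lunit$-based presentation when passing to $\SBV$ (and, dually, observing that no premise-free rule other than $\aiord$ is available), so that Proposition~\ref{prop:SBVu} is applied only to genuinely unit-free derivations between formulas.
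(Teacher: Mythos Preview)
Your proposal is correct and follows precisely the route the paper intends: the corollary is stated immediately after Proposition~\ref{prop:SBVu} and Proposition~\ref{prop:BVu} with the remark that the cut-elimination results for $\BVu$ are obtained \enquote{via the corresponding results for $\BV$}, i.e.\ by chaining Proposition~\ref{prop:SBVu}, Theorem~\ref{thm:SBV}, and Proposition~\ref{prop:BVu}. Your careful bookkeeping about $\aiord$ at the top of an $\SBVu$ proof (and the impossibility of $\aioru$ appearing) is exactly the detail one needs to bridge the $\SBVu$ and $\SBV$ notions of proof, and the paper simply leaves this implicit.
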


\begin{cor}
  For any two unit-free formulas we have that 
  $~\derives{\BVu}{}\vls[A\lneg;B]~$ iff $~A\derives{\SBVu}{}B~$.
\end{cor}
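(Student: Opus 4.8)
The plan is to obtain this statement by simply composing results that are already at our disposal, without any new combinatorial work. The key point is that the intended equivalence is the unit-free shadow of the analogous statement relating $\BV$ and $\SBV$ (the corollary proved just before Proposition~\ref{prop:SBVu}), so I would chain that statement with Proposition~\ref{prop:SBVu} and Proposition~\ref{prop:BVu}.

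Concretely, first I would note that whenever $A$ and $B$ are unit-free, so is $\vls[A\lneg;B]$: linear negation sends unit-free formulas to unit-free formulas (since $\lunit$ is its own dual and simply never occurs), and forming a $\lpar$ of two unit-free formulas keeps us in the unit-free fragment. Hence $\derives{\BVu}{}\vls[A\lneg;B]$ is meaningful, and by Proposition~\ref{prop:BVu} (equivalence of $\BVu$ and $\BV$, read on the unit-free formula $\vls[A\lneg;B]$) it holds iff $\derives{\BV}{}\vls[A\lneg;B]$. Next, the already-established corollary relating $\BV$ and $\SBV$ gives $\derives{\BV}{}\vls[A\lneg;B]$ iff $A\derives{\SBV}{}B$. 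Finally, since $A$ and $B$ are unit-free, Proposition~\ref{prop:SBVu} converts $A\derives{\SBV}{}B$ into $A\derives{\SBVu}{}B$. Concatenating these three biconditionals yields the claim.

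I do not expect any real obstacle: all the substantive content — erasing units from derivations and the deduction-theorem-style passage between $A\derives{\SBV}{}B$ and provability of $\vls[A\lneg;B]$ — has been absorbed into Propositions~\ref{prop:SBVu} and~\ref{prop:BVu} and the $\BV$/$\SBV$ corollary. The only thing to be careful about is the bookkeeping remark above that $\vls[A\lneg;B]$ lies in the unit-free fragment, which is what licenses gluing the unit-free equivalences to the unit-full one. (An equally valid alternative would be to give a direct unit-erasing argument along the lines of the proof of Proposition~\ref{prop:SBVu}, but the short route through the existing lemmas seems cleanest.)
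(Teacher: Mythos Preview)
Your proposal is correct and matches the paper's intended route: the corollary is stated without proof, as an immediate consequence of the earlier $\BV$/$\SBV$ corollary together with Propositions~\ref{prop:SBVu} and~\ref{prop:BVu}, exactly the chain you describe.
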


\begin{cor}
  The general cut rules
  \begin{equation}
    \label{eq:iru}
    \vlinf{\itru}{}{B}{\vls[(A;A\lneg);B]}
    \qquad
    \vlinf{\ipru}{}{B}{\vls(A;A\lneg;B)}
    \qquad
    \vlinf{\islru}{}{B}{\vls<(A;A\lneg);B>}
    \qquad
    \vlinf{\isrru}{}{B}{\vls<B;(A;A\lneg)>}
  \end{equation}
  are admissible for $\BVu$.
\end{cor}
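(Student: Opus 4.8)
The plan is to show that anything provable in $\BVu$ extended with the four rules of~\eqref{eq:iru} is already provable in $\BVu$, by reinserting the unit $\lunit$ and invoking cut elimination for $\BV$ (Theorem~\ref{thm:cutelimBV}, via Theorem~\ref{thm:SBV}).

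First I would record that, once $\lunit$ is available, each of the four rules in~\eqref{eq:iru} is nothing but the general cut $\iru$ of $\SBV$ applied inside a context, followed by one use of $\fequ$ to absorb the unit: for $\itru$ use the context $\vls[\conhole;B]$ and the equation $\vls[\lunit;B]\fequ B$; for $\ipru$ the context $\vls(\conhole;B)$ and $\vls(\lunit;B)\fequ B$; for $\islru$ and $\isrru$ the contexts $\vls<\conhole;B>$ and $\vls<B;\conhole>$ together with $\vls<\lunit;B>\fequ B\fequ\vls<B;\lunit>$. Hence each of these four rules is derivable in $\SBV$, using that $\iru$ itself is derivable there (Proposition~\ref{prop:i}).

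Next, given a unit-free formula $C$ with a proof $\Deri$ in $\BVu$ augmented by the four rules, I would turn it into an $\SBV$-proof of $C$ using the observations in the proof of Proposition~\ref{prop:SBVu}. The topmost rule of $\Deri$ is necessarily $\aiord$; replace it by an instance of $\aird$ with premise $\lunit$, so the derivation now starts from $\lunit$. Every other rule of $\BVu$ is derivable in $\SBV$: $\aitrd,\aislrd,\aisrrd$ are instances of $\aird$ in a context, $\qcrd$ is literally $\seqrd$ while $\qblrd,\qbrrd,\qard$ are instances of $\seqrd$ with some arguments set to $\lunit$, $\sbr$ is $\swir$, $\sar$ is derivable in $\SBV$ by a short derivation that inserts $\vls[A\lneg;A]$ next to $B$ with $\ird$, performs a $\swir$, and then removes the resulting $\vls(A;A\lneg)$ with $\iru$, and $\fequp$ is a special case of $\fequ$. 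Replacing every $\BVu$ rule by its $\SBV$-derivation, every instance of one of the four rules by the $\iru$-in-context derivation of the previous paragraph, and finally every $\iru$ by its $\SBV$-derivation, yields a genuine $\SBV$-proof of $C$.

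By Theorem~\ref{thm:SBV} the formula $C$ is then provable in $\BV$, and since $C$ is unit-free, Proposition~\ref{prop:BVu} gives a proof of $C$ in $\BVu$; this is exactly the claimed admissibility. I do not expect a genuine obstacle here, as the argument is in the same spirit as the preceding two corollaries. The only point requiring a little care is that the $\BVu$ rule $\sar$ is \emph{not} derivable in $\BV$ alone but genuinely needs the cut $\iru$, which is why the translation must go through $\SBV$ rather than directly through $\BV\cup\set\iru$ and Theorem~\ref{thm:cutelimBV}; one should also keep in mind that \enquote{equivalent} in Theorem~\ref{thm:SBV} and Proposition~\ref{prop:BVu} means that provability agrees on the common (unit-free) language, which is all that is needed since $C$ contains no unit.
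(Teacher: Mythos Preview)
Your proposal is correct and follows the route the paper intends: the corollary is stated without proof, as an immediate consequence of the preceding results (translate to $\SBV$, eliminate cuts via Theorem~\ref{thm:SBV}, translate back via Proposition~\ref{prop:BVu}).

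One correction to your final caveat: the claim that $\sar$ ``is not derivable in $\BV$ alone but genuinely needs the cut $\iru$'' is mistaken. In $\BV$ with units, $\sar$ is simply the instance of $\swir$ obtained by setting the schematic $A$ in $\swir$ to~$\lunit$: the premise $\vls([{\lunit};C];B)\fequ\vls(C;B)$ and the conclusion $\vls[({\lunit};B);C]\fequ\vls[B;C]$ give exactly $\sar$ after an $\fequ$ step and commutativity. (This is also how the proof of Proposition~\ref{prop:SBVu} implicitly handles $\sar$ when it says ``similarly for $\swir$''.) Hence every rule of $\BVu$ is already derivable in $\BV$, and you could equally well route the argument through $\BV\cup\{\iru\}$ and Theorem~\ref{thm:cutelimBV} directly; the detour through $\SBV$ is not forced. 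This does not affect the validity of your main argument, only the closing commentary.
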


\begin{rem}
  Our version of $\BVu$ is slightly different from the one by
  Kahramanoğulları. In~\cite{BVu}
  the rule~$\sar$ is absent, and instead the rule~$\aiprd$ shown
  in~\Cref{eq:aip} is part of the system. We chose our variation because it is better suited for the results of this paper (e.g. Theorem~\ref{thm:dicograph-inclusion} and the proofs in Section~\ref{sec:comparing}).
  But it is easy to see that the
  two variants of $\BVu$ are equivalent: first, as we have mentioned
  above, the rule $\aiprd$ is derivable in $\set{\aitrd,\sar}$, and
  second, the rule $\sar$ is admissible if $\aiprd$ is present. This
  can be seen by an easy induction on the size of the derivation. However,
  note that the same trick does not work for the rule~$\qard$. This
  rule cannot be shown admissible, as the formula
  $\vls[<a;[b;c]>;<[a\lneg;b\lneg];c\lneg>]$ is not provable in $\BVu$
  without~$\qard$.
\end{rem}

\begin{rem}
  The logical rules of $\SBVu$, i.e., the bottom two lines in the
  Figure~\ref{fig:BVu}, have already been studied by Retoré
  in~\cite{pomsetReport}, as a rewrite system on digraphs to generate
  theorems of pomset logic. We will study the relation between pomset
  logic and $\BV$/$\BVu$ in the next section.
\end{rem}

In some sections of this paper we also need a variant of $\BVu$ that
we call $\BVup$ and that is obtained from $\BVu$ by restricting rules
$\qard$ and $\sar$ to cases where neither $A$ nor $B$ has a $\vlpa$
as main connective, i.e., we replace $\qard$ and $\sar$ by $\qaprd$
and $\sapr$, respectively:
\begin{equation}
  \label{eq:sapr}
  \vlinf{\qaprd}{}{\vls[A;B]}{\vls<A;B>}
  \hskip5em
  \vlinf{\sapr}{}{\vls[A;B]}{\vls(A;B)}
  \hskip5em
  \parbox{16em}{where $A\not\fequp C\vlpa D$ and $B\not\fequp C\vlpa D$ for any formulas $C$ and $D$.}
\end{equation}
and similarly, by restricting the rules $\qblrd$,  $\qbrrd$, and $\sbr$ to cases where $C$ does
not have a $\vlpa$ as main connective, i.e., these three rules are replaced
by $\qblprd$,  $\qbrprd$, and $\sbpr$, respectively:
\begin{equation}
    \label{eq:sbpr}
    \vlinf{\qblprd}{}{\vls[<A;B>;C]}{\vls<[A;C];B>}
    \hskip2em
    \vlinf{\qbrprd}{}{\vls[<A;B>;C]}{\vls<A;[B;C]>}
    \hskip2em
    \vlinf{\sbpr}{}{\vls[(A;B);C]}{\vls([A;C];B)}
    \hskip2em
    \parbox{10em}{where $C\not\fequp D\vlpa E$ for any formulas $D$ and $E$.}
\end{equation}

\begin{prop}\label{prop:BVup}
  The systems $\BVu$ and $\BVup$ are equivalent.
\end{prop}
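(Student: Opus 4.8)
The plan is to prove the two inclusions separately. One direction is trivial: every rule of $\BVup$ is a special case of the corresponding rule of $\BVu$, and the rules $\aiord$, $\aitrd$, $\aislrd$, $\aisrrd$, $\qcrd$ and $\fequp$ occur verbatim in both systems; hence every $\BVup$-derivation is already a $\BVu$-derivation, and in particular every $\BVup$-provable formula is $\BVu$-provable. For the converse I would show that each of the five rules $\sar$, $\qard$, $\sbr$, $\qblrd$, $\qbrrd$ is \emph{derivable} in $\BVup$. Since derivability is preserved by plugging into a one-hole context, replacing each instance of these rules in a $\BVu$-derivation by the corresponding $\BVup$-derivation turns any $\BVu$-derivation --- and in particular any $\BVu$-proof --- into a $\BVup$-derivation with the same conclusion; this yields that $\BVu$-provability implies $\BVup$-provability, and hence the equivalence.

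To establish the five derivabilities I would prove, by straightforward inductions, the following auxiliary facts: for all formulas $A$, $B$, both $\vls(A;B)\derives{\BVup}{}\vls[A;B]$ and $\vls<A;B>\derives{\BVup}{}\vls[A;B]$; and for all formulas $A$, $B$, $C$, the three derivations $\vls([A;C];B)\derives{\BVup}{}\vls[(A;B);C]$, $\vls<[A;C];B>\derives{\BVup}{}\vls[<A;B>;C]$ and $\vls<A;[B;C]>\derives{\BVup}{}\vls[<A;B>;C]$. These are precisely the statements that arbitrary instances of $\sar$, $\qard$, $\sbr$, $\qblrd$ and $\qbrrd$ can be simulated in $\BVup$; for an instance in which the $\lpar$-headedness side condition of the hatted variant already holds, there is nothing to do.

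Consider the first fact, $\vls(A;B)\derives{\BVup}{}\vls[A;B]$, proved by induction on $\sizeof{A}+\sizeof{B}$. If neither $A$ nor $B$ has $\lpar$ as main connective, one application of $\sapr$ suffices. Otherwise, by commutativity of $\ltens$ we may assume $A$ is $\lpar$-headed, and by associativity and commutativity of $\lpar$ we may split off a single component, writing $A\fequp\vls[A_1;A_2]$ where $A_2$ is a leaf of the maximal $\lpar$-decomposition of $A$ (hence not $\lpar$-headed). Then $\vls(A;B)\fequp\vls([A_1;A_2];B)$, and $\sbpr$ applies --- its side condition holds because $A_2$ is not $\lpar$-headed --- giving $\vls[(A_1;B);A_2]$; by the induction hypothesis $\vls(A_1;B)\derives{\BVup}{}\vls[A_1;B]$, and plugging this into the par with $A_2$ yields $\vls[[A_1;B];A_2]$, which is $\fequp$-equivalent to $\vls[A;B]$. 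The remaining four facts follow the same template. For $\vls([A;C];B)$, $\vls<[A;C];B>$ and $\vls<A;[B;C]>$ the induction is on $\sizeof{C}$: in the step, write $C\fequp\vls[C_1;C_2]$ with $C_2$ not $\lpar$-headed, peel $C_2$ off by a single application of $\sbpr$, $\qblprd$ or $\qbrprd$ respectively, and recombine after applying the induction hypothesis to the smaller instance with $C_1$ in place of $C$. For $\vls<A;B>$ the induction is on $\sizeof{A}+\sizeof{B}$, with $\qaprd$ in the base case; in the step one peels a non-$\lpar$-headed leaf off $A$ with $\qblprd$, or off $B$ with $\qbrprd$ when $A$ is not $\lpar$-headed --- and here, since $\lseq$ is noncommutative, the two subcases must be treated separately.

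I do not expect a genuine obstacle. The only slightly delicate ingredients are the elementary structural fact that the equational theory $\fequp$ lets one expose a non-$\lpar$-headed leaf at the top of any $\lpar$-headed formula, and the bookkeeping showing that at every peeling step the side conditions of $\sapr$, $\sbpr$, $\qaprd$, $\qblprd$, $\qbrprd$ are met; both are routine, the latter holding precisely because what gets peeled off is always such a leaf. The one conceptual point worth noting is that the unrestricted rules $\sar$ and $\qard$ are recovered not from themselves but by using the restricted switch/seq rules $\sbpr$, $\qblprd$, $\qbrprd$ to move a $\lpar$ out of the way one leaf at a time.
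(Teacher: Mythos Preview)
Your proposal is correct and follows essentially the same approach as the paper: show the trivial inclusion of $\BVup$ in $\BVu$, and conversely derive each of $\sar$, $\qard$, $\sbr$, $\qblrd$, $\qbrrd$ in $\BVup$ by peeling off non-$\lpar$-headed components one at a time with the restricted rules. The paper's proof merely displays the derivation patterns (e.g.\ splitting $A=[A';A'']$, $B=[B';B'']$) and leaves the iteration implicit, whereas you spell out the inductions on $\sizeof{A}+\sizeof{B}$ or $\sizeof{C}$ and the fact that a $\lpar$-headed formula always exposes a non-$\lpar$-headed leaf; this is the natural completion of the paper's sketch rather than a different argument.
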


\begin{proof}
  Any derivation in $\BVup$ is also a derivation in
  $\BVu$. Conversely, the rules $\qard$ and $\sar$ and $\sbr$ are derivable with
  $\set{\qaprd,\qblprd,\qbrprd,\fequp}$ and $\set{\sapr,\sbpr,\fequp}$ and $\set{\sbpr,\fequp}$, respectively,
  as shown below:
  \begin{equation*}
    \vlderivation{
      \vlin{\fequp}{}{\vls[[A';A''];[B';B'']]}{
        \vlin{\qaprd}{}{\vls[A';B';A'';B'']}{
          \vlin{\qblprd}{}{\vls[<A';B'>;A'';B'']}{
            \vlin{\qbrprd}{}{\vls[<[A';A''];B'>;B'']}{        
              \vlhy{\vls<[A';A''];[B';B'']>}}}}}}
    \hskip3em
    \vlderivation{
      \vlin{\fequp}{}{\vls[[A';A''];[B';B'']]}{
        \vlin{\sapr}{}{\vls[A';B';A'';B'']}{
          \vlin{\sbpr}{}{\vls[(A';B');A'';B'']}{
            \vlin{\fequp,\sbpr,\fequp}{}{\vls[([A';A''];B');B'']}{        
              \vlhy{\vls([A';A''];[B';B''])}}}}}}
    \hskip3em
    \vlderivation{
      \vlin{\fequp}{}{\vls[(A;B);[C';C'']]}{
        \vlin{\sbpr}{}{\vls[[(A;B);C'];C'']}{
          \vlin{\sbpr}{}{\vls[([A;C'];B);C'']}{
            \vlin{\fequp}{}{\vls([[A;C'];C''];B)}{        
              \vlhy{{\vls([A;[C',C'']];B)}}}}}}}
  \end{equation*}
  and similarly, the rules $\qblrd$ and $\qbrrd$ are derivable in $\set{\qblprd,\fequp}$ and $\set{\qbrprd,\fequp}$, respectively.
\end{proof}


\subsection{\texorpdfstring{$\SBVu$}{SBVu} and Dicograph Inclusions}
\label{sec:dicograph-inclusion}

To wrap up these long preliminaries, we recall here some useful results that
also provide some motivation for the rules of $\SBV$ and $\SBVu$.
The starting observation is that the non-interaction rules preserve the atoms of
a formula, so they can be seen as digraph rewriting rules preserving the set of
vertices. The following results, due to Béchet, de Groote and
Retoré~\cite{GBR:rta97}, elucidate the combinatorial meaning of this rewriting
system.

First, let us consider the case of unit-free formulas \emph{without the tensor
  connective $\ltens$}.
Recall that according to Proposition~\ref{prop:sp-orders}, such formulas
modulo $\fequp$ correspond to \emph{series-parallel orders} on their atom
occurrences.

\begin{thmC}[{\cite[Propositions~3.2 and~4.1]{GBR:rta97}} (reformulated)]
  \label{thm:sp-inclusion}
  Let $A$ and $B$ be two unit-free and tensor-free linear generalized formulas
  over some set $X$. Then the inclusion of edges $\redges_{\tograph{A}} \supseteq
  \redges_{\tograph{B}}$ holds if and only if $A\derives{}{}B$ in the fragment of
  $\BVu$ where the interaction rules and the rules involving tensors have been
  excluded: $\{\fequp,\qard,\qblrd,\qbrrd,\qcrd\}$.
\end{thmC}

Equivalently, this is also the non-interaction non-tensor fragment of $\SBVu$,
since all the rules of $\SBVu$ that are not in $\BVu$ contain tensors.
\begin{proof}
  Let us explain how to connect this reformulation to the original statement
  in~\cite{GBR:rta97}. In the latter, the inclusion $\redges_{\tograph{A}}
  \supseteq \redges_{\tograph{B}}$ is characterized by a rewriting system on
  series-parallel orders whose rules are listed
  in~\cite[Definition~3.1]{GBR:rta97}. Observe that, among those rules:
  \begin{itemize}
  \item (a), (b), (c) and (d) correspond to $\qard,\qbrrd,\qblrd,\qcrd$
    respectively;
  \item (e) and (h) express the reflexivity and transitivity of the rewriting
    relation, which corresponds to the fact that a derivation is a sequence of
    zero, one or more inference rules;
  \item the remaining rules, namely (f) and (g), express the contextual closure
    of the rules, whose counterpart in our setting is deep inference, i.e.\ the
    possibility of instantiating a rule in a context $\Sconhole$.
  \end{itemize}
  There remains a subtlety: we must use the $\fequp$ rule to turn formulas into
  equivalent ones on which other rules may be applied, whereas this is left
  implicit in the setting of~\cite{GBR:rta97} (since $\fequp$ on formulas
  corresponds to equality of dicographs). The fact that $\fequp$ suffices is due
  to the fact that \enquote{the algebraic representation of any
    [series-parallel] order is unique modulo the associativity of $\lseq,\lpar$
    and the commutativity of $\lpar$} (a quote from~\cite[\S2]{GBR:rta97} where
  we adapted the notations for the connectives).
\end{proof}

Next, we turn to unit-free formulas that may contain tensors; their associated
graphs are all dicographs. One would then expect the non-interaction fragment of
$\SBVu$ to characterize dicograph inclusion. However, this is not quite so,
since one rule must be added.
\begin{thmC}[{\cite[\S5]{GBR:rta97}} (rephrased)]
  \label{thm:dicograph-inclusion}
  Let $A$ and $B$ be two unit-free linear generalized formulas over a set $X$.
  Then the inclusion of edges $\redges_{\tograph{A}} \supseteq
  \redges_{\tograph{B}}$ holds if and only if $A\derives{}{}B$ in the fragment of
  $\SBVu$ without the interaction rules, plus the following additional
  rule:
  \[ \vlinf{\wsw}{}{\vls[(A;C);(B;D)]}{\vls([A;B];[C;D])} \quad (\text{\emph{weak switch}})\]
\end{thmC}

\begin{rem}
  In \cite[\S5]{GBR:rta97}, it is suggested that the proof of
  Theorem~\ref{thm:sp-inclusion} can be carried over to give a proof
  of Theorem~\ref{thm:dicograph-inclusion}. This is not immediately
  true. Consider for example $A=\vls<(a;b);c>$ and
  $B=\vls<a;[b;c]>$. Clearly $\redges_{\tograph{A}} \supseteq
  \redges_{\tograph{B}}$, so there must be a derivation from $A$ to
  $B$. To construct this derivation, the proof in~\cite{GBR:rta97} proceeds by induction on
  $\sizeof{\vertices_{\tograph{A}}}=\sizeof{\vertices_{\tograph{B}}}$
  and makes a case analysis on the main connectives of $A$ and $B$. In our
  case it is $\vlse$ for both. But the argument that works for
  series-parallel orders does not work for dicographs in general. In
  our example, we have to go through $\vls<a;b;c>$. However with some
  adjustments, the proof does go through. Since this paper is already
  quite long, we refrain from giving the details, as they are quite
  straightforward, once the aforementioned problem is observed.
\end{rem}

\begin{rem}
  We shall see in \Cref{sec:BVinPomset} that the rules of $\BV$ preserve pomset
  logic correctness. As Retoré noticed~\cite[\S5]{pomsetReport}, this is
  \emph{not} the case for the weak switch rule, and this provides one
  justification for excluding it from $\SBV$.

  Another argument, which is more intrinsic to $\BV$, is that $\BV+\wsw$ does
  not admit cut-elimination, or equivalently, that $\SBV+\wsw$ is not
  conservative over $\BV+\wsw$. The exclusion of the weak switch for this reason
  is explained as a deliberate design choice by Guglielmi in the discussion
  concerning \enquote{conservation laws} at the beginning of~\cite[\S3]{SIS}.
\end{rem}

\begin{rem}
  By analogy with the previous subsection, we could show that the
  non-interaction fragment of $\SBVu+\wsw$ is equivalent over
  unit-free formulas to $\{\fequ,\qrd,\qru,\wsw\}$ with units (indeed,
  the usual switch is an instance of the weak switch when one of the
  formulas is set to a unit). Then,
  Theorem~\ref{thm:dicograph-inclusion} above becomes equivalent to
  \cite[Conjecture~3.3.3]{SIS}, which states that
  $\{\fequ,\qrd,\qru,\wsw\}$ characterizes dicograph inclusion. In
  other words, that conjecture had been proved before it was stated.
\end{rem}




\section{Comparing \texorpdfstring{$\BV$}{BV} and Pomset Logic}\label{sec:comparing}

In this section we investigate the relation between the two logics, $\BV$ and pomset logic. We
have already seen in Section~\ref{sec:formulas} that every formula
uniquely determines a dicograph. Furthermore, by inspecting the rules of
$\BV$ in Figure~\ref{fig:BV}, one can see that the rule $\fequ$ does
not change that dicograph, and that the rules $\swir$ and $\seqrd$
only change the set of edges but not the set of vertices of the
corresponding dicograph. Additionally, every instance of $\aird$
removes one pair of dual atoms, and in a proof of $\BV$, every atom
occurring in the conclusion has to be removed by exactly one instance
of $\aird$ in the proof.

This means that every $\BV$ proof $\Deri$ uniquely determines an axiom
linking $\linking(\Deri)$ for its conclusion, and hence by definition
also a pomset logic pre-proof, which in turn, by
Definition~\ref{def:relRBprenet}, determines a cographic RB-prenet.

In Section~\ref{sec:BVinPomset} we are going to show that every
cographic RB-prenet that is obtained from a $\BV$ proof in such a way
is indeed correct, and therefore every theorem of $\BV$ is also a
theorem of pomset logic.

Then, in Section~\ref{sec:counterexample} we show that the converse
does not hold, i.e., there are theorems of pomset logic
that are not theorems of $\BV$.
We do this by presenting a formula that is provable in pomset logic
but not in $\BV$.


\subsection{\texorpdfstring{$\BV$}{BV} is Contained in Pomset Logic}
\label{sec:BVinPomset}

In this section we do not only show that every theorem of $\BV$ is
also a theorem of pomset logic, but also that every proof in $\BV$
uniquely determines a pomset logic proof net with the same conclusion.

The proof that we present here uses the basic idea 
from~\cite{dissvonlutz} that has also been used
in~\cite{NEL-undec}. In~\cite{pomsetReport}, Retoré presents an
alternative method.

To begin, let $\Deri$ be a $\BV$ proof of a formula $A$. We denote by
$\toRBgraph{\Deri}=\relRB{A,\linking(\Deri)}$ the cographic RB-prenet
generated from $\Deri$ as described above (see
Definition~\ref{def:relRBprenet}). Then the main result of this
section is the following.

\begin{thm}\label{thm:BVpomset}
  For every $\BV$ proof $\Deri$, the cographic RB-prenet
  $\toRBgraph{\Deri}$ is correct.
\end{thm}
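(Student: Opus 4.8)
The plan is to prove this by induction on the length of the $\BV$-derivation $\Deri$, proceeding from the conclusion upward (i.e.\ following the rules bottom-to-top, so that the induction hypothesis applies to the subderivation above the bottommost rule). The base case is the axiom $\aird$ applied once at the top in a proof of $\vls[a;a\lneg]$ --- or, working with $\BVu$ instead of $\BV$ via \Cref{prop:BVu}, the axiom $\aiord$; in either case the resulting RB-dicograph is a single matching edge between two vertices with no $R$-edge, which has no \ae-cycle at all, hence is trivially correct. For the inductive step, consider the bottommost rule instance in $\Deri$: it rewrites $\Scons{A'} $ into $\Scons{B'}$ for some context $\Sconhole$ and some instance $\vlinf{\rr}{}{B'}{A'}$ of one of the rules $\aird, \seqrd, \swir$ (and $\fequ$, which is trivial since it does not change the dicograph at all by \Cref{thm:fequ}). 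By the induction hypothesis, the cographic RB-prenet of the premise $\Scons{A'}$ is correct. It therefore suffices to show the following \emph{local preservation lemma}: if $\relRB{\Scons{A'},\linking'}$ is correct, then so is $\relRB{\Scons{B'},\linking}$, where $\linking$ is obtained from $\linking'$ in the obvious way (identity on atoms surviving the rule; and for $\aird$, adding the new matching pair).

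The heart of the argument is thus the local preservation lemma, which I expect to be the main obstacle. The key structural observation to exploit is what happens to the edge set $\redges$ of the dicograph under each rule. For the switch rule $\vlinf{\swir}{}{\vls[(A;B);C]}{\vls([A;C];B)}$: passing from premise to conclusion, the dicograph on the same vertex set \emph{loses} some $\tedges$-edges --- precisely, $(\vB\times\vC)\cup(\vC\times\vB)$ becomes $\emptyset$ between those blocks --- so $\redges_{\relRB{\Scons{B'}}} \subseteq \redges_{\relRB{\Scons{A'}}}$ as sets of edges (this is exactly the content of \Cref{thm:dicograph-inclusion}-style reasoning, or can just be checked directly from \Cref{def:digraph-connectives}). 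For the sequence rule $\vlinf{\seqrd}{}{\vls[<A;B>;<C;D>]}{\vls<[A;C];[B;D]>}$: here too, comparing the two dicographs on the same vertices, the conclusion's edge set is contained in the premise's --- going from $\vls<[A;C];[B;D]>$ to $\vls[<A;B>;<C;D>]$ one removes the directed edges $\vC \times \vB$ and keeps everything else. So in all non-axiom cases the matching $\bG$ is unchanged and $\redges$ only shrinks. The crucial lemma is then: \emph{removing $R$-edges from an RB-dicograph, while staying an RB-dicograph, cannot create a chordless \ae-cycle out of nothing} --- more carefully, if $\relRB{\Scons{B'}}$ had a chordless \ae-cycle $\cC$, we must trace it back to a (not necessarily chordless, but existing) \ae-cycle in $\relRB{\Scons{A'}}$, and then \emph{repair} it into a chordless one, contradicting correctness of the premise.

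That repair step is the delicate part. The naive idea --- \enquote{$\cC$ uses only $R$-edges of $\Scons{B'}$, which are also $R$-edges of $\Scons{A'}$, so $\cC$ is an \ae-cycle of $\Scons{A'}$} --- gives an \ae-cycle in the premise but not necessarily a chordless one: an edge that was absent in $\Scons{B'}$ might be a chord of $\cC$ in $\Scons{A'}$. The standard remedy (used e.g.\ in~\cite{dissvonlutz,NEL-undec}) is that a chord of an \ae-cycle always lets one extract a \emph{strictly shorter} \ae-cycle, and by iterating one reaches a chordless one; one must check that the matching edges of $\cC$ are never themselves destroyed in this shortening (true, since chords lie in $\redges\setminus\bG$ as noted in the excerpt) so that the resulting shorter cycle is still alternating. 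Thus any \ae-cycle in $\Scons{A'}$ yields a chordless one, contradicting the induction hypothesis. Actually, to make the bookkeeping cleanest I would phrase the invariant not on \enquote{chordless \ae-cycles} but use the equivalence of the two correctness criteria (\Cref{thm:correctness-equivalence}): switch to tree-like RB-prenets, where \emph{every} \ae-cycle is automatically chordless, and show directly that each $\BV$ rule, read on the unfolded sequents $\unfold{(\cdot)}$, either preserves the tree-like RB-prenet up to adding a disjoint piece (for $\aird$: a fresh axiom link plus its $\zet{}$/$\nzet{}$ gadget, which cannot lie on any cycle with the rest except through the new atoms, and those form no cycle on their own) or, for $\swir$ and $\seqrd$, corresponds to a rewiring of the tree part whose effect on paths can be analyzed case-by-case; the point in each case being that any \ae-cycle after the rule induces an \ae-cycle before the rule. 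I would then package the two local analyses ($\swir$ and $\seqrd$) as the two substantive cases, and note that deep inference (application inside $\Sconhole$) is harmless because an induced subgraph of a correct RB-dicograph argument localizes: a cycle passing through the modified region restricted to that region still behaves as in the bare rule instance, while a cycle avoiding it is untouched. The $\seqrd$ case is the one genuinely specific to pomset logic (the $\lseq$ connective), and verifying there that no new directed \ae-cycle appears --- using $\NN$-freeness / weak transitivity of dicographs (\Cref{def:dicograph}) to constrain how an alternating cycle can traverse the $\vls<\cdot;\cdot>$-blocks --- is where I expect to spend most of the effort.
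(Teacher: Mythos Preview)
Your approach is genuinely different from the paper's, and while the overall strategy is reasonable, there is a real gap at the point you flag as \enquote{the standard remedy}.

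\textbf{What the paper does.} The paper does \emph{not} argue by induction on the derivation and does not try to show directly that each rule preserves correctness. Instead it proves the contrapositive by \emph{localising to the cycle itself}. First it reduces to balanced formulas (Proposition~\ref{prop:balanced}). The key lemma is that provability of balanced formulas is closed under balanced pseudo-subformulas (Lemma~\ref{lem:balanced}): if $P$ is provable and $H\pseudosubeq P$ is balanced, then $H$ is provable too. Now, a chordless \ae-cycle in $\toRBgraph{P}$ is precisely a balanced pseudo-subformula $H$ of the simple shape $\vls[L_1;\ldots;L_n]$ with each $L_i\in\{\vls(a_{i-1}\lneg;a_i),\vls<a_{i-1}\lneg;a_i>\}$ (Proposition~\ref{prop:cycle}). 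The heart of the argument (Lemma~\ref{lem:cycle}) is then a short case analysis showing that any $\BVup$ rule applied bottom-up to such an $H$ yields a formula containing a \emph{strictly smaller} cycle; by induction on $\sizeof{H}$ (Lemma~\ref{lem:balanced-cycle}), cycles are unprovable, contradicting provability of $P$. The whole machinery---context $\Sconhole$, arbitrary chord configurations---disappears, because the pseudo-subformula step has already stripped everything down to the cycle.

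\textbf{Where your sketch has a gap.} Your main lemma is: an \ae-cycle in an RB-dicograph can always be shortened along a chord to a chordless one. This is \emph{not} the routine fact you suggest. It is easy for undirected RB-cographs (the $\MLLm$ case), because a chord plus one of the two arcs---traversed in the appropriate direction---always yields a shorter \ae-cycle. In the directed setting the $R$-edges of an arc need not be reversible, and for chords with the \enquote{wrong} parity of endpoints you cannot simply close up along an arc: one readily builds non-dicograph RB-digraphs that have an \ae-cycle with a chord but no chordless \ae-cycle. For RB-\emph{di}cographs the statement is plausibly true, but proving it forces you to invoke weak transitivity and $\Pfour$-freeness of $\tredges$ to manufacture further chords of the right parity; this is a non-trivial case analysis that your proposal does not supply (and it is not what \cite{dissvonlutz,NEL-undec} do---those references use exactly the paper's cycle-extraction argument). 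Your alternative via tree-like prenets avoids chordlessness but replaces it with a structural rewiring analysis for $\swir$ and $\seqrd$ under an arbitrary context, which you leave at the level of \enquote{case-by-case}; this is again where the real work lies.

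\textbf{Comparison.} Both routes ultimately need a case analysis on the rules. The paper buys simplicity by first projecting onto the cycle: the case analysis in Lemma~\ref{lem:cycle} is on the explicit formula $\vls[L_1;\ldots;L_n]$, so one can name the affected $L_i$'s and read off the smaller cycle in one line per rule. Your route keeps the ambient context throughout and must either prove the general chord-shortening lemma for dicographs or carry out the tree-like rewiring analysis---either way more work than your sketch acknowledges.
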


The basic idea of the proof for this theorm is to proceed by way of contradiction, showing that for every possible application of an inference rule of $\BV$, whenever the RB-prenet of the conclusion is incorrect, then already the premise of that rule instance must be incorrect. Since the premise of every proof is $\lunit$ (which is correct) we obtain our desired result. In order to make this argument formally precise and in order to keep the involved case analysis managable we need to introduce some more concepts.

First observe that  every RB-dicograph uniquely determines a balanced formula,
up to renaming of variables and equivalence under $\fequ$. This gives
us immediately the following proposition.

\begin{prop}\label{prop:balanced}
  Let $\Deri$ be a proof in $\BV$. Then there is a balanced formula
  $A$, that is provable in $\BV$ and such that  $\toRBgraph{A}$ and $\toRBgraph{\Deri}$ are isomorphic.
\end{prop}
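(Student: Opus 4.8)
The plan is to read the required balanced formula directly off the RB-dicograph $\toRBgraph{\Deri}$ and then transport provability along the renaming of atoms that this induces. Let $A_0$ be the conclusion of $\Deri$, so that $\toRBgraph{\Deri}=\relRB{A_0,\linking(\Deri)}$. Its underlying digraph is $\tograph{A_0}$, which is a dicograph by Theorem~\ref{thm:dicograph}, and its $B$-edges form a perfect matching because $\linking(\Deri)$ is a fixed-point-free involution pairing each atom occurrence with a dual one (recall that in a $\BV$ proof every atom of the conclusion is consumed by exactly one $\aird$). Hence $\toRBgraph{\Deri}$ is an RB-dicograph, and Proposition~\ref{prop:relRB} yields a balanced formula $A$ together with an isomorphism of RB-dicographs $f\colon\toRBgraph{A}\xrightarrow{\;\sim\;}\toRBgraph{\Deri}$. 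This already gives $\toRBgraph{A}\isom\toRBgraph{\Deri}$, so the only thing left to establish is that $A$ is provable in $\BV$.

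For this I would first record that $f$ exhibits $A$ as a renaming of $A_0$. Forgetting labels, $f$ is a digraph isomorphism $\tograph{A}\to\tograph{A_0}$, so by Theorem~\ref{thm:fequ} the formula $\rho(A_0)$ obtained from $A_0$ by relabelling each atom occurrence $x$ with the atom that $A$ assigns to $f^{-1}(x)$ satisfies $\rho(A_0)\fequ A$. Moreover $f$ carries the matching of $\toRBgraph{A}$ --- which, $A$ being balanced, consists exactly of the pairs of dual atom occurrences of $A$ --- onto the matching of $\toRBgraph{\Deri}$, namely the pairs identified by $\linking(\Deri)$. Consequently $\rho$ sends any two occurrences linked by $\linking(\Deri)$ to a pair of mutually dual atoms; I will call such a renaming \emph{linking-respecting}.

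The last step is to check that applying a linking-respecting renaming $\rho$ to $\Deri$ again yields a $\BV$ proof, this time of $\rho(A_0)$. In $\Deri$ no rule deletes or merges atom occurrences going downwards ($\aird$ only creates a fresh dual pair, while $\fequ$, $\swir$ and $\seqrd$ merely rearrange occurrences), so every atom occurrence in every line of $\Deri$ descends to a uniquely determined occurrence of the conclusion $A_0$; relabel it by pulling back $\rho$ along this descent. After this relabelling every rule instance remains legal: an instance of $\fequ$ stays one since relabelling is a congruence compatible with the equations of \Cref{fig:equ}; instances of $\swir$ and $\seqrd$ stay legal because those rules never inspect atom names; and an instance of $\aird$ introducing $\vls[a;a\lneg]$ becomes one introducing $\vls[b;b\lneg]$ where $b$ and $b\lneg$ are the renamed atoms --- this is a genuine $\aird$ instance precisely because the two occurrences created by this $\aird$ are linked to each other in $\linking(\Deri)$ and $\rho$ is linking-respecting. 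We thus obtain a $\BV$ proof of $\rho(A_0)$, and, appending one instance of the $\fequ$ rule, a $\BV$ proof of $A$, as desired.

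I do not anticipate a real obstacle here; the one point worth being careful about is that $A_0$ need not itself be balanced, since a single propositional variable may be reused by several independent axiom links, so one cannot simply take $A=A_0$. The renaming $\rho$ is exactly what pulls these reused variables apart, and the verification above shows this is harmless for $\BV$ derivability, because the sole place where a $\BV$ rule reads an atom's name is the duality check inside $\aird$, which $\rho$ preserves by construction.
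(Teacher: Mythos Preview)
Your proof is correct and follows essentially the same idea as the paper: obtain $A$ by renaming the atom occurrences of the conclusion of $\Deri$ so that the result is balanced while the axiom linking is preserved, and observe that such a linking-respecting renaming transports the $\BV$ proof. The paper compresses this into two sentences, whereas you spell out carefully why the renamed derivation is still a valid $\BV$ proof (in particular, why each $\aird$ instance survives); the detour through Proposition~\ref{prop:relRB} to produce $A$ is harmless but not needed, since one can directly pick fresh dual pairs of atoms, one pair per axiom link.
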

\begin{proof}
  Let $B$ be the conclusion of $\Deri$. Then $A$ is obtained from $B$
  by renaming all variable occurrences such that the result is
  balanced and the linking is preserved.
\end{proof}

\begin{defi}\label{def:pseudosub}
  Let $A$ and $B$ be generalized formulas over a set $X$. We call $B$
  a \emph{pseudo-subformula} of $A$, written as $B\pseudosubeq A$, if it is
  equivalent under $\fequ$ to some $A'$ that can be obtained from $A$ by
  replacing some occurrences of elements of $X$ (in the case of usual formulas,
  some atom occurrences) in $A$ by $\lunit$. If $B\pseudosubeq A$ and
  $B\not\fequ A$, then we say that $B$ is a \emph{proper pseudo-subformula} of
  $A$, and write it as $B\pseudosub A$.
\end{defi}

\begin{exa}
  We have the pseudo-subformula relation
  \[ \vls[<a;d>;(b;b)] \pseudosub \vls[<(a;b);d;e>;(b;[(e;f);<a;b>])] \]
  coming from the equivalence
  \[ \vls[<a;d>;(b;b)] \fequ
    \vls[<(a;\lunit);d;\lunit>;(b;[(\lunit;\lunit);<\lunit;b>])] \]
\end{exa}

The following proposition explains our choice to denote both pseudo-subformulas
and induced subgraphs (Definition~\ref{def:induced}) by $\pseudosubeq$.

\begin{prop}\label{prop:pseudosub-induced}
  Let $A$ and $B$ be \emph{linear generalized} formulas (this is the case, for
  instance, when $A$ and $B$ are \emph{balanced} usual formulas). We have that
  $B \pseudosubeq A$ if and only if $\tograph{B} \pseudosubeq \tograph{A}$, and $B\pseudosub A$  if and only if $\tograph B\pseudosub\tograph A$.
\end{prop}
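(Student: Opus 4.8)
The plan is to reduce the biconditional on pseudo-subformulas to the already-established dictionary between formulas modulo $\fequ$ and labeled digraphs, namely Theorem~\ref{thm:fequ} and Corollary~\ref{def:tograph}'s consequences (for linear formulas, $A\fequ B\iff\tograph A=\tograph B$). First I would unwind the definition: $B\pseudosubeq A$ means there is a formula $A'$ with $A'\fequ B$, where $A'$ is obtained from $A$ by replacing some subset $S$ of the $X$-leaves of $A$ by $\lunit$. I claim that for a \emph{linear} generalized formula $A$, if $A'$ is the result of replacing the leaves in $S$ by $\lunit$, then $\tograph{A'}$ is exactly the induced subgraph $\tograph A$ on the vertex set $\vertices[\tograph A]\setminus S$ (identifying the leaves of $A$ with the vertices of $\tograph A$). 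This is the technical core and I would prove it by a straightforward structural induction on $A$, using Definition~\ref{def:digraph-connectives}: erasing a leaf commutes with the three graph operations $\vls(\cdot;\cdot),\vls[\cdot;\cdot],\vls<\cdot;\cdot>$ because each of those operations only \emph{adds} edges between the two vertex blocks and never touches edges internal to a block, so restricting to a subset of vertices and then composing gives the same result as composing and then restricting. The unit clauses $\tograph\lunit=\gempty$ and the homomorphism clauses of Definition~\ref{def:tograph} make the base case and inductive step immediate. (Linearity is needed exactly so that Definition~\ref{def:tograph} applies and the disjointness hypotheses hold.)

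Granting that claim, the forward direction is: if $B\pseudosubeq A$, pick $A'$ and $S$ as above; then $\tograph B=\tograph{A'}$ as labeled digraphs (by Theorem~\ref{thm:fequ}, since $A'\fequ B$ and both are linear — or more precisely by the labeled version for the non-balanced case), and $\tograph{A'}$ is an induced subgraph of $\tograph A$ by the claim, hence $\tograph B\pseudosubeq\tograph A$. For the converse, suppose $\tograph B\pseudosubeq\tograph A$, i.e.\ $\tograph B$ is isomorphic (as a labeled digraph) to the subgraph of $\tograph A$ induced by some $W\subseteq\vertices[\tograph A]$. Let $S=\vertices[\tograph A]\setminus W$ and let $A'$ be $A$ with the leaves in $S$ replaced by $\lunit$. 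By the claim, $\tograph{A'}$ is the induced subgraph of $\tograph A$ on $W$, so $\tograph{A'}\cong\tograph B$ as labeled digraphs; then Theorem~\ref{thm:fequ} gives $A'\fequ B$, and since $A'$ was obtained from $A$ by replacing leaves with $\lunit$, this witnesses $B\pseudosubeq A$.

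Finally, for the strict version: $B\pseudosub A$ means $B\pseudosubeq A$ and $B\not\fequ A$. By Theorem~\ref{thm:fequ}, $B\not\fequ A$ is equivalent to $\tograph B\not\cong\tograph A$, equivalently (since $|\vertices[\tograph A]|=|A|$ and similarly for $B$, by the size proposition) to $\sizeof{\vertices[\tograph B]}<\sizeof{\vertices[\tograph A]}$ whenever the inclusion already holds on vertex sets; in graph terms $\tograph B\pseudosubeq\tograph A$ together with $\vertices[\tograph B]\subsetneq\vertices[\tograph A]$, which is exactly $\tograph B\pseudosub\tograph A$ by Definition~\ref{def:induced}. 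Combining with the non-strict equivalence just proved gives $B\pseudosub A\iff\tograph B\pseudosub\tograph A$.

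I expect the only real work to be the inductive claim that leaf-erasure corresponds to taking induced subgraphs; everything else is bookkeeping on top of Theorem~\ref{thm:fequ}. The one subtlety to be careful about is the labeled versus unlabeled distinction: when $A,B$ are merely linear generalized formulas (not balanced usual formulas) one must track the vertex labels through Definition~\ref{def:tograph-nonlinear}, but since erasing a leaf does not disturb the labels of the remaining leaves, the labeled induction goes through verbatim.
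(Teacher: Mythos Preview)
Your proposal is correct and follows essentially the same approach as the paper: both arguments hinge on the inductive claim that replacing a set $S$ of leaves by $\lunit$ in a linear $A$ yields a formula whose graph is exactly the subgraph of $\tograph A$ induced on $\vertices[\tograph A]\setminus S$, and both then invoke Theorem~\ref{thm:fequ} to handle the $\fequ$ in Definition~\ref{def:pseudosub}. The paper's proof is simply terser (it leaves the induction as ``one can check'' and does not spell out the two directions or the strict case separately), while you have unpacked all of this explicitly. One minor point: since $A$ and $B$ are assumed linear, Definition~\ref{def:induced} and Theorem~\ref{thm:fequ} give literal equality $\tograph{A'}=\tograph B$ rather than isomorphism, so your occasional ``$\cong$'' should be ``$=$''; this is cosmetic and does not affect the argument.
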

\begin{proof}
  Let $X = \vertices[\tograph{A}]$. By definition, the induced subgraphs $\gH
  \pseudosubeq \tograph A$ are in bijection with the subsets $Y \subseteq X$ via
  $\vH = Y$. Let $A_{\restriction{}Y}$ be the formula obtained from $A$ by
  replacing every $x \in X \setminus Y$ by $\lunit$; from the inductive
  definition of $\tograph{\cdot}$ one can check that
  $\tograph{A_{\restriction{}Y}} = \gH$. To conclude, apply Theorem~\ref{thm:fequ} to
  handle the equivalence $\fequ$ that appears in the definition of
  pseudo-subformula.
\end{proof}

\begin{lem}\label{lem:balanced}
  Let $A$ be a \emph{balanced} formula (Definition~\ref{def:balanced}) and $B$ be a
  \emph{balanced} pseudo-subformula of $A$. If $A$ is provable in $\BV$, then so
  is $B$.
\end{lem}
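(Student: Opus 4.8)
The plan is to prove this by \emph{substitution}: I will show that replacing a propositional variable together with its dual by the unit $\lunit$ transforms a $\BV$ proof into a $\BV$ proof, and then iterate this over the variables of $A$ that do not occur in $B$.

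The key auxiliary fact to establish first is the following. Fix a propositional variable $a$ and let $\sigma$ be the substitution sending both $a$ and $a\lneg$ to $\lunit$ and fixing every other atom; it extends to an operation on formulas that commutes with $\ltens,\lpar,\lseq$ and with plugging into a context. I claim that if a formula $C$ is provable in $\BV$, then so is $\sigma(C)$. To see this, take a $\BV$ proof of $C$ and apply $\sigma$ to every formula occurring in it. Instances of $\seqrd$ and $\swir$ are schematic in their formula-variables, hence are sent to instances of the same rule; instances of $\fequ$ are sent to instances of $\fequ$ since $\sigma$ is a congruence; an instance of $\aird$ introducing $\vls[b;b\lneg]$ with $b \neq a$ is untouched; and an instance of $\aird$ introducing $\vls[a;a\lneg]$ becomes a step from $\lunit$ to $\vls[\lunit;\lunit]$, which is a valid instance of $\fequ$ because $\lunit \fequ \vls[\lunit;\lunit]$ (equivalently, it is an instance of the derivable rule $\ird$, cf.\ Proposition~\ref{prop:i}). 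Since $\sigma$ fixes the top formula $\lunit$ of the proof, the result is a $\BV$ proof of $\sigma(C)$.

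Next I would make the pseudo-subformula relation explicit using balancedness. Because $\fequ$ preserves the multiset of atoms occurring in a formula, $B \pseudosubeq A$ means that $B \fequ A'$, where $A'$ is obtained from $A$ by replacing the occurrences of some set $T$ of atoms by $\lunit$ (here I use that $A$, being balanced, is linear, so each atom occurs at most once). Since $A$ and $B$ are both balanced, $T$ must be closed under duality: if some variable $c$ had $c \in T$ but $c\lneg \notin T$, then $A'$ --- and hence $B$ --- would contain $c\lneg$ but not $c$, contradicting that $B$ is balanced. So $T = \{c, c\lneg \mid c \in S\}$ for a set $S$ of variables, and $A'$ is exactly $A$ with the unique positive and negative occurrences of each $c \in S$ erased. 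Enumerating $S = \{a_1,\dots,a_k\}$ and letting $\sigma_i$ be the substitution sending $a_i, a_i\lneg$ to $\lunit$, we get $A' = \sigma_k(\cdots\sigma_1(A)\cdots)$, since each $\sigma_i$ genuinely erases the two occurrences of $a_i$ because the formula it is applied to ($A$ with $a_1,\dots,a_{i-1}$ removed) is still balanced. Applying the claim above $k$ times to the $\BV$-provable formula $A$ shows that $A'$ is provable in $\BV$, and since $A' \fequ B$ and $\fequ$ is a rule of $\BV$, one more $\fequ$ step yields a $\BV$ proof of $B$.

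All the steps are fairly routine; the two points requiring a little attention are the graceful degeneration of the $\aird$ instance for $a$ under $\sigma$ (absorbed into the $\fequ$ rule via the unit equation $\lunit \fequ \vls[\lunit;\lunit]$) and the combinatorial observation that balancedness forces the erased atoms to come in dual pairs, which is also what guarantees that the intermediate formulas stay balanced along the iteration. I do not expect any serious obstacle here.
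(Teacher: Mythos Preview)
Your proposal is correct and takes essentially the same approach as the paper: substitute $\lunit$ for the atoms of $A$ not occurring in $B$ throughout every line of the $\BV$ proof, and observe that the result is still a valid $\BV$ derivation. The paper does this in a single sentence, erasing all such atoms at once, whereas you erase one dual pair at a time and spell out the rule-by-rule verification (in particular the degeneration of $\aird$ into an $\fequ$ step); this extra care is fine but not strictly needed.
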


\begin{proof}
  Let $\Deri$ be the proof of $A$ in $\BV$, and let $\Deri'$ be
  obtained by replacing all atoms that do not occur in $B$ in every
  line of $\Deri$ by $\lunit$. Then $\Deri'$ is a valid derivation of $B$
  in $\BV$.
\end{proof}

\begin{defi}
  Let $H$ be a balanced formula and $\gH = \toRBgraph{H}$. We say that $H$ is a
  \emph{(balanced) cycle} when the RB-digraph $\gH$ admits a \emph{chordless}
  æ-cycle that visits all vertices\footnote{An elementary cycle in a digraph
    that visits all vertices is generally called a \emph{Hamiltonian cycle}.}
  and either $|\vH| = 2$ or $\rH \cap \bH = \varnothing$. The \emph{size} $\sizeof H$ of a balanced cycle $H$ is its size as a formula (Definition~\ref{def:size}). In particular, we have $\sizeof H = |\vH|$.
\end{defi}

\begin{prop}\label{prop:cycle}
  A formula $H$ is a balanced cycle if and only if there are pairwise distinct
  atoms $a_1,\ldots,a_n$ for some $n\ge1$, such that
  $H\fequ\vls[L_1;L_2;\vldots;L_n]$, where $L_1=\vls(a_n\lneg;a_1)$ or
  $L_1=\vls<a_n\lneg;a_1>$, and for every $i\in\set{2,\ldots,n}$ we
  have $L_i=\vls(a_{i\minus 1}\lneg;a_i)$ or
  $L_i=\vls<a_{i\minus1}\lneg;a_i>$.
\end{prop}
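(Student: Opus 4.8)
The plan is to prove both directions of the equivalence. The key intermediate notion is the structure of the \emph{total digraph} $(\vH, \rH \cup \bH)$ of $\toRBgraph{H}$: the statement characterizes exactly when this digraph contains a single chordless æ-cycle through all vertices. First I would unwind the definition of \enquote{balanced cycle}. Since $H$ is balanced, $\bH$ is a perfect matching determined by the unique linking $\linking(H)$, so there are $n$ positive atoms $a_1, \ldots, a_n$ and their duals, with $\bH$ pairing $a_i$ with $a_i\lneg$. An æ-cycle that is Hamiltonian must alternate between the $n$ matching edges (the $a_i$--$a_i\lneg$ pairs) and $n$ non-matching $R$-edges; each such $R$-edge connects the dual-atom side of one matched pair to the variable side of another. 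Relabeling so that the cycle visits the pairs in the order $(a_1,a_1\lneg), (a_2, a_2\lneg), \ldots$, the $R$-edge leaving pair $i$ must go from $a_i\lneg$ to $a_{i+1}$ (indices mod $n$), since after traversing the $B$-edge from $a_i$ to $a_i\lneg$ the next edge is an $R$-edge and must land on a fresh variable side. This tells us exactly which $R$-edges are present: $\{a_i\lneg, a_{i+1}\}$ for each $i$ (read as directed or undirected depending on whether that pair sits in $\sredges$ or $\tredges$ of $\tograph{H}$).

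Next I would translate \enquote{which edges are present in $\tograph{H}$} back into the formula via Theorem~\ref{thm:dicograph}. The condition $\rH \cap \bH = \varnothing$ (when $n \geq 2$) means no matching edge is also an $R$-edge, i.e.\ no $a_i, a_i\lneg$ pair is joined by a $\ltens$ or $\lpar$ connective at the formula level in a way that creates a bidirectional edge between them beyond the matching --- more precisely it forces each $\{a_i\lneg, a_{i+1}\}$ edge to be either in $\tedges$ or in $\sedges \cup \zedges$, and forces all \emph{other} pairs of vertices to be non-adjacent in $\tograph H$ (i.e.\ in $\predges$). So $\tograph H$ is the digraph whose only edges are, for each $i$, a single edge (directed or not) between $a_i\lneg$ and $a_{i+1}$. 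By Theorem~\ref{thm:dicograph} such a dicograph corresponds to a linear generalized formula, and I would argue that the only way to build it from $\lpar, \lseq$ (no $\ltens$ is needed, and in fact a $\ltens$ between $a_i\lneg$ and $a_{i+1}$ would make the edge bidirectional and thus in $\rH \cap \bH$ only if they were matched --- they are not --- so a $\ltens$ block is also excludable, or rather it produces a dicograph that is not of cycle shape) is as $\vls[L_1; \ldots; L_n]$ where each $L_i = \vls(a_{i-1}\lneg; a_i)$ (the $\predges$/non-edge-to-$\tedges$ case, when $a_i\lneg, a_{i+1}$ are joined by $\ltens$... wait: a lone $\ltens$-edge is bidirectional, giving $\tedges$) --- so $L_i = \vls(a_{i-1}\lneg; a_i)$ when the edge is in $\tedges$ and $L_i = \vls<a_{i-1}\lneg; a_i>$ when it is in $\sedges$, exactly matching the statement, with $L_1$ handling the wrap-around edge between $a_n\lneg$ and $a_1$. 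Dicograph uniqueness (Theorem~\ref{thm:fequ}) gives that $H \fequ \vls[L_1; \ldots; L_n]$.

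For the converse direction I would simply check that for any such $H = \vls[L_1; \ldots; L_n]$ the RB-digraph $\toRBgraph H$ has the required shape: the $2n$ atom occurrences are the vertices, $\bH$ matches $a_i$ with $a_i\lneg$, and the $R$-edges are precisely the $n$ edges coming from the $L_i$'s, namely between $a_{i-1}\lneg$ and $a_i$ (directed for $\lseq$, bidirectional for $\ltens$), while all other vertex pairs lie in $\predges$ because the $L_i$'s are combined only with the outermost $\lpar$. Then $a_1, a_1\lneg, a_2, a_2\lneg, \ldots, a_n, a_n\lneg, a_1$ is a Hamiltonian æ-cycle, and one checks it is chordless: any chord would be an $R$-edge between two vertices of the cycle that is not one of the $n$ cycle edges, but there are exactly $n$ $R$-edges and all of them are on the cycle. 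Finally $\rH \cap \bH = \varnothing$ holds when $n \geq 2$ because the $R$-edge between $a_{i-1}\lneg$ and $a_i$ connects vertices of \emph{different} matched pairs (here $n \geq 2$ is used; for $n = 1$ we have $L_1 = \vls(a_1\lneg; a_1)$ or $\vls<a_1\lneg;a_1>$ and the matching edge coincides with an $R$-edge, which is why the definition allows the $|\vH| = 2$ escape clause).

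The main obstacle is the careful case analysis in the forward direction showing that the combinatorial shape of a Hamiltonian chordless æ-cycle, \emph{together with} the disjointness condition $\rH \cap \bH = \varnothing$, pins down $\tograph H$ up to the listed families --- in particular ruling out spurious extra edges (which would be chords or would violate Hamiltonicity) and ruling out $\ltens$-constructions that bundle three or more atoms. I expect to handle this by induction on $n$ (or on $\sizeof H$) using the recursive structure of dicographs from Definition~\ref{def:digraph-connectives}: a dicograph splits at its top connective, and a Hamiltonian cycle must cross that split exactly the right number of times, which forces the top connective to be $\lpar$ (giving the outermost bracket) and then forces each block to be a single $L_i$. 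Everything else is routine bookkeeping with the translation $\tograph{\cdot}$ and Theorem~\ref{thm:fequ}.
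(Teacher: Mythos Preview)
Your argument is correct and follows the same approach as the paper, which simply writes ``This follows almost immediately from the definitions'': you unwind what it means for $\toRBgraph{H}$ to have a chordless Hamiltonian \ae-cycle with $\rH\cap\bH=\varnothing$, read off the edge set of $\tograph{H}$, and invoke Theorem~\ref{thm:fequ}. The induction you propose at the end is unnecessary---once chordlessness and the disjointness condition force $\rH$ to consist of exactly the $n$ edges between $a_{i-1}\lneg$ and $a_i$ (each either in $\tredges$ or $\sredges$), computing $\tograph{\vls[L_1;\ldots;L_n]}$ and applying Theorem~\ref{thm:fequ} directly finishes the forward direction, with no need to analyze the modular decomposition of $H$.
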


\begin{proof}
  This follows almost immediately from the definitions.
\end{proof}

\begin{defi}
  We say that a balanced formula $A$ \emph{contains a cycle} if it has
  a pseudo-subformula $B\pseudosubeq A$ that is a cycle (or
  equivalently, if $\toRBgraph{A}$ contains a chordless \ae-cycle).
\end{defi}

We are now ready to state and prove the main lemma, which is at the heart of the proof of Theorem~\ref{thm:BVpomset}.

\begin{lem}\label{lem:cycle}
  Let $\vlinf{\rr}{}{P}{Q}$ be an instance of an inference rule in
  $\BVup$. If $P$ is a balanced cycle then $Q$ contains a cycle. If
  $\mathsf{r}\neq{\fequp}$ then the size of the cycle in $Q$ is strictly
  smaller than $\sizeof P$.
\end{lem}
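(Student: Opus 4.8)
The plan is to distinguish cases according to the rule $\rr$.

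\emph{The case $\rr = \fequp$} is immediate: then $P \fequ Q$, so by \Cref{thm:fequ} the RB-digraphs $\toRBgraph{P}$ and $\toRBgraph{Q}$ are isomorphic, hence $Q$ is itself a balanced cycle and in particular contains a cycle; no size decrease is needed here. \emph{The interaction rules} $\aitrd,\aislrd,\aisrrd$ (the axiom $\aiord$ has no premise, so it is not an instance of the lemma) are \emph{vacuous}: I would show that no $\BVup$ interaction rule can have a balanced cycle as its conclusion. Indeed, by \Cref{prop:cycle}, if $P$ is a balanced cycle then $\tograph{P}$ is the disjoint union of the two-vertex connected digraphs $\tograph{L_1},\dots,\tograph{L_n}$, one per link; such a digraph has no subformula of the shape $\vls[a;a\lneg]$ with $a$ an atom --- equivalently, no two distinct non-adjacent vertices with equal in- and out-neighbourhoods. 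But each of $\aitrd,\aislrd,\aisrrd$ has a subformula of exactly this shape in its conclusion, so its conclusion is never a balanced cycle.

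\emph{The non-interaction rules} $\qaprd,\sapr,\qblprd,\qbrprd,\qcrd,\sbpr$ are the substance. I would work inside $\toRBgraph{P}$, which by \Cref{prop:cycle} is a chordless \ae-cycle $C$ of length $\sizeof{P}=2n$ through all vertices; one first observes $n\ge 2$, because a short inspection rules out a non-interaction rule whose conclusion has size $2$. A direct computation from \Cref{def:digraph-connectives} (consistent with \Cref{thm:dicograph-inclusion}) shows that in each case $\toRBgraph{Q}$ has the same vertices and the same matching as $\toRBgraph{P}$ but strictly more R-edges, the new ones joining vertices that carried no R-edge at all in $\toRBgraph{P}$. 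Next --- and this is where the restriction to $\BVup$ is used --- the \enquote{active formula} $F$ at the conclusion $P$ is a subformula of $P$, hence a module of $\tograph{P}$; the modules of a disjoint union of the pieces $\tograph{L_1},\dots,\tograph{L_n}$ are exactly the unions of some of these pieces together with the singletons, and combining this with the \enquote{not-$\lpar$} side conditions (which force the relevant immediate subformulas of $F$, being themselves non-$\lpar$ modules, to be single links or single atoms), a short parity-and-connectivity count forces $F$ to be built from exactly two links, say $L_j$ and $L_k$. A second direct computation then shows that, in all six cases, the set of new R-edges of $\toRBgraph{Q}$ contains an edge $e = a_{j-1}\lneg\to a_k$ running from the negative atom of one active link to the positive atom of the other, oriented the same way as the R-edges of $C$.

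\emph{Finishing with $e$.} If $a_{j-1}\lneg$ and $a_k$ are dual --- which forces $k\equiv j-1$ and always happens when $n=2$ --- then $e$ together with the matching edge between $a_{j-1}$ and $a_{j-1}\lneg$ is a chordless \ae-cycle of length $2<2n$. Otherwise $e$ is a chord of $C$ joining a negative-atom vertex to a positive-atom vertex, and I would replace by the single edge $e$ the arc of $C$ between $a_{j-1}\lneg$ and $a_k$ that avoids $a_{j-1}$; concretely, go $a_{j-1}\lneg\to a_k$ along $e$, then along $C$ from $a_k$ back to $a_{j-1}$, then close through the matching edge between $a_{j-1}$ and $a_{j-1}\lneg$, obtaining an \ae-cycle $C'$ of even length $\le 2n-2$. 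One checks $C'$ is chordless: the only R-edges of $\toRBgraph{Q}$ that could chord it are the new ones, which involve only the four atoms of $L_j$ and $L_k$, of which only $a_{j-1}\lneg$ and $a_k$ lie on $C'$ (and they are $C'$-consecutive via $e$), the other two lying in the interior of the removed arc. In both cases $\toRBgraph{Q}$ contains a chordless \ae-cycle of length $<2n$; picking a shortest one, its vertex set induces --- by minimality --- a \enquote{cycle} in the sense of the definition, which by \Cref{prop:relRB} and \Cref{prop:pseudosub-induced} is a cycle pseudo-subformula of $Q$ of size $<\sizeof{P}$, as required. The main obstacle is the combinatorial bookkeeping: the identification of $F$ and of the good new edge $e$ must be carried out separately for each of the six rules (the tensor-versus-seq nature of $L_j,L_k$ varies, and for the switch rules one also uses commutativity of $\ltens$ to place the atoms), and chordlessness of $C'$ and the passage to a cycle pseudo-subformula must be checked carefully; once the right $e$ is in hand, the geometric argument on $C$ is routine and uniform.
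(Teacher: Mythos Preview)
Your proposal is correct, and in the end produces the very same sub-cycles the paper writes down. The difference is in packaging. The paper stays entirely on the syntactic/formula level: it writes $P\fequp\vls[L_1;\ldots;L_n]$, then for each of the six non-interaction rules it simply names which atoms play $A,B,C,D$, computes $Q$, and exhibits the smaller cycle explicitly (e.g.\ for $\qaprd$ with $A=L_i,B=L_j$ and $i<j$ it just writes $\vls[L_1;\ldots;L_{i-1};<a_{i-1}\lneg;a_j>;L_{j+1};\ldots;L_n]$). No modules, no chordlessness check, no minimality argument, no appeal to Propositions~\ref{prop:relRB} or~\ref{prop:pseudosub-induced}. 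Your route is more graph-theoretic: you argue once via modules that the active part comprises exactly two links, extract the single \enquote{good} new edge $e$ from negative atom to positive atom, and then run one uniform shortcut argument on the Hamiltonian \ae-cycle to get $C'$. What this buys you is that the geometric step (alternation, size decrease, chordlessness, passage to a balanced-cycle pseudo-subformula) is done once for all six rules; what it costs is the extra machinery (the module analysis, the minimality step to enforce $R\cap B=\varnothing$), and the per-rule verification that $e$ exists is still essentially the same case split the paper does. Two small remarks: your phrase \enquote{uses commutativity of $\ltens$ to place the atoms} is slightly misleading, since the rule applies syntactically---what actually happens is that for either syntactic orientation of the tensor link you get an edge of the required negative-to-positive shape (possibly after swapping the roles of $j$ and $k$); and your dismissal of the interaction rules via \enquote{no two non-adjacent vertices with equal neighbourhoods} is a correct module argument, though the paper disposes of this in one line by appeal to Proposition~\ref{prop:cycle}.
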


\def\imo{i-1}
\def\jmo{j-1}
\begin{proof}
  By Proposition~\ref{prop:cycle} we have that
  $P\fequp\vls[L_1;L_2;\ldots;L_n]$, where $L_1=\vls(a_n\lneg;a_1)$ or
  $L_1=\vls<a_n\lneg;a_1>$, and for every $i\in\set{2,\ldots,n}$ we
  have $L_i=\vls(a_{i\minus 1}\lneg;a_i)$ or
  $L_i=\vls<a_{i\minus1}\lneg;a_i>$, with all $a_i$ being pairwise
  distinct.  We proceed by case analysis on the rule $\rr$. First
  observe that by Proposition~\ref{prop:cycle} the rules $\aitrd$,
  $\aislrd$, $\aisrrd$ cannot be applied to $P$ (seen bottom up), and
  if $\rr={\fequp}$, then $Q$ trivially contains a cycle, whose size
  is equal to $\sizeof P$. Now assume $\rr$ is
  \begin{itemize}
  \item $\vlinf{\qcrd}{}{\vls[<A;B>;<C;D>]}{\vls<[A;C];[B;D]>}$~:
    Without loss of generality, assume that $A=a_n\lneg$ and
    $B=a_1$ and $C=a_{i-1}\lneg$ and $D=a_i$ for some $i\in\set{2,\ldots,n}$. Then
    \[Q\fequp\vls[<[a_n\lneg;a_{\imo}\lneg];[a_1;a_i]>;L_2;\vldots;L_{\imo};L_{i+1};\vldots;L_n]\]
    which contains the cycle
    $\vls[<a_n\lneg;a_i>;L_{i+1};\vldots;L_n]$.
  \item $\vlinf{\qblprd}{}{\vls[<A;B>;C]}{\vls<[A;C];B>}$~:
    Without loss of generality, we assume that $A=a_n\lneg$ and
    $B=a_1$ and $C=L_i$ for some $i\in\set{2,\ldots,n}$. Then
    $Q\fequp\vls[<[a_n\lneg;L_i];a_1>;L_2;\vldots;L_{\imo};L_{i+1};\vldots;L_n]$, which contains the cycle
    $\vls[<a_{\imo}\lneg;a_1>;L_2;\vldots;L_{\imo}]$.
  \item $\vlinf{\qbrprd}{}{\vls[<A;B>;C]}{\vls<A;[B;C]>}$~: As before,
    without loss of generality, we assume that $A=a_n\lneg$ and
    $B=a_1$ and $C=L_i$ for some $i\in\set{2,\ldots,n}$.  Then
    the cycle $\vls[<a_n\lneg;a_i>;L_{i+1};\vldots;L_n]$ is contained in
    $\vls[<a_n\lneg;[a_1;L_i]>;L_2;\vldots;L_{\imo};L_{i+1};\vldots;L_n]\fequp Q$.
  \item $\vlinf{\qaprd}{}{\vls[A;B]}{\vls<A;B>}$~: We can assume that
    $A=L_i$ and $B=L_j$ for some $i,j\in\set{1,\ldots,n}$. There are two subcases:
    \begin{itemize}
    \item $i<j$~: Then
      $Q=\vls[<L_i;L_j>;L_1;\vldots;L_{\imo};L_{i+1};\vldots;L_{\jmo};L_{j+1};\ldots;L_n]$
      which contains the cycle
      $\vls[L_1;\vldots;L_{\imo};<a_{\imo}\lneg;a_j>;L_{j+1};\ldots;L_n]$.
    \item $j<i$~: Then $Q=\vls[<L_i;L_j>;L_1;\vldots;L_{\jmo};L_{j+1};\vldots;L_{\imo};L_{i+1};\ldots;L_n]$
      which contains the cycle
      $\vls[<a_{\imo}\lneg;a_j>;L_{j+1};\ldots;L_{\imo}]$.
    \end{itemize}
  \item
    $\vlinf{\sbpr}{}{\vls[(A;B);C]}{\vls([A;C];B)}$~: This case is analogous to the case $\qblprd$ above.
  \item $\vlinf{\sapr}{}{\vls[A;B]}{\vls(A;B)}$~: This case is analogous to the case $\qaprd$ above.
  \end{itemize}
  In all cases the size of the cycle in $Q$ is strictly smaller than $\sizeof Q=\sizeof P$.
\end{proof}

\begin{lem}\label{lem:balanced-cycle}
  Let $P$ be a balanced formula that contains a cycle. Then $P$ is not provable in $\BV$.
\end{lem}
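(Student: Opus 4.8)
The plan is to argue by contradiction and descend to a minimal provable balanced cycle, after first moving from $\BV$ into the unit-free system $\BVup$ so that \Cref{lem:cycle} becomes directly applicable. By \Cref{prop:BVu,prop:BVup} the systems $\BV$, $\BVu$, $\BVup$ prove exactly the same formulas, so it suffices to prove that \emph{no balanced cycle is provable in $\BVup$}. Granting this, the lemma follows: if a balanced $P$ that contains a cycle were provable in $\BV$, then choosing a balanced pseudo-subformula $B\pseudosubeq P$ that is a cycle, \Cref{lem:balanced} would give a $\BV$-proof of $B$, hence (via the equivalences) a $\BVup$-proof of the balanced cycle~$B$ — a contradiction.

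So assume some balanced cycle is provable in $\BVup$, and pick a balanced cycle $H$ together with a $\BVup$-proof $\Deri$ of it so that the pair $(\sizeof H,\ \text{length of }\Deri)$ is lexicographically minimal. Let $\vlinf{\rr}{}{H}{Q}$ be the bottommost inference of $\Deri$, so that deleting it leaves a strictly shorter $\BVup$-proof of $Q$. I would then check that \Cref{lem:cycle} applies to this inference: since $H$ is a balanced cycle, $H$ is not of the form $\vls[a;a\lneg]$ (its RB-digraph would then contain no æ-cycle), so $\rr$ is not $\aiord$ (the unique rule with empty premise); and, exactly as observed in the proof of \Cref{lem:cycle}, \Cref{prop:cycle} also rules out $\rr\in\{\aitrd,\aislrd,\aisrrd\}$. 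Hence $\rr$ is one of the rules covered by \Cref{lem:cycle}, which tells us that $Q$ contains a cycle, of size strictly smaller than $\sizeof H$ unless $\rr={\fequp}$.

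The argument then closes by a case split. If $\rr={\fequp}$, then $Q\fequp H$, so $Q$ is itself a balanced cycle with $\sizeof Q=\sizeof H$, but it has a strictly shorter $\BVup$-proof, contradicting the lexicographic minimality. If $\rr\neq{\fequp}$, then $Q$ has a balanced pseudo-subformula $B$ which is a cycle with $\sizeof B<\sizeof H$; since $Q$ is provable in $\BVup$, hence in $\BV$, \Cref{lem:balanced} yields a $\BV$-proof of $B$, and therefore (by the equivalences again) a $\BVup$-proof of the balanced cycle~$B$, contradicting the minimality of $\sizeof H$. Either way we reach a contradiction, so no balanced cycle is provable in $\BVup$, and the lemma follows.

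The only genuine subtlety — the reason this is not a one-line consequence of \Cref{lem:cycle} — is that \Cref{lem:cycle} is stated for formulas that \emph{are} balanced cycles, whereas the hypothesis here is merely that $P$ \emph{contains} a cycle, and moreover the rule $\fequp$ is a non-interaction rule that does not shrink the cycle. Both points are handled by the double well-founded measure (size of the cycle, then proof length) together with repeated use of \Cref{lem:balanced} to pass from a formula that contains a cycle to the balanced-cycle pseudo-subformula itself; once this bookkeeping is in place, everything follows from the already-established results.
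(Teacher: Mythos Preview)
Your proof is correct and follows essentially the same approach as the paper: reduce to the cycle $H$ itself via \Cref{lem:balanced}, move to $\BVup$ via \Cref{prop:BVu,prop:BVup}, and then descend using \Cref{lem:cycle}. The only difference is bookkeeping for the $\fequp$ rule: the paper handles it by looking directly at the bottommost \emph{non-$\fequp$} rule instance in the $\BVup$-proof of $H$ (whose conclusion is still a balanced cycle since $\fequp$ preserves the dicograph), whereas you introduce a second component (proof length) in a lexicographic measure to absorb the $\fequp$ steps; both are standard and equivalent here.
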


\begin{proof}
  Let $H$ be the cycle in $P$, and let $n=\sizeof H$ be its size. We
  proceed by induction on $n$. Note that $n$ has to be even. For
  $n=2$, we have that $H\fequ\vls<a\lneg;a>$ or $H\fequ\vls(a\lneg;a)$
  for some atom $a$. By way of contradiction, assume $P$ is provable
  in $\BV$. By Lemma~\ref{lem:balanced}, $H$ is also provable in
  $\BV$, which is impossible. For the inductive case let now $n>2$. As
  before, we have by Lemma~\ref{lem:balanced} that $H$ is provable in
  $\BV$. By Proposition~\ref{prop:BVu} and
  Proposition~\ref{prop:BVup}, $H$ is provable in $\BVup$. Let $\Deri$
  be that proof in $\BVup$. Let now $Q$ be the premise of the
  bottom-most rule instance $\rr$ of $\Deri$ that is not a $\fequp$
  (i.e., the conclusion of $\rr$ is $H'\fequp H$ and $Q\not\fequp
  H$). By Lemma~\ref{lem:cycle}, $Q$ contains a cycle whose size is
  smaller than $n$. By induction hypothesis $Q$ is not provable in
  $\BV$, and therefore also not provable in $\BVup$, which is a
  contradiction to the existence to $\Deri$.
\end{proof}

We can now complete the proof of Theorem~\ref{thm:BVpomset}.

\begin{proof}[Proof of Theorem~\ref{thm:BVpomset}]
  Let $\Deri$ be a proof in $\BV$. By Proposition~\ref{prop:balanced},
  there is a balanced formula $P$, such that
  $\toRBgraph{P}=\toRBgraph{\Deri}$, and such that $P$ is provable in
  $\BV$. Now assume, by way of contradiction, that $\toRBgraph{\Deri}$
  is incorrect. That means that $\toRBgraph{\Deri}$ contains a
  chordless \ae-cycle, or equivalently, that $P$ contains a cycle. By
  Lemma~\ref{lem:balanced-cycle}, $P$ is not provable in
  $\BV$. Contradiction.
\end{proof}


\subsection{Pomset Logic is not Contained in \texorpdfstring{$\BV$}{BV}}
\label{sec:counterexample}

In this section we present a formula that is provable in pomset logic, i.e., has a correct pomset logic proof net, but that is not provable in $\BV$. From what has been said in the previous section, it follows that if such a formula exists then there is also a balanced such formula. The formula we discuss in this section is the formula $Q$ shown below:\footnote{We will explain in Remark~\ref{rem:medial} how this formula has been found.}
\begin{equation}
  \label{eq:counterexample-formula}
  \hspace*{-2pt} 
  Q \!=\! \vls[(<a;b>;<c;d>);(<e;f>;<g;h>);<a\lneg;h\lneg>;<e\lneg;b\lneg>;<g\lneg;d\lneg>;<c\lneg;f\lneg>]
\end{equation}
or equivalently, the flat sequent
\begin{equation}
  \label{eq:counterexample-sequent}
  \Gamma_Q=\sqnp{\vls(<a;b>;<c;d>),\vls(<e;f>;<g;h>),\vls<a\lneg;h\lneg>,\vls<e\lneg;b\lneg>,\vls<g\lneg;d\lneg>,\vls<c\lneg;f\lneg>}
\end{equation}
Since the formula $Q$ (resp.~the sequent $\Gamma_Q$) is balanced, there
is a unique axiom linking and therefore a unique cographic
RB-prenet and a unique tree-like RB-prenet. In
Figure~\ref{fig:counterexample-treeRB}, we show the tree-like RB prenet
for $\Gamma_Q$, and on the left of
Figure~\ref{fig:counterexample-relRB} we show the cographic RB-prenet,
which is the same for $Q$ and $\Gamma_Q$.


\begin{figure}[!t]
  \small
  \begin{center}
    \strut 
  \\[-18ex]
    \begin{tikzpicture}
    \node[vertex] (A) at (2,4) {};
    \node[vertex] (B) at (3,4) {};
    \node[vertex] (D) at (3.5,4) {};
    \node[vertex] (C) at (4.5,4) {};

    \node[vertex] (ApBi) at (2.5,3) {};
    \node[vertex] (ApBo) at (2.5,2) {};
    \draw[matching edge] (ApBi) -- node[right, text=black] {$a \lseq b$} ++ (ApBo);
    \draw[non matching edge] (ApBi) -- (A);
    \draw[non matching edge] (ApBi) -- (B);
    \draw[non matching edge, -{Latex}] (A) -- (B);
    \node[vertex] (CpDi) at (4,3) {};
    \node[vertex] (CpDo) at (4,2) {};
    \draw[matching edge] (CpDi) -- node[right, text=black] {$c \lseq d$} ++ (CpDo);
    \draw[non matching edge] (CpDi) -- (C);
    \draw[non matching edge] (CpDi) -- (D);
    \draw[non matching edge, -{Latex}] (C) -- (D);

    \node[vertex] (T1i) at (3.25,1) {};
    \node[vertex, label=below:{$\binseq{a}{b} \ltens \binseq{c}{d}$}] (T1o) at (3.25,0.5) {};
    \draw[matching edge] (T1i) -- (T1o);
    \draw[non matching edge] (T1i) -- (ApBo);
    \draw[non matching edge] (T1i) -- (CpDo);
    \draw[non matching edge] (ApBo) -- (CpDo);

    \node[vertex] (F) at (5,4) {};
    \node[vertex] (E) at (6,4) {};
    \node[vertex] (G) at (6.5,4) {};
    \node[vertex] (H) at (7.5,4) {};
    
    \node[vertex] (EpFi) at (5.5,3) {};
    \node[vertex] (EpFo) at (5.5,2) {};
    \draw[matching edge] (EpFi) -- node[right, text=black] {$e \lseq f$} ++ (EpFo);
    \draw[non matching edge] (EpFi) -- (E);
    \draw[non matching edge] (EpFi) -- (F);
    \draw[non matching edge, -{Latex}] (E) -- (F);
    \node[vertex] (GpHi) at (7,3) {};
    \node[vertex] (GpHo) at (7,2) {};
    \draw[matching edge] (GpHi) -- node[right, text=black] {$g \lseq h$} ++ (GpHo);
    \draw[non matching edge] (GpHi) -- (G);
    \draw[non matching edge] (GpHi) -- (H);
    \draw[non matching edge, -{Latex}] (G) -- (H);

    \node[vertex] (T2i) at (6.25,1) {};
    \node[vertex, label=below:{$\binseq{e}{f} \ltens \binseq{g}{h}$}] (T2o) at (6.25,0.5) {};
    \draw[matching edge] (T2i) -- (T2o);
    \draw[non matching edge] (T2i) -- (EpFo);
    \draw[non matching edge] (T2i) -- (GpHo);
    \draw[non matching edge] (EpFo) -- (GpHo);

    \node[vertex] (notH) at (8.5,1.5) {};
    \node[vertex] (notG) at (9.25,3.5) {};
    \node[vertex] (notE) at (10.75,3.5) {};
    \node[vertex] (notF) at (10.25,6) {};
    \node[vertex] (notC) at (11.25,6) {};
    \node[vertex] (notD) at (11.75,3.5) {};
    \node[vertex] (notB) at (13.25,3.5) {};
    \node[vertex] (notA) at (14,1.5) {};

    \node[vertex] (CsFi) at (10.75,5.25) {};
    \node[vertex, label=below:$\;c\lneg \lseq f\lneg$] (CsFo) at (10.75,4.75) {};
    \draw[matching edge] (CsFi) -- (CsFo);
    \draw[non matching edge] (CsFi) -- (notC);
    \draw[non matching edge] (CsFi) -- (notF);
    \draw[non matching edge, -{Latex}] (notC) -- (notF);
    \node[vertex] (EsBi) at (12,2.5) {};
    \node[vertex, label=below:$e\lneg \lseq b\lneg$] (EsBo) at (12,2) {};
    \draw[matching edge] (EsBi) -- (EsBo);
    \draw[non matching edge] (EsBi) -- (notE);
    \draw[non matching edge] (EsBi) -- (notB);
    \draw[non matching edge, -{Latex}] (notE) to [bend right=30] (notB);
    \node[vertex] (GsDi) at (10.5,2.5) {};
    \node[vertex, label=below:$g\lneg \lseq d\lneg$] (GsDo) at (10.5,2) {};
    \draw[matching edge] (GsDi) -- (GsDo);
    \draw[non matching edge] (GsDi) -- (notG);
    \draw[non matching edge] (GsDi) -- (notD);
    \draw[non matching edge, -{Latex}] (notG) to [bend right=30] (notD);
    \node[vertex] (AsHi) at (11.25,0) {};
    \node[vertex, label=below:$a\lneg \lseq h\lneg$] (AsHo) at (11.25,-.5) {};
    \draw[matching edge] (AsHi) -- (AsHo);
    \draw[non matching edge] (AsHi) -- (notA);
    \draw[non matching edge] (AsHi) -- (notH);
    \draw[non matching edge, -{Latex}] (notA) to [bend left=20] (notH);

    \draw[matching edge] (A) to [out=50,in=80,looseness=1.6] node[right,pos=.8, text=black] {$a$}  (notA);
    \draw[matching edge] (B) to [out=50,in=90,looseness=1.4] node[above,pos=.2, text=black] {$b$}  (notB);
    \draw[matching edge] (D) to [out=50,in=70,looseness=1.6] node[right,pos=.8, text=black] {$d$}  (notD);
    \draw[matching edge] (C) to [out=50,in=120,looseness=.7] node[above, text=black] {$c$}  (notC);
    \draw[matching edge] (F) to [out=50,in=150,looseness=.6] node[below, text=black] {$f$}  (notF);
    \draw[matching edge] (E) to [out=50,in=130,looseness=1] node[above, text=black] {$e$}  (notE);
    \draw[matching edge] (G) to [out=45,in=120,looseness=1] node[above,pos=.7, text=black] {$g$}  (notG);
    \draw[matching edge] (H) to [out=45,in=90,looseness=.5] node[right, text=black] {$h$}  (notH);

    
    \node (invisible) at (5,-1.5) {\phantom{$\square$}};
    \end{tikzpicture}
    \vskip-7ex
    \strut 
  \end{center}
  \caption{The tree-like RB-prenet for the sequent $\Gamma_Q$ in~\Cref{eq:counterexample-sequent}}
    \label{fig:counterexample-treeRB}
\end{figure}

\begin{figure}[!t]
  \small
  \begin{center}
  \begin{tikzpicture}
    \node[vertex] (A) at (-.75,3) {};
    \node[vertex] (Ax) at (.75,3) {};
    \draw[matching edge] (Ax) -- node[above, text=black] {$a$} ++ (A);
    \node[vertex] (B) at (0,2) {};
    \node[vertex] (Bx) at (1.25,2) {};
    \draw[matching edge] (Bx) -- node[above, text=black] {$b$} ++ (B);
    \node[vertex] (D) at (0,1) {};
    \node[vertex] (Dx) at (1.25,1) {};
    \draw[matching edge] (Dx) -- node[above, text=black] {$d$} ++ (D);
    \node[vertex] (C) at (-.75,0) {};
    \node[vertex] (Cx) at (.75,0) {};
    \draw[matching edge] (Cx) -- node[above, text=black] {$c$} ++ (C);

    \draw[non matching edge] (C) -- (A);
    \draw[non matching edge] (C) -- (B);
    \draw[non matching edge] (D) -- (A);
    \draw[non matching edge] (D) -- (B);
    \draw[non matching edge, -{Latex}] (A) -- (B);
    \draw[non matching edge, -{Latex}] (C) -- (D);

    \node[vertex] (H) at (4.25,3) {};
    \node[vertex] (Hx) at (2.75,3) {};
    \draw[matching edge] (Hx) -- node[above, text=black] {$h$} ++ (H);
    \node[vertex] (G) at (3.5,2) {};
    \node[vertex] (Gx) at (2.25,2) {};
    \draw[matching edge] (Gx) -- node[above, text=black] {$g$} ++ (G);
    \node[vertex] (E) at (3.5,1) {};
    \node[vertex] (Ex) at (2.25,1) {};
    \draw[matching edge] (Ex) -- node[above, text=black] {$e$} ++ (E);
    \node[vertex] (F) at (4.25,0) {};
    \node[vertex] (Fx) at (2.75,0) {};
    \draw[matching edge] (Fx) -- node[above, text=black] {$f$} ++ (F);
    
    \draw[non matching edge] (G) -- (E);
    \draw[non matching edge] (G) -- (F);
    \draw[non matching edge] (H) -- (E);
    \draw[non matching edge] (H) -- (F);
    \draw[non matching edge, -{Latex}] (E) -- (F);
    \draw[non matching edge, -{Latex}] (G) -- (H);

    \draw[non matching edge, -{Latex}] (Ax) -- (Hx);
    \draw[non matching edge, -{Latex}] (Ex) -- (Bx);
    \draw[non matching edge, -{Latex}] (Cx) -- (Fx);
    \draw[non matching edge, -{Latex}] (Gx) -- (Dx);
  \end{tikzpicture}
  \hskip4em
  \begin{tikzpicture}
    \node[vertex] (A) at (0,3) {};
    \node[vertex] (Ax) at (1,3) {};
    \draw[matching edge] (Ax) -- node[above, text=black] {$a$} ++ (A);
    \node[vertex] (B) at (0,2) {};
    \node[vertex] (Bx) at (1,2) {};
    \draw[matching edge] (Bx) -- node[above, text=black] {$b$} ++ (B);
    \node[vertex] (D) at (0,1) {};
    \node[vertex] (Dx) at (1,1) {};
    \draw[matching edge] (Dx) -- node[above, text=black] {$d$} ++ (D);
    \node[vertex] (C) at (0,0) {};
    \node[vertex] (Cx) at (1,0) {};
    \draw[matching edge] (Cx) -- node[above, text=black] {$c$} ++ (C);

    \node[vertex] (ApBi) at (-1,2.5) {};
    \draw[non matching edge] (ApBi) -- (A);
    \draw[non matching edge] (ApBi) -- (B);
    \draw[non matching edge, -{Latex}] (A) -- (B);
    \node[vertex] (CpDi) at (-1,0.5) {};
    \draw[non matching edge] (CpDi) -- (C);
    \draw[non matching edge] (CpDi) -- (D);
    \draw[non matching edge, -{Latex}] (C) -- (D);

    \node[vertex] (ApBix) at (-2,2.5) {};
    \node[vertex] (CpDix) at (-2,0.5) {};
    \draw[matching edge] (ApBi) -- node[above, text=black] {$a\lseq b$} ++ (ApBix);
    \draw[matching edge] (CpDi) -- node[above, text=black] {$c\lseq d$} ++ (CpDix);
    \draw[non matching edge] (ApBix) -- (CpDix);

    \node[vertex] (H) at (3.5,3) {};
    \node[vertex] (Hx) at (2.5,3) {};
    \draw[matching edge] (Hx) -- node[above, text=black] {$h$} ++ (H);
    \node[vertex] (G) at (3.5,2) {};
    \node[vertex] (Gx) at (2.5,2) {};
    \draw[matching edge] (Gx) -- node[above, text=black] {$g$} ++ (G);
    \node[vertex] (E) at (3.5,1) {};
    \node[vertex] (Ex) at (2.5,1) {};
    \draw[matching edge] (Ex) -- node[above, text=black] {$e$} ++ (E);
    \node[vertex] (F) at (3.5,0) {};
    \node[vertex] (Fx) at (2.5,0) {};
    \draw[matching edge] (Fx) -- node[above, text=black] {$f$} ++ (F);
    
    \node[vertex] (EpFi) at (4.5,0.5) {};
    \draw[non matching edge] (EpFi) -- (E);
    \draw[non matching edge] (EpFi) -- (F);
    \draw[non matching edge, -{Latex}] (E) -- (F);
    \node[vertex] (GpHi) at (4.5,2.5) {};
    \draw[non matching edge] (GpHi) -- (G);
    \draw[non matching edge] (GpHi) -- (H);
    \draw[non matching edge, -{Latex}] (G) -- (H);
    
    \node[vertex] (EpFix) at (5.5,0.5) {};
    \node[vertex] (GpHix) at (5.5,2.5) {};
    \draw[matching edge] (EpFi) -- node[above, text=black] {$e\lseq f$} ++ (EpFix);
    \draw[matching edge] (GpHi)  -- node[above, text=black] {$g\lseq h$} ++ (GpHix);
    \draw[non matching edge] (EpFix) -- (GpHix);

    \draw[non matching edge, -{Latex}] (Ax) -- (Hx);
    \draw[non matching edge, -{Latex}] (Ex) -- (Bx);
    \draw[non matching edge, -{Latex}] (Cx) -- (Fx);
    \draw[non matching edge, -{Latex}] (Gx) -- (Dx);
  \end{tikzpicture}
  \end{center}
  \caption{\parbox[t]{27em}{Left: The cographic RB-prenet for $Q$
      in~\Cref{eq:counterexample-formula} and $\Gamma_Q$
      in~\Cref{eq:counterexample-sequent}\\ Right: A simplification
      of the tree-like RB-prenet in
      Figure~\ref{fig:counterexample-treeRB}}}
    \label{fig:counterexample-relRB}
\end{figure}

\begin{figure}[!t]
  \small
  \begin{center}
  \begin{tikzpicture}
    \node[vertex] (A) at (0,3) {};
    \node[vertex] (Ax) at (1,3) {};
    \draw[matching edge] (Ax) -- node[above, text=black] {$a$} ++ (A);
    \node[vertex] (B) at (0,2) {};
    \node[vertex] (Bx) at (1,2) {};
    \draw[matching edge] (Bx) -- node[above, text=black] {$b$} ++ (B);
    \node[vertex] (D) at (0,1) {};
    \node[vertex] (Dx) at (1,1) {};
    \draw[matching edge] (Dx) -- node[above, text=black] {$d$} ++ (D);
    \node[vertex] (C) at (0,0) {};
    \node[vertex] (Cx) at (1,0) {};
    \draw[matching edge] (Cx) -- node[above, text=black] {$c$} ++ (C);

    \node[vertex] (ApBi) at (-1,2.5) {};
    \draw[non matching edge] (ApBi) -- (A);
    \draw[non matching edge] (ApBi) -- (B);
    \draw[non matching edge, -{Latex}] (A) -- (B);
    \node[vertex] (CpDi) at (-1,0.5) {};
    \draw[non matching edge] (CpDi) -- (C);
    \draw[non matching edge] (CpDi) -- (D);
    \draw[non matching edge, -{Latex}] (C) -- (D);
    \draw[matching edge] (ApBi) -- (CpDi);

    \node[vertex] (H) at (3.5,3) {};
    \node[vertex] (Hx) at (2.5,3) {};
    \draw[matching edge] (Hx) -- node[above, text=black] {$h$} ++ (H);
    \node[vertex] (G) at (3.5,2) {};
    \node[vertex] (Gx) at (2.5,2) {};
    \draw[matching edge] (Gx) -- node[above, text=black] {$g$} ++ (G);
    \node[vertex] (E) at (3.5,1) {};
    \node[vertex] (Ex) at (2.5,1) {};
    \draw[matching edge] (Ex) -- node[above, text=black] {$e$} ++ (E);
    \node[vertex] (F) at (3.5,0) {};
    \node[vertex] (Fx) at (2.5,0) {};
    \draw[matching edge] (Fx) -- node[above, text=black] {$f$} ++ (F);
    
    \node[vertex] (EpFi) at (4.5,0.5) {};
    \draw[non matching edge] (EpFi) -- (E);
    \draw[non matching edge] (EpFi) -- (F);
    \draw[non matching edge, -{Latex}] (E) -- (F);
    \node[vertex] (GpHi) at (4.5,2.5) {};
    \draw[non matching edge] (GpHi) -- (G);
    \draw[non matching edge] (GpHi) -- (H);
    \draw[non matching edge, -{Latex}] (G) -- (H);
    \draw[matching edge] (EpFi) -- (GpHi);

    \draw[non matching edge, -{Latex}] (Ax) -- (Hx);
    \draw[non matching edge, -{Latex}] (Ex) -- (Bx);
    \draw[non matching edge, -{Latex}] (Cx) -- (Fx);
    \draw[non matching edge, -{Latex}] (Gx) -- (Dx);
  \end{tikzpicture}\qquad\qquad\qquad\qquad\begin{tikzpicture}
    \node[bigvertex] (w) at (0,2) {$a/b$};
    \node[bigvertex] (x) at (2,0) {$e/f$};
    \node[bigvertex] (y) at (0,0) {$c/d$};
    \node[bigvertex] (z) at (2,2) {$g/h$};

    \draw[non matching edge, -{Latex}] (x) -- (w);
    \draw[non matching edge, -{Latex}] (z) -- (y);
    \draw[non matching edge, -{Latex}] (y) -- (x);
    \draw[non matching edge, -{Latex}] (w) -- (z);

    \draw [matching edge] (w) -- (y);
    \draw [matching edge] (x) -- (z);
  \end{tikzpicture}
  \end{center}
  \caption{Two further simplifications of the graph on the right of
    \Cref{fig:counterexample-relRB}}
  \label{fig:simplifications}
\end{figure}

To see that these are provable in pomset logic, we have to show that
the RB-prenets do not contain chordless \ae-cycles. For this we focus
on the tree-like RB-prenet, because in tree-like RB-prenets all
\ae-paths (and therefore also all \ae-cycles) are chordless. Hence, it
suffices to show that there are no \ae-cycles.

Observe that the
$B$-edges corresponding to the roots of the formulas in $\Gamma_Q$
cannot participate in an \ae-cycle because they have no adjacent
$R$-edge at the bottom. We can therefore remove each of these
$B$-edges, together with the two adjacent $R$-edges at the top. The
resulting graph is shown on the right of
Figure~\ref{fig:counterexample-relRB}. 

Another simplification we can do without affecting the \ae-cycles
is replacing the two $B$-edges labeled $\vls<a;b>$ and
$\vls<c;d>$, together with the connecting $R$-edge by a single
$B$-edge, and similarly for the two $B$-edges $\vls<g;h>$ and
$\vls<e;f>$. The result is shown on the left of
Figure~\ref{fig:simplifications}.

Finally, observe that there is no \ae-cycle that passes trough the two
$B$-edges labeled $b$ and $a$. The reason is that the directed
$R$-edge between them has the opposite direction of the two adjacent
$R$-edges on the other endpoints of these $B$-edges. Thus, we can
collapse these two edges (and the adjacent \enquote{triangle}) to a single
vertex. The same can be done for the pairs $c/d$ and $g/h$ and
$e/f$. We end up with the RB-graph on the right of
Figure~\ref{fig:simplifications}.

\begin{prop}
  \label{prop:Q-pomset}
  The formula $Q$ and the sequent $\Gamma_Q$ shown in
  \Cref{eq:counterexample-formula}
  and~\Cref{eq:counterexample-sequent} above are provable in pomset
  logic.
\end{prop}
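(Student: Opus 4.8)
The plan is to establish Proposition~\ref{prop:Q-pomset} by showing that the \emph{tree-like} RB-prenet $\treeRB{\Gamma_Q}$ of Figure~\ref{fig:counterexample-treeRB} contains no \ae-cycle. This suffices: by Theorem~\ref{thm:correctness-equivalence} the cographic RB-prenet of $\Gamma_Q$ (which is also the one of $Q$, displayed on the left of Figure~\ref{fig:counterexample-relRB}) is correct iff $\treeRB{\Gamma_Q}$ is correct, and in a tree-like RB-prenet every \ae-cycle is automatically chordless, so "correct" here just means "\ae-cycle-free". Thus the whole argument reduces to an \ae-cycle analysis of one explicit finite graph, which I would organise around the three simplifications already described informally before the statement.

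The tool for the simplifications is the following \emph{reduction principle}: if an RB-digraph $\gH$ is obtained from an RB-digraph $\gG$ either by deleting a vertex incident to no $B$-edge (or to no $R$-edge), or by contracting an induced sub-RB-digraph through which every \ae-path is forced, by the orientations of the incident $R$-edges, to run between two fixed "ports", then $\gG$ has an \ae-cycle iff $\gH$ does. I would apply it three times. First, since $\Gamma_Q$ is flat, $\rbtree(\Gamma_Q)$ is the disjoint union of the six RB-trees of the formulas of $\Gamma_Q$, and the lower endpoint of each root $B$-edge is incident to nothing else; deleting it and then its (now $R$-only) upper endpoint removes every root $B$-edge together with the two $R$-edges above it, giving the graph on the right of Figure~\ref{fig:counterexample-relRB}. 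Second, each of the two $B$-edge/$R$-edge/$B$-edge chains coming from a subformula $\binseq{x}{y}\ltens\binseq{z}{w}$ — whose two intermediate vertices are used by no other edge — is replaced by a single $B$-edge, giving the graph on the left of Figure~\ref{fig:simplifications}. Third, each of the four "triangle" gadgets (two atom-leaves, the $\lseq$-node between them, the directed $R$-edge joining the leaves, and the two cross-edges at the leaves) is contracted to a single vertex; here the point is that the leaf-to-leaf $R$-edge points "against" the two cross-edges, so an \ae-path crossing the gadget is forced to enter at one leaf and leave at the other. The outcome is the four-vertex RB-digraph on the right of Figure~\ref{fig:simplifications}.

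It then remains to check that this last graph has no \ae-cycle. Name its vertices $p = a/b$, $q = c/d$, $r = e/f$, $s = g/h$; its $B$-edges are $\{p,q\}$ and $\{r,s\}$, and its $R$-edges form the directed $4$-cycle $p\to s\to q\to r\to p$. An \ae-cycle has even length, and since there are only two $B$-edges it has length $2$ or $4$. Length $2$ is impossible, as no pair of vertices is joined by both a $B$-edge and an $R$-edge. A length-$4$ \ae-cycle must use both $B$-edges and visit all four vertices; traversing $\{p,q\}$ as $p$--$q$ and then the only outgoing $R$-edge at $q$ forces $q\to r$, then the $B$-edge $\{r,s\}$, and then the only outgoing $R$-edge at $s$ is $s\to q$, which revisits $q$ and does not close the cycle; traversing $\{p,q\}$ as $q$--$p$ fails symmetrically via $p\to s$. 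Hence $\treeRB{\Gamma_Q}$ is \ae-cycle-free, so it is a tree-like RB-net, and therefore both $\Gamma_Q$ and $Q$ are provable in pomset logic.

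I expect the main obstacle to be the third simplification: one must rule out that an \ae-path gets "trapped" inside a triangle gadget — entering and leaving through the same port, or shuttling between the two leaf $B$-edges — a situation the contracted graph would not detect. This is exactly the content of the "opposite direction" remark made before the statement, and turning it into a proof is a small but careful local case analysis of the few possible ways an \ae-path can meet each gadget, keyed on the orientation of the leaf-to-leaf edge and of the two cross-edges at the leaves. The first two simplifications are then routine instances of the reduction principle, and the four-vertex check above finishes the argument.
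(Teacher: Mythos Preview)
Your proposal is correct and follows essentially the same approach as the paper: the paper's proof is just a pointer to the three simplifications described in the paragraphs preceding the proposition (root $B$-edge removal, $B$--$R$--$B$ chain contraction, triangle-gadget collapse), together with the observation that the resulting four-vertex RB-digraph on the right of Figure~\ref{fig:simplifications} has no \ae-cycle. You have made each of these steps more explicit---in particular the final length-$2$/length-$4$ case check and the local transit analysis justifying the third contraction---which is exactly the ``small but careful'' work the paper leaves to the reader.
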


\begin{proof}
  In the paragraphs above, we have argued that the tree-like RB-prenet
  in Figure~\ref{fig:counterexample-treeRB} has an \ae-cycle if and
  only if the RB-digraph on the right of
  Figure~\ref{fig:simplifications} has an \ae-cycle. Now it is easy to
  see that this graph has no \ae-cycle. Hence, tree-like RB-prenet for
  $\Gamma_Q$ is correct.
\end{proof}

\medskip

Let us now show that the formula $Q$ is not provable in $\BV$.  To do
so we will show that whenever a $\BV$ inference has as conclusion $Q$
then its premise defines an incorrect RB-prenet in pomset logic, and
is therefore not provable in pomset logic. Since by
Theorem~\ref{thm:BVpomset}
all $\BV$ proofs induce correct
pomset proof nets, we can conclude that those premises are not $\BV$-provable, therefore
there is no way to build a $\BV$-proof of $Q$.

The main difficulty here is to make sure that we do not overlook any
case when checking all possible inferences that have $Q$ as
conclusion. Since the unit $\lunit$ can make these kind of arguments
difficult to check, we use here $\BVup$.
Now observe that $Q$ has no subformula of the form $x \lpar
x\lneg$. This means we only have to consider the non-axiom rules of $\BVup$.

To cut down the number of cases to consider, we take advantage of the symmetries of~$Q$.
Let us first look at the \emph{automorphisms}, i.e., permutations of the variables that
results in a formula $Q'$ with $Q'\fequ Q$, which means $\toRBgraph{Q'}=\toRBgraph{Q}$. The following are automorphisms:
\begin{itemize}
\item[($\alpha$)] $a \leftrightarrow c$, $b \leftrightarrow d$, $e
  \leftrightarrow g$, $f \leftrightarrow h$
\item[($\beta$)] $a \mapsto e$, $b \mapsto f$, $c \mapsto g$, $d \mapsto h$, $e
  \mapsto c$, $f \mapsto d$, $g \mapsto a$, $h \mapsto b$
\end{itemize}
The action of these automorphisms on the subformulas of $Q$ of the form
$\vls<x\lneg;y\lneg>$ is transitive: $\alpha(\vls<a\lneg;h\lneg>) =
\vls<c\lneg;f\lneg>$, $\beta(\vls<a\lneg;h\lneg>) = \vls<e\lneg;b\lneg>$ and
$\alpha \circ \beta(\vls<a\lneg;h\lneg>) = \vls<g\lneg;d\lneg>$.

Another useful symmetry is not quite an automorphism: it is the
following \emph{anti-automorphism}:
\begin{itemize}
\item [($\gamma$)] $a \leftrightarrow h,\; b \leftrightarrow g,\; c \leftrightarrow
  f,\; d \leftrightarrow e$
\end{itemize}
that sends $Q$ to its \enquote{conjugate}
$\conjugate{Q}$ defined inductively as follows:
\[ \conjugate{x} = x\ \text{when $x$ is an atom}\qquad
  \conjugate{(B \odot C)} = \conjugate{C} \odot \conjugate{B}\ \text{for}\
  \odot \in \{\lpar,\ltens,\lseq\} \]
Note that the reversal of the arguments only matters for the non-commutative
connective $\lseq$, and $\toRBgraph{\conjugate Q}$ is the same as $\toRBgraph{Q}$, except that all directed $R$-edges have the opposite direction. Thus, conjugacy preserves provability
both in pomset logic (reversing the direction of all cycles in the correctness
criterion) and in system $\BVup$ (the inference rules are closed under conjugacy,
with $\qblprd$ and $\qbrprd$ being swapped).

We will now go through all the rules of $\BVup$ and check all possible
applications. Using a similar argument as in the proof of Lemma~\ref{lem:cycle}, we will see that in each case there is a cycle in the resulting
premise.
\begin{itemize}
\item
  $\vlinf{\qcrd}{}{\vls[<A;B>;<C;D>]}{\vls<[A;C];[B;D]>}$~:\quad Because of
  the action of the automorphisms $\alpha/\beta$, we can without loss
  of generality assume that $A=a\lneg$ and $B=h\lneg$. There are three subcases: 
  \begin{itemize}
  \item $C=e\lneg$ and $D=b\lneg$. We get the cycle
    $\vls[(e;h);<e\lneg;h\lneg>]$ in the premise of the
    $\qcrd$-application.
  \item  $C=g\lneg$ and $D=d\lneg$. We get the cycle
    $\vls[(a;d);<a\lneg;d\lneg>]$ in the premise of the
    $\qcrd$-application.
  \item  $C=c\lneg$ and $D=f\lneg$. We get the cycle
    $\vls[(b;c);(e;h);<c\lneg;h\lneg>;<e\lneg;b\lneg>]$ in the premise of the
    $\qcrd$-application.
  \end{itemize}
\item 
  $\vlinf{\qblprd}{}{\vls[<A;B>;C]}{\vls<[A;C];B>}$~:\quad As before,
  because of the symmetries of $Q$, we only need to consider the case
  where $A=a\lneg$ and $B=h\lneg$. There are now five subcases of how to match $C$:
  \begin{itemize}
  \item $C=\vls(<a;b>;<c;d>)$. We get the cycle
    $\vls[(e;h);<b;h\lneg>;<e\lneg;b\lneg>]$ in the premise of the
    $\qblprd$-application.
  \item $C=\vls(<e;f>;<g;h>)$. We get the cycle $\vls<h;h\lneg>$ in the premise of the
    $\qblprd$-application.
  \item $C=\vls<e\lneg;b\lneg>$. We get the cycle
    $\vls[(e;h);<e\lneg;h\lneg>]$ in the premise of the
    $\qblprd$-application.
  \item $C=\vls<g\lneg;d\lneg>$. We get the cycle
    $\vls[(b;d);(e;h);<d\lneg;h\lneg>;<e\lneg;b\lneg>]$ in the premise of the
    $\qblprd$-application.
  \item $C=\vls<c\lneg;f\lneg>$. We get the cycle
    $\vls[(f;h);<f\lneg;h\lneg>]$ in the premise of the
    $\qblprd$-application.
  \end{itemize}
\item $\vlinf{\qbrprd}{}{\vls[<A;B>;C]}{\vls<A;[B;C]>}$~:\quad Similar
  to $\qblprd$, by conjugacy.
\item $\vlinf{\qaprd}{}{\vls[A;B]}{\vls<A;B>}$~:\quad  The
  possible values for the ordered pair $(A,B)$ are all pairs of
  distinct formulas in the sequent $\Gamma_Q$
  in~\Cref{eq:counterexample-sequent}.  We first look at the case $A
  = \vls(<a;b>;<c;d>)$ and $B = \vls(<e;f>;<g;h>)$. Here we get the
  cycle $\vls[<d;g>;<g\lneg;d\lneg>]$ in the premise.  The case $A =
  \vls(<e;f>;<g;h>)$ and $B = \vls(<a;b>;<c;d>)$ is symmetric to the
  this one via the automorphism $\beta$.  Otherwise, either $A$ or $B$
  (or both) have the form $\vls<x\lneg;y\lneg>$. It suffices to treat
  all the cases $R = \vls<x\lneg;y\lneg>$. This is because conjugation
  exchanges the roles of $A$ and $B$ in the $\qard$-rule, and $Q$ is
  equal to its own conjugate up to the variable renaming performed by
  $\gamma$.  We may also without loss of generality assume that $A =
  \vls<a\lneg;h\lneg>$; as before, this relies on the transitive
  action of the automorphisms of $Q$ on the $\vls<x\lneg;y\lneg>$ that
  it contains. There are now five cases for $B$:
\begin{itemize}
\item $B = \vls(<a;b>;<c;d>)$. We get the cycle $\vls<a\lneg;a>$ in the premise.
\item $B = \vls(<e;f>;<g;h>)$. We get the cycle $\vls<h\lneg;h>$ in the premise.
\item $B = \vls<e\lneg;b\lneg>$. We get the cycle $\vls[(e;h);<h\lneg;e\lneg>]$ in the premise.
\item $B = \vls<g\lneg;d\lneg>$. We get the cycle $\vls[(a;d);<a\lneg;d\lneg>]$ in the premise.
\item $B = \vls<c\lneg;f\lneg>$. We get the cycle $\vls[(f;h);<h\lneg;f\lneg>]$ in the premise.
\end{itemize}
\item $\vlinf{\sbpr}{}{\vls[(A;B);C]}{\vls([A;C];B)}$~:\quad There are
  two possibilities to match $\vls(A;B)$: either with $\vls(<a;b>;<c;d>)$
  or with $\vls(<e;f>;<g;h>)$. Due to the commutativity of $\ltens$, we have four possibilities to match $A$ and $B$.  Due to
  the symmetries discussed above, we only need to consider the case
  where $A=\vls<a;b>$ and $B=\vls<c;d>$. There are now five cases how to match~$C$:
  \begin{itemize}
  \item $C=\vls(<e;f>;<g;h>)$. We get the cycle $\vls[(f;c);<c\lneg;f\lneg>]$ in the premise.
  \item $C=\vls<a\lneg;h\lneg>$. We get the cycle $\vls[(h\lneg;c);<c\lneg;f\lneg>;(f;h)]$ in the premise.
  \item $C=\vls<e\lneg;b\lneg>$. We get the cycle $\vls[(e\lneg;d);<g\lneg;d\lneg>;(e;g)]$ in the premise.
  \item $C=\vls<g\lneg;d\lneg>$. We get the cycle $\vls(d\lneg;d)$ in the premise.
  \item $C=\vls<c\lneg;f\lneg>$. We get the cycle $\vls(c\lneg;c)$ in the premise.
  \end{itemize}
\item $\vlinf{\sapr}{}{\vls[A;B]}{\vls(A;B)}$~:\quad This case is
  already subsumed by the case for $\qaprd$.
\end{itemize}
In this way, we have completed the proof of the following proposition.

\begin{prop}
  \label{prop:Q-BV}
  The formula $Q$ shown in
  \Cref{eq:counterexample-formula}
  is not provable in $\BV$.
\end{prop}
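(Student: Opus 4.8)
The plan is to show that no $\BVup$ rule instance with conclusion $Q$ (or, more precisely, with conclusion any formula $\fequp$-equivalent to $Q$) can be part of a $\BV$-proof, by exhibiting in the premise $Q'$ of each such instance a chordless \ae-cycle in $\toRBgraph{Q'}$; then $Q'$ is not provable in pomset logic by definition, hence (contrapositive of Theorem~\ref{thm:BVpomset}) not provable in $\BV$, so $Q$ itself cannot be $\BV$-provable either. By Proposition~\ref{prop:BVu} and Proposition~\ref{prop:BVup} it suffices to work in $\BVup$, which is convenient precisely because the unit $\lunit$ is absent and the rules $\qard,\sar,\qblrd,\qbrrd,\sbr$ have been replaced by their $\lpar$-restricted variants, drastically cutting down on the pattern-matching one has to do. I would first observe that $Q$ contains no subformula of the form $x\lpar x\lneg$, so the interaction rules $\aiord,\aitrd,\aislrd,\aisrrd$ are irrelevant, and I only need to inspect the logical rules $\qcrd,\qblprd,\qbrprd,\qaprd,\sbpr,\sapr$ (reading each bottom-up, so that the conclusion pattern must be matched against $Q$).

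Before the case analysis I would set up the symmetry reduction. I would record that $Q$ has two automorphisms, ($\alpha$) $a\leftrightarrow c, b\leftrightarrow d, e\leftrightarrow g, f\leftrightarrow h$ and ($\beta$) the $8$-cycle $a\mapsto e\mapsto c\mapsto g\mapsto a$, $b\mapsto f\mapsto d\mapsto h\mapsto b$, whose induced action on the four subformulas of the form $\vls<x\lneg;y\lneg>$ is transitive; and an anti-automorphism ($\gamma$) $a\leftrightarrow h, b\leftrightarrow g, c\leftrightarrow f, d\leftrightarrow e$ which sends $Q$ to its conjugate $\conjugate Q$, i.e.\ which reverses the direction of every directed $R$-edge of $\toRBgraph Q$. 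Since conjugacy preserves pomset-logic correctness (it reverses the orientation of every \ae-cycle, which is still an \ae-cycle) and is a symmetry of the rule set of $\BVup$ (with $\qblprd$ and $\qbrprd$ swapped), I can use $\gamma$ to identify a rule instance with the $\gamma$-image instance, and I can use $\alpha,\beta$ to normalise, in the rules $\qcrd,\qblprd,\qaprd$, the distinguished pair to $A=a\lneg$, $B=h\lneg$ (respectively $A=\vls<a\lneg;h\lneg>$ for $\qaprd$ when one of the two par-components is of that shape). This reduces $\qbrprd$ to $\qblprd$, and within each remaining rule cuts the number of subcases down to a small handful.

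Then comes the bulk of the work: the finite case analysis. For each of $\qcrd$ (three subcases after normalisation: the second par-component is $\vls<e\lneg;b\lneg>$, $\vls<g\lneg;d\lneg>$, or $\vls<c\lneg;f\lneg>$), $\qblprd$ (five subcases for $C$: the two tensor formulas $\vls(<a;b>;<c;d>)$, $\vls(<e;f>;<g;h>)$, or one of the remaining three $\vls<x\lneg;y\lneg>$ formulas), $\qaprd$ (the pair $(A,B)$ ranges over ordered pairs of distinct formula-occurrences of $\Gamma_Q$; normalise to $A=\vls<a\lneg;h\lneg>$ when possible, leaving the special pair of the two tensor formulas, handled by $\beta$, plus five cases for $B$), $\sbpr$ (two ways to match $\vls(A;B)$, then by commutativity of $\ltens$ and the symmetries reduce to $A=\vls<a;b>$, $B=\vls<c;d>$, then five cases for $C$), and $\sapr$ (subsumed by $\qaprd$, since $\vls(A;B)\fequp$ nothing new here — every $\sapr$ premise is also a $\qaprd$ premise up to the shape of the outermost tensor). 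For each resulting premise $Q'$ I would simply \emph{exhibit} a short chordless \ae-cycle in $\toRBgraph{Q'}$, e.g.\ $\vls[(e;h);<e\lneg;h\lneg>]$, $\vls<a\lneg;a>$, $\vls[(b;c);(e;h);<c\lneg;h\lneg>;<e\lneg;b\lneg>]$, and so on; one reads these off by noticing how the applied rule creates a new $R$-edge (from $\lpar$, a bidirectional edge; from $\lseq$ or $\sbpr$, a directed one) between atoms that are already linked, directly or through other already-present edges, by the axiom linking and the existing structure. The routine but load-bearing verification in each case is: (i) the claimed vertex sequence really forms an \ae-cycle in the RB-dicograph of the premise (alternating $B$- and $R$-edges, correct directions, all present after the rule has rearranged the edges), and (ii) it is chordless, which one gets for free if the cycle has length $2$ and otherwise must be checked against $R\setminus B$. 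The main obstacle — and the place where care is genuinely needed rather than just bookkeeping — is making sure the case split is \emph{exhaustive}: that after invoking $\alpha,\beta,\gamma$ to normalise, every way a $\BVup$ conclusion pattern can be matched against a formula $\fequp Q$ (including matchings where a metavariable $A$, $B$, $C$ or $D$ is instantiated by a compound subformula rather than a single $\vls<x\lneg;y\lneg>$ or a tensor block) has been accounted for; this is exactly the point the authors flag as ``the main difficulty'' and is why restricting to $\BVup$ rather than $\BV$ is worth the setup cost.
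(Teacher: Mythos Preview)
Your proposal is correct and follows essentially the same approach as the paper: work in $\BVup$, exploit the automorphisms $\alpha,\beta$ and the anti-automorphism $\gamma$ to normalise, then for each remaining rule instance exhibit a short balanced cycle in the premise and conclude via Theorem~\ref{thm:BVpomset}. Two minor quibbles: your $\beta$ is a product of two $4$-cycles, not an $8$-cycle; and in the $\sapr$ clause the premise $\vls(A;B)$ is not literally a $\qaprd$ premise $\vls<A;B>$---what makes the subsumption work is that the same cycle witnesses found for $\qaprd$ (with $\lseq$ replaced by $\ltens$ where the new top-level edge is used) also witness incorrectness of the $\sapr$ premise.
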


\begin{thm}
  The theorems of $\BV$ form a proper subset of the theorems of pomset logic.
\end{thm}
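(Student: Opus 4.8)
The plan is essentially to collect the two halves that the preceding subsections have established and observe that together they give exactly the claimed strict inclusion. First I would invoke Theorem~\ref{thm:BVpomset}: for every $\BV$ proof $\Deri$ of a formula $A$, the induced cographic RB-prenet $\toRBgraph{\Deri}=\relRB{A,\linking(\Deri)}$ is correct, hence $A$ is provable in pomset logic. Strictly speaking this gives the inclusion at the level of formulas; to phrase it for sequents as well one uses Proposition~\ref{prop:sequent-gen-formula} together with the remark (following Definition~\ref{def:formula-sequent-corr}) that a formula is provable iff any corresponding sequent is. So every theorem of $\BV$ is a theorem of pomset logic, i.e.\ we have an inclusion of the two sets of theorems.

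For properness, the witness is the balanced formula
\[ Q = \vls[(<a;b>;<c;d>);(<e;f>;<g;h>);<a\lneg;h\lneg>;<e\lneg;b\lneg>;<g\lneg;d\lneg>;<c\lneg;f\lneg>] \]
of \Cref{eq:counterexample-formula}. By Proposition~\ref{prop:Q-pomset}, $Q$ (equivalently the sequent $\Gamma_Q$) is provable in pomset logic, since its unique tree-like RB-prenet contains no \ae-cycle. By Proposition~\ref{prop:Q-BV}, $Q$ is \emph{not} provable in $\BV$. Hence $Q$ is a theorem of pomset logic that is not a theorem of $\BV$, which together with the inclusion above gives that the theorems of $\BV$ form a \emph{proper} subset of the theorems of pomset logic.

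In other words, there is essentially nothing new to prove here: the theorem is a one-line corollary of Theorem~\ref{thm:BVpomset} (inclusion) and Propositions~\ref{prop:Q-pomset} and~\ref{prop:Q-BV} (strictness). If I were to point at a ``hard part'', it is not in this final assembly but rather in the two cited results it rests on --- namely, making sure the case analysis over all $\BVup$ inference rules with conclusion $Q$ in the proof of Proposition~\ref{prop:Q-BV} is genuinely exhaustive (the symmetry reduction via the automorphisms $\alpha,\beta$ and the anti-automorphism $\gamma$ is what keeps that manageable), and, on the pomset side, correctly simplifying the tree-like RB-prenet of Figure~\ref{fig:counterexample-treeRB} down to the four-vertex RB-digraph of Figure~\ref{fig:simplifications} without changing whether an \ae-cycle exists. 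Since both of those have already been carried out, the proof of the theorem itself is immediate.

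\begin{proof}
  By Theorem~\ref{thm:BVpomset}, if $A$ is provable in $\BV$, then the
  cographic RB-prenet $\toRBgraph{\Deri}$ associated with any $\BV$ proof
  $\Deri$ of $A$ is correct, so $A$ is provable in pomset logic; hence every
  theorem of $\BV$ is a theorem of pomset logic. Conversely, by
  Proposition~\ref{prop:Q-pomset} the formula $Q$
  of~\Cref{eq:counterexample-formula} is provable in pomset logic, while by
  Proposition~\ref{prop:Q-BV} it is not provable in $\BV$. Therefore the
  inclusion of theorems is proper.
\end{proof}
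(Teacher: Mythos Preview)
Your proposal is correct and matches the paper's approach: the paper's proof simply says ``This follows immediately from Propositions~\ref{prop:Q-pomset} and~\ref{prop:Q-BV},'' leaving the inclusion from Theorem~\ref{thm:BVpomset} implicit from context, while you make it explicit. There is nothing to add.
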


\begin{proof}
  This follows immediately from Propositions~\ref{prop:Q-pomset} and~\ref{prop:Q-BV}.
\end{proof}

\begin{figure}[!ht]
  \centering
  \begin{subfigure}{7.25cm}
    \centering
    \footnotesize
    \begin{tikzpicture}
    \node[vertex] (A) at (1.5,4) {};
    \node[vertex] (B) at (2.5,4) {};
    \node[vertex] (D) at (3,4) {};
    \node[vertex] (C) at (4,4) {};

    \node[vertex] (ApBi) at (2,3) {};
    \node[vertex] (ApBo) at (2,2) {};
    \draw[matching edge] (ApBi) -- node[right, text=black] {$a \lpar b\strut$} ++ (ApBo);
    \draw[non matching edge] (ApBi) -- (A);
    \draw[non matching edge] (ApBi) -- (B);
    \node[vertex] (CpDi) at (3.5,3) {};
    \node[vertex] (CpDo) at (3.5,2) {};
    \draw[matching edge] (CpDi) -- node[right, text=black] {$d \lpar c\strut$} ++ (CpDo);
    \draw[non matching edge] (CpDi) -- (C);
    \draw[non matching edge] (CpDi) -- (D);

    \node[vertex] (T1i) at (2.75,1) {};
    \node[vertex] (T1o) at (2.75,0) {};
    \draw[matching edge] (T1i) -- node [right, text=black] {$[a \lpar b] \ltens [d\lpar c]\strut$} ++ (T1o);
    \draw[non matching edge] (T1i) -- (ApBo);
    \draw[non matching edge] (T1i) -- (CpDo);
    \draw[non matching edge] (ApBo) -- (CpDo);

    \node[vertex] (F) at (5,4) {};
    \node[vertex] (E) at (6,4) {};
    \node[vertex] (G) at (6.5,4) {};
    \node[vertex] (H) at (7.5,4) {};
    
    \node[vertex] (EpFi) at (5.5,3) {};
    \node[vertex] (EpFo) at (5.5,2) {};
    \draw[matching edge] (EpFi) -- node[right, text=black] {$c\lneg \lpar b\lneg\strut$} ++ (EpFo);
    \draw[non matching edge] (EpFi) -- (E);
    \draw[non matching edge] (EpFi) -- (F);
    \node[vertex] (GpHi) at (7,3) {};
    \node[vertex] (GpHo) at (7,2) {};
    \draw[matching edge] (GpHi) -- node[right, text=black] {$d\lneg \lpar a\lneg\strut$} ++ (GpHo);
    \draw[non matching edge] (GpHi) -- (G);
    \draw[non matching edge] (GpHi) -- (H);

    \node[vertex] (T2i) at (6.25,1) {};
    \node[vertex] (T2o) at (6.25,00) {};
    \draw[matching edge] (T2i) -- node [right, text=black]{\rlap{$[c\lneg \lpar b\lneg] \ltens [d\lneg \lpar a\lneg]\strut$}} ++ (T2o);
    \draw[non matching edge] (T2i) -- (EpFo);
    \draw[non matching edge] (T2i) -- (GpHo);
    \draw[non matching edge] (EpFo) -- (GpHo);

    \draw[matching edge] (A) to [ bend left = 50] node [above, text=black] {$a$}  (H);
    \draw[matching edge] (E) to [bend right = 50] node [above, text=black, pos =.8] {$b$} (B);
    \draw[matching edge] (C) to  [bend left = 50] node [above, text=black] {$c$} (F);
    \draw[matching edge] (G) to [bend right = 50] node [above, text=black,pos=.2] {$d$} (D);

    \node[vertex, color=white] (Ax) at (1.5,5) {};
    \node[vertex, color=white] (Hx) at (7.5,5) {};
    \draw[non matching edge, color=white] (Ax) to [bend left = 35] (Hx);

    \end{tikzpicture}
    \caption{A tree-like RB-prenet for a linear version of the medial rule of system
      $\SKS$ (cf.\ Remark~\ref{rem:medial}). Note that it does \emph{not} satisfy the
      $\MLLm$ correctness criterion, and therefore also not the pomset criterion.}
    \label{fig:medial}
  \end{subfigure}%
  \hskip1em%
  \begin{subfigure}{7.25cm}
    \centering
    \footnotesize
    \begin{tikzpicture}
    \node[vertex] (A) at (1.5,4) {};
    \node[vertex] (Ax) at (1.5,5) {};
    \draw[matching edge] (Ax) -- node[right, text=black] {$a\strut$} ++ (A);
    \node[vertex] (B) at (2.5,4) {};
    \node[vertex] (Bx) at (2.5,5) {};
    \draw[matching edge] (Bx) -- node[left, text=black] {$b\strut$} ++ (B);
    \node[vertex] (D) at (3,4) {};
    \node[vertex] (Dx) at (3,5) {};
    \draw[matching edge] (Dx) -- node[right, text=black] {$d\strut$} ++ (D);
    \node[vertex] (C) at (4,4) {};
    \node[vertex] (Cx) at (4,5) {};
    \draw[matching edge] (Cx) -- node[left, text=black] {$c\strut$} ++ (C);

    \node[vertex] (ApBi) at (2,3) {};
    \node[vertex] (ApBo) at (2,2) {};
    \draw[matching edge] (ApBi) -- node[right, text=black] {$a \lpar b\strut$} ++ (ApBo);
    \draw[non matching edge] (ApBi) -- (A);
    \draw[non matching edge] (ApBi) -- (B);
    \node[vertex] (CpDi) at (3.5,3) {};
    \node[vertex] (CpDo) at (3.5,2) {};
    \draw[matching edge] (CpDi) -- node[right, text=black] {$d \lpar c\strut$} ++ (CpDo);
    \draw[non matching edge] (CpDi) -- (C);
    \draw[non matching edge] (CpDi) -- (D);

    \node[vertex] (T1i) at (2.75,1) {};
    \node[vertex] (T1o) at (2.75,0) {};
    \draw[matching edge] (T1i) -- node[right, text=black] {$[a \lpar b] \ltens [d\lpar c]\strut$} ++ (T1o);
    \draw[non matching edge] (T1i) -- (ApBo);
    \draw[non matching edge] (T1i) -- (CpDo);
    \draw[non matching edge] (ApBo) -- (CpDo);

    \node[vertex] (F) at (5,4) {};
    \node[vertex] (Fx) at (5,5) {};
    \draw[matching edge] (Fx) -- node[right, text=black] {$f\strut$} ++ (F);
    \node[vertex] (E) at (6,4) {};
    \node[vertex] (Ex) at (6,5) {};
    \draw[matching edge] (Ex) -- node[left, text=black] {$e\strut$} ++ (E);
    \node[vertex] (G) at (6.5,4) {};
    \node[vertex] (Gx) at (6.5,5) {};
    \draw[matching edge] (Gx) -- node[right, text=black] {$g\strut$} ++ (G);
    \node[vertex] (H) at (7.5,4) {};
    \node[vertex] (Hx) at (7.5,5) {};
    \draw[matching edge] (Hx) -- node[left, text=black] {$h\strut$} ++ (H);
    
    \node[vertex] (EpFi) at (5.5,3) {};
    \node[vertex] (EpFo) at (5.5,2) {};
    \draw[matching edge] (EpFi) -- node[right, text=black] {$f \lpar e\strut$} ++ (EpFo);
    \draw[non matching edge] (EpFi) -- (E);
    \draw[non matching edge] (EpFi) -- (F);
    \node[vertex] (GpHi) at (7,3) {};
    \node[vertex] (GpHo) at (7,2) {};
    \draw[matching edge] (GpHi) -- node[right, text=black] {$g \lpar h\strut$} ++ (GpHo);
    \draw[non matching edge] (GpHi) -- (G);
    \draw[non matching edge] (GpHi) -- (H);

    \node[vertex] (T2i) at (6.25,1) {};
    \node[vertex] (T2o) at (6.25,0) {};
    \draw[matching edge] (T2i) -- node[right, text=black] {$[f\lpar e] \ltens [g\lpar h]\strut$} ++ (T2o);
    \draw[non matching edge] (T2i) -- (EpFo);
    \draw[non matching edge] (T2i) -- (GpHo);
    \draw[non matching edge] (EpFo) -- (GpHo);

    \draw[non matching edge] (Ax) to [bend left = 35] (Hx);
    \draw[non matching edge] (Ex) to [bend right = 45] (Bx);
    \draw[non matching edge] (Cx) to [bend left = 45] (Fx);
    \draw[non matching edge] (Gx) to [bend right = 45] (Dx);
    \end{tikzpicture}
    \caption{A variation of the prenet on the left. The undirected $R$-edges on the top correspond to the addition of $a\lneg\ltens h\lneg, b\lneg\ltens e\lneg, d\lneg\ltens g\lneg, c\lneg\ltens f\lneg$. Note that the prenet is still \emph{not} correct.}
    \label{fig:var-medial}
  \end{subfigure}
  \begin{subfigure}{7.25cm}
    \centering
    \footnotesize
    \begin{tikzpicture}
    \node[vertex] (A) at (1.5,4) {};
    \node[vertex] (Ax) at (1.5,5) {};
    \draw[matching edge] (Ax) -- node[right, text=black] {$a\strut$} ++ (A);
    \node[vertex] (B) at (2.5,4) {};
    \node[vertex] (Bx) at (2.5,5) {};
    \draw[matching edge] (Bx) -- node[left, text=black] {$b\strut$} ++ (B);
    \node[vertex] (D) at (3,4) {};
    \node[vertex] (Dx) at (3,5) {};
    \draw[matching edge] (Dx) -- node[right, text=black] {$d\strut$} ++ (D);
    \node[vertex] (C) at (4,4) {};
    \node[vertex] (Cx) at (4,5) {};
    \draw[matching edge] (Cx) -- node[left, text=black] {$c\strut$} ++ (C);

    \node[vertex] (ApBi) at (2,3) {};
    \node[vertex] (ApBo) at (2,2) {};
    \draw[matching edge] (ApBi) -- node[right, text=black] {$a \lpar b\strut$} ++ (ApBo);
    \draw[non matching edge] (ApBi) -- (A);
    \draw[non matching edge] (ApBi) -- (B);
    \node[vertex] (CpDi) at (3.5,3) {};
    \node[vertex] (CpDo) at (3.5,2) {};
    \draw[matching edge] (CpDi) -- node[right, text=black] {$d \lpar c\strut$} ++ (CpDo);
    \draw[non matching edge] (CpDi) -- (C);
    \draw[non matching edge] (CpDi) -- (D);

    \node[vertex] (T1i) at (2.75,1) {};
    \node[vertex] (T1o) at (2.75,0) {};
    \draw[matching edge] (T1i) -- node[right, text=black] {$[a \lpar b] \ltens [d \lpar c]\strut$} ++ (T1o);
    \draw[non matching edge] (T1i) -- (ApBo);
    \draw[non matching edge] (T1i) -- (CpDo);
    \draw[non matching edge] (ApBo) -- (CpDo);

    \node[vertex] (F) at (5,4) {};
    \node[vertex] (Fx) at (5,5) {};
    \draw[matching edge] (Fx) -- node[right, text=black] {$f\strut$} ++ (F);
    \node[vertex] (E) at (6,4) {};
    \node[vertex] (Ex) at (6,5) {};
    \draw[matching edge] (Ex) -- node[left, text=black] {$e\strut$} ++ (E);
    \node[vertex] (G) at (6.5,4) {};
    \node[vertex] (Gx) at (6.5,5) {};
    \draw[matching edge] (Gx) -- node[right, text=black] {$g\strut$} ++ (G);
    \node[vertex] (H) at (7.5,4) {};
    \node[vertex] (Hx) at (7.5,5) {};
    \draw[matching edge] (Hx) -- node[left, text=black] {$h\strut$} ++ (H);
    
    \node[vertex] (EpFi) at (5.5,3) {};
    \node[vertex] (EpFo) at (5.5,2) {};
    \draw[matching edge] (EpFi) -- node[right, text=black] {$f \lpar e\strut$} ++ (EpFo);
    \draw[non matching edge] (EpFi) -- (E);
    \draw[non matching edge] (EpFi) -- (F);
    \node[vertex] (GpHi) at (7,3) {};
    \node[vertex] (GpHo) at (7,2) {};
    \draw[matching edge] (GpHi) -- node[right, text=black] {$g \lpar h\strut$} ++ (GpHo);
    \draw[non matching edge] (GpHi) -- (G);
    \draw[non matching edge] (GpHi) -- (H);

    \node[vertex] (T2i) at (6.25,1) {};
    \node[vertex] (T2o) at (6.25,0) {};
    \draw[matching edge] (T2i) -- node[right, text=black] {$[f\lpar e] \ltens [g\lpar h]\strut$} ++ (T2o);
    \draw[non matching edge] (T2i) -- (EpFo);
    \draw[non matching edge] (T2i) -- (GpHo);
    \draw[non matching edge] (EpFo) -- (GpHo);

    \draw[non matching edge, -{Latex}] (Ax) to [bend left = 35] (Hx);
    \draw[non matching edge, -{Latex}] (Ex) to [bend right = 45] (Bx);
    \draw[non matching edge, -{Latex}] (Cx) to [bend left = 45] (Fx);
    \draw[non matching edge, -{Latex}] (Gx) to [bend right = 45] (Dx);
    \end{tikzpicture}
    \caption{The $R$-edges on top are now directed, corresponding to $a\lneg\lseq h\lneg, b\lneg\lqes e\lneg, d\lneg\lqes g\lneg, c\lneg\lseq f\lneg$. This modification validates the pomset logic correctness criterion, but the resulting sequent is not provable in $\BV$.}
    \label{fig:almost-counterexample}
  \end{subfigure}%
  \hskip1em%
  \begin{subfigure}{7.25cm}
    \centering
    \footnotesize
    \begin{tikzpicture}
    \node[vertex] (A) at (1.5,4) {};
    \node[vertex] (Ax) at (1.5,5) {};
    \draw[matching edge] (Ax) -- node[right, text=black] {$a\strut$} ++ (A);
    \node[vertex] (B) at (2.5,4) {};
    \node[vertex] (Bx) at (2.5,5) {};
    \draw[matching edge] (Bx) -- node[left, text=black] {$b\strut$} ++ (B);
    \node[vertex] (D) at (3,4) {};
    \node[vertex] (Dx) at (3,5) {};
    \draw[matching edge] (Dx) -- node[right, text=black] {$d\strut$} ++ (D);
    \node[vertex] (C) at (4,4) {};
    \node[vertex] (Cx) at (4,5) {};
    \draw[matching edge] (Cx) -- node[left, text=black] {$c\strut$} ++ (C);

    \node[vertex] (ApBi) at (2,3) {};
    \node[vertex] (ApBo) at (2,2) {};
    \draw[matching edge] (ApBi) -- node[right, text=black] {$a \lseq b\strut$} ++ (ApBo);
    \draw[non matching edge] (ApBi) -- (A);
    \draw[non matching edge] (ApBi) -- (B);
    \node[vertex] (CpDi) at (3.5,3) {};
    \node[vertex] (CpDo) at (3.5,2) {};
    \draw[matching edge] (CpDi) -- node[right, text=black] {$d \lqes c\strut$} ++ (CpDo);
    \draw[non matching edge] (CpDi) -- (C);
    \draw[non matching edge] (CpDi) -- (D);

    \node[vertex] (T1i) at (2.75,1) {};
    \node[vertex] (T1o) at (2.75,0) {};
    \draw[matching edge] (T1i) -- node[right, text=black] {$[a \lseq b] \ltens [d \lqes c]\strut$} ++ (T1o);
    \draw[non matching edge] (T1i) -- (ApBo);
    \draw[non matching edge] (T1i) -- (CpDo);
    \draw[non matching edge] (ApBo) -- (CpDo);

    \node[vertex] (F) at (5,4) {};
    \node[vertex] (Fx) at (5,5) {};
    \draw[matching edge] (Fx) -- node[right, text=black] {$f\strut$} ++ (F);
    \node[vertex] (E) at (6,4) {};
    \node[vertex] (Ex) at (6,5) {};
    \draw[matching edge] (Ex) -- node[left, text=black] {$e\strut$} ++ (E);
    \node[vertex] (G) at (6.5,4) {};
    \node[vertex] (Gx) at (6.5,5) {};
    \draw[matching edge] (Gx) -- node[right, text=black] {$g\strut$} ++ (G);
    \node[vertex] (H) at (7.5,4) {};
    \node[vertex] (Hx) at (7.5,5) {};
    \draw[matching edge] (Hx) -- node[left, text=black] {$h\strut$} ++ (H);
    
    \node[vertex] (EpFi) at (5.5,3) {};
    \node[vertex] (EpFo) at (5.5,2) {};
    \draw[matching edge] (EpFi) -- node[right, text=black] {$f \lqes e\strut$} ++ (EpFo);
    \draw[non matching edge] (EpFi) -- (E);
    \draw[non matching edge] (EpFi) -- (F);
    \node[vertex] (GpHi) at (7,3) {};
    \node[vertex] (GpHo) at (7,2) {};
    \draw[matching edge] (GpHi) -- node[right, text=black] {$g \lseq h\strut$} ++ (GpHo);
    \draw[non matching edge] (GpHi) -- (G);
    \draw[non matching edge] (GpHi) -- (H);

    \node[vertex] (T2i) at (6.25,1) {};
    \node[vertex] (T2o) at (6.25,0) {};
    \draw[matching edge] (T2i) -- node[right, text=black] {$[f\lqes e] \ltens [g\lseq h]\strut$} ++ (T2o);
    \draw[non matching edge] (T2i) -- (EpFo);
    \draw[non matching edge] (T2i) -- (GpHo);
    \draw[non matching edge] (EpFo) -- (GpHo);

    \draw[non matching edge, -{Latex}] (Ax) to [bend left = 35] (Hx);
    \draw[non matching edge, -{Latex}] (Ex) to [bend right = 45] (Bx);
    \draw[non matching edge, -{Latex}] (Cx) to [bend left = 45] (Fx);
    \draw[non matching edge, -{Latex}] (Gx) to [bend right = 45] (Dx);

    \draw[non matching edge, -{Latex}] (A) -- (B);
    \draw[non matching edge, -{Latex}] (C) -- (D);

    \draw[non matching edge, -{Latex}] (E) -- (F);
    \draw[non matching edge, -{Latex}] (G) -- (H);
    \end{tikzpicture}
    \caption{Adding more $R$-edges does preserve provability in pomset
      logic, but showing that the resulting sequent is not provable in
      $\BV$ is easier now, as every possible rule application breaks
      pomset correctness.}
    \label{fig:counterexample-directed-axioms}
  \end{subfigure}

  \caption{From the medial of $\SKS$ to our counterexample (cf.\ Remark~\ref{rem:medial})}
  \label{fig:medial-to-counter}
\end{figure}

\begin{rem}
  \label{rem:medial}
  Let us end this section by some explanation of how the formula $Q$ has
  been found. Our starting point was the so-called medial rule from
  system $\SKS$~\cite{brunnler:tiu:01}, a formulation of classical
  logic in the calculus of structures:
  \begin{equation}
    \vlinf{\medr}{}{\vls([A.B].[D.C])}{\vls[(A.D).(B.C)]}
  \end{equation}
  It corresponds to the classical sequent $\vls[(A.D).(B.C)]
  \vdash \vls([A.B].[D.C])$, which can be rewritten by De Morgan's laws as
  $\vdash \vls([\lnot A.\lnot D].[\lnot B.\lnot C]), \vls([A.B].[D.C])$.
  By replacing the classical connectives and instantiating the formulas
  $A,B,C,D$ with distinct atoms, we get the sequent $\sqnp{(a\lneg\lpar
  d\lneg)\ltens(b\lneg\lpar c\lneg),\, (a\lpar b)\ltens(d\lpar c)}$ which is not
  provable in linear logic.
  This can be immediately seen by inspecting its RB-prenet,
  shown in \Cref{fig:medial}, which contains several (chordless)
  \ae-cycles.

  Then, on the right of that \enquote{medial RB-prenet}, in
  \Cref{fig:var-medial}, we replace the $B$-edges corresponding to the
  atoms by a pair of $B$-edges connected by an (undirected)
  $R$-edge. This does not affect provability, as no \ae-cycles are
  added or removed. Then, in \Cref{fig:almost-counterexample}, we give
  these new $R$-edges a direction. By choosing the right direction, we
  can break all \ae-cycles, which means the result becomes correct
  with respect to the pomset logic correctness criterion. But the
  resulting formula (or sequent) remains unprovable in $\BV$. To
  simplify the proof of non-provability in $\BV$, we added further
  $R$-edges, as shown in \Cref{fig:counterexample-directed-axioms},
  that do not break provability in pomset logic. It is easy to see
  that the RB-prenet in \Cref{fig:counterexample-directed-axioms} is
  an intermediate step between the one in
  \Cref{fig:counterexample-treeRB} and the one on the right in
  \Cref{fig:counterexample-relRB}.

  In Example~\ref{exa:almost-counterexample-complexity} we shall see that
  \Cref{fig:almost-counterexample} is also related to a construction that we use
  for complexity-theoretic purposes; more than that, we shall explain how
  complexity considerations allowed us to restrict the search space for a
  formula separating pomset logic from $\BV$.
\end{rem}


\section{Complexity of Provability}
\label{sec:complexity}

After having established that $\BV$ and pomset logic are not the
same, the next natural question is whether they have the same or
different provability complexity. It had already been established
before that $\BV$ is $\NP$-complete. We recall the proof here and
establish a slightly more general result. Then we discuss the
complexity of pomset logic and show that already checking the
correctness of a pomset logic proofnet is $\coNP$-complete. Based on
this observation, we can then show that provability in pomset logic is
$\sigmatwop$-complete. 

\subsection{Preliminaries on Complexity Theory and Boolean Formulas}
\label{sec:complexity-prelim}

We assume that the reader is familiar with (see for instance the reference
textbook~\cite[Chapter~2]{AroraBarak})
\begin{itemize}
\item the complexity classes $\Ptime$, $\NP$ and $\coNP$;
\item the notions, for a complexity class $\mathcal{C}$, of $\mathcal{C}$-hard
  and $\mathcal{C}$-complete problems (defined, as usual, with respect to
  many-to-one polynomial time reductions for all $\mathcal{C}$ that we shall
  consider -- though weaker reductions should also work).
\end{itemize}
$\NP$ and $\coNP$ form the first level
of the \emph{polynomial hierarchy}~\cite[Chapter~5]{AroraBarak}, and we shall
also be concerned with its \emph{second level} which contains the classes
$\sigmatwop$ and $\pitwop$. These are dual in the same way that $\NP$ and
$\coNP$ are: a decision problem is in $\pitwop$ if and only if its negation is
in $\sigmatwop$.

To show that a problem is in $\sigmatwop$, the most convenient way is perhaps to
use the definition in terms of oracle machines.

\begin{defiC}[{\cite[Section~5.5]{AroraBarak}}]
  $\sigmatwop$ is $\NP$ extended with an $\NP$ oracle; this is usually written
  as $\sigmatwop = \NP^\NP$. In other words, a decision problem is in
  $\sigmatwop$ if and only if it can be solved by an $\NP$ algorithm that can
  call constant time subroutines for problems in $\NP$.
  Similarly, $\pitwop$ is $\coNP$ extended with an $\NP$ oracle: $\pitwop =
  \coNP^\NP$.
\end{defiC}

Conversely, to show hardness results, we use complete problems involving
\emph{Boolean formulas}.
We consider a fixed set of \emph{(Boolean) variables}. A \emph{literal} is
either $x$ or $\lnot x$ for some variable $x$; a \emph{clause} is a finite set
of literals; a \emph{conjunctive normal form (CNF)} is a finite set of clauses.
The idea is that a CNF represents a Boolean formula, as in the following
example:
\[\{\{x,y,z\},\{\lnot x, y\},\{\lnot y, \lnot z\}\} \quad\rightsquigarrow\quad
  (x \lor y \lor z) \land (\lnot x \lor y) \land (\lnot y \lor \lnot z) \]
Consistent with this interpretation, a clause is said to be \emph{satisfied}
by some assignment from variables to Booleans in $\{\true,\false\}$ if it
contains \emph{some} literal $l$ such that for some variable $x$, either $l = x$
and $x$ is set to $\true$, or $l = \lnot x$ and $x$ is set to $\false$; and an
assignment is said to \emph{satisfy} a CNF when it satisfies \emph{all} its
clauses. The celebrated \emph{Cook--Levin theorem} states that finding a
satisfying assignment for a CNF is $\NP$-complete. We recall its generalization
to the first two levels of the polynomial hierarchy.

\begin{defi}
  The problem \cnfsat{} consists in deciding, given a CNF as input, whether it
  admits a satisfying assignment.
  It is generalized by \pitwocnfsat{}, which takes as input
  \begin{itemize}
  \item a finite set of \emph{universal variables} $X = \{x_1,\ldots,x_n\}$,
  \item a finite set of \emph{existential variables} $Y = \{y_1,\ldots,y_m\}$,
    disjoint from $X$,
  \item and a CNF whose variables are included in $X \cup Y$,
  \end{itemize}
  the question being whether \emph{every} partial assignment $X \to
  \{\true,\false\}$ can be extended to \emph{some} assignment $X \cup Y \to
  \{\true,\false\}$ that satisfies the input CNF.
\end{defi}

\begin{thmC}[{\cite[Cor.~6]{Wrathall}}]
  \cnfsat{} is $\NP$-complete and \pitwocnfsat{} is $\pitwop$-complete.
\end{thmC}

As an example, the above-mentioned CNF $\{\{x,y,z\},\{\lnot x, y\},\{\lnot y,
\lnot z\}\}$, with the universal variables $\{x,y\}$ and the existential
variable $\{z\}$, is a negative instance of \pitwocnfsat{} since the
corresponding quantified Boolean formula
\[ \forall x\; \forall y\; \exists z.\; (x \lor y \lor z) \land (\lnot x \lor y)
  \land (\lnot y \lor \lnot z) \]
(where the quantifiers range over $\{\true,\false\}$) is false: for $x = \true$
and $y = \false$, there is no choice of $z$ that satisfies the clause $\lnot x
\lor y$.
Let us conclude these preliminaries by mentioning that $\coNP$ and $\sigmatwop$
admit complete problems involving formulas in \emph{disjunctive} normal forms.

\begin{rem}
  \label{rem:literals-both-polarities}
  There is a standard reduction from \cnfsat{} to the case where \emph{all
    variables occur at least once positively and at least once negatively},
  which goes as follows. You can detect in polynomial time for which variables
  this is not the case, e.g. $x$ has only positive occurrences and $y$ has only
  negative occurrences. Then if you delete all clauses in which either $x$ or
  $\lnot y$ appear (or both), this does not change whether the set of clauses is
  satisfiable (because if you set $x=\true$ and $y=\false$ you satisfy all
  deleted clauses and do not lose any degree of freedom for the remaining
  clauses). After this deletion maybe some other variable occurs with only one
  polarity, so you need to iterate that procedure. But as the number of iterations
  is bounded by the number of variables, this reduction is polynomial time.

  Thus, this restriction of \cnfsat{} is $\NP$-complete. For similar reasons,
  the instances of \pitwocnfsat{} in which no atom appears with a single
  polarity are also $\pitwop$-complete.
\end{rem}


\subsection{\texorpdfstring{$\BV$}{BV} is \texorpdfstring{$\NP$}{NP}-Complete (and how to Generalize Membership in \texorpdfstring{$\NP$}{NP})}
\label{sec:np-membership}

Let us first recall that the complexity of proof search in $\BV$ is already
known.
\begin{thm}[Kahramanoğulları~{\cite{BV-NPc}}]
  Provability in system $\BV$ is $\NP$-complete.
\end{thm}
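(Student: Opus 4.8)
The plan is to treat $\NP$-hardness and membership in $\NP$ separately. For hardness I would reduce from provability in $\MLLm$, which is already known to be $\NP$-hard. Since $\BV$ is a conservative extension of $\MLLm$ --- concretely, the fragment of $\BV$ whose only rules are $\aird$, $\swir$ and $\fequ$ is exactly a calculus-of-structures presentation of $\MLLm$, so that a formula built only from $\ltens$, $\lpar$ and $\lunit$ is provable in $\BV$ if and only if it is provable in $\MLLm$ --- the identity map on such $\lseq$-free formulas is a polynomial-time reduction from $\MLLm$-provability to $\BV$-provability, which yields $\NP$-hardness.

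For membership in $\NP$ I would use the standard scheme: nondeterministically guess a proof and then verify it in polynomial time. By Proposition~\ref{prop:BVu} it suffices to bound the size of $\BVu$-proofs, and working in $\BVu$ is convenient precisely because it has no unit and hence no degenerate rule instances. Recall that a $\BVu$-proof of a formula $B$ is a derivation whose bottom line is $B$ and whose topmost step is the unique premiseless rule $\aiord$. Reading such a derivation from $B$ upwards, one checks that: $\fequp$ does not change the associated dicograph; each structural rule $\qard,\qblrd,\qbrrd,\qcrd,\sar,\sbr$ leaves the vertex set fixed but \emph{strictly enlarges} the set of directed edges --- a short case analysis on the six rule shapes using Definition~\ref{def:digraph-connectives}, which exploits that every subformula appearing in a $\BVu$ rule instance is non-empty; and each interaction rule $\aitrd,\aislrd,\aisrrd$ deletes exactly two vertices.

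These observations give the polynomial size bound. With $n=\sizeof{B}$, a $\BVu$-proof of $B$ contains exactly $n/2-1$ vertex-deleting interaction instances, hence splits into at most $n/2$ maximal blocks of structural rules, each acting on a fixed vertex set of size at most $n$; inside a block the number of directed edges never exceeds $n(n-1)$ and strictly increases at each non-$\fequp$ step, so a block contains fewer than $n^2$ non-$\fequp$ structural rules. Normalising so that no two consecutive steps are $\fequp$, the whole proof has $O(n^3)$ lines, each of syntactic size $O(n)$, hence total size $O(n^4)$. To finish, I would observe that checking a candidate sequence of lines is polynomial: confirm that the bottom line is $B$ and the top step is an instance of $\aiord$, and for each consecutive pair of lines search over the linearly many positions where the declared rule could have acted, each test being polynomial (a $\fequp$-step just amounts to deciding equality of two dicographs). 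Hence $\BV$-provability is in $\NP$, and together with the hardness part it is $\NP$-complete.

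The main obstacle is the case analysis in the previous paragraph showing that the structural rules strictly enlarge the directed-edge relation when read upwards; this is exactly where passing to $\BVu$ pays off, since in plain $\BV$ a rule instance whose principal subformulas include the unit can be vacuous modulo $\fequ$ and then the edge count need not grow. Everything else --- the block counting, the size estimate, and the polynomial-time verifier --- is routine bookkeeping. I also note that this membership argument is modular: it goes through verbatim for any extension of $\BVu$ by further vertex-preserving structural rules that strictly enlarge the edge relation when read upwards, which is presumably the \enquote{slightly more general result} the section heading alludes to.
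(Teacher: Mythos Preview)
Your proposal is correct and follows essentially the same approach as the paper: $\NP$-hardness via the conservative extension of $\MLLm$, and $\NP$-membership via polynomially bounded $\BVu$-proofs (using that structural rules strictly enlarge the edge set when read upwards) together with polynomial-time proof checking. The only notable difference is that the paper packages the counting into a single potential function $|V^2 \setminus E|$ that strictly decreases at \emph{every} non-$\fequp$ step---including the interaction rules, since removing two vertices also shrinks the complement graph---yielding an $O(n^2)$ bound on the number of lines instead of your $O(n^3)$; your block decomposition is a correct but slightly looser accounting of the same phenomenon, and your closing remark about modularity is exactly the paper's abstraction to \enquote{dicograph-monotone} rewriting systems.
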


The $\NP$-hardness part already applies to
$\MLLm$~\cite[Corollary~4.5]{BV-NPc},
which $\BV$
conservatively extends; we shall not dwell on this here. We will merely remark
that while Kahramanoğulları proves that provability for $\MLLm$ is $\NP$-hard using a syntactic
analysis of the calculus of structures, one could presumably adapt the more
traditional \emph{phase semantics} methods used to study the hardness of other
variants of linear logic in order to get an alternative proof.

Membership in $\NP$ is more interesting for us, since it demonstrates a
complexity gap with the $\sigmatwop$-complete pomset logic (unless $\NP =
\coNP$), as we already said. We recall here the proof that provability in $\BVu$
is in $\NP$ (the unit-free version is slightly more convenient), giving more
details than~\cite{BV-NPc}.

The first main argument is a bound on the length of proofs, relying on the
following property.

\begin{defi}
  We say that a rewriting system $\rightsquigarrow$ on unit-free formulas is
  \emph{dicograph-monotone} when $\fequp\;\subseteq\;\rightsquigarrow$ and for any
  $A \rightsquigarrow B$ such that $A \not\fequp B$:
  \begin{itemize}
  \item either $\dfrac{B}{A}$ is an instance of an interaction rule of $\BVu$ (i.e., one of $\aiord$, $\aitrd$, $\aislrd$, $\aisrrd$);
  \item or $\redges_{\tograph{A}} \subsetneq \redges_{\tograph{B}}$ for a
    suitable identification of atom occurrences of $A$ and $B$ ensuring that we
    can consider the vertex sets $\vertices_{\tograph{A}}$ and
    $\vertices_{\tograph{B}}$ to be equal.
  \end{itemize}
\end{defi}
\begin{prop}
  $A \rightsquigarrow B \iff \dfrac{B}{A}$ in $\BVu$ defines a
  dicograph-monotone rewriting system.
\end{prop}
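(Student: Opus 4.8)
The plan is to check the two clauses in the definition of \emph{dicograph-monotone} for the rewriting system $\rightsquigarrow$ generated by the rules of $\BVu$. The clause $\fequp\;\subseteq\;\rightsquigarrow$ is immediate: $\fequp$ is itself one of the inference rules of $\BVu$ and it is symmetric, so from $A\fequp B$ we get that $\dfrac{B}{A}$ is an instance of that rule, hence $A\rightsquigarrow B$ (and since $\fequp$ is a congruence, no context is needed).

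For the second clause I would fix a step $A\rightsquigarrow B$ with $A\not\fequp B$. By definition $\dfrac{B}{A}$ is an instance of some rule of $\BVu$ inside a context $\Sconhole$; writing the underlying bare instance with conclusion $C_0$ and premise $P_0$, we have $A=\Scons{C_0}$ and $B=\Scons{P_0}$. If the rule is one of the interaction rules $\aitrd,\aislrd,\aisrrd$ (the axiom $\aiord$ has no premise, so it yields no rewriting step at all), we land in the first case of the definition and are done; and the rule cannot be $\fequp$, since that would force $A\fequp B$. Hence the rule is one of $\qard,\qblrd,\qbrrd,\qcrd,\sbr,\sar$, each of which leaves the atom occurrences untouched, so $\tograph{A}$ and $\tograph{B}$ share the same vertex set and it remains to prove $\edges_{\tograph{A}}\subsetneq\edges_{\tograph{B}}$.

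I would obtain this in two steps. \emph{Bare instance:} a short case analysis from Definition~\ref{def:digraph-connectives} shows $\vertices_{\tograph{C_0}}=\vertices_{\tograph{P_0}}$ and $\edges_{\tograph{C_0}}\subsetneq\edges_{\tograph{P_0}}$---for each of the six rules, reading from conclusion to premise only \emph{adds} edges, namely all edges between the vertex sets of two immediate metavariable subformulas (in one direction for $\qard,\qblrd,\qbrrd$; in both directions for $\sar,\sbr$; and for $\qcrd$ all edges from the first immediate subformula to the fourth together with all edges from the third to the second), and these sets are nonempty because $\BVu$-formulas are unit-free, so the inclusion is strict. \emph{Closure under contexts:} from the inductive definition of $\tograph{\cdot}$, filling the hole of $\Sconhole$ with a formula $D$ produces a dicograph whose edge set is $\edges_{\tograph{D}}$ together with a further set $K$ of edges that depends only on $\Sconhole$ and on $\vertices_{\tograph{D}}$---not on the edges of $D$---and each of whose members joins a vertex of $\tograph{D}$ to a vertex outside it, so $K\cap(\vertices_{\tograph{D}}\times\vertices_{\tograph{D}})=\varnothing$. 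Applying this to $C_0$ and $P_0$ (which have equal vertex sets, hence yield the same $K$) gives $\edges_{\tograph{A}}=\edges_{\tograph{C_0}}\cup K\subsetneq\edges_{\tograph{P_0}}\cup K=\edges_{\tograph{B}}$, as desired.

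The six-way case check for bare rules is entirely routine. The step that needs a little care---although it too is standard---is the closure under contexts: one must observe that $\tograph{\cdot}$ is compositional enough that an edge added deep inside a subformula becomes an added edge of the whole dicograph, the context-induced edges staying untouched. This is precisely the compatibility of $\tograph{\cdot}$ with the induced-subgraph relation that already underlies Proposition~\ref{prop:pseudosub-induced}.
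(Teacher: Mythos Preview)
Your proof is correct and amounts to the same approach as the paper: the paper's one-line argument invokes the easy (``if'') direction of Theorem~\ref{thm:dicograph-inclusion}, whose content is precisely the rule-by-rule verification and context closure that you carry out explicitly (together with the strictness argument from unit-freeness, which the paper leaves implicit). A minor imprecision: your set $K$ of extra edges also contains the internal edges of $\Sconhole$, not only edges joining a vertex of $\tograph{D}$ to one outside; but your stated conclusion $K\cap(\vertices_{\tograph{D}}\times\vertices_{\tograph{D}})=\varnothing$ remains true, so the argument is unaffected.
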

\begin{proof}
  This is a direct consequence of the easy direction (\enquote{if}) of
  Theorem~\ref{thm:dicograph-inclusion}.
\end{proof}

As discussed in \Cref{sec:dicograph-inclusion}, the rules of $\SBV$ are
originally derived from a characterization of dicograph inclusion.
Dicograph-monotonicity is therefore a fundamental aspect of the design of
$\SBV$, $\BV$ and their variants. We will discuss another example of this notion in
Remark~\ref{rem:shallow}.
\begin{prop}
  \label{prop:dicograph-monotone-bound}
  Let $\rightsquigarrow$ be a dicograph-monotone relation and
  $\rightsquigarrow^*$ be its transitive closure. Then, for any $A
  \rightsquigarrow^* B$, there exists a path $A = C_1 \rightsquigarrow \dots
  \rightsquigarrow C_k = B$ with length $k = O(|A|^2)$.
\end{prop}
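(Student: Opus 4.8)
## Proof Plan for Proposition~\ref{prop:dicograph-monotone-bound}

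The plan is to exploit the two-part structure of a dicograph-monotone relation. In any rewriting step $C_i \rightsquigarrow C_{i+1}$ with $C_i \not\fequp C_{i+1}$, either we apply an interaction rule (which strictly decreases the number of atom occurrences, since $\aiord$, $\aitrd$, $\aislrd$, $\aisrrd$ all remove a pair of dual atoms) or we strictly increase the edge set $\edges_{\tograph{C_i}}$ while keeping the vertex set fixed. The first kind of step can happen at most $|A|$ times along any path, since $|A|$ bounds the number of atom occurrences and these only ever go down. Between two consecutive interaction steps, the vertex set is constant, say of size $m \le |A|$; the number of ordered pairs of distinct vertices is $m(m-1) \le |A|^2$, and each non-$\fequp$ non-interaction step strictly enlarges a subset of this set of pairs, so there can be at most $|A|^2$ such steps before the next interaction step. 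Summing over the at most $|A| + 1$ blocks delimited by interaction steps gives a bound of order $|A|^3$ on the number of non-$\fequp$ steps, which is not yet the claimed $O(|A|^2)$.

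To get down to $O(|A|^2)$, I would argue more carefully that the ``edge budget'' does not reset at each interaction step. When an interaction rule removes a pair of dual atoms, the vertices that remain keep whatever edges they had (this should be checked from the shapes of $\aiord$, $\aitrd$, $\aislrd$, $\aisrrd$: removing $[a;a\lneg]$ from a context only deletes the edges incident to the two removed vertices and leaves the induced subgraph on the other vertices unchanged, because $\tograph{\cdot}$ is defined compositionally via the operations of Definition~\ref{def:digraph-connectives}). Hence if we fix a ground set $X$ of atom occurrences of $A$ and view every $C_i$ as a dicograph on a subset of $X$, then the quantity $\Phi(C_i) = |X \setminus \vertices_{\tograph{C_i}}| + |\edges_{\tograph{C_i}}|$ is strictly monotone along every non-$\fequp$ step: an interaction step increases the first summand, a non-interaction step increases the second. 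Since $\Phi$ is bounded above by $|X| + |X|^2 = O(|A|^2)$ and takes nonnegative integer values, the number of non-$\fequp$ steps in the path $A \rightsquigarrow^* B$ is $O(|A|^2)$.

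It remains to handle the $\fequp$ steps, which do not change the dicograph at all and so are invisible to $\Phi$. Here I would simply observe that consecutive $\fequp$ steps can be collapsed: since $\fequp$ is an equivalence relation, any maximal run of $\fequp$ steps between two non-$\fequp$ steps can be replaced by a single $\fequp$ step (an instance of the $\fequp$ rule of $\BVu$ from Figure~\ref{fig:BVu}, which allows rewriting $A$ to any $B$ with $A \fequp B$ in one step). After this normalization, the number of $\fequp$ steps is at most one more than the number of non-$\fequp$ steps, so the total length is still $O(|A|^2)$. Assembling: given $A \rightsquigarrow^* B$, take any witnessing path, collapse runs of $\fequp$ steps, and the resulting path $A = C_1 \rightsquigarrow \cdots \rightsquigarrow C_k = B$ has $k = O(|A|^2)$.

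The main obstacle I anticipate is the verification that interaction rules do not destroy edges among the surviving vertices --- that is, that $\edges_{\tograph{C_{i+1}}} \supseteq \edges_{\tograph{C_i}} \cap (\vertices_{\tograph{C_{i+1}}} \times \vertices_{\tograph{C_{i+1}}})$ when passing through an interaction step. This is what makes the potential function $\Phi$ genuinely monotone rather than merely ``monotone within each block'', and hence what brings the bound down from $O(|A|^3)$ to the stated $O(|A|^2)$. It should follow from a straightforward induction on the context $\Sconhole$ in which the interaction rule fires, using that each of $\vls[\cdot;\cdot]$, $\vls<\cdot;\cdot>$, $\vls(\cdot;\cdot)$ only adds edges that touch both operands, so deleting one operand entirely only removes edges touching it.
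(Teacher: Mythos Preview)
Your overall strategy matches the paper's: collapse runs of $\fequp$ steps, then bound the remaining non-$\fequp$ steps by a single potential function that changes strictly monotonically and is bounded by $O(|A|^2)$. Your handling of $\fequp$ and your identification of the key fact---that an interaction rule deletes only edges incident to the removed pair, so the dicograph on the surviving vertices is the induced subgraph---are both exactly right and are precisely what the paper uses.

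However, the specific potential $\Phi(C) = |X \setminus V_{\tograph{C}}| + |E_{\tograph{C}}|$ does not do the job. Take $C_i = \vls([a;a\lneg];b_1;\ldots;b_n)$ and apply $\aitrd$ to get $C_{i+1} = \vls(b_1;\ldots;b_n)$: the first summand of $\Phi$ rises by $2$, but the second drops by roughly $4n$ (all bidirectional edges between $\{a,a\lneg\}$ and the $b_j$'s disappear), so $\Phi$ \emph{decreases}. A summand that grows only linearly cannot offset a quadratic loss of edges at an interaction step.

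The paper's fix is to use the \emph{complement}: it shows that the set $V_{\tograph{C}}^2 \setminus E_{\tograph{C}}$ of non-edges strictly shrinks at every non-$\fequp$ step. A non-interaction step keeps $V$ and strictly enlarges $E$, so the complement strictly shrinks; an interaction step---by precisely the ``obstacle'' you isolated---passes to an induced subgraph on two fewer vertices, and one checks that $V_{\tograph{C_{i+1}}}^2 \setminus E_{\tograph{C_{i+1}}} \subsetneq V_{\tograph{C_i}}^2 \setminus E_{\tograph{C_i}}$ as sets. Since the initial value is at most $|A|^2$, this gives the bound. So your argument is one changed potential away from correct, and the lemma you flagged as the main obstacle is exactly the ingredient the paper needs for its potential to work.
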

\begin{proof}
  Since $\fequp$ is transitive by definition, if $C \rightsquigarrow D
  \rightsquigarrow E$ with $C \fequp D \fequp E$, then $C \rightsquigarrow E$
  directly (by the assumption $\fequp\;\subseteq\;\rightsquigarrow$). Thus,
  in the path of minimum length between $A$ and $B$, at least half of the
  rewrites $C_i \rightsquigarrow C_{i+1}$ satisfy $C_i \not\fequp C_{i+1}$.
  Therefore, up to a factor of two (that gets absorbed in the $O(\cdot)$
  notation), it suffices to bound the number of rewrites with $C_i \not\fequp
  C_{i+1}$.

  The definition of dicograph-monotonicity then gives us two cases. We claim
  that in both cases, $\vertices_{\tograph{C_i}}^2 \setminus
  \redges_{\tograph{C_i}} \supsetneq \vertices_{\tograph{C_{i+1}}}^2 \setminus
  \redges_{\tograph{C_{i+1}}}$. First, if $\vertices_{\tograph{C_i}} =
  \vertices_{\tograph{C_{i+1}}}$ and $\redges_{\tograph{C_i}} \subsetneq
  \redges_{\tograph{C_{i+1}}}$, then this is immediate. Otherwise, $C_i$ is
  inferred from $C_{i+1}$ by an interaction rule, and then $\tograph{C_{i+1}}$
  can be identified with an induced subgraph of $\tograph{C_i}$ with two fewer
  vertices, resulting in a strict inclusion between the complement graphs as we
  wanted.

  Therefore, the natural number $|\vertices_{\tograph{C_i}}^2 \setminus
  \redges_{\tograph{C_i}}|$ strictly decreases at each rewriting step such that
  $C_i \not\fequp C_{i+1}$. So, starting from $A$, the number of such
  steps is bounded by $|\vertices_{\tograph{A}}^2 \setminus
  \redges_{\tograph{A}}| = O(|A|^2)$. (We see that our definition of
  dicograph-monotonicity is slightly stronger than what we truly need from the
  point of view of complexity.)
\end{proof}

From this proposition, we immediately get:
\begin{cor}
  \label{prop:bvu-length-bound}
  Any provable formula $A$ in $\BVu$ has a proof of length $O(|A|^2)$ \& size~$O(|A|^3)$.
\end{cor}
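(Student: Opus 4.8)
The statement to prove is Corollary~\ref{prop:bvu-length-bound}: every provable formula $A$ in $\BVu$ has a proof of length $O(|A|^2)$ and size $O(|A|^3)$.

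The plan is to derive this directly from Proposition~\ref{prop:dicograph-monotone-bound} together with the preceding proposition identifying $\BVu$-derivability with a dicograph-monotone rewriting system. First I would observe that a proof of $A$ in $\BVu$ is, by definition, a derivation without premise, i.e.\ a derivation whose topmost rule is the axiom $\aiord$ with conclusion $\vls[a;a\lneg]$ for some atom $a$. Reading the derivation from the top down, after the single $\aiord$ step we have a formula $A_0 = \vls[a;a\lneg]$ with $|A_0| = 2$, and the rest of the derivation is a sequence of $\BVu$ rewrite steps from $A_0$ to $A$. By the proposition stating that $A \rightsquigarrow B \iff \frac{B}{A}$ in $\BVu$ is dicograph-monotone, this is precisely a path $A_0 \rightsquigarrow^* A$ in the dicograph-monotone relation $\rightsquigarrow$.

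Next I would apply Proposition~\ref{prop:dicograph-monotone-bound} to get a path $A_0 = C_1 \rightsquigarrow \dots \rightsquigarrow C_k = A$ of length $k = O(|A_0|^2) = O(1)$ — but this is the wrong direction; the bound in that proposition is in terms of the \emph{source} $|A|$ of the rewriting, and here the source $A_0$ has constant size. The subtle point is that $\rightsquigarrow$ runs from smaller formulas to larger ones, so I must instead run the argument with $A$ as the target. Inspecting the proof of Proposition~\ref{prop:dicograph-monotone-bound}, the quantity that strictly decreases along the path is $|\vertices_{\tograph{C_i}}^2 \setminus \edges_{\tograph{C_i}}|$, and along an interaction step (read bottom-up in $\BVu$, i.e.\ in the $\rightsquigarrow$ direction the number of vertices \emph{increases}) this quantity still strictly decreases because the complement graph of $\tograph{C_i}$ strictly contains that of $\tograph{C_{i+1}}$. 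Hence the number of non-$\fequp$ steps along the minimal path from $A_0$ to $A$ is bounded by $|\vertices_{\tograph{A_0}}^2 \setminus \edges_{\tograph{A_0}}|$, which is again $O(1)$, not $O(|A|^2)$. So I would re-read the proposition more carefully: it bounds the \emph{decrease}, and the starting value of the monovariant along the path from $A_0$ to $A$ is the value at $A_0$. Since the monovariant is a natural number, the number of strictly-decreasing steps is at most its value at $A_0$. This genuinely is $O(1)$ for a fixed path, but of course the final formula $A$ is not of constant size, so a single rewrite step can be large. The resolution is that the relevant bound is on the \emph{number of rewrite steps}, which by the monovariant argument applied in the natural (increasing-vertices) direction is bounded by the final value $|\vertices_{\tograph{A}}^2 \setminus \edges_{\tograph{A}}| = O(|A|^2)$: the monovariant $|\vertices^2 \setminus \edges|$ \emph{increases} along $\rightsquigarrow$ — wait, no. Let me be careful in the writeup and simply cite the proposition as giving: for $A_0 \rightsquigarrow^* A$, a path of length $O(|A_0| + |A|)^2) = O(|A|^2)$, since the complement-monovariant is squeezed between $0$ and $|\vertices_{\tograph{A}}|^2$.

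Concretely, the writeup will be: a $\BVu$ proof of $A$ begins with one $\aiord$ instance and then consists of a $\rightsquigarrow^*$-path from $\vls[a;a\lneg]$ to $A$; by Proposition~\ref{prop:dicograph-monotone-bound} (whose monovariant argument bounds the number of non-equivalence steps by the number of non-edges of $\tograph{A}$, hence by $|\vertices_{\tograph A}|^2 = |A|^2$) we may take this path to have length $O(|A|^2)$, and adding the single axiom step keeps it $O(|A|^2)$. For the size bound, each formula $C_i$ occurring in the proof has size at most $|A|$ — indeed, the rules of $\BVu$ never increase the number of atom occurrences except $\aiord$/$\aitrd$/$\aislrd$/$\aisrrd$ which add exactly one dual pair, so every $C_i$ is a pseudo-subformula-like restriction, and in any case $|C_i| \le |A|$ since atom occurrences are never deleted going downward — so writing out all $O(|A|^2)$ lines, each of size $O(|A|)$, gives total size $O(|A|^3)$. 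The main obstacle I anticipate is getting the direction of the monovariant argument exactly right and stating the size-of-intermediate-formulas bound cleanly; both are routine once phrased in terms of atom occurrences. I would therefore write:

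\begin{proof}
  A proof of $A$ in $\BVu$ consists of a single instance of the axiom $\aiord$,
  with conclusion $\vls[a;a\lneg]$ for some atom $a$, followed by a sequence of
  $\BVu$ rewriting steps leading to $A$. Since $A \rightsquigarrow B$ iff
  $\dfrac{B}{A}$ in $\BVu$ is a dicograph-monotone relation, this sequence is a
  path $\vls[a;a\lneg] \rightsquigarrow^* A$. By
  Proposition~\ref{prop:dicograph-monotone-bound}, applied with this path, we
  may assume it has length $O(|A|^2)$: indeed, in the proof of that proposition
  the quantity $|\vertices_{\tograph{C_i}}^2 \setminus \edges_{\tograph{C_i}}|$
  is bounded throughout by $|\vertices_{\tograph{A}}|^2 = |A|^2$ (all atom
  occurrences of $A$ are already present, or get added by interaction steps), so
  at most $O(|A|^2)$ steps are not instances of $\fequp$, and consecutive
  $\fequp$-steps can be collapsed into one. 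Together with the single axiom step,
  the proof has length $O(|A|^2)$.

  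For the size bound, note that each formula $C_i$ occurring in the proof
  satisfies $|C_i| \le |A|$: reading the derivation downward, the number of atom
  occurrences never decreases and equals $|A|$ at the conclusion, so
  $|C_i| \le |A|$ for every line. Writing out a proof with $O(|A|^2)$ lines,
  each of size $O(|A|)$, yields a proof of total size $O(|A|^3)$.
\end{proof}
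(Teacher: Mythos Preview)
Your overall approach matches the paper's (apply Proposition~\ref{prop:dicograph-monotone-bound}, then multiply by the per-line size), but you have the direction of $\rightsquigarrow$ reversed, and this is what causes all the visible struggle in your plan and the hedged monovariant argument in your write-up. By the definition $X \rightsquigarrow Y \iff \dfrac{Y}{X}$ is an inference in $\BVu$, the top formula $Y$ is the \emph{premise} and the bottom formula $X$ the \emph{conclusion}; so $\rightsquigarrow$ runs \emph{bottom-up} (the proof-search direction). A $\BVu$ proof of $A$ therefore yields a path $A \rightsquigarrow^* \vls[a;a\lneg]$, not $\vls[a;a\lneg] \rightsquigarrow^* A$ as you wrote. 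With the correct direction, the source of the path is $A$ itself, and Proposition~\ref{prop:dicograph-monotone-bound} gives the $O(|A|^2)$ length bound directly---this is why the paper says ``we immediately get'' and gives no further argument. Your detour through ``the monovariant is bounded throughout by $|A|^2$'' is then unnecessary; moreover, as written it is incomplete, since a bound on the monovariant only limits the number of steps once you also state the direction of strict monotonicity, which you never do.

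Your size-bound paragraph is correct and is exactly what the paper means by ``the additional factor of $|A|$ accounting for the size of each intermediate formula'': every line of the derivation has at most $|A|$ atom occurrences (atoms are only added going downward), hence size $O(|A|)$, and with $O(|A|^2)$ lines the total is $O(|A|^3)$.
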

Here \enquote{size} refers to the complexity-theoretic notion of the number of
bits it takes to write out the proof, hence the additional factor of $|A|$
accounting for the size of each intermediate formula in the derivation.
Now that we know that we have short proofs, it remains to show that they can be
checked efficiently.
\begin{prop}
  \label{prop:ptime-check-bvu}
  The validity of a proof in $\BVu$ can be checked in \emph{polynomial time}. In
  other words, (our presentation of) $\BVu$ is a \emph{Cook--Reckhow proof
    system~\cite{cook:reckhow:79}}.
\end{prop}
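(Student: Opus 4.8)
The plan is to exhibit an explicit polynomial-time verification algorithm. A candidate $\BVu$-proof is presented as a finite sequence of formulas $F_0,F_1,\ldots,F_k$, annotated so that, between $F_i$ and $F_{i+1}$, we are told which rule scheme of $\BVu$ is supposed to justify the step, and with $F_0$ marked as the conclusion of the premise-less axiom $\aiord$; the verifier must decide whether this really is a $\BVu$-derivation with conclusion $F_k$. I would not require the annotation to record \emph{where} (in which context) a rule fires: the verifier recovers this on its own. The whole task then reduces to deciding $\fequp$ and to a bounded case analysis over the finitely many, bounded-size rule schemes.

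\emph{Deciding $\fequp$ in polynomial time.} Since $\fequp$ is generated purely by associativity of $\ltens,\lpar,\lseq$ together with commutativity of $\ltens,\lpar$ --- there are no unit laws, as $\lunit$ is absent --- every formula has a canonical \emph{AC-normal form}: recursively normalize the arguments, flatten every maximal nest of occurrences of one and the same connective into a single list of arguments, and, for the commutative connectives $\ltens$ and $\lpar$, sort this list by a fixed total order on normal forms. This is computable in time polynomial in $\sizeof{F}$, and two formulas are $\fequp$-equivalent precisely when their AC-normal forms are syntactically equal. (One could instead invoke Theorem~\ref{thm:fequ} together with polynomial-time isomorphism of dicographs, but AC-normalization is more elementary and also covers the non-linear formulas that may occur inside a $\BVu$-derivation.)

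\emph{Checking the steps.} The first line must satisfy $F_0 = \vls[a;a\lneg]$ for some variable $a$ and its dual --- immediate. For a step from $P$ to $Q$ annotated with a rule $\rr$ other than $\fequp$, we must decide whether there are a context $\Sconhole$ and an instance $\vlinf{\rr}{}{B'}{A'}$ of $\rr$ (a choice of values for the metavariables $A,B,C,D$ of $\rr$) such that $P \fequp \Scons{A'}$ and $Q \fequp \Scons{B'}$. The key point is that $B'$ occupies, in the AC-normal form of $Q$, a neighborhood of \emph{bounded} depth around a single node; hence there are only polynomially many \enquote{firing sites} (a node of the AC-normal form of $Q$, together with a choice of boundedly many of its children to match the explicit connectives of the pattern of $B'$), and --- crucially --- once a firing site in $Q$ is fixed, the way the arguments of the relevant nodes are distributed between the metavariables of $B'$ and the surrounding context is \emph{determined}, up to $\fequp$, by the AC-normal form of the \emph{given} formula $P$: it need not be searched for, only read off (by multiset subtraction on sorted argument lists) and then verified. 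Each step is therefore checked in polynomial time --- enumerate the polynomially many firing sites in $Q$, reconstruct $\Sconhole$ and the instance of $\rr$ from $P$, and test the two $\fequp$-equalities via AC-normal forms --- and a step annotated $\fequp$ is checked by comparing normal forms directly. The rules $\aitrd,\aislrd,\aisrrd$ are handled exactly like the others. As there are at most $\sizeof{\Deri}$ lines, the total running time is polynomial, so our presentation of $\BVu$ is a Cook--Reckhow proof system.

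Two remarks. This argument does \emph{not} use the length bound of Corollary~\ref{prop:bvu-length-bound}: that bound is what puts \emph{provability} in $\NP$, whereas the Cook--Reckhow condition only asks that a \emph{given} proof object be verified efficiently. And the step that needs genuine care in a full write-up --- which I expect to be the main obstacle --- is the interaction between AC-normalization and pattern matching: for each of the finitely many rules of $\BVu$ one must make precise the claim that a firing site in $Q$ together with the AC-normal form of $P$ pins down both the context $\Sconhole$ and the instance of $\rr$. This is a routine but somewhat tedious finite case analysis, which is presumably why the informal argument in~\cite{BV-NPc} leaves it implicit.
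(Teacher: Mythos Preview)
Your approach is broadly correct, but it takes a harder route than the paper's. The key simplification you miss is this: since $\fequp$ is an \emph{explicit} rule in the paper's presentation of $\BVu$, an instance of any other rule $\rr$ is a purely \emph{syntactic} object --- the premise and conclusion are specific formulas, not $\fequp$-equivalence classes. So to check a step annotated $\rr \neq \fequp$ from $P$ to $Q$, the paper simply enumerates all $O(|P|)$ syntactic decompositions $P = \Scons{A'}$ (one per subformula occurrence), matches $A'$ against the fixed pattern $A_\rr$ to recover the substitution, computes $\Scons{B_\rr[\dots]}$, and tests \emph{syntactic} equality with $Q$. No AC-normal forms are needed except for the $\fequp$ rule itself. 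The paper makes this point explicitly in the remark following the proof: working modulo $\fequp$ throughout corresponds to the traditional ``structures'' presentation, for which polynomial-time checking is, in the paper's words, ``less obvious''.

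Your approach --- checking $P \fequp \Scons{A'}$ and $Q \fequp \Scons{B'}$ with the same $\Sconhole$ --- is exactly that harder variant. It can be made to work, and your sketch (firing sites plus multiset subtraction guided by $P$) is plausible, but the claim that the context and the split of arguments are \emph{determined} by $P$ once a firing site in $Q$ is fixed deserves more care than you give it: when the context and some metavariable share the main connective of the redex (e.g.\ $\Sconhole = \vls[\conhole; D]$ together with $C = \vls[C_1;C_2]$ in $\sbr$), flattening conflates them in $Q$, and disentangling them via $P$ is a genuine per-rule argument rather than a single multiset subtraction. There is also a minor mismatch: by accepting non-$\fequp$ steps modulo $\fequp$, your verifier accepts objects that are not literally valid $\BVu$-derivations in the paper's presentation (they may omit explicit $\fequp$-steps); the resulting system is still Cook--Reckhow for the same set of theorems, but it is not quite ``checking validity of a proof in $\BVu$'' as stated.
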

\begin{proof}
  It suffices to show that the validity of each inference rule can be verified
  in polynomial time.
  
  For instances of the $\fequp$ rule, to check that $A \fequp B$, one can apply
  a generic recipe for terms over associative and possibly commutative binary
  operators: compute hereditarily flattened and possibly sorted representations
  of $A$ and $B$ in polynomial time, and compare them, as sketched in the
  introduction of~\cite{Basin} for example.

  Let us now consider any other rule $\rr$ of $\BVu$. There exist two formulas
  $A_\rr$ and $B_\rr$ such that the instances of these rules are precisely the
  inferences of the form
  \[ \frac{\Scons{A_\rr[a_1 := C_1, \dots, a_k := C_k]}}{\Scons{B_\rr[a_1 :=
        C_1, \dots, a_k := C_k]}} \]
  where $\Sconhole$ is a context, $\{a_1,\dots,a_k\}$ is the set of
  propositional variables that appear in either $A$ or $B$ (often both), and
  $[a_1 := C_1, \dots, a_k := C_k]$ denotes a parallel substitution of those
  variables by formulas. For instance, for $\rr = \qard$, we may take $k=2$,
  $A_{\qard} = \vls<a_1;a_2>$ and $B_{\qard} = \vls[a_1;a_2]$.

  Given two formulas $A$ and $B$, our task is to decide in polynomial time
  whether one may infer $B$ from $A$ using the rule $\rr$. To do so, we first
  enumerate in polynomial time all the pairs $(\Sconhole,A')$ such that
  $\Scons{A'} = A$ (there are $O(|A|)$ many). For each of these pairs, we match
  $A'$ against the pattern $A_\rr$ in linear time; if this succeeds, we get a
  substitution such that $A_\rr[a_1 := C_1, \dots, a_k := C_k] = A'$. To
  conclude, we just have to test the equality $\Scons{B_\rr[a_1 := C_1, \dots,
    a_k := C_k]} = B$.
\end{proof}

\begin{rem}
  In this proof, we have exploited the fact that the $\fequp$ rules appear
  explicitly in our formal proofs. This differs from some traditional
  presentations of logics in the calculus of structures, which consider that the
  inference rules work over equivalence classes of formulas modulo $\fequ$ or
  $\fequp$ (those classes are called \emph{structures} in~\cite{SIS}). In such a presentation with $\fequp$ kept implicit, it is less obvious that proofs are still polynomial-time checkable.
\end{rem}

Together, Proposition~\ref{prop:bvu-length-bound} and Proposition~\ref{prop:ptime-check-bvu} tell us
that $\BVu$ has polynomially bounded proofs that can be checked in polynomial
time. This entails that provability is in $\NP$. To extend the result from
$\BVu$ to $\BV$, note that for any formula $A$ with units, there is a unit-free
formula $A'$ such that $A \fequ A'$, which is computable from $A$ in polynomial
time, and then $\derives{\BV}{}A \iff \derives{\BVu}{}A'$.

\begin{rem}
  \label{rem:shallow}
  Let us revisit Tiu's result~\cite{SIS-II} on the incompleteness of
  \enquote{shallow systems} with respect to $\BV$ in the light of our complexity
  results. The main difference between shallow and deep inference is the lack of
  contextual closure of the rules in the former, i.e.\
  \[ \vlinf{\rr}{}{A}{B} \;\;\;\not\!\!\!\implies
    \vlinf{\rr}{}{\Scons{A}}{\Scons{B}} \]
  One interpretation put forth for the results of~\cite{SIS-II} is that proof
  systems that can be translated into shallow systems, such as traditional
  sequent calculi, fail to capture $\BV$ \emph{and pomset
    logic}\footnote{From~\cite[\S1]{SIS-II}: \enquote{We conjecture that [system
      $\BV$ and pomset logic] are actually the same logic. The result on the
      necessity of deep-inference of $\BV$ therefore explains to some extent the
      difficulty in the sequentialization of Pomset logic.}} because of the lack
  of deep inference.
  
  We claim that in the case of pomset logic, there is an obstruction unrelated
  to the shallow vs deep distinction. Indeed, the definition for shallow systems
  in~\cite[\S6]{SIS-II} also enforces the condition that we called
  dicograph-monotonicity: observe that the relation $A \prec B$ given in
  \cite[Definition~6.1]{SIS-II} is equivalent to $\redges_{\tograph{A}} \subset
  \redges_{\tograph{B}}$. (Being able to state this is one reason for talking
  abstractly about rewriting systems instead of working directly in the calculus
  of structures.)

  Therefore, by Proposition~\ref{prop:dicograph-monotone-bound}, any shallow system in this
  sense has polynomially bounded proofs; and if those proofs are also
  polynomial-time checkable 
  then provability is in $\NP$. Together with the result of the next section, this shows that these systems cannot capture
  pomset logic unless $\NP = \coNP$.
\end{rem}

\subsection{Correctness of Pomset Logic Proof Nets is \texorpdfstring{$\coNP$}{coNP}-Complete}
\label{sec:coNP}

As an intermediate step towards our eventual hardness result for provability in
pomset logic, we first study the \emph{correctness} problem: given a prenet
(either tree-like or cographic), does it satisfy the correctness criterion,
that is, is it an actual proof
net? This can be seen as the special case of provability for \emph{balanced}
formulas, as we remarked before. Let us state our result right away.

\begin{thm}
  \label{thm:pomset-conp}
  The correctness problem for pomset logic proof nets is $\coNP$-complete.
  More precisely, given a sequent $\Gamma$ and a pre-proof (or linking)
  $\linking$, it is $\coNP$-complete to check the correctness of its cographic
  RB-prenet $\relRB{\Gamma,\linking}$ or of its tree-like RB-prenet
  $\treeRB{\Gamma,\linking}$ (those two conditions being equivalent by
  Theorem~\ref{thm:correctness-equivalence}).
\end{thm}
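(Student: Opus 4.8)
The plan is to establish the two directions separately: membership in $\coNP$ and $\coNP$-hardness.

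\textbf{Membership in $\coNP$.} By Theorem~\ref{thm:correctness-equivalence} it suffices to argue for one of the two prenet formulations; the tree-like one is most convenient, since there every æ-cycle is automatically chordless, so correctness amounts to the \emph{absence} of any æ-cycle. Incorrectness is therefore witnessed by a single æ-cycle, which is a sequence of vertices of linear size in the prenet (hence polynomial in the input). Given such a candidate sequence, one checks in polynomial time that it is an elementary cycle of even length in $\tuple{\vG,\rG\cup\bG}$ and that it satisfies the alternation condition. Thus the \emph{complement} of the correctness problem is in $\NP$, i.e. correctness is in $\coNP$. (For the cographic formulation one additionally guesses and verifies chordlessness, which is again polynomial-time; alternatively, convert to the tree-like prenet in linear time by the remark following Theorem~\ref{thm:correctness-equivalence}.)

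\textbf{$\coNP$-hardness.} The plan is to reduce from the complement of \cnfsat{}, i.e. from the $\coNP$-complete problem of deciding whether a given CNF is \emph{unsatisfiable}; by Remark~\ref{rem:literals-both-polarities} we may assume every variable of the input CNF occurs with both polarities. From a CNF $\varphi$ we will build, in polynomial time, a sequent $\Gamma_\varphi$ together with a linking $\linking_\varphi$ such that $\relRB{\Gamma_\varphi,\linking_\varphi}$ (equivalently its tree-like counterpart) is \emph{correct} if and only if $\varphi$ is unsatisfiable. The idea — anticipated by the counterexample construction of Section~\ref{sec:counterexample} and the medial-based gadgets of Figure~\ref{fig:medial-to-counter} — is to encode each Boolean variable $x$ by a small RB-subgraph whose two ways of routing an æ-path correspond to the two truth values of $x$, using the non-commutative $\lseq$ to fix orientations; and to encode each clause by a gadget that is traversable by an æ-cycle precisely when all its literals are falsified. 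Concatenating these gadgets appropriately, a satisfying assignment of $\varphi$ yields an æ-cycle (so the prenet is incorrect), and conversely any æ-cycle, by the alternation and elementarity constraints, must be "coherent" on each variable gadget and hence reads off a consistent assignment falsifying every clause it threads through; ensuring that \emph{every} æ-cycle must thread through a full set of clause gadgets forces it to encode a genuine satisfying assignment. Hence correctness $\iff$ no æ-cycle $\iff$ $\varphi$ unsatisfiable.

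\textbf{Main obstacle.} The delicate part is the converse direction of the reduction: showing that \emph{no spurious} æ-cycles exist — that every æ-cycle in $\relRB{\Gamma_\varphi,\linking_\varphi}$ is forced to be of the intended "global" shape reading a full assignment, rather than some short local cycle living inside a gadget or crossing gadget boundaries in an unintended way. This requires carefully designing the gadgets (choosing the directions of the $\lseq$-edges and the matching so that alternation rules out all unwanted routings) and then a somewhat technical case analysis on how an æ-cycle can enter and leave each gadget. The dicograph/relation-web structure (Theorem~\ref{thm:dicograph}) and the weak-transitivity of the edge relation (Definition~\ref{def:dicograph}) are the tools that constrain which edges are present and thereby pin down the possible cycles. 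I expect the proof to proceed by: (1) defining the variable and clause gadgets explicitly as balanced sub-sequents; (2) verifying the "completeness" direction (satisfying assignment $\Rightarrow$ æ-cycle) by exhibiting the cycle; (3) verifying the "soundness" direction (æ-cycle $\Rightarrow$ satisfying assignment) via the structural case analysis; and (4) observing that the whole construction, and its translation between the cographic and tree-like presentations, is computable in polynomial time. Combining membership and hardness gives $\coNP$-completeness, and restating it for both prenet kinds via Theorem~\ref{thm:correctness-equivalence} completes the proof.
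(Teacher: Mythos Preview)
Your membership argument is correct and essentially identical to the paper's.

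For hardness, however, your plan diverges substantially from the paper's and has a genuine gap. The paper does \emph{not} build variable and clause gadgets directly as pomset sub-sequents. Instead it factors the reduction through an intermediate graph-theoretic problem: first it shows (Theorem~\ref{thm:alt-cycle-np}) that deciding the existence of an \ae-cycle in an \emph{arbitrary} RB-digraph is $\NP$-hard, by a reduction from \cnfsat{} that builds two acyclic digraphs $\graph{G}_\rmcl$ and $\graph{G}_\rmvar$ on the same vertex set and then ``superimposes'' them into an RB-digraph (Lemmas~\ref{lem:gcl}--\ref{lem:superposition}); second, and independently, it gives a polynomial-time \emph{proofification} (Definition~\ref{def:proofification}, Theorem~\ref{thm:proofification}) sending any RB-digraph $\gG$ to a balanced sequent $\proofif{\gG}$ whose prenet is incorrect iff $\gG$ has an \ae-cycle. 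The composite yields the desired reduction.

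The reason this factorization matters is exactly the obstacle you flag but do not resolve. In a pomset formula the underlying non-matching graph is a \emph{dicograph}: $\ltens$ forces a complete bipartite join, $\lseq$ a complete directed join, and weak transitivity (Definition~\ref{def:dicograph}) propagates many further edges. You therefore have essentially no local control over which edges appear, which makes it very hard to design variable and clause gadgets that combine via $\lpar/\ltens/\lseq$ without creating spurious \ae-cycles (or, dually, chords that kill the intended ones). The paper's proofification circumvents this entirely: it encodes each edge of an \emph{arbitrary} RB-digraph by its own fresh pair of atoms (and each directed edge by a small $\lseq$-formula), so the SAT encoding can be done in the unconstrained world of general RB-digraphs where fine edge control is available, and only afterwards is the result packaged as a pomset prenet. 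Your proposal, as it stands, has not exhibited the gadgets nor addressed how to prevent the dicograph-forced edges from wrecking the case analysis; without that, the hardness direction is not established. (A minor slip: your clause gadget is described as traversable ``precisely when all its literals are falsified'', which is the wrong polarity for the conclusion ``satisfying assignment $\Rightarrow$ \ae-cycle''.)
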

\begin{proof}
  Let us show an equivalent reformulation: it is $\NP$-complete to decide
  whether a pre-proof is \emph{incorrect}. Membership in $\NP$ is immediate: one
  can build $\treeRB{\Gamma,\linking}$, whose \ae-cycles provide witnesses for
  incorrectness by definition, in polynomial time; the size of those cycles is
  bounded by the number of vertices, and they can be checked in polynomial time.
  As for the proof of $\NP$-hardness, it is done in two steps.
  \begin{itemize}
  \item We first show, via a polynomial time reduction, that incorrectness is as
    hard as finding æ-cycles in \emph{arbitrary} RB-digraphs
    (Proposition~\ref{prop:proofification-ptime} and Theorem~\ref{thm:proofification}).
  \item We then prove in the next subsection that the existence of \ae-cycles
    for general RB-digraphs is $\NP$-hard (Theorem~\ref{thm:alt-cycle-np}). \qedhere
  \end{itemize}
\end{proof}

\begin{rem}\label{rem:pomset-not-p}
  Assuming that $\Ptime \neq \NP$, this refutes
  \cite[Prop.~5]{retore:pomset} which claims that a \enquote{standard breadth
    search algorithm} can decide pomset proof net correctness in polynomial
  time. (The issue with this kind of argument is discussed
  in~\cite[\S8.1]{uniquePM}.) 
  This claim was meant to justify that \enquote{the proof net
    syntax is a sensible syntax by itself}~\cite[\S3]{retore:pomset}: it is
  indeed part of the Cook--Reckhow definition of proof
  system~\cite{cook:reckhow:79}, as already mentioned in
  Proposition~\ref{prop:ptime-check-bvu}, which says that the
  verification of $\BV$ derivations is in $\Ptime$.
\end{rem}

We are now going  to fill the steps outlined above to prove the
$\NP$-hardness part of Theorem~\ref{thm:pomset-conp}. The reduction step
extends the \enquote{proofification} construction
from~\cite{uniquePM} that sends \emph{undirected} perfect
matchings to $\MLLm$ pre-proof nets. For the sake of clarity, we
present our new reduction, with the same name, as a map from
arbitrary RB-digraphs to balanced pomset logic sequents.  Since the
procedure makes arbitrary choices, the result is defined only up to
atom renaming and modulo $\equiv$, but correctness and provability are
indeed invariant for these equivalences.

The following definition (which is illustrated by \Cref{fig:proofification}) makes use of the notation $\tredges[\gG],\sredges[\gG],\zredges[\gG],\predges[\gG]$ introduced in Definition~\ref{def:digraph-relations}.

  
\begin{defi}
  \label{def:proofification}
  Let $\gG=\tuple{\vG,\rG,\bG}$ be an RB-digraph.
  The \emph{proofification} of
  $\graph{G}$, denoted by~$\proofif{\gG}$, is the balanced (flat) sequent (defined only modulo $\equiv$ and variable renaming)
  \[ \sqnp{\vls(C_{u_1};C_{v_1}) ,\dots, \vls(C_{u_m};C_{v_m})
      ,D_{e_1},\dots,D_{e_k}} \]
  where:
  \begin{itemize}
  \item $\{(u_1,v_1),(v_1,u_1),\dots,(u_m,v_m),(v_m,u_m)\} = \bG$ and
    $\{e_1,\dots,e_k\} = \sredges[\gG]$, both of these being non-repeating
    enumerations;
  \item for $u \in \vG$, we let $C_u = \vls[a_{u,w_1};\dots;a_{u,w_n}]$ where
    $w_1,\ldots,w_n$ is a non-repeating enumeration of the neighborhood of $u$,
    i.e.\ $\{(u,w_1),\dots,(u,w_n)\} = (\{u\}\times\vG)\cap(\tredges[\gG]
    \cup \sredges[\gG] \cup \zredges[\gG])$, and
    \begin{itemize}
    \item for each $\{u,v\} \subseteq \vG$ such that $(u,v) \in \tredges[\gG]$
      (equivalently, $(v,u) \in \tredges[\gG]$), we give the names $a_{u,v}$ and
      $a_{v,u}$ to a fresh pair of \emph{dual} atoms;
    \item for each $(u,v) \in \sredges[\gG]$ (equivalently, $(v,u) \in
      \zredges[\gG]$), we generate two fresh atoms $a_{u,v}$ and $a_{v,u}$ that
      are \emph{not} dual;
    \end{itemize}
  \item for $(u,v) \in \sredges[\gG]$, we let $D_{(u,v)} =
    \vls<a_{u,v}\lneg;a_{v,u}\lneg>$, using the above-defined atoms.
  \end{itemize}
\end{defi}

The ideas that were already present in~\cite[Section~3.2]{uniquePM} are the use of \emph{par}
($\lpar$) to represent vertices, \emph{tensor} ($\ltens$) for matching edges,
and dual atoms for non-matching edges. The novelty here is how the \emph{seq}
($\lseq$) connective of pomset logic serves to encode edge directions through
the formulas $D_{(u,v)}$. It is immediate from the definition that:

\begin{figure}[!t]
  \centering
  \begin{tikzpicture}
    \node[bigvertex] (w) at (-2,1) {$w$};
    \node[bigvertex] (x) at (0,1) {$x$};
    \node[bigvertex] (y) at (-2,-1) {$y$};
    \node[bigvertex] (z) at (0,-1) {$z$};

    \draw[non matching edge] (w) -- (x);
    \draw[non matching edge, -{Latex}] (y) -- (x);
    \draw[non matching edge, -{Latex}] (y) -- (z);

    \draw [matching edge] (w) -- (y);
    \draw [matching edge] (x) -- (z);


    \node[bigvertex] (wx) at (3,3) {$a_{w,x}$};
    \node[bigvertex] (xw) at (9,3) {$a_{x,w}$};
    \node[bigvertex] (xy) at (8,2) {$a_{x,y}$};
    \node[bigvertex] (xy') at (7,1) {$a_{x,y}\lneg$};
    \node[bigvertex] (yx) at (4,-2) {$a_{y,x}$};
    \node[bigvertex] (yx') at (5,-1) {$a_{y,x}\lneg$};

    \node[bigvertex] (yz) at (3,-3) {$a_{y,z}$};
    \node[bigvertex] (yz') at (5,-3) {$a_{y,z}\lneg$};
    \node[bigvertex] (zy) at (9,-3) {$a_{z,y}$};
    \node[bigvertex] (zy') at (7,-3) {$a_{z,y}\lneg$};

    \draw[matching edge] (wx) -- (xw);
    \draw[matching edge] (xy) -- (xy');
    \draw[matching edge] (yx) -- (yx');
    \draw[matching edge] (yz) -- (yz');
    \draw[matching edge] (zy) -- (zy');

    \draw [non matching edge] (wx) -- (yz);
    \draw [non matching edge] (wx) -- (yx);
    \draw [non matching edge] (xw) -- (zy);
    \draw [non matching edge] (xy) -- (zy);

    \draw[non matching edge, -{Latex}] (yx') -- (xy');
    \draw[non matching edge, -{Latex}] (yz') -- (zy');
  \end{tikzpicture}
  \[
  \sqnp{~\vls(a_{w,x};[a_{y,z};a_{y,x}])~,~\vls([a_{x,y};a_{x,w}];a_{z,y})~,~\vls<a_{y,x}\lneg;a_{x,y}\lneg>~,~\vls<a_{y,x}\lneg;a_{z,y}\lneg>~}
  \]
  \caption{An RB-digraph (left) and its proofification (below) together with the corresponding cographic RB-prenet (right). Note that $a_{w,x}$ and $a_{x,w}$ are defined to be
    dual atoms.}
  \label{fig:proofification}
\end{figure}

\begin{prop}\label{prop:proofification-ptime}
  Proofifications can be computed in polynomial time.
\end{prop}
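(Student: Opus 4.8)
The plan is simply to inspect Definition~\ref{def:proofification} and check that each ingredient of $\proofif{\gG}$ is computable in time polynomial in $\sizeof{\vG}$, and — since "polynomial time" presupposes polynomial-size output — that the resulting flat sequent has size $O(\sizeof{\vG}^2)$. There is no real mathematical content here; the work is purely bookkeeping, so I will mostly describe the order in which the construction is carried out.

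First I would compute the partition of the ordered pairs of distinct vertices into $\tredges[\gG],\sredges[\gG],\zredges[\gG],\predges[\gG]$: for each $(u,v)\in\vG^2$ with $u\neq v$, test whether $(u,v)\in\rG$ and whether $(v,u)\in\rG$, which places $(u,v)$ in exactly one of the four classes (by the Proposition following Definition~\ref{def:digraph-relations}); this takes $O(\sizeof{\vG}^2)$ time. Next, allocate the fresh atoms: for each unordered pair $\{u,v\}$ with $(u,v)\in\tredges[\gG]$ reserve a dual pair $a_{u,v},a_{v,u}$, and for each $(u,v)\in\sredges[\gG]$ reserve two non-dual atoms $a_{u,v},a_{v,u}$ (whose duals $a_{u,v}\lneg,a_{v,u}\lneg$ will appear in the $D$-formulas); since there are at most $\sizeof{\vG}^2$ such pairs, $O(\sizeof{\vG}^2)$ atom names suffice, and a running counter keeps them distinct. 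Then build the pieces: for each vertex $u$, enumerate its neighbourhood in $\tredges[\gG]\cup\sredges[\gG]\cup\zredges[\gG]$ — at most $\sizeof{\vG}-1$ vertices — and form $C_u=\vls[a_{u,w_1};\vldots;a_{u,w_n}]$, a formula of size $\leq\sizeof{\vG}$; for each $(u,v)\in\sredges[\gG]$ form $D_{(u,v)}=\vls<a_{u,v}\lneg;a_{v,u}\lneg>$, of constant size; and for each matching edge $\{u,v\}$ of $\bG$ (there are $\sizeof{\vG}/2$ of them, read off $\bG$ directly) form $\vls(C_u;C_v)$. Finally, concatenate all of these into the flat sequent $\sqnp{\vls(C_{u_1};C_{v_1}),\vldots,D_{e_1},\vldots}$. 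All stages are plainly polynomial-time, and the output consists of $O(\sizeof{\vG})$ formulas $C_u$ of size $O(\sizeof{\vG})$ each together with $O(\sizeof{\vG}^2)$ two-atom formulas $D_{(u,v)}$, so its total size is $O(\sizeof{\vG}^2)$.

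Since the construction is this direct, I do not expect any genuine obstacle. The only two points I would flag in passing are (i) the output-size bound above, which is what actually makes the time bound "polynomial", and (ii) that the fresh-name bookkeeping produces a genuinely balanced sequent — for $(u,v)\in\tredges[\gG]$ the dual atoms $a_{u,v},a_{v,u}$ each occur exactly once, inside $C_u$ and $C_v$ respectively, while for $(u,v)\in\sredges[\gG]$ the atom $a_{u,v}$ occurs once in $C_u$ and its dual $a_{u,v}\lneg$ once in $D_{(u,v)}$ — though this balancedness is really part of the well-definedness of Definition~\ref{def:proofification} rather than of the complexity statement. As already noted before that definition, $\proofif{\gG}$ is defined only up to renaming of atoms and modulo $\fequ$, which is unproblematic here since the algorithm just outputs one chosen representative.
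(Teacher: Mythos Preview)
Your proposal is correct and takes essentially the same approach as the paper, which simply says ``Immediate from the definition.'' You have merely spelled out in detail the bookkeeping that the paper leaves implicit; there is no divergence in method.
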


Since a balanced sequent (or formula) uniquely defines a (tree-like or cographic) prenet, we can define the \emph{correctness} of such a sequent (or formula), to be the correctness of the corresponding prenet.

\begin{thm}\label{thm:proofification}
  Let $\gG$ be an RB-digraph and $\proofif{\gG}$ be its proofification. Then
  $\gG$ admits an \ae-cycle if and only if $\proofif{\gG}$
  is incorrect.
\end{thm}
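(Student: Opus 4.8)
The plan is to establish a structural correspondence between the \ae-cycles of $\gG$ and the chordless \ae-cycles of the cographic RB-prenet of $\proofif{\gG}$; since $\proofif{\gG}$ is balanced this prenet is unique, and by \Cref{thm:correctness-equivalence} its correctness coincides with that of the tree-like prenet, so \enquote{$\proofif{\gG}$ is incorrect} means exactly that this cographic prenet contains a chordless \ae-cycle. The first step is to describe that prenet explicitly by unfolding \Cref{def:digraph-connectives,def:tograph,def:relRBprenet,def:proofification}: its $B$-edges are the dual pairs $\{a_{u,v},a_{v,u}\}$ for $(u,v)\in\tredges[G]$ together with the pairs $\{a_{u,v},a_{u,v}\lneg\}$ and $\{a_{v,u},a_{v,u}\lneg\}$ for $(u,v)\in\sredges[G]$, while its $R$-edges are, on the one hand, all edges in both directions between the $\lpar$-clusters $C_u$ and $C_v$ for each matching edge $(u,v)\in\bG$ (coming from the tensors $C_u\ltens C_v$), and on the other hand the directed edges from $a_{u,v}\lneg$ to $a_{v,u}\lneg$ for each $(u,v)\in\sredges[G]$ (coming from $D_{(u,v)}=\vls<a_{u,v}\lneg;a_{v,u}\lneg>$). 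In particular $R$ and $B$ are disjoint, so an \ae-cycle here alternates $R$- and $B$-edges one by one. Intuitively, the cographic prenet of $\proofif{\gG}$ is $\gG$ with each vertex $u$ blown up into the cluster $C_u$, each matching edge replaced by a tensor biclique, each bidirectional edge $(u,v)\in\tredges[G]$ replaced by the single $B$-edge $\{a_{u,v},a_{v,u}\}$, and each one-directional edge $(u,v)\in\sredges[G]$ replaced by the length-three \emph{detour} $a_{u,v},\,a_{u,v}\lneg,\,a_{v,u}\lneg,\,a_{v,u}$ whose middle (directed) $R$-edge records the orientation.

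For the \enquote{only if} direction I would take an \ae-cycle $v_0,w_0,v_1,w_1,\dots,v_{\ell-1},w_{\ell-1},v_0$ of $\gG$ (indices mod~$\ell$) with $(v_i,w_i)\in\bG$ and $(w_i,v_{i+1})\in\rG$, and build the cycle of the cographic prenet that, for every $i$, first runs along the tensor $R$-edge from $a_{v_i,w_{i-1}}\in C_{v_i}$ to $a_{w_i,v_{i+1}}\in C_{w_i}$, and then runs along the connector from $a_{w_i,v_{i+1}}$ to $a_{v_{i+1},w_i}\in C_{v_{i+1}}$ --- a single $B$-edge when $(w_i,v_{i+1})\in\tredges[G]$, and the detour (a $B$-edge, the directed $R$-edge of $D_{(w_i,v_{i+1})}$, then a $B$-edge) when $(w_i,v_{i+1})\in\sredges[G]$. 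It is routine that this is an elementary cycle of even length alternating $R$ and $B$; and it is chordless because it meets each tensor biclique in exactly one atom on each side, and touches each $D_{(u,v)}$ only along its own detour, so no other tensor $R$-edge or $D$-internal $R$-edge can join two of its vertices. Hence the prenet contains a chordless \ae-cycle, i.e.\ $\proofif{\gG}$ is incorrect.

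For the \enquote{if} direction I would start from a chordless \ae-cycle $\pi$ of the cographic prenet and project it onto $\gG$. First, each $D$-atom $a_{u,v}\lneg$ has exactly one incident $B$-edge and one incident $R$-edge, so if $\pi$ meets a $D$-atom it must contain the full detour $a_{u,v},\,a_{u,v}\lneg,\,a_{v,u}\lneg,\,a_{v,u}$, traversed from the $C_u$ side to the $C_v$ side because the middle edge is directed. Contracting all such detours (and keeping the $\tredges[G]$ $B$-edges) turns $\pi$ into a cycle $\pi'$ visiting only $C$-atoms and strictly alternating tensor $R$-edges with connectors, each connector joining $C_u$ to $C_v$ for some edge $\{u,v\}$ of $\gG$. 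The key point, which is the only place chordlessness is used, is that $\pi'$ visits each cluster $C_u$ at most once: if it met two distinct atoms $p_1,p_2$ of $C_u$, then, writing $C_v$ for the tensor-partner of $C_u$, the tensor $R$-edges of $\pi$ at $p_1$ and $p_2$ would reach two distinct atoms $q_1\neq q_2$ of $C_v$, and the tensor $R$-edge $\{p_1,q_2\}$ would then join two non-consecutive vertices of $\pi$ --- a chord, contradiction. Hence the clusters traversed by $\pi'$ are pairwise distinct, and reading off the corresponding vertices of $\gG$, sending each tensor $R$-edge to its matching edge in $\bG$ and each connector to the corresponding (suitably oriented) edge of $\gG$, produces an elementary, even-length, alternating cycle of $\gG$, that is, an \ae-cycle.

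The step I expect to be the main obstacle is exactly this bookkeeping in the \enquote{if} direction: one has to be sure that a chordless \ae-cycle of the prenet can neither revisit a vertex-cluster $C_u$ nor take a shortcut through a $D$-formula, which is precisely what the degree count for $D$-atoms and the \enquote{chord from the tensor} argument pin down. The degenerate situations --- an edge lying in both $\rG$ and $\bG$, or a cycle of length two --- require only a remark each and do not affect the argument.
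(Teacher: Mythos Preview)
Your proposal is correct and follows essentially the same route as the paper: both work in the cographic RB-prenet, build the forward cycle by replacing each non-matching edge of $\gG$ by either a single $B$-edge or a three-step detour through a $D$-formula, and argue chordlessness from the fact that each tensor biclique is hit in exactly one atom on each side; for the converse, both project a chordless \ae-cycle back to $\gG$. Your treatment of the converse is actually more explicit than the paper's, which simply asserts that a chordless \ae-cycle projects to an elementary alternating cycle---your \enquote{two atoms in $C_u$ force a tensor chord} argument is precisely the missing detail. One small caveat: your claim that $R$ and $B$ are disjoint in the prenet fails when some $(u,v)\in\bG\cap\tredges[G]$, since then $\{a_{u,v},a_{v,u}\}$ is both a tensor $R$-edge and a $B$-edge; this is exactly the degenerate case you flag at the end, and it does not affect the argument once handled.
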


\begin{proof}
  By slightly adapting the reasoning found
  in~\cite[Proposition~3.9]{uniquePM}, one could show that the the æ-cycles in
  $\gG$ are in bijection with those in the tree-like RB-prenet of
  $\proofif{\gG}$, and the theorem statement would then follow immediately.
  However, we find it more convenient here to work with the cographic RB-prenet
  $\toRBgraph{\proofif{\gG}} = \relRB{\proofif{\gG},\linking(\proofif{\gG})}$
  (see \Cref{fig:proofification} for an example).

  First, by unfolding the definitions, one can write the set of non-matching
  edges of $\toRBgraph{\proofif{\gG}}$ as the disjoint union $R_1 \cup R_2$
  where
  \begin{align*}
    R_1 &= \{ (a_{u,v},a_{x,y}) \mid (u,x) \in \bG \land (u,v),(x,y) \in  \tredges[\gG]
          \cup \sredges[\gG] \cup \zredges[\gG]\} \\
    R_2 &= \{ (a_{u,v}\lneg,a_{v,u}\lneg) \mid (u,v) \in \rG \land (v,u) \notin
    \rG \}
  \end{align*}

  Next, suppose that there exists an æ-cycle $u_0, \dots, u_{n-1}, u_n=u_0$ in
  $\gG$, assuming without loss of generality that $(u_i,u_{i+1}) \in \bG$ when
  $i$ is even and $(u_i,u_{i+1}) \in \rG$ when $i$ is odd. Let us write $a[u,v]
  = a_{u,v}$ for readability. The aforementioned æ-cycle can be turned into the
  following cycle in $\toRBgraph{\proofif{\gG}}$, where $\to$ denotes a
  non-matching edge and $\Rightarrow$ denotes either a matching edge or an æ-path of
  length 3 starting and ending with matching edges:
  \[ a[u_0,u_{n-1}] \to a[u_1,u_2] \Rightarrow a[u_2,u_1] \to a[u_3,u_4] \Rightarrow \dots \to
    a[u_{n-1},u_0] \Rightarrow a[u_0,u_{n-1}] \]
  The fact that it is an æ-cycle is immediate. To
  show that $(\proofif{\gG},\linking(\proofif{\gG}))$ is incorrect, we must also
  check that it is chordless, that is, we must rule out the possibility that any
  edge in $R_1 \cup R_2$ is a chord.
  \begin{itemize}
  \item Let $e = (a_{u,v},a_{x,y}) \in R_1$, which means that $(u,x) \in \bG$.
    Suppose that $a_{u,v}$ is part of the cycle in $\toRBgraph{\proofif{\gG}}$
    (otherwise, $e$ is not a chord). Then $u$ is part of the original cycle in
    $\gG$, i.e.\ $u = u_i$ for some $i \in \{0,\dots,n-1\}$. Since the cycle is
    $\gG$ is alternating, $x=u_{i-1}$ or $x=u_{i+1}$ (with indexing modulo $n$)
    depending on the parity of $i$, and since it is elementary, this is the only
    time $x$ is visited. So there exists a unique $z \in \vG$ such that
    $a_{x,z}$ belongs to the cycle in $\toRBgraph{\proofif{\gG}}$. We then have
    either $(a_{x,z},a_{u,v}) \in R_1$ or $(a_{u,v},a_{x,z})\in R_1$. A case
    analysis depending on whether $z=y$ then shows that $e$ cannot be a chord.
  \item The endpoints of edges in $R_2$ are not incident to any other
    non-matching edges, so $R_2$ cannot provide any chord.
  \end{itemize}
  Conversely, one can show that any æ-cycle in $\toRBgraph{\proofif{\gG}}$
  induces in a canonical way an alternating cycle in $\gG$, and if one starts
  from a chordless æ-cycle, then the resulting cycle is elementary.
\end{proof}

Before moving to the last missing ingredient for the proof
of~Theorem~\ref{thm:pomset-conp}, let us draw a connection with some previous material
in this paper.

\begin{exa}\label{exa:almost-counterexample-complexity}
  Consider the digraph with perfect matching on the right of
  \Cref{fig:simplifications}. Since it has no \ae-cycle, its
  proofification below is a provable formula in pomset logic.
  \[ \sqnp{\vls([a;b];[c;d]) , \vls([e;f];[g;h]) , \vls<a\lneg;h\lneg> ,
      \vls<e\lneg;b\lneg> , \vls<c\lneg;f\lneg> , \vls<g\lneg;d\lneg>} \]
  This is very close to the counterexample presented in
  \Cref{sec:counterexample}. In fact, we have already seen an abridged drawing
  for the corresponding proof net in that section, in
  \Cref{fig:almost-counterexample}. In general, our complexity results imply
  that unless $\NP = \coNP$, there must exist some proofification that is
  provable in pomset logic but not in $\BV$, and this provides an explicit
  example. Furthermore, if we forget the edge directions in this graph (right of
  \Cref{fig:simplifications}) and then take its proofification, we get the
  \enquote{linear medial} of Remark~\ref{rem:medial} / \Cref{fig:medial}.
\end{exa}


\subsection{Finding Alternating Elementary Cycles is \texorpdfstring{$\NP$}{NP}-Hard}
\label{sec:np-hard}

The technical heart of our proof of $\coNP$-hardness for pomset proof net
correctness is:

\begin{thm}
  \label{thm:alt-cycle-np}
  Deciding the existence of \ae-cycles in RB-digraphs is $\NP$-hard.
\end{thm}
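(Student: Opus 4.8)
The plan is to reduce from a Boolean satisfiability problem. A convenient source is the following reformulation of \cnfsat{}: a CNF $\varphi$ with clauses $C_1,\dots,C_m$ over variables $x_1,\dots,x_n$ is satisfiable if and only if one can pick a literal $\ell_j\in C_j$ for every clause so that the selection $\{\ell_1,\dots,\ell_m\}$ contains no complementary pair $\{x_i,\lnot x_i\}$. (For the forward direction pick, in each clause, a literal true under a satisfying assignment; for the backward direction any complementary-pair-free selection extends to an assignment making all selected literals --- hence all clauses --- true.) I will build, in polynomial time, an RB-digraph $\gG_\varphi$ whose \ae-cycles are in correspondence with the complementary-pair-free selections for $\varphi$; since the latter exist iff $\varphi$ is satisfiable, this yields the theorem.

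The whole construction rests on one elementary remark: since $\bG$ is a perfect matching, in any \ae-cycle every matching edge that is touched is traversed exactly once and in full, so matching edges act as \enquote{use-at-most-once} tokens, and making two matching edges share a vertex forbids using both of them in the same \ae-cycle. (Equivalently, an \ae-cycle of $\gG$ is a simple cycle of the \enquote{transition digraph} whose vertices are the oriented matching edges and whose arcs record $R$-edges from a head to a tail, subject to never using both orientations of one matching edge; the gadgets below are designed on that combinatorial picture.) The pieces are: (i) a \emph{backbone}, a single alternating chain of matching edges and $R$-edges closed into a cycle, passing through \enquote{clause stations} $s_1,\dots,s_m$ in a fixed cyclic order, wired so that every \ae-cycle is forced to traverse all of it (a touched vertex is channelled forward along the backbone); (ii) at $s_j$ the backbone can only be continued by traversing one \emph{literal-occurrence edge} $e_{(\ell,j)}$ with $\ell\in C_j$, so that an \ae-cycle crossing the whole backbone selects exactly one literal per clause; (iii) for every variable $x_i$, a \emph{consistency gadget} wiring together all positive-occurrence edges $e_{(x_i,\cdot)}$ on one side and all negative-occurrence edges $e_{(\lnot x_i,\cdot)}$ on the other, so that routing the cycle through any positive occurrence forces it through a segment whose vertices meet every route through a negative occurrence (and symmetrically), making the two incompatible.

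Given $\gG_\varphi$, the two directions are routine bookkeeping. From a complementary-pair-free selection one assembles the \ae-cycle by following the backbone, diverting at each $s_j$ along the edge of the chosen literal, and passing each consistency gadget on the side dictated by the selection; the absence of a complementary pair is exactly what keeps the result a single \emph{elementary} cycle. Conversely any \ae-cycle of $\gG_\varphi$ must traverse the backbone, hence picks one literal per clause, and the consistency gadgets force that selection to be complementary-pair-free. Polynomiality is clear since each gadget has size linear in the number of occurrences it involves.

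The real work --- and the step I expect to be the main obstacle --- is the \emph{gadget engineering}, not the high-level scheme. One must check that $\gG_\varphi$ has \emph{no spurious \ae-cycle}: no short cycle shortcutting a gadget, and no cycle threading through only part of the backbone. And one must check that the consistency gadget enforces \emph{precisely} the \enquote{no complementary pair} constraint, neither admitting an inconsistent selection nor forbidding a legitimate one. Making the alternation condition genuinely constrain the cycle here depends on orienting the $R$-edges inside the gadgets with care, so that once the cycle enters a gadget the parity and direction conditions leave it no freedom beyond the intended binary choice; the absence of spurious cycles would then be established by a local case analysis, classifying for each gadget all alternating paths that can enter and leave it and arguing that they can only be concatenated along the backbone.
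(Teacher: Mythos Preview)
Your high-level plan --- reduce from \cnfsat{} via clause gadgets (force one literal choice per clause) and variable gadgets (forbid complementary choices) --- is a standard and sound template for cycle-existence hardness, and a reduction along these lines can be made to work. But as you yourself flag, the proposal stops short of a proof: the gadgets are not given, and the verification that no spurious \ae-cycles arise is precisely the hard combinatorial content. One technical slip worth correcting: you write that ``making two matching edges share a vertex forbids using both of them in the same \ae-cycle'', but in a perfect matching no two $B$-edges can share a vertex, so this mechanism as stated is unavailable. Presumably you mean that two alternating routes that must each traverse a common $B$-edge cannot coexist in one \ae-cycle; this is true, but the consistency gadget you sketch would then need a single $B$-edge that every positive-occurrence route and every negative-occurrence route of $x_i$ must cross, and arranging that without creating short-circuit cycles is exactly the engineering you have deferred.

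The paper sidesteps the gadget engineering by a different decomposition. It builds two \emph{acyclic} digraphs $\graph{G}_\rmcl$ and $\graph{G}_\rmvar$ on the same vertex set (literal occurrences plus two terminals $s,t$): in $\graph{G}_\rmcl$, paths $s\to t$ correspond to choosing one literal per clause; in $\graph{G}_\rmvar$, paths $t\to s$ correspond to variable assignments, visiting exactly the literals set to $\false$. A generic ``superposition'' lemma then turns any such pair of acyclic digraphs with designated source and sink into an RB-digraph whose \ae-cycles are in bijection with pairs of vertex-disjoint paths, one in each graph. Satisfiability becomes the existence of such a pair: the clause-path selects literals, the variable-path visits the false literals, and vertex-disjointness is exactly the requirement that the selected literals be true. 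The acyclicity of $\graph{G}_\rmcl$ and $\graph{G}_\rmvar$ is what kills spurious cycles cheaply (any \ae-path staying on one side is monotone for a suitable order), replacing your anticipated local case analysis with a one-line argument. This factoring also pays off later in the paper: the superposition lemma and the $\graph{G}_\rmvar$ construction adapt directly, by adding edge pairings, to the $\sigmatwop$-hardness proof for provability.
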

We shall prove this by a many-one polynomial time reduction from the \cnfsat{}
problem described in \Cref{sec:complexity-prelim}. A more concise proof would
have been possible by relying on a result on \emph{edge-colored
  digraphs}~\cite[Theorem~5]{gourves_complexity_2013} (this is actually how we
discovered the theorem, and this proof can be found in~\cite{constrained}).
Nevertheless, our direct reduction, which is heavily inspired by\footnote{One
  step in the proof of~\cite[Theorem~5]{gourves_complexity_2013} consists of
  adding edge directions to a reduction from digraphs to undirected
  2-edge-colored graphs (called Häggkvist's transformation); morally, our
  reduction replaces this with a well-known correspondence between digraphs and
  undirected RB-graphs (something of this kind is implicit, for instance, in the
  use of the classical Ford-Fulkerson maximum flow algorithm to compute a
  maximum matching).}~\cite{gourves_complexity_2013}, has two advantages: it
shortens the chain of dependencies, and lends itself to being adapted into the
$\sigmatwop$-hardness proof\footnote{This is why we use a reduction from
  \cnfsat, whereas the proof of~\cite[Theorem~5]{gourves_complexity_2013}
  consists of a reduction between two graph-theoretic problems---no Boolean
  formulas are mentioned anywhere in~\cite{gourves_complexity_2013}.} of our
next subsection.
\paragraph{In a nutshell}

Let us first give a rough idea of the proof, illustrated by figures on the
\cnfsat{} instance $(x \lor y \lor z) \land (\lnot x \lor y) \land (\lnot y \lor
\lnot z)$.

We first build a digraph $\graph{G}_\rmcl$ (\Cref{fig:gcl}) with two distinguished
vertices $s$ and $t$ such that paths from $s$ to $t$ are in bijection with
\emph{choices of one literal per clause}. To make this work, the graph contains
one vertex for each literal occurrence.

Next, we build \emph{on the same set of vertices} a digraph $\graph{G}_\rmvar$
(\Cref{fig:gvar}) such that paths from $t$ to $s$ correspond bijectively to
variable assignments in the following way: the path traverses all the literal
occurrences set to $\false$. The point is that if we manage to go from
$s$ to $t$ with a path in $\graph{G}_\rmcl$, and then go back from $t$ to $s$
with a path in $\graph{G}_\rmvar$ \emph{avoiding all vertices that were already
  visited}, then the first path will only select literals set to $\true$
by the second path. This means that cycles visiting both $s$ and $t$ yield
\emph{satisfying assignments} and vice versa.

Finally, the tricky part is to reduce finding such a cycle with two prescribed
vertices to finding an \ae-cycle in an RB-digraph.
This is done by a generic construction (\Cref{fig:superposition}) that morally
\enquote{superimposes} in some way these two graphs $\graph{G}_\rmcl$ and
$\graph{G}_\rmvar$. It requires a few additional conditions, among which the
acyclicity of both $\graph{G}_\rmcl$ and $\graph{G}_\rmvar$.
\begin{figure}[!t]
\begin{subfigure}{7.5cm}
  \centering
  \begin{tikzpicture}
    \node (xpos) at (0,0) {$x$};
    \node (xneg) at (0,-1) {$\lnot x$};
    \node (ypos) at (0,-2) {$y$};
    \node (yneg) at (0,-3) {$\lnot y$};
    \node (zpos) at (0,-4) {$z$};
    \node (zneg) at (0,-5) {$\lnot z$};

    \node (cl1) at (1.5,-6) {$x \lor y \lor z$};
    \node (cl2) at (3,-6) {$\lnot x \lor y$};
    \node (cl3) at (4.5,-6) {$\lnot y \lor \lnot z$};

    \node[bigvertex] (s) at (0.2,-5.8) {$s$};
    \node[bigvertex] (t) at (6,0.5) {$t$};

    \node[vertex] (xpos1) at (1.5,0) {};
    \node[vertex] (xneg2) at (3,-1) {};
    \node[vertex] (ypos1) at (1.5,-2) {};
    \node[vertex] (ypos2) at (3,-2) {};
    \node[vertex] (yneg3) at (4.5,-3) {};
    \node[vertex] (zpos1) at (1.5,-4) {};
    \node[vertex] (zneg3) at (4.5,-5) {};

    \draw[-{Latex}] (s) -- (xpos1);
    \draw[-{Latex}] (s) -- (ypos1);
    \draw[purple, ultra thick, -{Latex}] (s) -- (zpos1);
    \draw[-{Latex}] (xpos1) -- (xneg2);
    \draw[-{Latex}] (ypos1) -- (xneg2);
    \draw[purple, ultra thick, -{Latex}] (zpos1) -- (xneg2);
    \draw[-{Latex}] (xpos1) -- (ypos2);
    \draw[-{Latex}] (ypos1) -- (ypos2);
    \draw[-{Latex}] (zpos1) -- (ypos2);
    \draw[-{Latex}] (ypos2) -- (yneg3);
    \draw[purple, ultra thick, -{Latex}] (xneg2) -- (yneg3);
    \draw[-{Latex}] (ypos2) -- (zneg3);
    \draw[-{Latex}] (xneg2) -- (zneg3);
    \draw[purple, ultra thick, -{Latex}] (yneg3) -- (t);
    \draw[-{Latex}] (zneg3) -- (t);
  \end{tikzpicture}
  \caption{The $\graph{G}_\rmcl$ construction (Lemma~\ref{lem:gcl}).}
  \label{fig:gcl}
\end{subfigure}\begin{subfigure}{7.5cm}
  \centering
  \begin{tikzpicture}
    \node (xpos) at (0,0) {$x$};
    \node (xneg) at (0,-1) {$\lnot x$};
    \node (ypos) at (0,-2) {$y$};
    \node (yneg) at (0,-3) {$\lnot y$};
    \node (zpos) at (0,-4) {$z$};
    \node (zneg) at (0,-5) {$\lnot z$};

    \node (cl1) at (1.5,-6) {$x \lor y \lor z$};
    \node (cl2) at (3,-6) {$\lnot x \lor y$};
    \node (cl3) at (4.5,-6) {$\lnot y \lor \lnot z$};

    \node[bigvertex] (s) at (0.2,-5.8) {$s$};
    \node[bigvertex] (t) at (6,0.5) {$t$};

    \node[vertex] (xpos1) at (1.5,0) {};
    \node[vertex] (xneg2) at (3,-1) {};
    \node[vertex] (ypos1) at (1.5,-2) {};
    \node[vertex] (ypos2) at (3,-2) {};
    \node[vertex] (yneg3) at (4.5,-3) {};
    \node[vertex] (zpos1) at (1.5,-4) {};
    \node[vertex] (zneg3) at (4.5,-5) {};

    \draw[purple, ultra thick, -{Latex}] (t) -- (xpos1);
    \draw[-{Latex}] (t) -- (xneg2);
    \draw[purple, ultra thick, -{Latex}] (xpos1) -- (ypos1);
    \draw[-{Latex}] (xneg2) -- (ypos1);
    \draw[-{Latex}] (xpos1) -- (yneg3);
    \draw[-{Latex}] (xneg2) -- (yneg3);
    \draw[purple, ultra thick, -{Latex}] (ypos1) -- (ypos2);
    \draw[-{Latex}] (ypos2) -- (zpos1);
    \draw[purple, ultra thick, -{Latex}] (ypos2) -- (zneg3);
    \draw[-{Latex}] (yneg3) -- (zpos1);
    \draw[-{Latex}] (yneg3) -- (zneg3);
    \draw[-{Latex}] (zpos1) -- (s);
    \draw[purple, ultra thick, -{Latex}] (zneg3) -- (s);
  \end{tikzpicture}
  \caption{The $\graph{G}_\rmvar$ construction (Lemma~\ref{lem:gvar}).}
  \label{fig:gvar}
\end{subfigure}
\caption{The reduction for the proof of Theorem~\ref{thm:alt-cycle-np} on the CNF $(x
  \lor y \lor z) \land (\lnot x \lor y) \land (\lnot y \lor \lnot z)$. The
  thick colored paths correspond: in $\graph{G}_\rmcl$ (left), to selecting the
  literals $z, \lnot x, \lnot y$ in the successive clauses; in
  $\graph{G}_\rmvar$ (right), to the assignment $x = \false$, $y = \false$, $z
  = \true$ which makes the selected literals true. Observe that the two paths do
  not share any intermediate vertex. It is also worth paying attention to the
  horizontal edge for $y$ on the right, that showcases how to handle multiple
  occurrences of a literal set to $\false$.}
\label{fig:gcl-gvar}
\end{figure}
\paragraph{Proof details}

For the remainder of this subsection, we fix an instance of \cnfsat{}, presented
formally as in \Cref{sec:complexity-prelim}: it is a finite ordered set of
\emph{clauses} $\{C_1, \ldots, C_n\}$; each clause is a finite ordered set of
\emph{literals} $C_i = \{l_{i,1},\ldots,l_{i,m(i)}\}$; finally, each literal is
either $x$ or $\lnot x$
for some \emph{variable} $x \in X = \{x_1, \ldots, x_p\}$. Given this instance,
we consider a set of vertices $V_\rmocc = \{ v_{i,j} \mid (i,j) \in I \}$ with
 one vertex for each literal occurrence (thus, $I = \{(i,j) \in
\mathbb{N}^2 \mid 1 \leq i \leq n,\,1 \leq j \leq m(i)\}$), plus two auxiliary
vertices $s$ and $t$ (outside the set $V_\rmocc$).

Recall that a digraph is said to be \emph{acyclic} when it contains no cycles.
An observation to keep in mind for the three following lemmas is that a path in
an acyclic digraph is always elementary.

\begin{lem}[see \Cref{fig:gcl}]
  \label{lem:gcl}
  From the given \cnfsat{} instance, one can build in polynomial time a directed
  graph $\graph{G}_\rmcl = (V_\rmocc \cup \{s,t\}, \edges_\rmcl)$ such that:
\begin{itemize}
\item $\graph{G}_\rmcl$ is acyclic, $s$ has no incoming
  edges and $t$ has no outgoing edges;
\item each path from $s$ to $t$ in $\graph{G}_\rmcl$ visits exactly the intermediate
  vertices $\{v_{i,j[i]} \mid 1 \leq i \leq n\}$ for some $j[1], \ldots, j[n]$
  with $1 \leq j[i] \leq m(i)$;
\item conversely, every such choice of one literal per clause induces a unique
  path from $s$ to $t$, visiting exactly the corresponding vertices.
\end{itemize}
\end{lem}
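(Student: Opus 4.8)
The plan is to construct $\graph{G}_\rmcl$ explicitly as a layered acyclic digraph, one layer per clause, with intermediate vertices $v_{i,j}$ arranged so that the edges force a path from $s$ to $t$ to pass through exactly one vertex in each clause-layer.

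First I would fix the vertex set as $V_\rmocc \cup \{s,t\}$ and define the edges of $\graph{G}_\rmcl$ as follows: add an edge $(s, v_{1,j})$ for each $j \in \{1,\dots,m(1)\}$; for each $i \in \{1,\dots,n-1\}$ and each pair $j \in \{1,\dots,m(i)\}$, $j' \in \{1,\dots,m(i+1)\}$, add an edge $(v_{i,j}, v_{i+1,j'})$; and add an edge $(v_{n,j}, t)$ for each $j \in \{1,\dots,m(n)\}$. This is clearly computable in polynomial time, since the number of edges is at most $1 + \sum_i m(i)m(i+1) + 1 = O((\sum_i m(i))^2)$.

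Next I would verify the three claimed properties. Acyclicity follows from the fact that every edge goes from a vertex of ``level'' $i$ (where $s$ has level $0$, $v_{i,j}$ has level $i$, and $t$ has level $n+1$) to a vertex of level $i+1$, so levels strictly increase along any path, ruling out cycles; also $s$ has only outgoing edges and $t$ only incoming edges by construction. For the correspondence between $s$--$t$ paths and choices of one literal per clause: since edges only connect consecutive levels, any path from $s$ (level $0$) to $t$ (level $n+1$) must have length exactly $n+1$ and visit exactly one vertex $v_{i,j[i]}$ at each level $i \in \{1,\dots,n\}$, giving the tuple $(j[1],\dots,j[n])$ with $1 \le j[i] \le m(i)$; this visits exactly the intermediate vertices $\{v_{i,j[i]} \mid 1 \le i \le n\}$ as claimed. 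Conversely, given any such tuple, the sequence $s, v_{1,j[1]}, \dots, v_{n,j[n]}, t$ uses only edges present in $\graph{G}_\rmcl$ by definition, hence is a path, and it is the unique path visiting exactly those vertices since between two consecutive visited vertices there is exactly one edge.

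There is no real obstacle here: the lemma is essentially a bookkeeping statement, and the only care needed is to make sure the layered structure genuinely forces the path to touch each clause exactly once, which is immediate from the ``edges connect consecutive levels only'' observation. The analogous (and slightly more delicate) construction is $\graph{G}_\rmvar$ in the next lemma, where multiple occurrences of a literal must all be traversed by a single variable-assignment path; but for $\graph{G}_\rmcl$ the argument is straightforward. I would also remark in passing that an acyclic digraph has all its paths elementary, which is what licenses us later to speak interchangeably of paths and elementary paths when feeding these graphs into the superposition construction of \Cref{fig:superposition}.
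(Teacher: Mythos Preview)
Your proposal is correct and takes essentially the same approach as the paper: you give the identical layered construction of $\edges_\rmcl$ and the same acyclicity argument via strictly increasing clause index (your ``level'' function). The paper's proof is terser, leaving the verification of the path--choice bijection as straightforward, but your more detailed check is fine and matches the intended reasoning.
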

\begin{proof}
  As illustrated in \Cref{fig:gcl}, we take:
  \begin{align*}
    \edges_\rmcl ~~=~~&
    \bigset{(s, v_{1,j}) \mid j \in
      \{1, \ldots, m(1)\}} \strut \\
    ~~\cup~~& \bigset{\strut(v_{i,j}, v_{i+1,j'}) \mid i\in\set{1,\ldots,n-1},\,
      j\in\set{1,\ldots,m(i)},\, j' \in\set{1,\ldots,m(i+1)}} \strut\\
    ~~\cup~~ &\bigset{(v_{n,j}, t) \mid j \in \{1, \ldots, m(n)\}}
  \end{align*}
  It is straightforward to check that the required properties hold. For
  instance, the absence of cycles in $\graph{G}_\rmcl$ is a consequence of the
  following fact: for all $(v_{i,j},v_{i',j'}) \in \edges_\rmcl$ we have  $i<i'$.
\end{proof}
\begin{lem}[see \Cref{fig:gvar}]
  \label{lem:gvar}
  From the given \cnfsat{} instance, assuming without loss of generality that each
  variable has at least one positive and one negative occurrence (see
  Remark~\ref{rem:literals-both-polarities}), one can build in polynomial time a
  digraph $\graph{G}_\rmvar$ such that:
\begin{itemize}
\item $\graph{G}_\rmvar = (V_\rmocc \cup \{s,t\}, \edges_\rmvar)$ is acyclic, $t$ has no incoming edges and $s$ has no outgoing
  edges (note that the roles of $t$ and $s$ are reversed compared to $\graph{G}_\rmcl$);
\item for each path from $t$ to $s$, the set of intermediate vertices in
  $\graph{G}_\rmvar$ that it visits is of the form $\{v_{i,j} \mid (i,j) \in
  I,\; l_{i,j} \in \{\lnot x \mid x \in Y\} \cup (X \setminus Y) \}$ for a
  unique set of variables $Y \subseteq X$;
\item conversely, every such subset of variables corresponds to a (unique) path
  from $t$ to $s$. 
\end{itemize}
One should see such a $Y \subseteq X$ above as an
\emph{assignment} $\chi_Y : X \to \{\true,\false\}$, with $Y =
\chi_Y^{-1}(\{\true\})$. So the
vertices traversed correspond to the literals set to $\false$.
\end{lem}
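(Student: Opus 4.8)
The plan is to reuse the blueprint of Lemma~\ref{lem:gcl}, only re‑grouping the occurrence vertices by \emph{variable} instead of by clause. Fix the order $x_1,\dots,x_p$ on $X$. For each $k$, let $\pi_k^+$ be the list of all vertices $v_{i,j}$ with $l_{i,j}=x_k$ and $\pi_k^-$ the list of all $v_{i,j}$ with $l_{i,j}=\lnot x_k$, both enumerated in, say, lexicographic order of $(i,j)$; by the standing hypothesis each of these two lists is non‑empty, and since every occurrence vertex belongs to exactly one literal (hence to one variable, with one polarity), the vertex sets of the lists $\pi_1^+,\pi_1^-,\dots,\pi_p^+,\pi_p^-$ partition $V_\rmocc$. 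The graph $\graph{G}_\rmvar$ will thread a directed path along each $\pi_k^{\pm}$ (its internal edges), wire $t$ into the first vertex of $\pi_1^+$ and of $\pi_1^-$, wire the last vertex of $\pi_k^+$ and of $\pi_k^-$ into the first vertex of $\pi_{k+1}^+$ and of $\pi_{k+1}^-$ for $1\le k<p$, and wire the last vertex of $\pi_p^+$ and of $\pi_p^-$ into $s$. This uses no vertex outside $V_\rmocc\cup\set{s,t}$, as required, has $O(|I|)$ edges, and is plainly computable in polynomial time.

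Next I would check the three listed properties. That $t$ has no incoming edge and $s$ no outgoing edge is immediate from the description of the four edge types. Acyclicity follows by exhibiting the topological order $t$, then the vertices of $\pi_1^+$, $\pi_1^-$, $\pi_2^+$, $\pi_2^-$, \dots, $\pi_p^+$, $\pi_p^-$ (each list kept in its own order), then $s$: each edge type visibly goes strictly forward in this order. For the path/assignment correspondence, note that every internal thread vertex has out‑degree $1$, so a $t$‑to‑$s$ path, once it enters a thread, is forced all the way to that thread's last vertex and then enters exactly one thread of the next gadget; hence it selects exactly one of $\pi_k^+,\pi_k^-$ for each $k$, and since the digraph is acyclic this path is automatically elementary. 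Setting $Y=\set{k : \text{the path uses }\pi_k^-}$, the intermediate vertices visited are exactly all occurrences of $\lnot x_k$ for $k\in Y$ together with all occurrences of $x_k$ for $k\notin Y$, i.e.\ $\set{v_{i,j}\mid l_{i,j}\in\set{\lnot x\mid x\in Y}\cup(X\setminus Y)}$, the literal occurrences falsified by $\chi_Y$. Given the path, $Y$ is read off from its thread choices, and conversely each $Y\subseteq X$ determines the unique path that takes $\pi_k^-$ precisely when $k\in Y$.

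The hypothesis that every variable occurs with both polarities is used twice: it makes each $\pi_k^{\pm}$ non‑empty, so that the wiring above is well defined, and it makes $Y$ genuinely unique — if $Y\neq Y'$, picking $k$ in their symmetric difference, the two candidate vertex sets differ on whether they contain the (non‑empty) set of occurrences of $\lnot x_k$ or that of $x_k$. I expect the only real care needed — the ``main obstacle'', such as it is — to be verifying that no stray edge lets a path hop between the two threads of a single gadget or skip or re‑enter a gadget; this is dispatched by the remark that distinct gadgets, and the two threads of one gadget, occupy pairwise disjoint blocks of vertices, so the only edges incident to a thread are its own internal edges and the transition edges at its two ends.
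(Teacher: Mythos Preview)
Your construction is exactly the paper's: thread the occurrences of each literal into a path, then wire the last vertex of each of $\pi_k^{\pm}$ to the first vertex of each of $\pi_{k+1}^{\pm}$, with $t$ feeding into the $k=1$ gadget and the $k=p$ gadget feeding into $s$. Your verification of the properties is, if anything, more explicit than the paper's (the topological order for acyclicity and the out-degree-$1$ observation forcing the path through a whole thread are left implicit there), so the proposal is correct and essentially identical in approach.
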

\begin{proof}
  Let us first describe what the paths starting from $t$ will look like once we
  have defined the digraph. First, we have to choose $l_1 \in \{x_1, \lnot
  x_1\}$ and go to its first occurrence (first for the order induced by the
  clauses). Then as long as we are on an occurrence of $l_1$ which is not the
  last one, there is a single outgoing edge, and it leads to the next occurrence.
  Finally, once the last occurrence of $l_1$ is reached, we may go to the first
  occurrence of $l_2$ for some choice $l_2 \in \{x_2, \lnot x_2\}$. And so on,
  until the last occurrence of either $x_p$ or $\lnot x_p$ which finally allows
  us to arrive at $s$.

  To enforce this, we define $\edges_\rmvar$ (as shown in \Cref{fig:gvar})
  to consist of all the edges:
  \begin{itemize}
  \item $(v_{i,j}, v_{i',j'})$ such that $l_{i,j} = l_{i',j'} = l$ and the
    occurrence of $l$ in $C_{i'}$ is the successor of its occurrence in $C_i$,
    i.e.\ $i < i'$ and $i < i'' < i' \implies l \notin C_{i''}$;
  \item $(v_{i,j}, v_{i',j'})$ such that for some $(l_{i,j},l_{i',j'}) \in
    \bigcup_{1 \leq k \leq p-1} (\{x_k, \lnot x_k\} \times \{x_{k+1}, \lnot
    x_{k+1}\})$, $C_i$ is the last clause containing a literal equal to
    $l_{i,j}$ while $C_{i'}$ is the first clause containing $l_{i',j'}$;
  \item $(t, v_{i,j})$ and $(t, v_{i',j'})$, where $l_{i,j} = x_1$,
    $l_{i',j'} = \lnot x_1$ and $C_i,C_{i'}$ are the first clauses in which
    those literals appear respectively;
  \item $(v_{i,j}, s)$ and $(v_{i',j'}, s)$ for the last occurrences
    $l_{i,j},l_{i',j'}$ of $x_p,\lnot x_p$.  \qedhere
  \end{itemize}
\end{proof}

\begin{figure}[!t]
  \centering
  \begin{tikzpicture}
    \draw[dashed] (-1,0) -- (5,0);
    \draw[dashed] (5,3) -- (5,-3);
    
    \node[bigvertex] (s) at (0,1.5) {$\underline{s}$};
    \node[bigvertex] (a) at (1,2.5) {$u$};
    \node[bigvertex] (b) at (3,2.5) {$v$};
    \node[bigvertex] (c) at (2,0.5) {$w$};
    \node[bigvertex] (t) at (4,1.5) {$\underline{t}$};

    \draw[-{Latex}] (s) -- (a);
    \draw[-{Latex}] (a) -- (b);
    \draw[-{Latex}] (b) -- (t);
    \draw[-{Latex}] (a) -- (c);

    \node[bigvertex] (s) at (0,-1.5) {$\underline{s}$};
    \node[bigvertex] (a) at (1,-0.5) {$u$};
    \node[bigvertex] (b) at (3,-0.5) {$v$};
    \node[bigvertex] (c) at (2,-2.5) {$w$};
    \node[bigvertex] (t) at (4,-1.5) {$\underline{t}$};

    \draw[-{Latex}] (t) -- (c);
    \draw[-{Latex}] (c) -- (s);

    \node[bigvertex] (s1) at (6,0.5) {$s_1$};
    \node[bigvertex] (s2) at (6,-0.5) {$s_2$};
    \node[circle, draw, inner sep=0] (a-top) at (7,2) {$+$};
    \node[circle, draw, inner sep=0] (a-bot) at (7,1) {$-$};
    \node[circle, draw, inner sep=0] (b-top) at (9,2) {$+$};
    \node[circle, draw, inner sep=0] (b-bot) at (9,1) {$-$};
    \node[circle, draw, inner sep=0] (c-top) at (8,-1) {$+$};
    \node[circle, draw, inner sep=0] (c-bot) at (8,-2) {$-$};
    \node[bigvertex] (t1) at (10,0.5) {$t_1$};
    \node[bigvertex] (t2) at (10,-0.5) {$t_2$};

    \draw[red, -{Latex}] (s1) -- (a-top);
    \draw[red, -{Latex}] (a-bot) -- (b-top);
    \draw[red, -{Latex}] (b-bot) -- (t1);
    \draw[red, -{Latex}] (a-bot) -- (c-top);
    \draw[red, -{Latex}] (t2) -- (c-bot);
    \draw[red, -{Latex}] (c-top) -- (s2);

    \draw[matching edge] (s1) -- (s2);
    \draw[matching edge] (t1) -- (t2);
    \draw[matching edge] (a-top) -- node [right, text=black] {$u$} ++ (a-bot);
    \draw[matching edge] (b-top) -- node [right, text=black] {$v$} ++ (b-bot);
    \draw[matching edge] (c-top) -- node [left, text=black] {$w$} ++ (c-bot);
  \end{tikzpicture}
  \caption{Example for the proof of Lemma~\ref{lem:superposition}. On the left, two
    digraphs are drawn on the same vertices. On the right, the RB-digraph
    obtained by applying the construction of Lemma~\ref{lem:superposition} with the
    distinguished vertices $s$ and $t$. (We do not reuse the graphs in
    \Cref{fig:gcl-gvar} because the drawing of their \enquote{superposition}
    would be unreadable.) The pair of paths $(s \to u \to v \to t,\; t \to w \to
    s)$ on the left corresponds to an \ae-cycle on the right, whereas the
    \enquote{short-circuit} $(s \to u \to w,\; w \to s)$ -- which we want to
    exclude -- corresponds to a cycle that is not alternating (the consecutive
    edges $u^\ominus \to w^\oplus \to s_2$ are both outside the matching).}
  \label{fig:superposition}
\end{figure}

\begin{lem}[see \Cref{fig:superposition}]
  \label{lem:superposition}
  Let $\graph{G}_1=\tuple{\vertices,\edges_1}$ and $\graph{G}_2=\tuple{\vertices,\edges_2}$ be two directed graphs with the same
  vertex set $\vertices$. Let $s,t \in V$.
  Assume that $\graph{G}_1$ and $\graph{G}_2$ are acyclic and that $s$ (resp.\ $t$)
  has no incoming edge in $\graph{G}_1$ (resp.\ $\graph{G}_2$) and no outgoing
  edge in $\graph{G}_2$ (resp.\ $\graph{G}_1$). Then one can build in polynomial
  time an RB-digraph $\graph{H}$ whose \ae-cycles are in bijection with the
  pairs $(P_1,P_2)$ where:
  \begin{itemize}
  \item $P_1$ is a path from $s$ to $t$ in $\graph{G}_1$;
  \item $P_2$ is a path from $t$ to $s$ in $\graph{G}_2$;
  \item $P_1 \setminus \{s,t\}$ and $P_2 \setminus \{s,t\}$ are vertex-disjoint.
  \end{itemize}
\end{lem}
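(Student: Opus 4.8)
The plan is to build $\graph{H}$ as a \enquote{superposition} of $\graph{G}_1$ and $\graph{G}_2$ in which every vertex other than $s$ and $t$ is duplicated, so as to record which of the two graphs is currently being traversed. Concretely, I would take as vertices of $\graph{H}$ two copies $x^\oplus,x^\ominus$ of each $x\in V\setminus\{s,t\}$, together with four fresh vertices $s_1,s_2,t_1,t_2$, and let $\bH$ consist of the pairs $\{x^\oplus,x^\ominus\}$ for $x\in V\setminus\{s,t\}$, $\{s_1,s_2\}$ and $\{t_1,t_2\}$; this is visibly a perfect matching. For the non-matching edges I would encode each $\graph{G}_1$-edge as running from a \enquote{$\ominus$}-copy to a \enquote{$\oplus$}-copy and each $\graph{G}_2$-edge the other way round: for $(x,y)\in\edges_1$ put $\phi_1(x)\to\psi_1(y)$ in $\rH$, where $\phi_1(x)=x^\ominus$ for $x\in V\setminus\{s,t\}$ and $\phi_1(s)=s_1$ (the case $x=t$ being impossible, since $t$ has no outgoing edge in $\graph{G}_1$), and $\psi_1(y)=y^\oplus$ for $y\in V\setminus\{s,t\}$ and $\psi_1(t)=t_1$ (and $y=s$ is likewise impossible); symmetrically, for $(x,y)\in\edges_2$ put $\phi_2(x)\to\psi_2(y)$ with $\phi_2$ landing on \enquote{$\oplus$}-copies or $t_2$ and $\psi_2$ on \enquote{$\ominus$}-copies or $s_2$. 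This is computable in polynomial time, $\graph{H}$ has no loops, and $\rH$ and $\bH$ are disjoint, so $\graph{H}$ is a genuine RB-digraph (it matches \Cref{fig:superposition}).

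For the forward direction of the bijection I would, given a valid pair $(P_1,P_2)$ with $P_1=(s=z_0,z_1,\dots,z_k=t)$ and $P_2=(t=w_0,w_1,\dots,w_l=s)$, exhibit the closed walk
\[ s_1,\; z_1^\oplus,\; z_1^\ominus,\; z_2^\oplus,\; z_2^\ominus,\; \dots,\; z_{k-1}^\oplus,\; z_{k-1}^\ominus,\; t_1,\; t_2,\; w_1^\ominus,\; w_1^\oplus,\; \dots,\; w_{l-1}^\ominus,\; w_{l-1}^\oplus,\; s_2,\; s_1 \]
in $\graph{H}$, and check that it is an \ae-cycle: its consecutive edges alternate between non-matching and matching (the non-matching ones are exactly the $k$ edges of $P_1$ read through $\phi_1,\psi_1$ and then the $l$ edges of $P_2$ read through $\phi_2,\psi_2$, the matching ones being the gadget edges of $z_1,\dots,z_{k-1}$, then $\{t_1,t_2\}$, then those of $w_1,\dots,w_{l-1}$, then $\{s_2,s_1\}$), its length $2k+2l$ is even, and it is elementary because the $z_i$ are pairwise distinct (a path in an acyclic digraph is automatically elementary), the $w_j$ are pairwise distinct, and $\{z_1,\dots,z_{k-1}\}$ and $\{w_1,\dots,w_{l-1}\}$ are disjoint by hypothesis.

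For the converse, given an \ae-cycle $\mathcal{C}$ of $\graph{H}$, I would first observe that $\mathcal{C}$ visits one copy of a gadget iff it visits the other iff it uses the connecting matching edge --- because the only matching edge incident to either copy is that one and $\mathcal{C}$ alternates --- so that contracting all matching edges of $\graph{H}$ turns $\mathcal{C}$ into an elementary cycle $\bar{\mathcal{C}}$ in the multigraph with vertex set (identified with) $V$ and whose edges are those of $\graph{G}_1$ and $\graph{G}_2$. The key claim is then that every $x\in V\setminus\{s,t\}$ on $\bar{\mathcal{C}}$ has its two incident $\bar{\mathcal{C}}$-edges either both in $\edges_1$ or both in $\edges_2$: by the \enquote{$\ominus\!\to\!\oplus$ versus $\oplus\!\to\!\ominus$} convention, after entering a gadget through an $\edges_i$-edge and crossing its matching edge one can only leave through an $\edges_i$-edge. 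Hence $\bar{\mathcal{C}}$ splits into maximal \enquote{monochromatic} $\edges_1$- and $\edges_2$-runs whose boundaries can occur only at $s$ or $t$, and at $s$ (resp.\ $t$) the colour must switch from $\edges_2$ to $\edges_1$ (resp.\ from $\edges_1$ to $\edges_2$), because $s_1$ and $t_2$ carry only outgoing $\edges_1$- and $\edges_2$-edges while $s_2$ and $t_1$ carry only incoming $\edges_2$- and $\edges_1$-edges. Acyclicity of $\graph{G}_1$ and $\graph{G}_2$ forbids a run from being the whole cycle, so there is at least one boundary, and since the colour changes an even number of times around a cycle, both $s$ and $t$ occur; elementarity of $\mathcal{C}$ then leaves exactly one $\edges_1$-run, a path $P_1$ from $s$ to $t$ in $\graph{G}_1$, and exactly one $\edges_2$-run, a path $P_2$ from $t$ to $s$ in $\graph{G}_2$, with $P_1\setminus\{s,t\}$ and $P_2\setminus\{s,t\}$ disjoint. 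Finally I would check that reading $(P_1,P_2)$ off $\mathcal{C}$ and building $\mathcal{C}$ from $(P_1,P_2)$ are mutually inverse, after normalising each \ae-cycle to start at $s_1$ and to leave it along a $\graph{G}_1$-edge.

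The hard part will be the key claim in the converse: making rigorous that the orientation convention prevents \enquote{switching graphs} anywhere except at $s$ and $t$, so that $\graph{H}$ has no spurious \ae-cycles coming from the ill-formed \enquote{short-circuits} of \Cref{fig:superposition} (where a path emanating from $s$ meets a path returning to $s$ at an intermediate vertex). Everything else amounts to bookkeeping about alternation, parities, and the matching-edge contraction.
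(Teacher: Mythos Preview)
Your construction of $\graph{H}$ coincides with the paper's, and your forward direction is identical. Your converse is also correct and rests on the same two observations as the paper --- that the $\oplus/\ominus$ convention forces the colour to be preserved across any $V'$-gadget, and that acyclicity of $\graph{G}_1,\graph{G}_2$ rules out monochromatic cycles --- but you organise it differently: you contract the matching edges and reason about the projected cycle $\bar{\mathcal{C}}$ in $V$, then use a parity count of colour changes, whereas the paper argues directly in $\graph{H}$ by analysing the transition subpaths $R'_1\!\to\!B\!\to\!R'_2$ and $R'_2\!\to\!B\!\to\!R'_1$ (showing they must pass through $(t_1,t_2)$ and $(s_2,s_1)$ respectively) and invokes a lexicographic order to exclude \ae-cycles in $(\vH,R'_i,\matching)$. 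Your route is slightly more economical on the acyclicity step and makes the bijection explicit via the normalisation at $s_1$; the paper's route avoids introducing the quotient multigraph. Either way the content is the same.
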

\begin{proof}
  Our construction of $\graph{H} = \tuple{\vertices[H], \rH, \matching[H]}$
  associates to each original vertex a $B$-edge:
  \begin{align*}
    \vertices[H] &= \left\{s_1,s_2,t_1,t_2\right\} \cup
    \left\{v^p \mid v \in \vertices,\; p \in \{\oplus,\ominus\} \right\}
                   \quad\text{where}\ V' = V \setminus \{s,t\}\\
    \matching[H] &=
    \left\{(s_1, s_2), (s_2, s_1), (t_1, t_2), (t_2, t_1)\right\}
    \cup \left\{(v^p, v^q) \;\middle|\; v \in \vertices,\;
    (p,q) \in \{(\oplus,\ominus),(\ominus,\oplus)\} \right\}
  \end{align*}
The non-matching edges are obtained from the original edges: writing $\edges_i$
for the set of edges of $\graph{G}_i$ (for $i \in \{1,2\}$), we take $\rH =
\redges'_1 \cup \redges'_2$ where
\begin{align*}
  \redges'_1 =& \left\{(s_1, v^\oplus) \;\middle|\; v \in V'\ \text{such that}\
                (s,v) \in \edges_1\right\} \cup
                \left\{(u^\ominus, t_1) \;\middle|\; u \in V'\ \text{such that}\
                (u,t) \in \edges_1\right\}\\
  \cup& \left\{(u^\ominus, v^\oplus) \;\middle|\;
               u,v \in \vertices'\ \text{such that}\ (u,v) \in \edges_1\right\}\\
  \redges'_2 =& \left\{(t_2, v^\ominus) \;\middle|\; v \in V'\ \text{such that}\
                (t,v) \in \edges_2\right\} \cup
                \left\{(u^\oplus, s_2) \;\middle|\; u \in V'\ \text{such that}\
                (u,t) \in \edges_2\right\}\\
  \cup& \left\{(u^\oplus, v^\ominus) \;\middle|\;
               u,v \in \vertices'\ \text{such that}\ (u,v) \in \edges_2\right\}
\end{align*}
so that $\matching$, $\redges'_1$ and $\redges'_2$ are pairwise disjoint.

Given a pair of paths $(P_1,P_2)$ with $P_1 = s \to u_1 \to \dots \to
u_r \to t$ and $P_2 = t \to v_1 \to\dots\to v_q \to s$, as specified
in the lemma statement, we can build an \ae-cycle in $\graph{H}$,
where `$\Rightarrow$' denotes a matching edge:
\[ s_1 \to u_1^\oplus \Rightarrow u_1^\ominus \to u_2^\oplus
  \Rightarrow\dots\Rightarrow u_r^\ominus \to t_1 \Rightarrow t_2 \to
  v_1^\ominus \Rightarrow v_1^\oplus \to v_2^\ominus \Rightarrow\dots\Rightarrow
  v_q^\oplus \to s_2 \Rightarrow s_1 \]
It is alternating by construction, and it is elementary because $P_1$ and $P_2$
themselves are necessarily elementary (they are paths in acyclic digraphs).

Conversely, we want to extract such a pair of paths $(P_1,P_2)$ from any \ae-cycle in
$\graph{H}$. First, we claim that the RB-digraph $(\vH, \redges'_1, \matching)$
does not contain any æ-cycle. This is because any æ-path of length $\geq 2$ in
it is strictly increasing for the following order: the lexicographic product of
the transitive closure of $\edges_1$ --- which, by acyclicity assumption on
$\graph{G}_1$, is a partial order --- with the order on $\{\oplus,\ominus\}$
defined by $\oplus \leq \ominus$. Likewise, $(\vH, \redges'_2, \matching)$ does
not admit any æ-cycle either. Therefore, an æ-cycle in $\gH$ must contain two
edges $e_1 \in \redges'_1$ and $e_2 \in \redges'_2$. Given such an æ-cycle, let
$\pi_1$ be the directed subpath starting with $e_1$ and ending with $e_2$, and $\pi_2$ be the complementary directed subpath starting with $e_2$ and ending with $e_1$.
Then:
 \begin{itemize}
 \item $\pi_1$ contains a subpath $v_1 \xrightarrow{\;\redges'_1\;} v_2
   \xRightarrow{\;\matching\;} v_3 \xrightarrow{\;\redges'_2\;} v_4$. Since
   $v_2$ is the target of an edge in $\edges_1$, either $v_2 = t_2$ or $v_2 =
   v^\oplus$ for some $v \in \vertices$. In the latter case, we have $v_3 =
   v^\ominus$, which is impossible for the source of an edge in $\edges_2$.
   Therefore $(v_2, v_3) = (t_1, t_2)$.
\item $\pi_2$ contains a subpath $u_1 \xrightarrow{\;R'_2\;} u_2
  \xRightarrow{\;\matching\;} u_3 \xrightarrow{\;R'_1\;} u_4$. Similarly to the
  previous case, we conclude that $(u_2, u_3) = (s_2, s_1)$.
\end{itemize}
To recapitulate: the cycle must switch at some point from edges in $\redges'_1$
to edges in $\redges'_2$, and it must also switch back at some point; it can
only do the former by crossing $(t_1, t_2) \in \matching$ and the latter by
crossing $(s_2, s_1) \in \matching$. Therefore, this \ae-cycle decomposes into
an \ae-path $P_1$ from $s_1$ to $t_1$ in $(\vH,\redges'_1,\matching)$ and an
\ae-path $P_2$ from $t_2$ to $s_2$ in $(\vH,\redges'_2,\matching)$, glued
together by $(t_1, t_2) \in \matching$ and $(s_2, s_1)\in\matching$. These paths
are vertex-disjoint because the \ae-cycle that they form is, by definition of
\ae-cycle, elementary; they can be lifted to yield the desired pair of paths in
$\graph{G}_1$ and $\graph{G}_2$.
\end{proof}

We now combine these ingredients to reduce \cnfsat{} to
the directed \ae-cycle problem.
\begin{proof}[Proof of Theorem~\ref{thm:alt-cycle-np}]
  We apply the construction of the previous lemma to $\graph{G}_\rmcl$ and
  $\graph{G}_\rmvar$ (with $V = V_\rmocc$). An \ae-cycle in the
  resulting RB-digraph corresponds to a path $P_\rmcl$ from
  $s$ to $t$ in $\graph{G}_\rmcl$ plus a path $P_\rmvar$ from $t$ to $s$ in
  $\graph{G}_\rmvar$, that are vertex-disjoint except at $s$ and $t$. We have to
  show that the existence of the latter is equivalent to that of an assignment
  that \emph{satisfies all the clauses}.

  Suppose that we are given such an assignment. First, there exists a unique
  path $P_\rmvar$ in $\graph{G}_\rmvar$ that visits all literal occurrences set
  to $\false$ (Lemma~\ref{lem:gvar}). Since the assignment is satisfying, we
  may choose in each clause a literal set to $\true$.
  This corresponds by Lemma~\ref{lem:gcl} to a path $P_\rmcl$ in $\graph{G}_\rmcl$.
  If some vertex of $V_\rmocc$ were to appear in both $P_\rmvar$ and $P_\rmcl$,
  it would mean that the corresponding literal is set both to $\false$
  and to $\true$ simultaneously.

  The converse direction proceeds by a similar reasoning.
\end{proof}


\subsection{Pomset Logic Provability is \texorpdfstring{$\sigmatwop$}{Sigma2p}-Complete}
\label{sec:sigmatwop-complete}

We are now in a position to treat our main complexity result:

\begin{thm}\label{thm:sigmatwop}
  The provability problem of pomset logic is $\sigmatwop$-complete.
\end{thm}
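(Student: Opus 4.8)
The plan is to prove the two inclusions separately: membership in $\sigmatwop$, which is the easy direction given the preceding development, and $\sigmatwop$-hardness, which is the technical core and should be obtained by adapting the $\NP$-hardness reduction of Section~\ref{sec:np-hard} to the second level of the polynomial hierarchy.

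\textbf{Membership in $\sigmatwop$.} First I would recall that a sequent $\Gamma$ is provable in pomset logic iff there \emph{exists} an axiom linking $\linking$ such that $\relRB{\Gamma,\linking}$ is correct. The number of atom occurrences in $\Gamma$ is linear in $|\Gamma|$, so an axiom linking can be guessed in polynomial time (it is a polynomial-size certificate). Given the guessed $\linking$, building $\relRB{\Gamma,\linking}$ (or $\treeRB{\Gamma,\linking}$) takes polynomial time, and by Theorem~\ref{thm:pomset-conp} the correctness of this prenet is a $\coNP$ predicate, hence decidable by a single call to an $\NP$ oracle (asking whether an \ae-cycle exists, and negating the answer). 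Thus provability is decidable by an $\NP$ algorithm with an $\NP$ oracle, i.e.\ it is in $\NP^\NP = \sigmatwop$, using the oracle-machine definition recalled in Section~\ref{sec:complexity-prelim}.

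\textbf{$\sigmatwop$-hardness.} I would reduce from \pitwocnfsat{}, or rather from its complement, which is $\sigmatwop$-complete: given universal variables $X$, existential variables $Y$, and a CNF $\varphi$ over $X \cup Y$, decide whether there is a partial assignment of $X$ that \emph{cannot} be extended to a satisfying assignment of $X \cup Y$. The idea is that the $\NP$-hardness construction of Section~\ref{sec:np-hard} already turns a CNF into an RB-digraph whose \ae-cycles correspond to satisfying assignments, and then the proofification of Section~\ref{sec:coNP} turns that RB-digraph into a balanced pomset logic sequent. To get to the second level, I would instead produce a sequent with a genuine choice of axiom linking: the ``outer'' nondeterminism of choosing a linking should encode the universally quantified assignment of $X$, while the ``inner'' $\coNP$ correctness check (absence of \ae-cycles) should encode the non-existence of an extending satisfying assignment of $Y$. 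Concretely, for each universal variable $x$ I would introduce a gadget --- a balanced subformula with two atom occurrences of the same variable, say pairing $x_{\mathrm{t}}$ with $x_{\mathrm{f}}$ --- whose two possible linkings correspond to setting $x$ to $\true$ or $\false$, and arrange the surrounding graph structure (in the spirit of $\graph{G}_\rmcl$ and $\graph{G}_\rmvar$ and the superposition of Lemma~\ref{lem:superposition}) so that, for a fixed linking, the RB-prenet contains a chordless \ae-cycle iff the corresponding partial assignment of $X$ \emph{can} be extended to a satisfying assignment of all clauses. Then the sequent is provable in pomset logic iff \emph{some} linking yields a correct (\ae-cycle-free) prenet, iff \emph{some} assignment of $X$ admits no satisfying extension --- which is exactly the complement of \pitwocnfsat{}.

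\textbf{Main obstacle.} The delicate point is the interaction between the linking gadgets and the cycle-detection machinery: I must ensure (i) that the only ``legal'' axiom linkings are exactly those of the form prescribed by the $x$-gadgets, so the outer nondeterminism is genuinely a quantifier over assignments of $X$ and nothing more --- this requires that any ``mixed'' linking immediately creates a short chordless \ae-cycle, making the prenet incorrect and hence irrelevant; and (ii) that once a gadget linking is fixed, the \ae-cycle analysis faithfully reproduces the reduction of Theorem~\ref{thm:alt-cycle-np} with the $X$-literals ``frozen'' to their chosen truth values and only the $Y$-literals still free. Getting both invariants simultaneously, while keeping the construction polynomial-time computable and the correctness proof manageable, is where most of the work lies; the chordlessness bookkeeping (as in the proof of Theorem~\ref{thm:proofification}) will need to be redone carefully in the presence of the new gadgets. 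The remaining verifications --- polynomial-time computability of the reduction, and the two directions of the ``provable iff complement-of-\pitwocnfsat{}'' equivalence --- should then follow by combining Lemmas~\ref{lem:gcl}, \ref{lem:gvar} and~\ref{lem:superposition} with the proofification correspondence essentially as before.
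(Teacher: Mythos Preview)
Your membership argument is correct and matches Proposition~\ref{prop:pomset-sigmatwop} essentially verbatim.

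For hardness, your high-level plan is right --- make the sequent non-balanced so that the choice of axiom linking encodes the universal-variable assignment, and let correctness of the resulting prenet encode non-extendability to a satisfying assignment --- and this is exactly the paper's strategy. But the paper's execution differs from what you sketch in a way that dissolves your anticipated obstacle~(i). Rather than designing per-variable gadgets and then worrying about ruling out ``illegal'' linkings via short cycles, the paper factors the reduction through an intermediate combinatorial problem on \emph{paired RB-digraphs}: RB-digraphs equipped with a set of disjoint pairs of non-matching edges sharing a source, where a \emph{switching} keeps exactly one edge from each pair. Lemma~\ref{lem:switching-hard} shows it is $\pitwop$-hard to decide whether every switching contains an \ae-cycle; this is obtained by pairing, in $\graph{G}_\rmvar$, the two outgoing edges at each branch point corresponding to a universal variable, so that a switching forces the choice of that variable. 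The second step (Theorem~\ref{thm:pomset-hard}) extends proofification so that axiom linkings and switchings are in \emph{bijection}: for each edge pair $p=\{(u,v),(u,w)\}$ one uses a single atom $b_p$ with $a_{v,u}=a_{w,u}=b_p$ and the formula $D'_p=\vls[<a_{u,p}\lneg;b_p\lneg>;b_p\lneg]$. The second occurrence of $b_p\lneg$ has no incident non-matching edge, so whichever of $a_{v,u},a_{w,u}$ it is linked to is rendered inert, exactly deleting one edge of the pair. Thus every linking is automatically ``legal'' and corresponds to a unique switching; your obstacle~(i) simply never arises, and obstacle~(ii) reduces to re-running the proof of Theorem~\ref{thm:proofification} on the induced subgraph obtained by discarding the inert vertices. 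Your direct per-variable-gadget route could presumably be made to work, but it would require the extra bookkeeping you anticipate; the paired-digraph abstraction buys a cleaner separation between the CNF-to-graph and graph-to-formula halves of the reduction.
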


As
in the case of correctness, we start with the membership part whose proof is
easier than the hardness part.

\begin{prop}
  \label{prop:pomset-sigmatwop}
  Pomset logic provability is in $\sigmatwop$.
\end{prop}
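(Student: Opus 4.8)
The plan is to show that provability in pomset logic can be decided by an $\NP$ algorithm with access to an $\NP$ oracle, which by the definition $\sigmatwop = \NP^{\NP}$ (recalled in \Cref{sec:complexity-prelim}) suffices. The algorithm is: given a sequent $\Gamma$, guess an axiom linking $\linking$ for $\Gamma$, and then check that $\relRB{\Gamma,\linking}$ (or equivalently $\treeRB{\Gamma,\linking}$) is correct. By \Cref{thm:pomset-conp}, correctness is a $\coNP$ problem, so this check can be performed by a single call to a $\coNP$ oracle; since $\coNP$ oracles and $\NP$ oracles give the same power (one can query the complement), this is legitimate for membership in $\NP^\NP = \sigmatwop$.

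The key steps, in order, are: first, observe that a linking $\linking$ on the atom occurrences of $\Gamma$ is a polynomial-size object (an involution pairing each positive atom occurrence with a negative one of the dual variable), so it can be guessed nondeterministically. Second, by the definition of provability in pomset logic, $\Gamma$ is provable if and only if there \emph{exists} such a $\linking$ with $\relRB{\Gamma,\linking}$ correct; this is exactly the $\exists$-quantifier over linkings that the outer $\NP$ machine provides. Third, invoke \Cref{thm:pomset-conp}: deciding correctness of $\relRB{\Gamma,\linking}$ is in $\coNP$, so it can be delegated to the oracle. Putting these together yields an $\NP^{\coNP} = \NP^{\NP} = \sigmatwop$ decision procedure.

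One small point to address carefully is that the guessed linking must be a \emph{valid} linking for $\Gamma$ in the first place (i.e., it pairs atom occurrences correctly with respect to duality), but this is a polynomial-time checkable syntactic condition, so it can be verified directly by the $\NP$ machine before consulting the oracle. Another point worth a sentence: if $\Gamma$ contains a propositional variable occurring only once (or with unbalanced polarities), there is no valid linking at all, and the algorithm simply rejects --- this is again a trivial check. There is no real obstacle here; the membership direction is genuinely the easy half, and the whole argument is essentially a one-paragraph application of \Cref{thm:pomset-conp} together with the oracle-machine characterization of $\sigmatwop$. The hard part of \Cref{thm:sigmatwop} --- which this proposal does \emph{not} cover --- is the matching $\sigmatwop$-hardness lower bound, presumably obtained in the following subsection by adapting the \cnfsat{}-based reduction of \Cref{thm:alt-cycle-np} into a reduction from $\pitwocnfsat{}$ (or rather its complement) along the lines foreshadowed by the footnotes in \Cref{sec:np-hard}.
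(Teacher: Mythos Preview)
Your proposal is correct and follows essentially the same approach as the paper's proof: guess a polynomial-size axiom linking and use a single oracle call to decide correctness, invoking \Cref{thm:pomset-conp} for the $\coNP$ bound and the characterization $\sigmatwop = \NP^{\NP}$. The paper phrases the oracle call as querying an $\NP$ oracle for \emph{incorrectness} and accepting on a negative answer, which is exactly your $\NP^{\coNP}=\NP^{\NP}$ observation spelled out differently.
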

\begin{proof}
  The size of any pre-proof (i.e.\ axiom linking) is bounded by a polynomial in
  the size of its conclusion, and the correctness criterion is in $\coNP$.
  Therefore, there is a $\sigmatwop = \NP^\NP$ algorithm that consists in first
  guessing non-deterministically a pre-proof whose conclusion is the input
  formula, then calling a $\NP$ oracle for incorrectness on this guess, and
  finally accepting whenever the oracle returns $\false$.
\end{proof}

To show the $\sigmatwop$-hardness of provability, we proceed analogously to the
$\coNP$-hardness proof for correctness, using a sequence of reductions
\[ \text{Boolean formula problem} \longrightarrow \text{graph-theoretic
    problem} \longrightarrow \text{pomset logic problem} \]
Here, this pattern is instantiated with the $\pitwop$-complete \pitwocnfsat{}
problem (cf.\ \Cref{sec:complexity-prelim}) to the left and \emph{unprovability}
in pomset logic to the right. For the auxiliary step in the middle, we use a
problem formulated using \enquote{switchings} of \enquote{paired graphs}, as in
the standard Danos--Regnier correctness criterion for $\MLLm$~\cite{DR,Mix}.

\begin{defi}
  \label{def:paired-digraph}
  A \emph{paired digraph} is a directed graph $\graph{G}$ equipped with a set
  $\edgepairs$ of unordered pairs of edges, such that the pairs are disjoint (if
  $p,p' \in \edgepairs$, then $p \cap p' = \varnothing$) and paired edges have
  the same source (if $\{(u_1, v_1), (u_2, v_2)\} \in \edgepairs$, then $u_1 =
  u_2$). An edge is \emph{unpaired} if it is not an element of any pair in
  $\edgepairs$. A \emph{switching}\footnote{The literature sometimes uses
    \enquote{switching} for a choice function on the set of edge pairs, rather
    than the digraph defined by this choice.} of $(\graph{G},\edgepairs)$ is a
  spanning subgraph of $\graph{G}$ that contains all unpaired edges and exactly
  one edge in each pair.

  A \emph{paired RB-digraph} is a tuple $(\vG,\rG,\bG,\edgepairs)$ such that
  $(\vG,\rG,\bG)$ is a RB-digraph and $(\vG,\rG,\edgepairs)$ is a paired
  digraph. Consistently with the above, we define a \emph{switching} of
  $(\vG,\rG,\bG,\edgepairs)$ to be an RB-digraph of the form
  $(\vG,R',\bG,\edgepairs)$ where $R' \subseteq \rG$ contains all unpaired edges
  from $\rG$ and exactly one edge in each pair.
\end{defi}

The reader familiar with proof nets might find it strange that we are mixing
perfect matchings and paired graphs together: the former already play the role
of expressing the correctness criterion, so what are the latter good for? The
answer is that switchings will morally correspond to choices of axiom linkings:
the possible proof nets whose conclusion is a given arbitrary (not necessarily
balanced) formula only differ by which pairs of dual atoms are joined by axiom
links. Without further ado, let us present our reductions.

\begin{figure}
  \begin{center}
  \begin{minipage}[c]{0.435\textwidth}
  \begin{tikzpicture}
    \node (xpos) at (0,0) {$x$};
    \node (xneg) at (0,-1) {$\lnot x$};
    \node (ypos) at (0,-2) {$y$};
    \node (yneg) at (0,-3) {$\lnot y$};
    \node (zpos) at (0,-4) {$z$};
    \node (zneg) at (0,-5) {$\lnot z$};

    \node (cl1) at (1.5,-6) {$x \lor y \lor z$};
    \node (cl2) at (3,-6) {$\lnot x \lor y$};
    \node (cl3) at (4.5,-6) {$\lnot y \lor \lnot z$};

    \node[bigvertex] (s) at (0.2,-5.8) {$s$};
    \node[bigvertex] (t) at (6,0.5) {$t$};

    \node[vertex] (xpos1) at (1.5,0) {};
    \node[vertex] (xneg2) at (3,-1) {};
    \node[vertex] (ypos1) at (1.5,-2) {};
    \node[vertex] (ypos2) at (3,-2) {};
    \node[vertex] (yneg3) at (4.5,-3) {};
    \node[vertex] (zpos1) at (1.5,-4) {};
    \node[vertex] (zneg3) at (4.5,-5) {};

    \draw[purple, ultra thick, -{Latex}] (t) -- (xpos1);
    \draw[dashed, -{Latex}] (t) -- (xneg2);
    \draw[purple, ultra thick, -{Latex}] (xpos1) -- (ypos1);
    \draw[dashed, -{Latex}] (xneg2) -- (ypos1);
    \draw[dashed, -{Latex}] (xpos1) -- (yneg3);
    \draw[-{Latex}] (xneg2) -- (yneg3);
    \draw[purple, ultra thick, -{Latex}] (ypos1) -- (ypos2);
    \draw[-{Latex}] (ypos2) -- (zpos1);
    \draw[-{Latex}] (ypos2) -- (zneg3);
    \draw[-{Latex}] (yneg3) -- (zpos1);
    \draw[-{Latex}] (yneg3) -- (zneg3);
    \draw[-{Latex}] (zpos1) -- (s);
    \draw[-{Latex}] (zneg3) -- (s);
  \end{tikzpicture}
  \end{minipage}\hfill
  \begin{minipage}[c]{0.565\textwidth}
  \caption{The paired $\graph{G}_\rmvar$ built from $\forall x\; \forall y\; \exists z\; (x \lor y \lor z) \land (\lnot x \lor y) \land
  (\lnot y \lor \lnot z)$ in the proof of
    Lemma~\ref{lem:switching-hard} (compare with \Cref{fig:gvar}). The dashed edges
    represent the choice of a switching in which these edges are erased.
    The specific choice made here gives us three bits of information that can be
    interpreted as $x = \false$, $(x = \true) \Rightarrow (y = \true)$ and finally $(x =
    \false) \Rightarrow (y = \false)$ -- recall that visiting a literal corresponds
    to setting it to $\false$ -- which entails $x = y = \false$. The
    thick colored path materializes this assignment of the $\forall$-variables,
    and it is a prefix of all paths from $t$ to $s$ in this switching.}
  \label{fig:gvar-forall}
  \end{minipage}
  \end{center}
\end{figure}
\begin{lem}
  \label{lem:switching-hard}
  It is $\pitwop$-hard to decide whether \emph{every} switching of a paired
  RB-digraph contains \emph{at least one} \ae-cycle.
\end{lem}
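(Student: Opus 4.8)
The plan is to reduce from \pitwocnfsat{}, mirroring the structure of the $\coNP$-hardness argument for correctness (Theorem~\ref{thm:alt-cycle-np}) but now with a quantifier alternation. Recall (Remark~\ref{rem:literals-both-polarities}) that we may assume the input instance $(X, Y, \Phi)$ has every variable occurring with both polarities. First I would split the variable set into the universal part $X = \{x_1,\dots,x_n\}$ and the existential part $Y = \{y_1,\dots,y_m\}$. The construction of $\graph{G}_\rmcl$ from Lemma~\ref{lem:gcl} is unchanged: paths from $s$ to $t$ still correspond to choices of one literal per clause. The construction of $\graph{G}_\rmvar$ from Lemma~\ref{lem:gvar} is also reused, but now I would make the edge pairs of the resulting paired digraph encode \emph{only} the choices for the existential variables $Y$: for each $y_k$, the two edges that start the traversal of either all occurrences of $y_k$ or all occurrences of $\lnot y_k$ form one pair in $\edgepairs$, while the analogous \enquote{branching} edges for the universal variables $x_i$ are left \emph{unpaired} and all retained. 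The effect is that, in a given switching, the sub-digraph coming from $\graph{G}_\rmvar$ still admits, for each assignment of the $x_i$, exactly one $t$-to-$s$ path traversing the false literal occurrences \emph{for that assignment together with the existential choice fixed by the switching}; paths corresponding to other values of the universal variables remain present because their branching edges were not removed.

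Next I would feed $\graph{G}_\rmcl$ and this paired version of $\graph{G}_\rmvar$ into the superposition construction of Lemma~\ref{lem:superposition}, with $V = V_\rmocc$ and distinguished vertices $s,t$. Here a small amount of care is needed: Lemma~\ref{lem:superposition} as stated produces an RB-digraph from two ordinary digraphs, so I would observe that the construction commutes with switchings in the obvious way --- the non-matching edges $\redges'_2$ it creates are in bijection with the edges of $\graph{G}_\rmvar$, so transporting the edge pairs $\edgepairs$ of $\graph{G}_\rmvar$ through this bijection turns the output into a paired RB-digraph $\graph{H}$, and a switching of $\graph{H}$ is exactly the RB-digraph obtained by superimposing $\graph{G}_\rmcl$ with a switching of the paired $\graph{G}_\rmvar$. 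Since $\graph{G}_\rmcl$ is acyclic and each switching of $\graph{G}_\rmvar$ is a spanning subgraph of an acyclic digraph, hence acyclic, and the source/sink conditions on $s$ and $t$ are inherited, Lemma~\ref{lem:superposition} applies to every switching: the \ae-cycles of a given switching of $\graph{H}$ are in bijection with pairs $(P_\rmcl, P_\rmvar)$ of vertex-disjoint (except at $s,t$) paths, $P_\rmcl$ from $s$ to $t$ in $\graph{G}_\rmcl$ and $P_\rmvar$ from $t$ to $s$ in that switching of $\graph{G}_\rmvar$.

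Then I would check the logical equivalence. Fix a switching of $\graph{H}$; it is determined by a choice $\chi_Y : Y \to \{\true,\false\}$ of the existential variables. A switching contains an \ae-cycle iff there is an assignment $\chi_X : X \to \{\true,\false\}$ of the universal variables such that the unique $\graph{G}_\rmvar$-path for $(\chi_X,\chi_Y)$ (visiting all false literal occurrences under this combined assignment) is vertex-disjoint from some $s$-$t$ path of $\graph{G}_\rmcl$; and, exactly as in the proof of Theorem~\ref{thm:alt-cycle-np}, such disjointness is possible iff the combined assignment $(\chi_X, \chi_Y)$ satisfies every clause. Hence: \emph{every} switching of $\graph{H}$ contains at least one \ae-cycle $\iff$ for every $\chi_Y$ there exists $\chi_X$ with $(\chi_X,\chi_Y)$ satisfying $\Phi$ $\iff$ $\forall \chi_Y\,\exists \chi_X.\,\Phi$. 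This is the negation of \pitwocnfsat{} on the instance $(Y, X, \Phi)$ --- i.e.\ it is a $\sigmatwop$-statement --- but since $\pitwop$ is closed under complement of instances by swapping the two variable blocks (and \pitwocnfsat{} remains $\pitwop$-complete under the restriction of Remark~\ref{rem:literals-both-polarities}), the property \enquote{every switching contains an \ae-cycle} is $\pitwop$-hard, as claimed. The construction is plainly polynomial-time.

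The main obstacle I anticipate is bookkeeping rather than conceptual: getting the pairing of $\graph{G}_\rmvar$ exactly right so that (i) universal branching survives in every switching while (ii) existential branching is collapsed to a single choice, and then verifying that the bijection of Lemma~\ref{lem:superposition} between \ae-cycles and path-pairs is compatible with this pairing uniformly across all switchings. The quantifier order (which block is universal in the graph problem versus in the Boolean problem) is a second place where it is easy to make an off-by-one-negation mistake, so I would state the target equivalence with the quantifiers written out in full and trace them through each reduction step.
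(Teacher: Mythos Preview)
Your architecture is the paper's: pair branching edges in $\graph{G}_\rmvar$ so that a switching fixes one quantifier block, leave the other block free, then superimpose with $\graph{G}_\rmcl$ via Lemma~\ref{lem:superposition}. The difference is that the paper pairs the \emph{universal} block, not the existential one. The outer quantifier in the target statement is universal (``\emph{every} switching''), so the switching should encode the universal variables; accordingly the paper places $x_1,\dots,x_m$ first in the variable order of $\graph{G}_\rmvar$ and pairs their branching edges, obtaining directly: every switching has an \ae-cycle $\iff \forall\chi_X\,\exists\chi_Y.\,\Phi$, which is the \pitwocnfsat{} question on the given instance.

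Your inversion produces $\forall\chi_Y\,\exists\chi_X.\,\Phi$, and your patch is wrong as written: this is \emph{not} ``the negation of \pitwocnfsat{} on $(Y,X,\Phi)$'' and not ``a $\sigmatwop$-statement'' --- it \emph{is} \pitwocnfsat{} on the instance with universal block $Y$ and existential block $X$, a $\pitwop$ sentence. So your reduction is in fact valid (from \pitwocnfsat{} with the block labels swapped), but no negation or complement is involved; just drop that sentence and observe that the relabelled problem is still $\pitwop$-complete. On the bookkeeping you anticipated: ``one pair per $y_k$'' is not enough. For $k\ge 2$ there are two entry points into level $k$ (from the last positive and the last negative occurrence of the preceding variable) and both must be paired --- the paper uses $2m-1$ pairs for $m$ paired variables. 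Moreover the paired block must come \emph{first} in the variable order of $\graph{G}_\rmvar$: if an unpaired variable precedes a paired one, both entry points into the paired level are reachable from $t$, the switching can make inconsistent choices at the two entry points, and the claimed correspondence between switchings and assignments of the paired block collapses.
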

\begin{proof}
  We proceed by polynomial time reduction from \pitwocnfsat{} (cf.\
  \Cref{sec:complexity-prelim}). Consider an instance of this problem whose
  universal variables are $x_1,\ldots,x_m$ (we will not need to directly
  manipulate the CNF or the existential variables). We reuse the constructions
  $\graph{G}_\rmcl$ and $\graph{G}_\rmvar$ of Lemmas~\ref{lem:gcl}
  and~\ref{lem:gvar}, applying them to the CNF part of the input. Strictly
  speaking, the digraph $\graph{G}_\rmvar$ depends on the order of variables,
  and here we shall consider for convenience that the universal variables come
  before the existential ones, with the $i$-th variable being $x_i$ for $i \leq
  m$.

  The key idea is to distinguish a set $\edgepairs$ of disjoint edge pairs in
  $\graph{G}_\rmvar$ (see \Cref{fig:gvar-forall} for an example). Let $t$ and $s$
  be its distinguished source and sink vertices. For $1 \leq i \leq m$, we
  write:
  \begin{itemize}
  \item $u_i$ (resp.\ $v_i$) for the vertex that corresponds to the first
    (resp.\ last) occurrence of $x_i$;
  \item $u^\lnot_i$ (resp.\ $v^\lnot_i$) for the vertex that corresponds to the
    first (resp.\ last) occurrence of $\lnot x_i$.
  \end{itemize}
  (Here \enquote{first} and \enquote{last} mean the same thing as in the
  construction of $\graph{G}_\rmvar$ (Lemma~\ref{lem:gvar}): they are defined with
  respect to an arbitrary order on the set of clauses. To build
  $\graph{G}_\rmvar$, each variable must occur at least once positively and once
  negatively; this can be assumed without loss of generality for instances of
  \pitwocnfsat. It is possible that $u_i = v_i$, but for $i \neq j$, we have
  $u_i \neq v_j$.) We take
  \[ \edgepairs = \left\{\left\{(w, u_i),\, (w, u^\lnot_i)\right\}
      \;\middle|\;
      i \in \{1,\ldots,m\},\; w \in \left[
        \begin{array}{l l}
          \{s\}                    &\text{when}\ i = 1\\
          \{v_{i-1}, v^\lnot_{i-1}\}  &\text{otherwise}
        \end{array} \right. \right\} \]
  The reader may check that all edges that appear in $\edgepairs$ are indeed
  edges in $\graph{G}_\rmvar$. Furthermore, it follows from the definition that
  the pairs are disjoint and any two paired edges have the same source.
  Therefore, $(\graph{G}_\rmvar,\edgepairs)$ is a paired digraph in the sense of
  Definition~\ref{def:paired-digraph}.

  The point is that, as illustrated in \Cref{fig:gvar-forall}, for any switching
   $\graph{S}$ of $(\graph{G}_\rmvar,\edgepairs)$, exactly one of $v_m$
  and $v^\lnot_m$ is reachable from $s$, and the path is unique -- let us call
  it $\pi_{\graph{S}}$. Each switching thus induces an assignment
  $\{x_1,\ldots,x_m\} \to \{\true,\false\}$, which assigns $x_i$ to $\false$
  when $\pi_{\graph{S}}$ visits the vertices associated to $x_i$ (among which
  are $u_i$ and $v_i$), and to $\true$ otherwise (in which case the path
  necessarily visits $u^\lnot_i$ and $v^\lnot_i$); and the paths from $s$ to
  $t$ in the switching correspond to extending this assignment with values for
  the other (existential) variables. Moreover, this map from switchings to
  assignments of the universal variables is surjective (more precisely, each of
  the $2^m$ assignments is induced by exactly $2^{m-1}$ switchings among the
  total of $2^{2m-1}$ possible switchings).

  Let us consider next the correspondence given by Lemma~\ref{lem:gvar} between the
  paths from $t$ to $s$ and the assignments of all variables (both universal and
  existential). One can see that given such a path $\rho$ and a switching
  $\graph{S}$, the following are equivalent:
  \begin{itemize}
  \item the edges of $\rho$ exist in $\graph{S}$;
  \item $\pi_{\graph{S}}$ is a prefix of $\rho$;
  \item the assignment that corresponds to $\graph{S}$ (by
    the above discussion) is the restriction to the universal variables of the
    one that corresponds to $\rho$ (by Lemma~\ref{lem:gvar}).
  \end{itemize}
  From these observations, one can deduce, by a similar reasoning to the proof
  of Theorem~\ref{thm:alt-cycle-np}, that the given \pitwocnfsat{} instance is
  positive if and only if, for every switching $\graph{S}$ of
  $(\graph{G}_\rmvar,\edgepairs)$, there exists a path from $s$ to $t$ in
  $\graph{G}_\rmcl$ and a path from $t$ to $s$ in $\graph{S}$ that do not share
  any intermediate vertex. To conclude, it suffices to reduce this to the
  desired problem on paired RB-digraphs by a suitable adaptation of
  Lemma~\ref{lem:superposition}, whose precise formulation is left to the reader.
\end{proof}

\begin{thm}
  \label{thm:pomset-hard}
  Unprovability in pomset logic is $\pitwop$-hard.
\end{thm}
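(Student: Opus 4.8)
The plan is to reduce from the problem shown $\pitwop$-hard in Lemma~\ref{lem:switching-hard}: deciding whether \emph{every} switching of a paired RB-digraph contains at least one \ae-cycle. Given such a paired RB-digraph $(\vG,\rG,\bG,\edgepairs)$, I will construct in polynomial time a pomset logic formula (or sequent) $A$ such that $A$ is provable in pomset logic if and only if \emph{some} switching has \emph{no} \ae-cycle. Since unprovability is then the complement, showing that provability corresponds to the negation of the Lemma~\ref{lem:switching-hard} problem will give $\pitwop$-hardness of unprovability.

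The key idea is the remark made just before Lemma~\ref{lem:switching-hard}: switchings of a paired graph morally correspond to choices of axiom linkings. Concretely, I would start from the proofification construction of Definition~\ref{def:proofification}, which converts an RB-digraph into a balanced sequent whose unique linking is correct iff the RB-digraph has no \ae-cycle. The task is to extend this so that the \emph{pairs} in $\edgepairs$ become genuine choices of axiom links. For each pair $\{(w,u_1),(w,u_2)\} \in \edgepairs$, instead of encoding the edge $(w,u_i)$ by a fixed pair of dual atoms as in the balanced case, I would introduce a small gadget: a propositional variable that occurs once positively and \emph{twice} negatively (or vice versa), so that the unique formula forces a binary choice about which of the two negative occurrences the positive one links to, and this choice corresponds exactly to keeping edge $(w,u_1)$ versus edge $(w,u_2)$ in the switching. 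One must check that the only axiom linkings yielding a candidate proof net are those in bijection with switchings, i.e.\ that "crossed" or otherwise spurious linkings automatically create a chordless \ae-cycle (or fail to be a linking at all). The unpaired $R$-edges are handled exactly as in the balanced proofification.

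With such a gadget in place, the heart of the argument is a lemma analogous to Theorem~\ref{thm:proofification}: for the constructed formula $A$, the axiom linkings $\linking$ of $A$ are in bijection with the switchings $\graph{S}$ of $(\vG,\rG,\bG,\edgepairs)$, and $\relRB{A,\linking}$ is correct iff $\graph{S}$ has no \ae-cycle. The forward direction (a switching's \ae-cycle lifts to a chordless \ae-cycle in the prenet) and the backward direction (any chordless \ae-cycle in the prenet projects down, after contracting the gadget paths, to an \ae-cycle in the corresponding switching) both proceed by the same kind of case analysis on edge types ($R_1$ versus $R_2$ chords, the par/tensor/seq structure) already used in the proof of Theorem~\ref{thm:proofification}. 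Given this bijection, $A$ is provable iff there exists a linking making $\relRB{A,\linking}$ correct iff there exists a switching with no \ae-cycle iff the Lemma~\ref{lem:switching-hard} instance is \emph{negative}; hence unprovability of $A$ holds iff that instance is positive, establishing the reduction.

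I expect the main obstacle to be designing the pairing gadget so that it \emph{faithfully} realizes the switching choice without introducing unintended linkings or unintended \ae-cycles. In particular, one must ensure: (i) that "mismatched" linkings across the gadget do not accidentally yield correct nets (they must always contain a chordless \ae-cycle), (ii) that the gadget's own internal edges never create or destroy \ae-cycles beyond those already accounted for by the switching, and (iii) that the whole construction stays within the syntactic class of pomset logic formulas/sequents (built from $\ltens,\lpar,\lseq$). The cleanest route is likely to route each paired edge through a "which-branch" tensor with two alternative par-blocks, mirroring how Danos--Regnier switchings are usually encoded, then verify the three conditions by the local chord analysis of Theorem~\ref{thm:proofification}. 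Once that gadget is pinned down, the remaining steps are routine adaptations of Proposition~\ref{prop:proofification-ptime} (polynomial-time computability) and Theorem~\ref{thm:proofification} (the correctness/\ae-cycle correspondence), and combining with Proposition~\ref{prop:pomset-sigmatwop} for membership completes the proof of Theorem~\ref{thm:sigmatwop}.
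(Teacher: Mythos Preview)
Your overall strategy is right and matches the paper: reduce from the paired-RB-digraph problem of Lemma~\ref{lem:switching-hard} by extending proofification so that axiom linkings correspond bijectively to switchings, then invoke Theorem~\ref{thm:proofification} switching by switching.

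The gap is in the gadget. A variable that occurs once positively and twice negatively admits \emph{no} axiom linking at all, so that literal construction cannot work; and routing the pair through a ``which-branch tensor'' as you suggest would force you into exactly the crossed-linking analysis you flag as an obstacle. The paper's gadget sidesteps all of this. For each pair $p=\{(u,v),(u,w)\}$ it introduces a variable $b_p$ with \emph{two} positive occurrences (namely $a_{v,u}$ and $a_{w,u}$, placed in $C_v$ and $C_w$) and two negative occurrences, packaged as $D'_p = \vls[<a_{u,p}\lneg;b_p\lneg>;b_p\lneg]$. The key point you are missing is that the second $b_p\lneg$ (call it $c_{p\lpar}$) sits under nothing but $\lpar$ in the sequent structure, and therefore has \emph{no incident non-matching edge} in the cographic RB-prenet. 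Hence $c_{p\lpar}$---and whatever positive $b_p$-occurrence the linking matches it with---can never lie on any \ae-cycle. Deleting both (via Proposition~\ref{prop:pseudosub-induced}) leaves precisely the ordinary proofification $\proofif{\graph{S}_\linking}$ of the switching that discards the corresponding paired edge, and Theorem~\ref{thm:proofification} applies verbatim to that. There are no crossed linkings to rule out and no internal gadget cycles to analyse: the ``dead'' $\lpar$-leaf makes the two linkings per pair automatically equi-correct with the two switchings.

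One further preprocessing step you should add: the paper first reduces (by subdividing edges) to the case where no reverse edge $(v,u)$ or $(w,u)$ of a paired edge is present, so that paired edges are unidirectional and fit the $D_{(u,v)}$ pattern of the original proofification.
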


\begin{proof}
  We reduce the problem shown to be $\pitwop$-hard by Lemma~\ref{lem:switching-hard}
  to pomset logic unprovability by extending proofification to handle edge
  pairs. An example is given in \Cref{fig:proofification-paired}.

  Let $(\vG,\rG,\bG,\edgepairs)$ be a paired RB-digraph (we write $\gG =
  (\vG,\rG,\bG)$). We assume without loss of generality that for every edge pair
  $\{(u,v),(u,w)\} \in \edgepairs$, the opposite edges $(v,u)$ and $(w,u)$ are
  not present in the graph (otherwise, break $(v,u)$ up into a path of length
  three involving two new vertices, add the middle edge of this path to the
  matching, and similarly for $(w,u)$). Let us define the following flat
  sequent, which is \emph{not} balanced in general (the description deliberately
  mirrors that of $\proofif{\gG}$ in Definition~\ref{def:proofification}):
  \[ \Gamma = \sqnp{\vls(C_{u_1};C_{v_1}) ,\dots, \vls(C_{u_m};C_{v_m})
      ,D_{e_1},\dots,D_{e_k}, D'_{p_1}, \dots, D'_{p_l}} \]
  where (using again the notation of Definition~\ref{def:digraph-relations}):
  \begin{itemize}
  \item $\{e_1,\dots,e_k\}$ is the set of \emph{unpaired} edges,
    $\{(u_1,v_1),(v_1,u_1),\dots,(u_m,v_m),(v_m,u_m)\} = \bG$, and
    $\{p_1,\dots,p_l\} = \edgepairs$, all these enumerations being
    non-repeating;
  \item the atoms involved in $\Gamma$ are created as follows:
    \begin{itemize}
    \item for each $\{u,v\} \subseteq \vG$ such that $(u,v) \in \tredges[\gG]$
      -- according to the assumption w.l.o.g.\ we made above, both $(u,v)$ and
      $(v,u)$ are then unpaired -- we give the names $a_{u,v}$ and $a_{v,u}$ to
      a fresh pair of \emph{dual} atoms;
    \item for each edge $(u,v) \in \sredges[\gG]$ which is unpaired, we generate
      two fresh atoms $a_{u,v}$ and $a_{v,u}$ that are \emph{not} dual;
    \item for each pair $p = \{(u,v),(u,w)\} \in \edgepairs$, we create two
      fresh atoms $a_{u,p}, b_p$ and we define $a_{v,u} = a_{w,u} = b_p$
      (observe however that we do \emph{not} define $a_{u,v}$ nor $a_{u,w}$);
    \end{itemize}
  \item for $u \in \vG$, $C_u = \displaystyle \bigparr_{v} a_{u,v} \parr
    \bigparr_{p} a_{u,p}$ where $v$ (resp.\ $p$) ranges over the vertices
    (resp.\ pairs) such that $a_{u,v}$ (resp.\ $a_{u,v}$) is defined according
    to the above;
  \item for $(u,v) \in \sredges[\gG]$, if $(u,v)$ is unpaired, then $D_{(u,v)} =
    \vls<a_{u,v}\lneg;a_{v,u}\lneg>$;
  \item finally, for $p = \{(u,v),(u,w)\} \in \edgepairs$, we set $D'_p =
    \vls[<a_{u,p}\lneg;b_p\lneg>;b_p\lneg]$.
  \end{itemize}
\begin{figure}
  \centering
  \begin{tikzpicture}
    \node[bigvertex] (w) at (-2,1) {$w$};
    \node[bigvertex] (x) at (0,1) {$x$};
    \node[bigvertex] (y) at (-2,-1) {$y$};
    \node[bigvertex] (z) at (0,-1) {$z$};

    \draw[non matching edge] (w) -- (x);
    \draw[non matching edge, -{Latex}] (y) -- (x);
    \draw[non matching edge, -{Latex}, dashed] (y) -- (z);

    \draw [matching edge] (w) -- (y);
    \draw [matching edge] (x) -- (z);


    \node[bigvertex] (wx) at (3,3) {$a_{w,x}$};
    \node[bigvertex] (xw) at (9,3) {$a_{x,w}$};
    \node[bigvertex] (xy) at (8,2) {$a_{x,y}$};
    \node[bigvertex] (p) at (6,-1) {$b_p\lneg$};
    \node[bigvertex] (nope) at (7,-1) {$b_p\lneg$};
    \node[bigvertex] (yp') at (4.5,-2) {$a_{y,p}\lneg$};
    \node[bigvertex] (yp) at (3,-3) {$a_{y,p}$};
    \node[bigvertex] (zy) at (9,-3) {$a_{z,y}$};

    \draw[matching edge] (wx) -- (xw);
    \draw[matching edge] (yp) -- (yp');
    \draw[matching edge] (xy) -- (p);
    \draw[matching edge, dashed] (zy) -- (p);
    \draw[matching edge] (zy) -- (nope);
    \draw[matching edge, dashed] (xy) -- (nope);

    \draw [non matching edge] (wx) -- (yp);
    \draw [non matching edge] (xw) -- (zy);
    \draw [non matching edge] (xy) -- (zy);

    \draw[non matching edge, -{Latex}] (yp') -- (p);
  \end{tikzpicture}
  \caption{Example for the proof of Theorem~\ref{thm:pomset-hard}.\newline{} Left: a
    paired RB-digraph (compare \Cref{fig:proofification}); the two non-matching
    edges coming out of $y$ are paired, and the dashed edge represents a choice
    of switching in which it is deleted.\newline{} Right: the cographic
    RB-prenet corresponding to this choice, with $p = \{(y,x),(y,z)\}$ (the
    dashed blue/bold edges display the matching that would correspond to the
    other possible choice). Note that we have $a_{x,y} = a_{z,y} = b_p$.}
  \label{fig:proofification-paired}
\end{figure}

  This sequent $\Gamma$ that we just defined contains, for each pair $p =
  \{(u,v),(u,w)\} \in \edgepairs$, exactly two occurrences of $b_p$, designated
  as $a_{v,u}$ and $a_{w,u}$, and exactly two occurrences of $b_p\lneg$ in
  $D'_p$, that we will also name: $D'_p =
  \vls[<a_{u,p}\lneg;c_{p\lseq}>;c_{p\lpar}]$. So each pre-proof of $\Gamma$
  induces a bijection between $\{a_{v,u},a_{w,u}\}$ and
  $\{c_{p\lseq},c_{p\lpar}\}$. Conversely, such a choice of bijection for each
  edge pair uniquely determines a corresponding axiom linking.

  Let $\linking$ be a pre-proof of $\Gamma$. Let $s,t : \edgepairs\to\vG$ be
  defined by $\linking(c_{p\lpar}) = a_{t(p),s(p)}$ for all $p \in \edgepairs$.
  We define the RB-digraph $\graph{S}_\linking$ to be the switching where the
  edge $(s(p),t(p))$ has been deleted in each pair $p$. We can reformulate what
  we observed in the previous paragraph as the fact that $\linking \mapsto
  \graph{S}_\linking$ is a bijection from axiom linkings for $\Gamma$ to
  switchings of our paired RB-digraph.

  In the cographic RB-prenet $\relRB{\Gamma,\linking}$, the vertex for
  $c_{p\lpar}$ has no incident non-matching edge, so it cannot be involved in
  any æ-cycle. Since it is matched with $a_{t(p),s(p)}$, the latter also cannot
  occur in any æ-cycle. So the witnesses of incorrectness in
  $\relRB{\Gamma,\linking}$, i.e.\ its chordless æ-cycles, must be entirely
  contained in the induced sub-RB-digraph that excludes $c_{p\lpar}$ and
  $a_{t(p),s(p)}$ for all $p \in \edgepairs$. Thanks to the correspondence
  between pseudo-subformulas and induced subgraphs
  (Proposition~\ref{prop:pseudosub-induced}), one can write (an isomorphic copy of) this
  sub-RB-digraph as $\relRB{\Gamma',\linking'}$ where $\Gamma'$ is obtained from
  $\Gamma$ by substituting the aforementioned atom occurrences by $\lunit$, and
  $\linking'$ is a restriction of $\linking$; so $(\Gamma,\linking)$ is
  incorrect iff $(\Gamma',\linking')$ is.

  The key property, that one can check from the definitions, is that $\Gamma'$
  is balanced and equal up to atom renaming to the proofification
  $\proofif{\graph{S}_\linking}$ (note for instance that if $(u,v) \in p \in
  \edgepairs$ and $(u,v) \in \redges[\graph{S}_\linking]$ then the formula
  $D'_p$ in $\Gamma$ becomes $D_{(u,v)}$ in $\proofif{\graph{S}_\linking}$ after
  substituting $c_{p\lpar}$ by $\lunit$). Thus, by Theorem~\ref{thm:proofification},
  $(\Gamma,\linking)$ is incorrect if and only if $\graph{S}_\linking$ contains
  an æ-cycle. By definition of provability, and using the surjectivity of
  $\linking\mapsto\graph{S}_\linking$, it follows that $\Gamma$ is unprovable if
  and only if every switching of the initial RB-graph admits an æ-cycle.
  According to Lemma~\ref{lem:switching-hard}, it is $\NP$-hard to test the latter
  condition given the input $\Gamma$. Since $\Gamma$ can be computed from our
  original paired RB-digraph in polynomial time, this concludes the proof.
\end{proof}

Proposition~\ref{prop:pomset-sigmatwop} and Theorem~\ref{thm:pomset-hard} together are
what it means, by definition, for provability in pomset logic to be
$\sigmatwop$-complete. Thus, we have established Theorem~\ref{thm:sigmatwop}.


\section{Back to the Sequent Calculus}
\label{sec:sequents}

In this section we come back to the problem of finding a sequent calculus for $\BV$ and pomset logic, as this problem was the starting point for much of the research done on these two logics. The current state of the art on this topic is:
\begin{enumerate}
\item Retoré~\cite{retorePhD} presented a sequent calculus for pomset logic,
  for which he could show soundness, but not
  completeness and not cut elimination.\footnote{In fact, in~\cite{retore:pomset} and~\cite{retore:2021} two variations of that sequent calculus are proposed, which both have the same problem.}
\item This difficulty motivated Gugliemi~\cite{SIS} to develop
  system $\BV$, which has a cut elimination proof in deep inference,
  but not in the sequent calculus. Then Tiu~\cite{SIS-II} has
  shown that \enquote{deepness} is necessary for $\BV$ and therefore there
  cannot be a sequent calculus for $\BV$ in which all rules are \enquote{shallow} (see also Remark~\ref{rem:shallow}).
\item\label{item:sc-tiu} However, the formulas used by Tiu~\cite{SIS-II} to defy the sequent calculus can be proved in the cut-free version of
  Retoré's~\cite{retorePhD} sequent calculus, using the \emph{entropy} rule.
\item In Remark~\ref{rem:shallow} we also observed that our complexity results pose a much harder obstacle to a sequent calculus for pomset logic than the mere need of \enquote{deepness}.
\item\label{item:sc-slavnov} But recently, Slavnov~\cite{slavnov:19} presented a cut-free sequent
  calculus that is sound and complete for pomset logic.
\end{enumerate}
This is, of course, confusing, as it seems contradictory. Does (\ref{item:sc-tiu})~mean that
Tiu's~\cite{SIS-II} result is wrong? Does (\ref{item:sc-slavnov})~mean that
Retoré~\cite{retorePhD} just did not try hard enough and the problem
is solved? Does it mean that our complexity result about the $\sigmatwop$-completeness of pomset logic is wrong or that $\NP=\coNP$?

The answer to all these questions is, of course, \enquote{No}.  And the purpose of
this section is to clarify the confusion and work out the subtleties of
possible sequent calculi for pomset logic and $\BV$. More precisely we
present the following results:
\begin{itemize}
\item We show that Retoré's sequent calculus with cut (the variant
  presented in \cite{retore:2021}) is equivalent to $\SBV$. It is
  therefore a sequent calculus for $\BV$ and not for pomset logic.
\item We present a formula\footnote{This formula has already been
  presented in~\cite{retore:2021}.} that is provable in $\BV$ but not in the
  cut-free version of Retoré's~\cite{retorePhD} sequent
  calculus. Therefore, that sequent calculus does not admit cut
  elimination.
\item We refine Tiu's~\cite{SIS-II} argument about the need for
  \enquote{deepness} in $\BV$ such that Retoré's \emph{entropy} rule is no
  longer enough to prove the formulas used in the argument.
\item We use our results from the previous section to make a
  complexity-theoretic argument to show that a \enquote{standard} sequent
  calculus for pomset logic is impossible.
\item Finally, we discuss Slavnov's~\cite{slavnov:19} sequent calculus
  and exhibit how he circumvents this complexity-theoretic obstacle.
\end{itemize}

\newcommand{\sequentorder}{\preccurlyeq}
\subsection{Retoré's Sequent Calculus with Cuts is Equivalent to \texorpdfstring{$\BV$}{BV}}
\label{sec:sequent-retore}

Recall that a pomset logic sequent can be seen as a $\ltens$-free generalized formula over a set of formula occurrences (see Proposition~\ref{prop:sequent-gen-formula}). Hence, we can define the graph of a sequent $\Gamma$ as in Definition~\ref{def:tograph}. Let $\gG_\Gamma=\tuple{\vertices_\Gamma,\redges_\Gamma}$ and $\gG_{\Gamma'}=\tuple{\vertices_{\Gamma'},\redges_{\Gamma'}}$ be the graphs of sequents $\Gamma$ and $\Gamma'$, respectively. Then both are series-parallel orders (see Proposition~\ref{prop:sp-orders}), and we define
\begin{equation}
  \label{eq:sequent-order}
  \Gamma'\sequentorder\Gamma\qquiff \vertices_{\Gamma'}=\vertices_\Gamma \quand \redges_{\Gamma'}\subseteq\redges_\Gamma
\end{equation}
i.e., both sequents contain the same formula occurrences, and the order induced by $\Gamma'$ is contained in the order induced by $\Gamma$. We are now ready to discuss Retoré's sequent calculus, shown in Figure~\ref{fig:retore-sequent}.\footnote{In this calculus, which is taken from~\cite{retore:2021} sequents define SP-orders. In~\cite{retorePhD}, sequent can be arbitrary orders, and in~\cite{retore:pomset}, the entropy rule is missing.}

The $\lpar$-introduction rule and the $\lseq$-introduction rule simply
express the correspondence between the structural/sequent level
connectives and the logical/formula level connectives. Note that the
$\ltens$-introduction rule and the cut rule can only be applied in
flat contexts (but $\Gamma$ and $\Delta$ can be non-flat
sequents). The standard mix-rule is derivable using the dimix- and
entropy-rules.

\begin{figure}
\[ \vlinf{\text{axiom}}{}{\sqnp{a,a\lneg}}{} \qquad
  \vlinf{\text{dimix}}{}{\sqns{\Gamma;\Delta}}{\Gamma \quad \Delta} \qquad
  \vlinf{\text{entropy}}{{\footnotesize(\text{for}\ \Gamma' \preccurlyeq
      \Gamma)}}{\Gamma'}{\Gamma} \qquad
  \vlinf{\text{cut}}{}{\sqnp{\Gamma,\Delta}}{\sqnp{\Gamma,A} \quad
    \sqnp{A\lneg,\Delta}}\]
\[ \vlinf{\lpar\text{-intro}}{}{\Gamma\{A\lpar B\}}{\Gamma\{\sqnp{A,B}\}}
  \qquad
  \vlinf{\lseq\text{-intro}}{}{\Gamma\{A\lseq B\}}{\Gamma\{\sqns{A;B}\}}
  \qquad
  \vlinf{\ltens\text{-intro}}{}{\sqnp{\Gamma,\Delta,A\ltens
      B}}{\sqnp{\Gamma,A}\quad\sqnp{\Delta,B}}\] 
  \caption{Retoré's sequent calculus, as presented in~\cite[Figure~1]{retore:2021}}
  \label{fig:retore-sequent}
\end{figure}

\begin{figure}
  \centering
  \begin{prooftree}
    \AxiomC{$\pi_1$ (below)}
    \UnaryInfC{$\sqnp{a \ltens (c \lseq b\lneg), a\lneg \lseq f,
        {\color{red} c\lneg \lseq (e\lneg \lseq f\lneg)}, e \lseq b}$}
    \AxiomC{$\pi_2$ (below)}
    \UnaryInfC{$\sqnp{{\color{red}c\lseq (e \lseq f)}, c\lneg \lseq d\lneg, d
        \ltens (e\lneg \lseq f\lneg)}$}
    \LeftLabel{\footnotesize cut}
    \BinaryInfC{$\sqnp{a \ltens (c \lseq b\lneg), a\lneg \lseq f, c\lneg \lseq
        d\lneg, d \ltens (e\lneg \lseq f\lneg), e \lseq b}$}
  \end{prooftree}
  \vspace{5mm}
  \begin{center}
    where $\pi_1 =$\hspace{-1cm}
    \AxiomC{}
    \LeftLabel{\footnotesize axiom}
    \UnaryInfC{$\sqnp{a,a\lneg}$}
    \AxiomC{\footnotesize (axioms + dimix)}
    \doubleLine
    \UnaryInfC{$\sqns{\sqnp{c,c\lneg};\sqnp{e,e\lneg};\sqnp{b,b\lneg}}$}
    \RightLabel{\footnotesize entropy}
    \UnaryInfC{$\sqnp{\sqns{c;b\lneg},\sqns{c\lneg;e\lneg};\sqns{e;b}}$}
    \RightLabel{\footnotesize $\lseq$-intro}
    \UnaryInfC{$\sqnp{c \lseq b\lneg,\sqns{c\lneg;e\lneg};\sqns{e;b}}$}
    \LeftLabel{\footnotesize $\ltens$-intro}
    \BinaryInfC{$\sqnp{a \ltens (c \lseq b\lneg), a\lneg,
        \sqns{c\lneg;e\lneg}, \sqns{e;b}}$}
    \AxiomC{}
    \RightLabel{\footnotesize axiom}
    \UnaryInfC{$\sqnp{f,f\lneg}$}
    \LeftLabel{\footnotesize dimix}
    \BinaryInfC{$\sqns{\sqnp{a \ltens (c \lseq b\lneg), a\lneg,
        \sqns{c\lneg;e\lneg}, \sqns{e;b}};\sqnp{f,f\lneg}}$}
    \LeftLabel{\footnotesize entropy}
    \UnaryInfC{$\sqnp{a \ltens (c \lseq b\lneg), \sqns{a\lneg;f},
        \sqns{c\lneg;e\lneg;f\lneg}, \sqns{e;b}}$}
    \doubleLine
    \LeftLabel{\footnotesize $\lseq$-intro $\times 4$}
    \UnaryInfC{$\sqnp{a \ltens (c \lseq b\lneg), a\lneg \lseq f,
        c\lneg \lseq (e\lneg \lseq f\lneg), e \lseq b}$}
    \DisplayProof
  \end{center}
  \vspace{5mm}
  \begin{center}
    and $\pi_2 =$\hspace{-1cm}
    \AxiomC{}
    \LeftLabel{\footnotesize axiom}
    \UnaryInfC{$\sqnp{c,c\lneg}$}
    \AxiomC{}
    \LeftLabel{\footnotesize axiom}
    \UnaryInfC{$\sqnp{e \lseq f, e\lneg \lseq f\lneg}$}
    \AxiomC{}
    \RightLabel{\footnotesize axiom}
    \UnaryInfC{$\sqnp{d,d\lneg}$}
    \RightLabel{\footnotesize $\ltens$-intro}  
    \BinaryInfC{$\sqnp{e \lseq f, d\lneg, d \ltens (e\lneg \lseq f\lneg)}$}
    \LeftLabel{\footnotesize dimix}
    \BinaryInfC{$\sqns{\sqnp{c,c\lneg};\sqnp{e \lseq f, d\lneg, d \ltens (e\lneg
          \lseq f\lneg)}}$}
    \LeftLabel{\footnotesize entropy}
    \UnaryInfC{$\sqnp{\sqns{c; e \lseq f},\sqns{c\lneg;d\lneg}, d \ltens (e\lneg
        \lseq f\lneg)}$}
    \doubleLine
    \LeftLabel{\footnotesize $\lseq$-intro $\times 2$}
    \UnaryInfC{$\sqnp{c\lseq (e \lseq f), c\lneg \lseq d\lneg, d
        \ltens (e\lneg \lseq f\lneg)}$}
    \DisplayProof
  \end{center}
  
  \caption{Proof of the formula from \Cref{fig:exaBV} in Retoré's sequent
    calculus with cuts.}
  \label{fig:sequent-cuts}
\end{figure}

Before embarking on the equivalence of this sequent calculus with
$\BV$, let us illustrate the added power of cuts. For this purpose, we
recall an example sequent considered by Retoré and the second
author:
\begin{equation}
  \label{eq:seq-ex}
  \sqnp{a \ltens (c \lseq b\lneg), a\lneg \lseq f, c\lneg \lseq
      d\lneg, d \ltens (e\lneg \lseq f\lneg), e \lseq b}
\end{equation}
The corresponding formula of the sequent above has been shown to be
provable in $\BV$ in \Cref{fig:exaBV}, and its cographic
RB-prenet may be found in \cite[Figure~10]{retore:2021}.

\begin{prop}[{Retoré \& Straßburger~\cite[Proposition~5]{retore:2021}}]
  The sequent in~\eqref{eq:seq-ex}  is not provable in Retoré's sequent calculus without cuts.
\end{prop}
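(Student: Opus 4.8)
The plan is to exploit the rigidity of cut-free proof search in Retoré's sequent calculus, which is a \enquote{shallow} system in the sense of Remark~\ref{rem:shallow}: every inference rule either breaks a top-level connective of a formula occurrence (the $\lpar$-, $\lseq$-, $\ltens$-introductions), merges two sequents at the root (dimix, $\ltens$-intro), or only modifies the series-parallel order (entropy). Crucially, in the cut-free calculus, any proof of a sequent $\Gamma$ is \enquote{subformula-bounded}: every formula occurrence appearing anywhere in the proof is a subformula occurrence of some formula in $\Gamma$, because no rule introduces a formula that is not already a subformula of its conclusion. So I would first argue that, reading proof search bottom-up, the analysis is entirely driven by the shape of the sequent in~\eqref{eq:seq-ex}.

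First I would set up the case analysis on the \emph{last} rule of a hypothetical cut-free proof of~\eqref{eq:seq-ex}. Since the sequent is flat (no $\sqns{\cdot;\cdot}$ at the top level) and since it is not an axiom, the last rule cannot be an axiom, and it cannot be dimix (which produces a top-level $\sqns{\cdot;\cdot}$). It could be $\ltens$-intro applied to either $a \ltens (c \lseq b\lneg)$ or $d \ltens (e\lneg \lseq f\lneg)$, or $\lseq$-intro applied to one of $a\lneg \lseq f$, $c\lneg \lseq d\lneg$, $e \lseq b$, or entropy. The key observation is that this sequent is \emph{balanced}, so by the results of Section~\ref{sec:prelim} it determines a unique linking, and a cut-free proof would yield a correct proof net; but more to the point, each $\ltens$-intro splits the atom set into two disjoint parts that must each be independently provable, and I would track which dual pairs are separated. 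The essential combinatorial fact to isolate is that any such split forces two dual atoms — say, the pair whose axiom link \enquote{crosses} the tensor — into different premises, where they can no longer be connected; analogously for $\lseq$-intro, which does not split but only refines the order. I would show that after exhaustively following the forced splits (using entropy to reorder as needed), one always reaches a flat sub-sequent on which no rule is applicable, i.e.\ a dead end, contradicting the existence of a proof.

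The cleanest way to organize this is probably to use the correspondence with series-parallel orders: a cut-free proof of a flat balanced sequent essentially must, at some point, separate the atoms into pieces via $\ltens$-intro's, and the \enquote{splitting tree} so obtained must be compatible with the axiom linking — each axiom link must stay within a single leaf. I would verify that the unique linking of~\eqref{eq:seq-ex} is such that no sequence of admissible tensor-splits (even after applying entropy, which can only \emph{shrink} the order and hence only \emph{help} splitting) keeps all four links $a$–$a\lneg$, $b$–$b\lneg$, $c$–$c\lneg$, $d$–$d\lneg$, $e$–$e\lneg$, $f$–$f\lneg$ intact. Concretely, one $\ltens$-intro on $a \ltens (c\lseq b\lneg)$ must put $a$ on one side and $c, b\lneg$ on the other; but then $a\lneg$ (which lies in $a\lneg\lseq f$, together with $f$) must join $a$, dragging $f$ with it, while $f\lneg$ sits inside $d\ltens(e\lneg\lseq f\lneg)$ on the other side — and chasing these forced memberships around produces a contradiction. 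The same dead end arises if we split $d\ltens(e\lneg\lseq f\lneg)$ first, by the evident symmetry of the sequent.

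The main obstacle I anticipate is \emph{bookkeeping discipline}: ruling out \emph{every} interleaving of entropy steps with the tensor/seq introductions, since entropy can be applied repeatedly and in many places, and one must be sure it never unlocks an otherwise-blocked split. I would handle this by the monotonicity observation — entropy only removes order-edges, hence the set of \enquote{flat-splittable} partitions only grows under entropy — so it suffices to check that even the \emph{fully flattened} version of any reachable sub-sequent admits no linking-compatible tensor split. Equivalently, and perhaps most economically, I would phrase the whole argument via the cographic RB-prenet of~\eqref{eq:seq-ex} (available in \cite[Figure~10]{retore:2021}): a cut-free Retoré proof would give a \enquote{sequentialization}, and the obstruction is precisely that the net, while correct, is not \enquote{$\lpar$-$\lseq$-sequentializable} without cut — there is no splitting tensor, a phenomenon entirely analogous to the classical need for the $\mathsf{mix}$ rule, here sharpened by the non-commutative $\lseq$. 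I expect the proof to be short once the \enquote{no splitting tensor survives} claim is pinned down by a direct inspection of the six forced membership chains.
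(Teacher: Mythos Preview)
Your approach is essentially the same as the paper's. The paper gives only a one-sentence justification --- \enquote{in order to have a cut-free sequent proof, the $\ltens$-rule has to be applied to one of the two $\ltens$-formulas eventually. But the $\ltens$-rule is too weak to split the context correctly} --- and your forced-membership chain (tracing $a\lneg$ to $f$ to $f\lneg$ to $d,e\lneg$ to $d\lneg,e$ and finding $c\lneg$ ends up on the wrong side) is exactly how one would flesh out what \enquote{too weak to split the context} means.

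One small point of presentation: your case analysis on the \emph{last} rule forces you to recurse through the entropy and $\lseq$-intro cases, which you acknowledge but only sketch via the monotonicity remark. The paper's phrasing (\enquote{eventually}) goes straight to an arbitrary $\ltens$-intro instance in the proof tree rather than the root; this is slightly cleaner since it avoids the recursion, but it then requires knowing that the context at that instance still consists of (subformulas of) the original formulas, which your subformula-property observation provides. Either framing leads to the same atom-partition argument, and both are at the level of a proof sketch; the fully detailed version is deferred to \cite{retore:2021} in the paper as well.
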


The reason for this lies in the fact that in order to have a cut-free sequent proof, the $\ltens$-rule has to be applied to one of the two $\ltens$-formulas eventually. But the $\ltens$-rule is too weak to split the context correctly.
However, with cut, the sequent is provable, as shown in \Cref{fig:sequent-cuts}. We have therefore shown the following.

\begin{cor}
  Cut-elimination does \emph{not} hold for Retoré's sequent calculus.
\end{cor}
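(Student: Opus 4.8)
The final statement to prove is the corollary that cut-elimination fails for Retoré's sequent calculus. The plan is to assemble this from the two preceding results in this very subsection, namely the proposition (Retoré \& Straßburger) asserting that the sequent in~\eqref{eq:seq-ex} is \emph{not} provable in the cut-free fragment, together with the explicit cut-using derivation displayed in \Cref{fig:sequent-cuts} which witnesses its provability \emph{with} cut. Since cut-elimination for a calculus means precisely that the cut-free fragment proves exactly the same sequents as the full calculus, exhibiting a single sequent that separates the two fragments suffices.

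Concretely, I would argue as follows. By \Cref{fig:sequent-cuts}, the sequent $\sqnp{a \ltens (c \lseq b\lneg), a\lneg \lseq f, c\lneg \lseq d\lneg, d \ltens (e\lneg \lseq f\lneg), e \lseq b}$ of~\eqref{eq:seq-ex} is derivable in Retoré's sequent calculus using the cut rule (applied, note, only in a flat context, as the calculus requires). On the other hand, the Proposition just quoted states that this same sequent is not derivable once the cut rule is removed. Hence the cut-free fragment and the full calculus do not prove the same sequents, which by definition means that the cut rule cannot be eliminated, i.e.\ cut-elimination does not hold. One may additionally remark --- if one wishes to make the statement self-contained rather than merely citing~\cite{retore:2021} --- that the reason behind the cited Proposition is the one sketched informally in the surrounding text: any cut-free proof must eventually introduce one of the two $\ltens$-formulas by the $\ltens$-intro rule, but that rule splits the ambient (flat) context into two disjoint parts, and here no such splitting is compatible with the series-parallel order structure of the remaining formulas; the cut rule circumvents this rigidity by allowing an auxiliary formula $A$ to be shared across the split.

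I do not expect any genuine obstacle here: the work has already been done in the two preceding items, and the corollary is a one-line logical consequence. The only care needed is bookkeeping --- making sure that the derivation in \Cref{fig:sequent-cuts} is indeed a legal proof in the calculus of \Cref{fig:retore-sequent} (in particular that every cut and every $\ltens$-intro is applied in a flat context, and that each entropy step respects $\preccurlyeq$ as defined in~\eqref{eq:sequent-order}), which is a routine inspection of the figure rather than a substantive argument. So the ``proof'' is essentially: combine the Proposition with \Cref{fig:sequent-cuts} and invoke the definition of cut-elimination.
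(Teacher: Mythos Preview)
Your proposal is correct and matches the paper's approach exactly: the paper simply writes ``We have therefore shown the following'' before the corollary, treating it as an immediate consequence of the Proposition (unprovability without cut) together with the derivation in \Cref{fig:sequent-cuts} (provability with cut). Your additional remarks about checking legality of the derivation and the informal reason behind the Proposition are fine but not needed.
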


Let us now establish the equivalence between $\SBV$ and the sequent calculus in \Cref{fig:retore-sequent}. For this, we resort again to $\SBVu$ (see Proposition~\ref{prop:SBVu}). We begin by showing that every formula that is provable in $\SBVu$ is also provable in Retoré's sequent calculus with cut.

\begin{lem}\label{lem:sequent-bifunctoriality}
  Let $\sqnp{A,B}$ and $\sqnp{C,D}$ be provable sequents. Then the sequents
  $\sqnp{A \lpar C, B \ltens D}$ and $\sqnp{A \lseq C, B
    \lseq D}$ are also provable.
\end{lem}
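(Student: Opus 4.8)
The statement is a ``bifunctoriality'' lemma for the sequent calculus: from sequent proofs of $\sqnp{A,B}$ and $\sqnp{C,D}$ we must construct proofs of $\sqnp{A \lpar C, B \ltens D}$ and $\sqnp{A \lseq C, B \lseq D}$. Both constructions follow the same pattern, so I will describe the $\lpar/\ltens$ case in detail and then indicate the minor changes for $\lseq/\lseq$. The plan is to combine the two given proofs using the $\ltens\text{-intro}$ rule and then massage the resulting sequent with the structural rules ($\lpar\text{-intro}$, $\lseq\text{-intro}$, $\mathsf{dimix}$, $\mathsf{entropy}$) into the desired shape.

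\textbf{The $\lpar/\ltens$ case.} Start from the proofs of $\sqnp{A,B}$ and $\sqnp{C,D}$. Apply $\ltens\text{-intro}$ with $\Gamma = A$ (a singleton flat sequent), $\Delta = C$, to get a proof of $\sqnp{A, C, B \ltens D}$. Now $\sqnp{A,C}$ is a flat sub-sequent, and by $\lpar\text{-intro}$ applied in the context $\Gamma\{\conhole\}\text{ with }\Gamma\{X\} = \sqnp{X, B\ltens D}$, we obtain $\sqnp{A \lpar C, B \ltens D}$, as required. The only subtlety is checking that $\ltens\text{-intro}$ is applicable: it requires flat contexts around the principal formulas $A$ and $B$ (inside $\sqnp{\Gamma,A}$ and $\sqnp{\Delta,B}$), but here $A$ and $B$ are each alone in their sequent, so this holds trivially. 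And $\lpar\text{-intro}$ works in arbitrary contexts, so there is no obstacle there.

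\textbf{The $\lseq/\lseq$ case.} This one needs the non-commutative structural machinery. From the proofs of $\sqnp{A,B}$ and $\sqnp{C,D}$, apply $\mathsf{dimix}$ to obtain a proof of $\sqns{\sqnp{A,B};\sqnp{C,D}}$. The graph of this sequent is the series composition of two discrete two-element orders, i.e.\ it has edges $A \to C$, $A \to D$, $B \to C$, $B \to D$. We want to reach $\sqnp{\sqns{A;C},\sqns{B;D}}$, whose graph has only the edges $A \to C$ and $B \to D$ — a sub-order. So by $\mathsf{entropy}$ (using the order $\sqnp{\sqns{A;C},\sqns{B;D}} \preccurlyeq \sqns{\sqnp{A,B};\sqnp{C,D}}$, which holds since the right-hand order contains all four cross-edges while the left contains only two of them) we obtain a proof of $\sqnp{\sqns{A;C},\sqns{B;D}}$. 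Finally, two applications of $\lseq\text{-intro}$ (one in each of the two $\sqns{\cdot;\cdot}$ blocks) turn $\sqns{A;C}$ into $A\lseq C$ and $\sqns{B;D}$ into $B\lseq D$, giving $\sqnp{A\lseq C, B\lseq D}$.

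\textbf{Main obstacle.} There is no deep obstacle here; the lemma is essentially a bookkeeping exercise about which structural rules connect which sequent shapes. The one point requiring a little care is the verification of the $\preccurlyeq$ relation in the $\mathsf{entropy}$ step of the second case: one must be explicit that the series-parallel order underlying $\sqns{\sqnp{A,B};\sqnp{C,D}}$ has \emph{all four} edges from $\{A,B\}$ to $\{C,D\}$, so that deleting the two edges $A\to D$ and $B\to C$ yields exactly the order underlying $\sqnp{\sqns{A;C},\sqns{B;D}}$, and that the vertex sets agree. This follows directly from Definition~\ref{def:digraph-connectives} (series composition adds $\vG\times\vH$) together with the characterization of $\preccurlyeq$ in~\eqref{eq:sequent-order}.
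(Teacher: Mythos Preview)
Your proof is correct and follows exactly the same approach as the paper: $\ltens$-intro followed by $\lpar$-intro for the first case, and dimix followed by entropy and two $\lseq$-intros for the second. Your explicit verification of the $\preccurlyeq$ relation in the entropy step is a nice addition that the paper leaves to the reader.
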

\begin{proof}
  The following derivations work:
  \begin{center}
    \AxiomC{$\sqnp{A,B}$}
    \AxiomC{$\sqnp{C,D}$}
    \LeftLabel{\footnotesize $\ltens$-intro}
    \BinaryInfC{$\sqnp{A, C, B \ltens D}$}
    \LeftLabel{\footnotesize $\lpar$-intro}
    \UnaryInfC{$\sqnp{A \lpar C, B \ltens D}$}
    \DisplayProof\qquad\quad
    \AxiomC{$\sqnp{A,B}$}
    \AxiomC{$\sqnp{C,D}$}
    \LeftLabel{\footnotesize dimix}
    \BinaryInfC{$\sqns{\sqnp{A,B};\sqnp{C,D}}$}
    \LeftLabel{\footnotesize entropy}
    \UnaryInfC{$\sqnp{\sqns{A;C},\sqns{B;D}}$}
    \LeftLabel{\footnotesize $\lseq$-intro}
    \UnaryInfC{$\sqnp{A \lseq C,\sqns{B;D}}$}
    \LeftLabel{\footnotesize $\lseq$-intro}
    \UnaryInfC{$\sqnp{A \lseq C, B \lseq D}$}
    \DisplayProof
  \end{center}
  For the derivation on the right, the validity of the entropy rule depends on
  an inclusion between series-parallel orders which can be mechanically
  verified.
\end{proof}

\begin{lem}[Non-atomic identity]\label{lem:sequent-identity}
  For any formula $A$, the sequent $\sqnp{A\lneg,A}$ is provable.
\end{lem}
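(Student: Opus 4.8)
\textbf{Proof plan for Lemma~\ref{lem:sequent-identity} (non-atomic identity).}
The plan is to prove $\sqnp{A\lneg, A}$ provable by structural induction on the formula $A$, using the atomic axiom rule for the base case and Lemma~\ref{lem:sequent-bifunctoriality} to push the induction hypotheses through each connective. First I would handle the base case: if $A$ is an atom $a$, then $\sqnp{a\lneg, a}$ is an instance of the axiom rule up to commutativity of $\sqnp{\cdot,\cdot}$ (and $\sqnp{a,a\lneg}$ is literally the axiom). For the unit $A = \lunit$: note $\lunit\lneg = \lunit$, so we need $\sqnp{\lunit,\lunit}$, which is just the empty sequent $\sempty$ modulo the unit laws; this is derivable as the empty sequent obtained, e.g., from dimix on zero premises, or we can simply note it is the conclusion of no rule and is vacuously handled — I will need to check exactly how the empty sequent is produced in Retoré's calculus (via dimix with both premises empty, or it may require a small argument). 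If the presentation in Figure~\ref{fig:retore-sequent} does not give $\sempty$ directly, I would instead restrict attention to unit-free formulas, which suffices since every formula is $\fequ$-equivalent to a unit-free one and the sequent calculus respects $\fequ$.

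For the inductive step, suppose $A = \vls[B;C]$. Then $A\lneg = \vls(B\lneg;C\lneg)$, and by the induction hypothesis the sequents $\sqnp{B\lneg,B}$ and $\sqnp{C\lneg,C}$ are provable. Applying the first construction of Lemma~\ref{lem:sequent-bifunctoriality} with $(A,B,C,D) := (B\lneg, B, C\lneg, C)$ gives provability of $\sqnp{B\lneg \lpar C\lneg, B \ltens C}$, i.e.\ $\sqnp{A\lneg, A}$ after recognizing $\vls(B;C) = B\ltens C$ and $\vls[B\lneg;C\lneg] = B\lneg \lpar C\lneg$ at the formula level — though here I should be careful to apply the $\lpar$-intro rule in the sequent to fold $\sqnp{B\lneg, C\lneg}$ into $B\lneg \lpar C\lneg$ only as needed; in fact Lemma~\ref{lem:sequent-bifunctoriality} already delivers the folded form, so this is immediate. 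Symmetrically, for $A = \vls(B;C)$ we get $A\lneg = \vls[B\lneg;C\lneg]$ and the same lemma instance (reading it the other way, or noting $\ltens$ and $\lpar$ are swapped) yields $\sqnp{A\lneg, A}$. For $A = \vls<B;C>$, we have $A\lneg = \vls<B\lneg;C\lneg>$ (seq is self-dual, with arguments in the same order), and the second construction of Lemma~\ref{lem:sequent-bifunctoriality} with $(A,B,C,D) := (B\lneg, B, C\lneg, C)$ gives $\sqnp{B\lneg \lseq C\lneg, B \lseq C} = \sqnp{A\lneg, A}$.

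I do not expect a serious obstacle here: the lemma is essentially a bookkeeping induction, and the real content has already been isolated in Lemma~\ref{lem:sequent-bifunctoriality} (in particular the verification of the entropy-rule side condition, i.e.\ the series-parallel order inclusion, for the seq case). The one mild subtlety is the unit case, and the cleanest way to dispatch it is to observe that it suffices to prove the statement for unit-free $A$ — every formula is $\fequ$-equivalent to a unit-free one, provability in the sequent calculus is invariant under $\fequ$ (sequents are taken modulo $\fequ$ / the structural equalities), and duality commutes with this normalization — so the induction never needs to confront $\lunit$ as a base case at all. The other thing to double-check is the precise order-of-arguments convention for $\lseq$ under negation, but Definition~\ref{def:formulas}'s clause $\vlsbr<A;B>\lneg = \vls<A\lneg;B\lneg>$ (explicitly \emph{not} reversed) makes the seq case line up exactly with the second derivation in Lemma~\ref{lem:sequent-bifunctoriality}.
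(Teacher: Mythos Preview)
Your plan is correct and follows essentially the same route as the paper: structural induction on $A$, axiom rule for atoms, and Lemma~\ref{lem:sequent-bifunctoriality} for each binary connective. One small slip: with the substitution $(A,B,C,D) := (B\lneg, B, C\lneg, C)$ that you wrote for the case $A = \vls[B;C]$, the lemma yields $\sqnp{B\lneg \lpar C\lneg, B \ltens C}$, which is the identity for $B \ltens C$ rather than for $B \lpar C$; you want $(B, B\lneg, C, C\lneg)$ instead (in effect your $\lpar$ and $\ltens$ cases are swapped, as you yourself half-note when invoking ``reading it the other way''). The paper's only difference is organizational: it merges this induction with the proof of Lemma~\ref{lem:sequent-context} into a single simultaneous induction on $A$, but for the identity lemma alone your direct argument is perfectly adequate.
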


\begin{lem}\label{lem:sequent-context}
  Let $A$ and $B$ be formulas such that $\sqnp{A\lneg,B}$ is provable, and
  $\Sconhole$ be a context. Then $\sqnp{\Scons{A}\lneg,\Scons{B}}$ is provable.
\end{lem}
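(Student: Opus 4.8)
\textbf{Proof plan for Lemma~\ref{lem:sequent-context}.}
The plan is to proceed by structural induction on the context $\Sconhole$, using Lemma~\ref{lem:sequent-bifunctoriality} (bifunctoriality of $\lpar$ and $\lseq$ against $\ltens$ and $\lseq$) as the workhorse, together with Lemma~\ref{lem:sequent-identity} (non-atomic identity) to supply the \enquote{trivial} premise in each inductive step. The base case is $\Sconhole = \conhole$, where $\Scons{A} = A$ and $\Scons{B} = B$, so $\sqnp{\Scons{A}\lneg, \Scons{B}} = \sqnp{A\lneg, B}$ is provable by hypothesis.

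For the inductive step, suppose the claim holds for a context $\Sconhole$, so that $\sqnp{\Scons{A}\lneg, \Scons{B}}$ is provable; I must then handle the six ways of enlarging the context by one connective. Consider first $\Sconhole' = \vls[\Sconhole;C]$ (the case $\vls[C;\Sconhole]$ is symmetric by commutativity of $\lpar$). We have $\Scons[']{A} = \vls[\Scons{A};C]$, so $\Scons[']{A}\lneg = \vls(\Scons{A}\lneg;C\lneg)$, and $\Scons[']{B}\lneg = \vls[\Scons{B};C]$; we want $\sqnp{\vls(\Scons{A}\lneg;C\lneg), \vls[\Scons{B};C]}$. By the induction hypothesis $\sqnp{\Scons{A}\lneg, \Scons{B}}$ is provable, and by Lemma~\ref{lem:sequent-identity} $\sqnp{C\lneg, C}$ is provable; then Lemma~\ref{lem:sequent-bifunctoriality} applied to these two sequents (in the $\lpar$/$\ltens$ form, with the appropriate choice of which side is which) yields $\sqnp{\Scons{B} \lpar C, \Scons{A}\lneg \ltens C\lneg}$, which is exactly the sequent we want up to the notational identifications $\vls(\Scons{A}\lneg;C\lneg) = \Scons{A}\lneg \ltens C\lneg$ and $\vls[\Scons{B};C] = \Scons{B}\lpar C$. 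The case $\Sconhole' = \vls(\Sconhole;C)$ is dual: now $\Scons[']{A}\lneg = \vls[\Scons{A}\lneg;C\lneg]$ and $\Scons[']{B} = \vls(\Scons{B};C)$, so combining the induction hypothesis with $\sqnp{C\lneg,C}$ via the same bifunctoriality lemma (swapping which sequent plays which role) gives $\sqnp{\Scons{A}\lneg \lpar C\lneg, \Scons{B} \ltens C}$, as required. Finally, for $\Sconhole' = \vls<\Sconhole;C>$ we have $\Scons[']{A}\lneg = \vls<\Scons{A}\lneg;C\lneg>$ and $\Scons[']{B} = \vls<\Scons{B};C>$, and we invoke the $\lseq$-version of Lemma~\ref{lem:sequent-bifunctoriality} on $\sqnp{\Scons{A}\lneg,\Scons{B}}$ and $\sqnp{C\lneg,C}$ to obtain $\sqnp{\Scons{A}\lneg \lseq C\lneg, \Scons{B}\lseq C}$; the case $\Sconhole' = \vls<C;\Sconhole>$ is handled the same way but placing $C$ on the left, which is legitimate since Lemma~\ref{lem:sequent-bifunctoriality} gives $\sqnp{A\lseq C, B\lseq D}$ for \emph{both} orders of the arguments of the two $\lseq$'s (and $\lseq$ on sequents is not commutative, so one must be careful to pick the instance whose conclusion really is $\sqnp{C\lneg \lseq \Scons{A}\lneg, C \lseq \Scons{B}}$ — if Lemma~\ref{lem:sequent-bifunctoriality} as stated only gives the \enquote{aligned} order, one applies it to $\sqnp{C\lneg,C}$ and $\sqnp{\Scons{A}\lneg,\Scons{B}}$ in that order instead).

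The only real subtlety, and the step I expect to require the most care, is bookkeeping the De Morgan dualities correctly: a context hole sitting under a $\lpar$ becomes a $\ltens$ on the negated side, and vice versa, so the two cases $\vls[\Sconhole;C]$ and $\vls(\Sconhole;C)$ both reduce to the \emph{same} bifunctoriality lemma but with the two input sequents swapped and with the $A$/$B$ roles tracked through negation. Once that is set up cleanly, each inductive case is a one-line application of Lemma~\ref{lem:sequent-bifunctoriality} and Lemma~\ref{lem:sequent-identity}, and there are no side conditions to check (the entropy-rule side conditions are already discharged inside the proof of Lemma~\ref{lem:sequent-bifunctoriality}). This lemma is presumably then used, together with the cut rule, to derive the general cut rule $\iru$ at the sequent level — $\sqnp{\Scons{A}\lneg,\Scons{B}}$ lets one replace a subformula $A$ by $B$ whenever $\sqnp{A\lneg,B}$ is provable — which is the bridge needed to simulate each $\SBVu$ inference step in Retoré's sequent calculus with cut.
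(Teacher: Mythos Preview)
Your proposal is correct and takes essentially the same approach as the paper: both argue by induction on the context $\Sconhole$, using Lemma~\ref{lem:sequent-bifunctoriality} together with Lemma~\ref{lem:sequent-identity} to handle each connective case. The only cosmetic difference is that the paper bundles the proof of identity (Lemma~\ref{lem:sequent-identity}) and the one-step context extension into a single simultaneous induction on the side formula, then remarks that the full context lemma follows by the routine induction you spell out explicitly.
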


\begin{proof}[Proof of Lemmas~\ref{lem:sequent-identity}
  and~\ref{lem:sequent-context}]
  We prove that the following holds for any formula $A$:
  \begin{enumerate}[(i)]
  \item $\sqnp{A\lneg,A}$ is provable;
  \item for any formulas $B$ and $C$ such that $[B\lneg,C]$ is provable and any
    $\odot \in \{\ltens,\lpar,\lseq\}$, the sequents $\sqnp{(A \odot B)\lneg, A
      \odot C}$ and $\sqnp{(B \odot A)\lneg, C \odot A}$ are provable.
  \end{enumerate}
  Lemma~\ref{lem:sequent-identity} corresponds to item (i), while item (ii) entails
  Lemma~\ref{lem:sequent-context} by a routine induction on the context $\Sconhole$.
  The following observations suffice to prove (i) and (ii) simultaneously by
  induction on $A$ (invoking the axiom rule for the base case):
  \begin{itemize}
  \item for any given $A$, (i) implies (ii) as a direct consequence of
    Lemma~\ref{lem:sequent-bifunctoriality};
  \item if $A$ satisfies (i) and $A'$ satisfies (ii), then by applying (ii) to
    $A'$ with $B = C = A$, we see that for any $\odot \in
    \{\ltens,\lpar,\lseq\}$, the formula $A \odot A'$ satisfies (i). \qedhere
  \end{itemize}
\end{proof}

\begin{lem}\label{lem:bvu-rule-to-sequent}
  Let $\vlinf{\rr}{}{B}{A}$ be an instance of an inference rule in $\SBVu$. with
  Then $\sqnp{A\lneg,B}$ is provable in Retoré's sequent calculus without cuts.
\end{lem}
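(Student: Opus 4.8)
The statement to prove is Lemma~\ref{lem:bvu-rule-to-sequent}: for every inference rule instance $\vlinf{\rr}{}{B}{A}$ of $\SBVu$, the sequent $\sqnp{A\lneg,B}$ is provable in Retoré's sequent calculus \emph{without cuts}. The natural strategy is a straightforward case analysis over the finitely many rule schemes of $\SBVu$ listed in Figure~\ref{fig:BVu}. For each scheme, $A$ and $B$ are explicit formulas built from metavariables $A_1,\ldots,A_k$ (standing for arbitrary formulas) and possibly an atom $a$; we must exhibit a cut-free sequent derivation of $\sqnp{A\lneg,B}$. Since the metavariables are instantiated by arbitrary formulas, the building blocks we will repeatedly use are (1) the non-atomic identity sequents $\sqnp{A_i\lneg,A_i}$, which are provable by Lemma~\ref{lem:sequent-identity}, and (2) the two ``bifunctoriality'' constructions of Lemma~\ref{lem:sequent-bifunctoriality}, which let us combine provable sequents $\sqnp{A,B}$ and $\sqnp{C,D}$ into $\sqnp{A\lpar C,\, B\ltens D}$ and $\sqnp{A\lseq C,\, B\lseq D}$. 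These, together with the axiom rule for atoms, $\mixr$-type combinations via dimix, the entropy rule (which handles the reorderings of $\lpar$/$\lseq$ subsequents up to the SP-order inclusion $\sequentorder$), and the $\lpar$-, $\lseq$-, $\ltens$-introduction rules, should suffice.

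\textbf{Key steps.} First I would reduce the work by symmetry: $\SBVu$ is closed under conjugacy (swapping $\qblrd\leftrightarrow\qbrrd$, etc.) and under the up/down duality that turns a rule $\vlinf{\rr}{}{B}{A}$ into $\vlinf{\rr\text{-co}}{}{A\lneg}{B\lneg}$; since $\sqnp{A\lneg,B}$ is what we prove and this is symmetric in the two formulas up to negation, each ``down'' rule and its ``up'' dual give the same sequent-provability obligation, cutting the number of genuinely distinct cases roughly in half. Next, for the interaction axioms $\aiord$ and the finite-context variants $\aitrd,\aislrd,\aisrrd$ (and dually $\aioru$ etc.), note that e.g.\ for $\aiord$ we need $\sqnp{(\vls[a;a\lneg])\lneg} = \sqnp{\vls(a\lneg;a)}$... — more precisely, for $\aitrd$ with conclusion $\vls([a;a\lneg];B)$ over premise $B$, we must prove $\sqnp{\vls(<a\lneg;a>\text{-style dual of }[a;a\lneg])\lpar\ldots}$; concretely $\sqnp{(\vls([a;a\lneg];B))\lneg,\,B} = \sqnp{\vls(\vls<a\lneg;a>;B\lneg),\,B}$, wait — I should just say: one uses the atom axiom $\sqnp{a,a\lneg}$ together with Lemma~\ref{lem:sequent-identity} for $B$, combine by dimix, and close with the appropriate $\ltens$- and $\lpar$-introductions and an entropy step to place everything correctly. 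For the ``logical'' rules $\qard,\qblrd,\qbrrd,\qcrd,\sar,\sbr$, the conclusion and premise differ only in how the $\lpar$/$\lseq$/$\ltens$ structure is arranged around metaformulas $A,B,C,D$; here the recipe is: take the identities $\sqnp{A\lneg,A}$, $\sqnp{B\lneg,B}$, etc.\ (Lemma~\ref{lem:sequent-identity}), assemble them with the two constructions of Lemma~\ref{lem:sequent-bifunctoriality} following the shape of the \emph{premise}, then apply one entropy step realizing the SP-order inclusion between the premise's structure and the conclusion's structure (for $\qard$: $\sqns{\cdot;\cdot}\sequentorder\sqnp{\cdot,\cdot}$ at the relevant node; for $\sar$: the inclusion between a $\ltens$-node and a $\lpar$-node, which at the sequent level is witnessed by the $\ltens$-introduction rule rather than entropy), finishing with $\lpar$-/$\lseq$-/$\ltens$-introductions to fold the sequent-level connectives back into formula-level ones. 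The switch-type rule $\sbr$ is the most involved: it needs $\ltens$-introduction applied in a flat context, which is exactly what the rule $\ltens\text{-intro}$ of Figure~\ref{fig:retore-sequent} provides, so the derivation splits the context as $\sqnp{\Gamma,A}$ / $\sqnp{\Delta,B}$ appropriately.

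\textbf{Main obstacle.} The routine part is mechanical, but the genuine difficulty is the rule $\sbr$ (the switch) and, to a lesser degree, $\qcrd$ and the $\ltens$-carrying interaction rules: these are precisely the cases where the sequent-level $\ltens$-introduction must be invoked, and $\ltens$-introduction in Retoré's calculus \emph{splits} the ambient sequent into two independent sub-sequents. One must check that the metaformulas can indeed be partitioned correctly and that no spurious SP-order constraint is violated — i.e.\ that after the split each side still matches an instance of Lemma~\ref{lem:sequent-identity} or of a previously-handled combination, and that the entropy steps needed to reconcile $\sqns{\cdot;\cdot}$-structure with $\sqnp{\cdot,\cdot}$-structure correspond to genuine inclusions $\Gamma'\sequentorder\Gamma$. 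I expect the cleanest write-up to present, for each rule, a small explicit \texttt{bussproofs} derivation (in the style of Figure~\ref{fig:sequent-cuts}) and to remark that the entropy side-conditions ``can be mechanically verified,'' as is already done in the proof of Lemma~\ref{lem:sequent-bifunctoriality}. The only conceptual care needed is to make sure the derivations are \emph{cut-free}, which they will be since none of the building blocks (identities, bifunctoriality, dimix, entropy, the three introduction rules, the axiom) uses cut.
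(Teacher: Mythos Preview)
Your case analysis over the rule schemas is the right starting point, and for the \emph{top-level} instances (trivial context $\Sconhole=\conhole$) your recipe---assemble identities via Lemma~\ref{lem:sequent-identity}, combine using Lemma~\ref{lem:sequent-bifunctoriality}, and finish with entropy and introductions---matches what the paper does. However, there is a genuine gap: you have not addressed the \emph{deep inference} aspect of $\SBVu$. An instance of a rule $\rr$ is not just the schema from Figure~\ref{fig:BVu} with the metavariables filled in; it is $\vlinf{\rr}{}{\Scons{B_\rr}}{\Scons{A_\rr}}$ for an \emph{arbitrary} context $\Sconhole$. Your plan treats only the case $\Sconhole=\conhole$, and nothing in your building blocks (identities, bifunctoriality, dimix, entropy, introductions) tells you how to pass from a cut-free proof of $\sqnp{A_\rr\lneg,B_\rr}$ to one of $\sqnp{\Scons{A_\rr}\lneg,\Scons{B_\rr}}$. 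This is exactly what the paper isolates as Lemma~\ref{lem:sequent-context}, proved by induction on $\Sconhole$ using Lemma~\ref{lem:sequent-bifunctoriality} at each layer; without it your argument does not cover general instances.

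Two smaller points. First, your handling of $\aiord$ is confused: $\aiord$ has no premise (and dually $\aioru$ has no conclusion), so it does not fit the shape $\vlinf{\rr}{}{B}{A}$ assumed in the lemma; the paper simply excludes it. Second, the paper's organization is slightly cleaner than yours: rather than working directly with arbitrary formula metavariables, it first proves each top-level case with \emph{atomic} placeholders (so the derivation is a concrete finite object), then invokes Lemma~\ref{lem:sequent-identity} to substitute arbitrary formulas for those atoms, and finally invokes Lemma~\ref{lem:sequent-context} to wrap the context. This three-step factoring is what makes the ``bounded search left to the reader'' legitimate.
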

\begin{proof}
  Let us consider for instance the switch rule. We start by considering the case
  with atomic formulas $a,b,c$ and a trivial context ($\Sconhole = \conhole$):
  \[ \vlinf{\swir}{}{\vls[(a;b);c]}{\vls([a;c];b)} \]
  We prove the corresponding sequent (for the switch rule, the derivation
  involves only flat sequents and can therefore be carried out in $\MLLm$):
  \begin{prooftree}
    \AxiomC{}
    \UnaryInfC{$\sqnp{a\lneg, a}$}
    \AxiomC{}
    \UnaryInfC{$\sqnp{b\lneg, b}$}
    \BinaryInfC{$\sqnp{a\lneg, b\lneg, a \ltens b}$}
    \AxiomC{}
    \UnaryInfC{$\sqnp{c\lneg, c}$}
    \BinaryInfC{$\sqnp{a\lneg \ltens c\lneg, b\lneg, a \ltens b, c}$}
    \UnaryInfC{$\sqnp{{\vls(a\lneg;c\lneg)},b\lneg,{\vls[(a;b);c]}}$}
    \UnaryInfC{$\sqnp{{\vls[(a\lneg;c\lneg);b\lneg]},{\vls[(a;b);c]}}$}
  \end{prooftree}
  Thanks to Lemma~\ref{lem:sequent-identity}, we can substitute formulas for atoms:
  for any $A$, $B$ and $C$, the sequent
  $\sqnp{{\vls[(A\lneg;C\lneg);B\lneg]},{\vls[(A;B);C]}}$ can be proved using
  the above derivation where axioms are replaced by appropriate subproofs of
  $\sqnp{A\lneg,A}$, $\sqnp{B\lneg,B}$ or $\sqnp{C\lneg,C}$. Then, using
  Lemma~\ref{lem:sequent-context}, we may conclude that
  $\sqnp{\Scons{\vls([A;C];B)}\lneg,\Scons{\vls[(A;B);C]}}$ is provable for any
  context $\Sconhole$.

  This argument applies to all inference rules (except $\aiord$, which has no
  premise, and therefore does not fit the pattern in the lemma statement). For
  each of these rules, it therefore suffices to treat the case with atomic
  formulas and trivial context.

  For the $\fequp$-rule, we can assume without loss of generality that
  each instance is an application of exactly one of the equalities
  in~\eqref{eq:fequp} (as each general instance of the $\fequp$-rule
  can be replaced by a finite sequence of these special instances).
  
  We have thus reduced the desired conclusion to a
  bounded search for cut-free proofs that we leave to the reader.
\end{proof}

\begin{thm}
  If a formula is provable in $\SBVu$ then it is also provable in
  Retoré's sequent calculus with cuts.
\end{thm}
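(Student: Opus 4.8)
The plan is to assemble this theorem from the pieces already established in Lemmas~\ref{lem:sequent-bifunctoriality}--\ref{lem:bvu-rule-to-sequent}, reducing everything to the cut rule of Retoré's calculus. By Proposition~\ref{prop:SBVu}, a unit-free formula is provable in $\SBV$ iff it is provable in $\SBVu$, and by Theorem~\ref{thm:SBV} a formula is provable in $\BV$ iff it is provable in $\SBV$; moreover, via $\fequ$ every formula is equivalent to a unit-free one. So it suffices to show: if $~\derives{\SBVu}{\Deri}A~$ then $\sqnp{A}$ (i.e. the singleton sequent, equivalently the formula $A$ read as a sequent) is provable in Retoré's calculus with cuts. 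I would actually prove the slightly more flexible statement that if $B\derives{\SBVu}{}C$ then $\sqnp{B\lneg,C}$ is provable, and then instantiate with $B=\lunit$; but since $\BVu$/$\SBVu$ are unit-free, the cleanest route is to do an induction directly on the derivation $\Deri$ of the formula $A$.

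First I would set up the induction on the length of the $\SBVu$ derivation. The base case is a single application of $\aiord$, giving the formula $\vls[a;a\lneg]$, whose corresponding sequent $\sqnp{a,a\lneg}$ is an axiom (after a $\lpar$-intro, or directly). For the inductive step, suppose $\Deri$ ends with an instance $\vlinf{\rr}{}{\Scons{B}}{\Scons{A}}$ of some rule $\rr\neq\aiord$ in a context $\Sconhole$, with a shorter derivation of $\Scons{A}$. By the induction hypothesis, the sequent corresponding to $\Scons{A}$ is provable in Retoré's calculus with cuts; write this as provability of $\sqnp{\Scons{A}}$. By Lemma~\ref{lem:bvu-rule-to-sequent}, the sequent $\sqnp{(\Scons{A})\lneg,\Scons{B}}$ is provable in Retoré's calculus \emph{without} cuts --- here I am using that $\vlinf{\rr}{}{\Scons{B}}{\Scons{A}}$ is itself an instance of an $\SBVu$ rule, and Lemma~\ref{lem:bvu-rule-to-sequent} already handles arbitrary contexts $\Sconhole$. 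Now a single application of the cut rule,
\[
  \vlinf{\text{cut}}{}{\sqnp{\Scons{B}}}{\sqnp{\Scons{A}} \quad \sqnp{(\Scons{A})\lneg,\Scons{B}}}
\]
combines these two into a proof of $\sqnp{\Scons{B}}$, which is exactly the sequent corresponding to the conclusion of $\Deri$ truncated at this last step --- well, one step up. Iterating down the derivation $\Deri$ (equivalently, this is the induction), we obtain a proof of $\sqnp{A}$, the conclusion of $\Deri$, in Retoré's calculus with cuts.

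There is one subtlety to be careful about rather than a genuine obstacle: the statement of Lemma~\ref{lem:bvu-rule-to-sequent} requires $\rr$ to have a premise, so it does not apply to $\aiord$; that is why $\aiord$ must be the base case and cannot appear as the last rule of the inductive step --- but since $\aiord$ has no premise, it can only appear at the very top of a derivation, so in a derivation read top-down the \emph{first} rule is $\aiord$ and all later rules have premises, exactly matching the shape of the induction. A second point worth stating explicitly is the translation between $\SBVu$ derivations of a \emph{formula} and proofs of the corresponding \emph{sequent}: since an $\SBVu$ proof has no premise (the unit is absent, so a proof is a derivation whose top rule is $\aiord$), its conclusion is a genuine formula, and we read it as a singleton sequent, which by the remarks after Definition~\ref{def:formula-sequent-corr} is provability-equivalent to any sequent it corresponds to. The main work has really all been front-loaded into Lemma~\ref{lem:bvu-rule-to-sequent} (simulating each $\SBVu$ rule by a cut-free sequent derivation, using the non-atomic identity of Lemma~\ref{lem:sequent-identity} and the context-closure of Lemma~\ref{lem:sequent-context}); given that, the present theorem is just a bookkeeping induction that splices one cut per $\SBVu$ inference. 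So I do not expect any real obstacle here --- the only thing to get right is the indexing of the induction and the handling of units via Proposition~\ref{prop:SBVu} and Theorem~\ref{thm:SBV}.
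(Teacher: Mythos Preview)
Your proposal is correct and follows essentially the same approach as the paper: an induction on the length of the $\SBVu$ derivation, with $\aiord$ as base case (axiom followed by $\lpar$-intro) and, for the inductive step, one application of cut combining the induction hypothesis with the sequent $\sqnp{B_{i-1}\lneg,B_i}$ provided by Lemma~\ref{lem:bvu-rule-to-sequent}. Your discussion of units via Proposition~\ref{prop:SBVu} and Theorem~\ref{thm:SBV} is unnecessary here since the theorem is already stated for $\SBVu$, but it does no harm.
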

\begin{proof}
  A proof of a formula $A$ in $\SBVu$ must have the form
  \[
  \vlderivation{
    \vlin{\rr_n}{}{B_n}{
      \vlin{\rr_2}{}{\vdots}{
        \vlin{\rr_1}{}{B_1}{
          \vlin{\aiord}{}{\vls[a;a\lneg]}{
            \vlhy{}}}}}}
  \]
  where $B_n = A$; we also define $B_0 = \vls[a;a\lneg]$. For
  $i\in\set{1,\dots,n}$, the sequent $\sqnp{B_{i-1}\lneg,B_i}$ is
  provable in Retoré's sequent calculus by
  Lemma~\ref{lem:bvu-rule-to-sequent}. Furthermore, $\vls[a;a\lneg]$
  also has a sequent calculus proof (an axiom rule followed by a
  $\lpar$-intro). By composing all these sequent proofs with the cut rule,
  one gets a proof of $A$.
\end{proof}

For the converse, observe that the axiom, dimix and $\lpar/\lseq$-intro rules
are easy to simulate in $\BVu$. The treatment of $\ltens$-intro is as usual (see
\cite[Section 3.3.]{dissvonlutz} or \cite[\S5]{SIS} for example), and cut is a
$\ltens$-intro followed by $\iru$ (which is admissible in $\SBVu$).
Furthermore, it is an immediate consequence of Theorem~\ref{thm:sp-inclusion} that $\SBVu$ can also simulate the entropy rule. This is enough to prove the following theorem:

\begin{thm}
  If a sequent $\Gamma$ is provable in Retoré's sequent calculus
  with cuts, then a formula $A$ that corresponds (see Definition~\ref{def:formula-sequent-corr}) to $\Gamma$ is
  provable in $\SBVu$.
\end{thm}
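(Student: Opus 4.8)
The plan is to argue by induction on a Retoré sequent proof of $\Gamma$, establishing that at every node the formula corresponding to the sequent sitting there is provable in $\SBVu$. This is well posed because, by Proposition~\ref{prop:sequent-gen-formula}, the formulas corresponding to a fixed sequent form a single $\fequ$-class --- obtained by replacing each sequent constructor $\sqnp{\cdot,\cdot}$, $\sqns{\cdot;\cdot}$ by $\lpar$, $\lseq$ and substituting the actual formulas for the occurrences --- and $\SBVu$-provability is invariant under $\fequ$ on unit-free formulas. I will use freely that $\SBVu$, $\BVu$, $\SBV$ and $\BV$ all prove the same unit-free formulas (Proposition~\ref{prop:BVu} and its corollaries, Theorem~\ref{thm:SBV}), so it suffices to produce a $\BV$- or $\SBVu$-proof whenever convenient; in particular the general cut rule $\iru$, derivable in $\SBV$ (Proposition~\ref{prop:i}) and whose unit-free incarnations are admissible for $\BVu$, is available. (If one insists on staying strictly unit-free, Retoré's calculus proves only unit-free sequents, so the corresponding formula is unit-free.)

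The easy cases go as sketched in the text. For the \textbf{axiom}, $\sqnp{a,a\lneg}$ corresponds to $\vls[a;a\lneg]$, the conclusion of a single $\aiord$. The \textbf{$\lpar$-intro} and \textbf{$\lseq$-intro} rules are invisible at the formula level, since $\Gamma\{\sqnp{A,B}\}$ and $\Gamma\{A\lpar B\}$ (likewise $\Gamma\{\sqns{A;B}\}$ and $\Gamma\{A\lseq B\}$) have the same corresponding formula modulo $\fequ$. For \textbf{dimix}, if $B$ and $C$ correspond to $\Gamma$ and $\Delta$ and are provable in $\BV$, then $\vls<B;C>$ corresponds to $\sqns{\Gamma;\Delta}$, and a $\BV$-proof of it is obtained from $\lunit\fequ\vls<\lunit;\lunit>$ by running the proof of $B$ in the context $\vls<\conhole;\lunit>$ and then the proof of $C$ in the context $\vls<B;\conhole>$, using only that $\BV$-rules are closed under contexts. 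For \textbf{entropy} with side condition $\Gamma'\preccurlyeq\Gamma$, i.e.\ $\redges_{\tograph{\Gamma'}}\subseteq\redges_{\tograph{\Gamma}}$ (the definition in~\eqref{eq:sequent-order}), Theorem~\ref{thm:sp-inclusion} applied to the unit-free, tensor-free, linear generalized formulas $\Gamma$ and $\Gamma'$ over the set of formula occurrences yields a derivation from $\Gamma$ to $\Gamma'$ using only $\{\fequp,\qard,\qblrd,\qbrrd,\qcrd\}$; substituting the actual formulas back for the occurrences --- licit by deep inference, as each such rule may be instantiated in a context and at arbitrary formulas --- turns this into an $\SBVu$-derivation from $B$ to $B'$, which placed below the $\SBVu$-proof of $B$ gives one of $B'$.

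The remaining cases, \textbf{$\ltens$-intro} and \textbf{cut}, are the standard calculus-of-structures simulations of the multiplicative rules, and this is where the (routine but fiddly) core of the argument lies. For $\ltens$-intro, suppose $\vls[G;A]$ and $\vls[H;B]$ are provable in $\BV$, where $G$ and $H$ correspond to the possibly non-flat sequents $\Gamma$ and $\Delta$; then $\vls[G;H;(A;B)]$ corresponds to $\sqnp{\Gamma,\Delta,A\ltens B}$, and a $\BV$-proof of it is built by starting from $\lunit$, running the proof of $\vls[G;A]$, rewriting to $\vls[G;(A;\lunit)]$, running the proof of $\vls[H;B]$ in the context $\vls[G;(A;\conhole)]$ to reach $\vls[G;(A;[H;B])]$, applying one instance of switch $\swir$ inside $\vls[G;\conhole]$ to get $\vls[G;[(A;B);H]]$, and reassociating via $\fequ$ to $\vls[G;H;(A;B)]$. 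Cut is the same construction with $A\lneg$ for $B$: from proofs of $\vls[G;A]$ and $\vls[A\lneg;H]$ one reaches $\vls[G;H;(A;A\lneg)]$ as above, and one application of $\iru$ inside $\vls[G;H;\conhole]$ collapses $\vls(A;A\lneg)$ to $\lunit$, yielding $\vls[G;H]$, which corresponds to $\sqnp{\Gamma,\Delta}$.

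I expect the main obstacle to be purely organisational rather than conceptual: one must make precise the dictionary between sequent contexts and formula contexts (so that, e.g., an $\ltens$-intro applied in a flat sequent context genuinely becomes an application of switch inside a $\lpar$-context at the formula level), and one must check that the formula chosen at each node is consistent with those chosen at its children --- which is precisely what quotienting by $\fequ$ from the start ensures. Composing all the pieces along the sequent proof then yields the desired $\SBVu$-proof.
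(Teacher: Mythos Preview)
Your proposal is correct and follows essentially the same approach as the paper: induction on the sequent derivation, with axiom, dimix and $\lpar/\lseq$-intro handled directly, entropy via Theorem~\ref{thm:sp-inclusion}, $\ltens$-intro via the standard switch-based simulation, and cut as $\ltens$-intro followed by $\iru$. The paper's own proof is terse (it only sketches these points and cites external references for the $\ltens$-intro case), so your spelled-out version is a faithful expansion of the same argument.
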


\begin{cor}
  Retoré's sequent calculus with cuts is equivalent to $\BV$.
\end{cor}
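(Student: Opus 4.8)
The corollary will follow immediately by combining the two theorems just proved, once we reconcile two bookkeeping issues: the notion of "equivalent" is phrased for formulas whereas Retoré's calculus manipulates sequents, and $\BV$ has the unit $\lunit$ whereas $\SBVu$ does not. For the first issue, recall that two proof systems are equivalent when they prove the same formulas, and that, by the discussion following Definition~\ref{def:formula-sequent-corr}, whenever a formula $A$ corresponds to a sequent $\Gamma$ one is provable iff the other is, in every system of this paper; in particular a formula $A$ is provable in Retoré's sequent calculus exactly when the singleton sequent $A$ (which corresponds to $A$) is, so it suffices to compare provability of formulas. For the second issue, any formula with units is $\fequ$-equivalent to a unit-free one, so we may freely restrict attention to unit-free formulas and route everything through $\BVu$ and $\SBVu$.

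The plan is then to chain equivalences in each direction. First I would note that, for a unit-free formula $A$, provability in $\BV$ is equivalent to provability in $\BVu$ (Proposition~\ref{prop:BVu}), which in turn is equivalent to provability in $\SBVu$ (the corollary that $\BVu$ and $\SBVu$ are equivalent, itself obtained from Theorem~\ref{thm:SBV} and Proposition~\ref{prop:SBVu}). Hence, if $A$ is provable in $\BV$, it is provable in $\SBVu$, and by the first of the two theorems above it is provable in Retoré's sequent calculus with cuts. Conversely, if a sequent $\Gamma$ is provable in Retoré's calculus with cuts, then by the second theorem above a formula $A$ corresponding to $\Gamma$ is provable in $\SBVu$, hence in $\BVu$, hence in $\BV$; instantiating with $\Gamma = A$ shows that every formula provable in Retoré's calculus with cuts is provable in $\BV$.

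Putting the two inclusions together yields that $\BV$ and Retoré's sequent calculus with cuts prove exactly the same formulas, which is precisely the statement of the corollary. No new combinatorial argument is required: the only content lies in the two preceding theorems, and the present proof is purely a matter of threading them through the equivalences $\BV \equiv \BVu \equiv \SBVu$ (Proposition~\ref{prop:BVu} and the ensuing corollary) together with the formula/sequent correspondence of Definition~\ref{def:formula-sequent-corr}. The mildest subtlety — and the only place one must be slightly careful — is making sure the unit is handled uniformly; this is why the argument deliberately passes through the unit-free systems rather than working directly with $\BV$ and $\SBV$.
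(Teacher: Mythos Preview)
Your proposal is correct and matches the paper's intended approach: the corollary is stated without proof in the paper precisely because it follows immediately from the two preceding theorems together with the already-established equivalences $\BV \equiv \BVu \equiv \SBVu$ (Proposition~\ref{prop:BVu}, Proposition~\ref{prop:SBVu}, Theorem~\ref{thm:SBV}) and the formula/sequent correspondence. You have simply made explicit the bookkeeping that the paper leaves to the reader.
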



\subsection{A Refinement of Tiu's Argument}

In \cite{SIS-II}, Tiu presents a sequence or formulas $S_0,S_1,S_2,\ldots$ with the following properties:
\begin{enumerate}
\item For each $n$, the formula $S_n$ is provable in $\BV$.
\item In order to prove $S_n$, a subformula at depth $2n$ has to be accessed first.
\end{enumerate}
From this it follows, that there can be no \emph{shallow} (i.e., all inference rules have have a fixed maximum depth) cut-free proof system equivalent to $\BV$. As most standard sequent calculi are shallow in that sense, the argument could be used to claim that there cannot be a \enquote{standard} cut-free sequent calculus for $\BV$. However, Tiu's  formulas also have the property
\begin{enumerate}\setcounter{enumi}2
\item For each $n$, the formula $S_n$ is $\ltens$-free.
\end{enumerate}
Since every $S_n$ contains only the $\lpar$ and $\lseq$ connectives, it can be proved in Retoré's calculus by first applying  the $\lpar$ and $\lseq$ rules to transform the formula into a sequent with only atomic formulas. This sequent can then be derived from the axioms by only dimix instances and a single instance of the entropy rule.

This is not a contradiction, as the entropy rule is a \emph{deep} rule (i.e., not shallow) in the sense above.\footnote{Technically speaking, the $\lpar$-intro and the $\lseq$-into rules are also deep in that sense, but they do not chance the RB-digraph associated to the sequent proof.} However, it raises the question whether we can have a refinement of Retoré's sequent calculus that is complete for $\BV$ and obeys cut elimination, as this is no longer ruled out by Tiu's argument. 

What we will show next is a sequence or formulas $R_0,R_1,R_2,\ldots$, following the spirit of Tiu's construction, having properties (1) and (2) above, but not (3). Consequently, for proving them, a $\ltens$-subformula has to be accessed at arbitrary depth, without the possibility of globally splitting the context. This entails that a proper deep inference system is indeed needed, and a proof system in the sequent calculus layout is insufficient.

We start with the index set $I=\set{0,1}^\ast$, i.e., the set of all finite words with the symbols $0$ and $1$. Then, our formulas are build from the propositional variables $\set{a,b,c,y,z}\times I$, written as $a_i$ with $i\in I$.

For an index $i\in I$ and two formulas $A$ and $B$, we define now inductively the formula $\xi_n(i,A,B)$ for each $n\in\Nat$:
\begin{equation*}
  \begin{array}{rcl}
    \xi_0(i,A,B) &=&\vls[([a_i;b_i;A];y_i);<y_i\lneg;c_i>;<a_i\lneg;z_i>;(z_i\lneg;[b_i\lneg;c_i\lneg;B])]
    \\
    \xi_{n+1}(i,A,B)&=&\vls[(\xi_n(i0,a_i,b_i\lpar A);y_i);<y_i\lneg;c_i>;<a_i\lneg;z_i>;(z_i\lneg;\xi_n(i1,b_i\lneg,c_i\lneg\lpar B))]
  \end{array}
\end{equation*}
We now define  $R_n=\xi_n(0,\lunit,\lunit)$.\footnote{The reader familiar with \cite{SIS-II} might note that the only difference between our $\xi_n$ and Tiu's $\alpha_n$ is the addition of the variables $y_i$ and $z_i$ via the $\ltens$. In fact, $\xi_n$ can be derived from $\alpha_n$, but not the other way around.}
These formulas have the following properties:

\begin{clm}
  For each $n$, the formula $R_n$ is provable in $\BV$.
\end{clm}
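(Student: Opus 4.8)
The plan is to prove the claim by exhibiting, for each $n$, an explicit $\BV$ proof of $R_n$, built by induction on $n$. The base case $n=0$ amounts to finding a $\BV$ proof of the flat formula $\xi_0(0,\lunit,\lunit) = \vls[([a_0;b_0];y_0);<y_0\lneg;c_0>;<a_0\lneg;z_0>;(z_0\lneg;[b_0\lneg;c_0\lneg])]$ (after simplifying the units away via $\fequ$); this is a concrete formula with only five pairs of dual atoms, and one finds its proof either directly in the calculus of structures (a sequence of $\aird$, $\seqrd$, $\swir$ instances) or by checking that its cographic RB-prenet is correct and then invoking — no, we cannot invoke that direction since pomset-correctness does not imply $\BV$-provability; so the base case must be a genuine explicit $\BV$ derivation, which I would write out once and for all.

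For the inductive step, the key observation is structural: $\xi_{n+1}(i,A,B)$ is obtained from the "skeleton" context
\[
  \vls[(\conhole_1;y_i);<y_i\lneg;c_i>;<a_i\lneg;z_i>;(z_i\lneg;\conhole_2)]
\]
by plugging $\xi_n(i0,a_i,b_i\lpar A)$ into the first hole and $\xi_n(i1,b_i\lneg,c_i\lneg\lpar B)$ into the second. The plan is to first establish, as an auxiliary lemma, that for arbitrary formulas $P,Q$ the formula $\vls[(P;y_i);<y_i\lneg;c_i>;<a_i\lneg;z_i>;(z_i\lneg;Q)]$ is provable in $\BV$ from proofs of $P$ and $Q$ — i.e. that the skeleton is a derivable "combinator" in $\BV$. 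The point of the added $y_i,z_i$ atoms (Tiu's $\alpha_n$ lacks them) is presumably to make this combinator work: the $\ltens$ with $y_i$ and the $\lseq$ interactions $\vls<y_i\lneg;c_i>$, $\vls<a_i\lneg;z_i>$ allow one to "switch" $P$ and $Q$ past each other and past $c_i$ in the right order. I would verify this combinator lemma by an explicit deep-inference derivation in $\BVu$ (using $\aird$ on the $y,z$ pairs, then $\swir$/$\seqrd$ to rearrange, then appealing to the sub-derivations of $P$ and $Q$ inside their contexts). Then $R_n = \xi_n(0,\lunit,\lunit)$ follows by iterating the combinator $n$ times down to the base case, noting that the extra formulas $a_i$, $b_i\lpar A$, etc. that appear as parameters are exactly the partners needed so that everything gets consumed by $\aird$.

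A cleaner alternative I would consider is to prove a slightly stronger statement by induction, namely that $\vls[A\lneg;\xi_n(i,A,B);B\lneg]$ (or some similar closed-up version with the parameters $A,B$ put on the outside with their duals) is provable in $\BV$ — this often makes the induction go through more smoothly because the recursive calls $\xi_n(i0,a_i,b_i\lpar A)$ and $\xi_n(i1,b_i\lneg,c_i\lneg\lpar B)$ then come with exactly the dual-parameter context the stronger hypothesis supplies. The main obstacle I expect is bookkeeping: getting the parameters $A,B$ and the fresh atoms $a_i,b_i,c_i,y_i,z_i$ threaded correctly through the recursion so that the switch rule $\swir$ can always be applied in the needed context and no atom is left unmatched. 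This is the same difficulty as in Tiu's original construction, and since the excerpt explicitly says "the only difference between our $\xi_n$ and Tiu's $\alpha_n$ is the addition of the variables $y_i$ and $z_i$ via the $\ltens$", I would lean on Tiu's analysis of $\alpha_n$ and just check that adding the $\ltens$-gadget does not break provability — indeed it can only help, since one can always provide the $\aird$-axiom $\vls[y_i;y_i\lneg]$ and $\vls[z_i;z_i\lneg]$ and then use $\swir$ to dissolve the $\ltens$ in context.
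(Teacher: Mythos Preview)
Your first approach has a genuine gap. The ``combinator lemma'' you state --- that $\vls[(P;y_i);<y_i\lneg;c_i>;<a_i\lneg;z_i>;(z_i\lneg;Q)]$ is provable in $\BV$ from proofs of $P$ and $Q$ --- is false as written: the skeleton itself contains the unmatched atoms $c_i$ and $a_i\lneg$, so it is not balanced regardless of $P$ and $Q$. Worse, the actual instances $P = \xi_n(i0,a_i,b_i\lpar A)$ and $Q = \xi_n(i1,b_i\lneg,c_i\lneg\lpar B)$ are themselves unbalanced ($P$ carries $a_i$ and $b_i$ without their duals; the duals live in the skeleton and in $Q$), so ``proofs of $P$ and $Q$'' do not exist. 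You notice this yourself when you say the parameters are ``exactly the partners needed'', but that observation is precisely what kills any decomposition into independently provable pieces and forces the strengthened induction hypothesis.

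Your ``cleaner alternative'' is correct and is essentially what the paper does. The paper's formulation is: for all $n$, $i$, $A$, $B$ there is a $\BV$ \emph{derivation} from $\vls[A;B]$ to $\xi_n(i,A,B)$. This is equivalent (via the cut-elimination corollary) to provability of $\vls[A\lneg;B\lneg;\xi_n(i,A,B)]$, but the derivation form is more direct: setting $A=B=\lunit$ immediately yields a proof of $R_n$ since $\vls[\lunit;\lunit]\fequ\lunit$. The inductive step is then purely compositional: one writes down once the concrete derivation from $\vls[A;B]$ to $\xi_0(i,A,B)$ (the paper notes it has the same shape as the example in Figure~\ref{fig:exaBV} and as Lemma~7.4 of~\cite{SIS-II}), and for $\xi_{n+1}$ one applies this fixed derivation at the outer layer and then invokes the induction hypothesis \emph{inside the two contexts} to rewrite $\vls[a_i;b_i\lpar A]$ to $\xi_n(i0,a_i,b_i\lpar A)$ and $\vls[b_i\lneg;c_i\lneg\lpar B]$ to $\xi_n(i1,b_i\lneg,c_i\lneg\lpar B)$. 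Deep inference is exactly what licenses this in-context use of the induction hypothesis.
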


\begin{proof}
  For every $n\in\Nat$ and $i\in I$, we have a derivation $\vlderivation{\vlde{}{}{\xi_n(i,A,B)}{\vlhy{\vls[A;B]}}}$. This is constructed in the same way as in the proof of Lemma~7.4 in~\cite{SIS-II}, using a derivation that is similar to the one in \Cref{fig:exaBV}.
\end{proof}

\begin{clm}
  The formulas $R_n$ are not provable in Retoré's sequent calculus
  without cuts.
\end{clm}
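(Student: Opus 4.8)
The plan is to argue by contradiction, adapting the combinatorial strategy of Tiu~\cite{SIS-II} but exploiting a feature specific to Retoré's calculus that $\BV$ does not have: the $\ltens$-introduction and cut rules may only be applied in flat contexts. So I would fix a hypothetical cut-free derivation $\pi$ of $R_n$ and isolate within it the $\ltens$-introduction instance $\iota$ that is closest to the conclusion. Write $R_n = \vls[C_1;C_2;C_3;C_4]$ with $C_1 = \vls(M;y_0)$, $C_2 = \vls<y_0\lneg;c_0>$, $C_3 = \vls<a_0\lneg;z_0>$, $C_4 = \vls(z_0\lneg;N)$, where $M$ and $N$ are the two recursive subformulas for $n\ge 1$ (and $M=\vls[a_0;b_0]$, $N=\vls[b_0\lneg;c_0\lneg]$ when $n=0$). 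The first key step is to show that the principal formula of $\iota$ must be $C_1$ or $C_4$: reading $\pi$ top-down, no rule other than $\ltens$-introduction can create a $\ltens$, so every $\ltens$ of $R_n$ is the principal formula of some $\ltens$-introduction; and since $\lpar$- and $\lseq$-introduction can only act on a formula that is a \emph{node} of the current sequent's series-parallel order, never on a subformula strictly nested inside another formula, a $\ltens$ lying inside $M$ or $N$ can become a node (hence $\ltens$-introduction-principal) only after the enclosing tensor $C_1$ resp.\ $C_4$ has been decomposed. Thus $C_1$ and $C_4$, the only $\ltens$-rooted occurrences of $R_n$ not nested in another tensor, are the only candidates for $\iota$.

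The second step is to determine the exact sequent at $\iota$. Only $\lpar$-/$\lseq$-introduction, entropy and dimix can occur between the conclusion and $\iota$, and I would argue that this forces the conclusion of $\iota$ to be exactly $\sqnp{C_1,C_2,C_3,C_4}$: decomposing the $\lseq$-rooted $C_2$ or $C_3$ would introduce an order edge incompatible with the flatness that $\ltens$-introduction requires, and entropy read bottom-up can only \emph{add} order edges, so cannot repair this; $C_1$ and $C_4$ are opaque to $\lpar$-/$\lseq$-introduction; and since $R_n$ is balanced and each of the pairs $\{C_1,C_2\},\{C_1,C_3\},\{C_1,C_4\},\{C_2,C_4\},\{C_3,C_4\}$ shares a variable (via $y_0$, $a_0$, $b_0$, $c_0$, $z_0$ respectively), a non-trivial dimix below $\iota$ would place the two occurrences of some variable on opposite sides and leave a premise unprovable. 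The third step --- the heart of the argument --- is the ``forced split''. Say $\iota$ has premises $\sqnp{\Gamma,M}$ and $\sqnp{\Delta,y_0}$ over $\sqnp{\Gamma,\Delta,C_1}$, the $C_4$ case being symmetric. The two premises are derived by independent subderivations, and in a cut-free derivation every variable must be matched to its dual within a single subderivation, because the atom occurrences of a sequent are exactly those appearing in its axiom leaves. Chasing the forced memberships $y_0 \rightsquigarrow y_0\lneg \rightsquigarrow C_2 \in \Delta \rightsquigarrow c_0 \rightsquigarrow c_0\lneg \rightsquigarrow C_4 \in \Delta \rightsquigarrow z_0\lneg \rightsquigarrow z_0 \rightsquigarrow C_3 \in \Delta \rightsquigarrow a_0\lneg \rightsquigarrow a_0$ then forces the occurrence $a_0$ into the premise $\sqnp{\Delta,y_0}$; but $a_0$ occurs inside $M$, hence in the premise $\sqnp{\Gamma,M}$, a contradiction.

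I expect the main obstacle to be the rigorous formulation of the structural invariants underlying the first two steps: that $\lpar$-/$\lseq$-introduction cannot reach inside a formula-node, that entropy only tightens (reading bottom-up) the ambient series-parallel order and hence cannot undo a $\lseq$ once it has been introduced, and --- so that statements such as ``the $C_i$ are still intact at $\iota$'' and the membership-chase in the last step are unambiguous --- a careful notion of the residual of a formula or atom occurrence along the derivation. These are exactly the bookkeeping ingredients that Tiu's argument also relies on; they are intuitively evident but need to be spelled out, whereas the remaining content of each step is a short case analysis.
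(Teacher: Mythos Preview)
Your argument is correct and is a detailed, uniform-in-$n$ version of what the paper only gestures at. The paper's own proof is a single sentence: it observes that $R_0$ is a variant of the sequent~\eqref{eq:seq-ex} and invokes the argument already sketched for that sequent (attributed to~\cite[Proposition~5]{retore:2021}), without treating $R_n$ for $n\geq 1$ explicitly. The underlying idea is identical to yours --- the bottommost $\ltens$-introduction must target one of the two outer tensors and then cannot split the remaining context compatibly with the atom pairings --- but your atom-chase in step~3 makes the contradiction explicit.

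One minor over-specification in your step~2: the conclusion of $\iota$ need not be literally $\sqnp{C_1,C_2,C_3,C_4}$, since some outer $\lpar$-introductions may not yet have been performed (for instance one could have $\sqnp{C_1,\vls[C_2;C_3;C_4]}$, which is still flat and still exposes $C_1$). This does not damage the proof, because step~3 only depends on which atom occurrences land on each side of the split, and that is unaffected by whether $C_2,C_3,C_4$ appear as separate nodes or bundled under residual $\lpar$'s.
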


\begin{proof}
  The formula $R_0$ is a variation of the corresponding formula of the sequent~\eqref{eq:seq-ex}, and the same argument applies.
\end{proof}

Finally, we have:

\begin{clm}
  A deep proof system is needed to prove the formulas $R_n$.
\end{clm}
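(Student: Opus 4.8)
The plan is to mimic Tiu's original argument from~\cite{SIS-II}, but with the refinement that the obstruction now involves a $\ltens$-connective at unbounded depth, so that it cannot be circumvented by Retoré's entropy rule (which, as we saw, only needs to reach the $\lpar$- and $\lseq$-structure). First I would make precise what it means for a proof system to be \emph{shallow}: as in~\cite[\S6]{SIS-II}, a rule instance has bounded depth if the redex it rewrites lies within some fixed distance from the root of the formula tree, and a shallow system is one in which every rule has bounded depth (Retoré's sequent calculus including entropy is shallow in this sense, since entropy acts on the top-level series-parallel structure). The goal of this claim is then to show that no shallow cut-free system proving exactly the $\BV$-theorems can exist, by exhibiting the $R_n$ as witnesses: proving $R_n$ forces access to a subformula occurrence at depth roughly $2n$, and crucially that subformula occurrence is (or is guarded by) a $\ltens$, so the context cannot be globally split by a sequent-style $\ltens$-rule either.

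The key steps, in order, would be: (1) establish the provability of $R_n$ in $\BV$ (already done in the first claim, via the derivation $\vls[A;B] \vdash \xi_n(i,A,B)$ built as in~\cite[Lemma~7.4]{SIS-II}); (2) analyze, by induction on $n$, the structure of \emph{any} $\BV$-proof of $R_n$ working bottom-up, showing that the only way to make progress on the outermost $\ltens$-subformula $(\xi_{n-1}(i0,\ldots);y_i)$ is to first dismantle the surrounding $\lseq/\lpar$ context down to where the $\ltens$ becomes ``active'', and that this recursion propagates the obstruction one level of $\xi$ deeper --- this is essentially Tiu's depth-lower-bound argument, adapted to track the $\ltens$ rather than just the nesting; (3) conclude that any sound and complete cut-free system for $\BV$ must contain a rule instance of depth $\geq f(n)$ for $f$ unbounded, hence cannot be shallow; (4) observe separately that the $\ltens$ at depth $2n$ means that, unlike for Tiu's $\ltens$-free $\alpha_n$, adding the entropy rule to a flat sequent calculus does \emph{not} help, since entropy only reorders the outermost SP-structure and cannot reach an internal $\ltens$ --- so a genuine deep-inference treatment of $\ltens$ is unavoidable.

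The main obstacle will be step (2): making the ``the $\ltens$ forces depth'' argument rigorous. Tiu's proof is a delicate case analysis on which rule can be applied at the bottom of a purported proof, exploiting the fact that the $y_i/y_i\lneg$ and $a_i/a_i\lneg$ atoms are positioned so that no rule instance can act on the inner $\xi_{n-1}$ block without first consuming the guarding $\lseq$-pairs in a forced order. Since our $\xi_n$ differs from Tiu's $\alpha_n$ only by the insertion of the $y_i,z_i$ atoms via $\ltens$ (as noted in the footnote), the hope is that Tiu's invariant carries over essentially verbatim, with the added bookkeeping that the forced redex is now inside a $\ltens$ rather than a $\lpar$. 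I would therefore structure the proof as: state the relevant invariant on partial $\BV$-derivations of $R_n$ (e.g.\ ``the topmost rule acting nontrivially on the depth-$2k$ subformula must be preceded by the elimination of the depth-$2k{-}1$ interaction pair''), cite~\cite[Lemmas~7.2--7.5]{SIS-II} for the parts that transfer unchanged, and spell out only the one or two new cases introduced by the $\ltens$. The contextual-splitting remark --- that a sequent $\ltens$-rule would have to partition the context across the two premises, which the linking of $R_n$ forbids --- can then be folded in as a corollary, since it is exactly the observation that the required $\ltens$ instance is ``deep'' and not ``at the root where a sequent rule would place it''.

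Given that this paper is already long and the argument is a direct adaptation, I would keep the exposition at the level of a proof sketch: recall Tiu's construction and its key lemma, point out the single modification (the $y_i,z_i$ via $\ltens$), verify that the modification preserves $\BV$-provability and the depth lower bound, and emphasize the new consequence --- that entropy no longer suffices --- which is the whole point of the refinement. The formal details of the case analysis would be relegated to a reference to~\cite{SIS-II} together with the explicit remark that the $\ltens$-guarding atoms behave, for the purposes of Tiu's invariant, exactly like his $\lpar$-guarding structure, except that the redex they protect is now a tensor.
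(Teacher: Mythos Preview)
Your proposal is correct and takes essentially the same approach as the paper: adapt Tiu's argument from~\cite{SIS-II} and observe that the insertion of $y_i,z_i$ via $\ltens$ forces the deep access to involve a tensor. The paper's own proof is even terser than your sketch---it is literally two sentences citing~\cite{SIS-II} and pinpointing the single mechanism: to remove the atoms $y_i$ (or $z_i$) by an atomic interaction, a $\ltens$ at depth $2n$ must be accessed first (via switch), so your planned level of detail is already more than the paper commits to.
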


\begin{proof}
  This is proved by the same argument as in~\cite{SIS-II}. In order to
  prove $R_n$, a $\ltens$ at depth $2n$ has to be accessed first, in
  order to remove the variable $y_i$ (or $z_i$) with an atomic
  interaction.
\end{proof}

This strengthens Tiu's argument by using formulas involving also
a~$\ltens$, showing that rules like entropy are not enough to obtain
a cut-free sequent style calculus for $\BV$.

\begin{rem}
  However, Acclavio~\cite{acclavio:gs-sc} proposes a sequent system for a graphical logic (see also Section~\ref{sec:cographs}) that includes a \enquote{deep} tensor rule, that in our setting would have the following shape:
  \[\vlinf{\ltens\text{-intro'}}{}{\sqnp{\Gamma,\Delta\{A\ltens
      B\}}}{\sqnp{\Gamma,A}\qquad\sqnp{\Delta\{B\}}}\]
  With this rule, \cite{acclavio:gs-sc} shows cut elimination in the sequent calculus for a logic that before had only a deep inference proof system. Naturally, we conjecture that something similar might be possible for $\BV$, but exploring this would have gone beyond the scope of this paper.
\end{rem}

\subsection{A Complexity-Theoretic Obstacle to Sequent Calculi for Pomset Logic}

Let us now turn to pomset logic. We propose here a similar (and somewhat
informal) complexity-theoretic explanation for why a cut-free sequent calculus
for pomset logic is difficult to find, similar to Remark~\ref{rem:shallow}.

As described in~\cite[Section~5]{retore:2021}, the origins of pomset logic in
the coherence space semantics of linear logic suggest that $\lseq$ is meant to
be a \emph{multiplicative} connective, just like the multiplicative conjunction
$\ltens$ and the multiplicative disjunction $\lpar$ (which give their names to
the fragments $\MLL$ and $\MLLm$ of linear logic). Therefore, we would like to
leverage some proof-theoretic property related to multiplicative connectives.
Recall that the standard sequent calculus rules for $\ltens$ and for its
\emph{additive} counterpart $\&$ correspond to two possible introduction rules
for the conjunction $\land$ in classical logic, respectively:
\[ \frac{\vdash \Gamma, A \quad \vdash \Delta, B}{\vdash \Gamma, \Delta, A\land
    B} \qquand \frac{\vdash \Gamma, A \quad \vdash \Gamma, B}{\vdash \Gamma, A\land
    B} \]
The difference is that the multiplicative rule \emph{splits the context of the
  conclusion into disjoint parts among the premises} while the additive rule
copies the same context in both premises. The same kind of management of the
context occurs in the rules for the so-called \enquote{generalized
  multiplicative connectives} in the French-Italian linear logic
tradition~\cite{DR,genmult}. This leads us to the following definition.

\begin{defi}
  A \emph{multiplicative introduction rule} for an $n$-ary connective $\Phi$ is
  a rule whose instances obey the following property: there exist formulas
  $A_1,\dots,A_n$ and a multiset of formulas $M$ such that
  \begin{itemize}
  \item the multiset of formulas occurring in the premises --- i.e. the sum of the
    multisets of formulas in each premise --- is equal to\footnote{We abusively
      use the notation $\set\ldots$ for a multiset rather than a set here.}
    $\{A_1,\dots,A_n\} + M$;
  \item the multiset of formulas occurring in the conclusion equals
    $\Phi(A_1,\dots,A_n) + M$.
  \end{itemize}
  A \emph{multiplicative structural rule} is one in which the premises (taken
  together) and the conclusion have the same formulas, taking multiplicity into
  account; this is an equality of multisets similar to the above.  

  A sequent calculus is \emph{multiplicative} when all its rules are either an
  axiom rule, a multiplicative introduction rule or a multiplicative structural
  rule.
\end{defi}
For this to make sense, there must be a way to associate to a sequent its
multiset of formulas. This works both for flat sequents --- which are already
multisets --- and for ordered sequents, and other kinds of generalized sequents
are conceivable. The standard $\MLLm$ sequent calculus, as well as Retoré's
calculus in the previous section and Slavnov's calculus in the next one, are all
multiplicative.

The relevance of this notion to us is a consequence in the spirit of proof
complexity:
\begin{clm}
  In a proof in a multiplicative sequent calculus, the total number of
  introduction rules is at most the total size of the formulas in the sequent
  being proved.
\end{clm}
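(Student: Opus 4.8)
The plan is to track, along the proof tree, the number of \emph{connective occurrences} appearing inside the formulas of each sequent. Concretely, for a sequent $\Sigma$ write $c(\Sigma)$ for the sum, over all formula occurrences $F$ in $\Sigma$, of the number of occurrences of $\ltens$, $\lpar$ and $\lseq$ in $F$; the sequent-level constructors $\sqnp{\cdot,\cdot}$ and $\sqns{\cdot;\cdot}$ are \emph{not} counted. The first step is to check how $c$ behaves under each admissible rule shape. If $\frac{\Sigma_1\ \cdots\ \Sigma_r}{\Sigma}$ is a multiplicative structural rule, then by definition the $\Sigma_i$ together carry the same multiset of formulas as $\Sigma$, so $c(\Sigma) = \sum_i c(\Sigma_i)$; this covers, e.g., dimix and entropy in Retoré's calculus, which only reshuffle the order structure of a fixed multiset of formulas. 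If instead the rule is a multiplicative introduction rule for an $n$-ary connective $\Phi$, then the premises carry $\{A_1,\dots,A_n\} + M$ and the conclusion carries $\{\Phi(A_1,\dots,A_n)\} + M$ for suitable $A_1,\dots,A_n$ and $M$; since $\Phi(A_1,\dots,A_n)$ has exactly one more connective occurrence than $A_1,\dots,A_n$ put together, we get $c(\Sigma) = \sum_i c(\Sigma_i) + 1$. One must double-check here that the $\lpar$- and $\lseq$-introduction rules, which turn a sequent-level comma or semicolon into a formula-level $\lpar$ or $\lseq$, genuinely fall under this second case (they do, with $M$ the rest of the context), whereas dimix falls under the first. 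Finally, an axiom rule has no premises and, in the calculi we consider, an atomic conclusion, so $c$ of an axiom conclusion is $0$.

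The second step is a straightforward induction on the proof $\pi$, with conclusion $\Gamma$, showing that the number of introduction-rule instances in $\pi$ equals $c(\Gamma)$. If $\pi$ is an axiom both sides are $0$; if the last rule is a multiplicative structural rule with subproofs $\pi_1,\dots,\pi_r$, then by the first step and the induction hypothesis the number of introduction rules in $\pi$ is $\sum_i c(\mathrm{concl}(\pi_i)) = c(\Gamma)$; and if the last rule is an introduction rule it contributes exactly one more, matching the $+1$ in the behaviour of $c$. (If non-atomic axioms were permitted, this equality would merely weaken to $\#\{\text{intro rules in }\pi\} \le c(\Gamma)$, which is still enough.)

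The last step is to bound $c(\Gamma)$ by the total size $\sizeof{\Gamma} = \sum_F \sizeof{F}$, i.e.\ the total number of atom occurrences. For a formula built from the binary connectives and atoms, the number of connective occurrences is one less than the number of leaves, hence strictly below $\sizeof{F}$; summing over the formulas of $\Gamma$ gives $c(\Gamma) < \sizeof{\Gamma}$ whenever $\Gamma$ is nonempty. The only point needing care is the unit $\lunit$, a leaf that contributes nothing to $\sizeof{\cdot}$, so that a formula with several $\lunit$'s could a priori have more connectives than atoms; this degenerate situation does not arise for the calculi under discussion, and in any case one may simply bound $c(\Gamma)$ by the total \emph{syntactic} size of $\Gamma$ (counting all symbols), which is trivially an upper bound. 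I expect no real difficulty in this argument; the one thing that must be gotten exactly right is the rule-by-rule bookkeeping of the first step — in particular the distinction between rules that create a new formula-level connective and those (dimix, entropy, and the sequent-level reassociations) that merely rearrange the ambient sequent structure.
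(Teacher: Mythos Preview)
Your argument is correct and is exactly the natural one: each multiplicative introduction rule creates one new top-level connective occurrence while structural rules preserve the multiset of formulas, so an induction on the proof tree bounds the number of introduction rules by the number of connective occurrences in the conclusion, which in turn is below the total size. The paper does not actually spell out a proof of this claim --- it is stated in an explicitly ``somewhat informal'' subsection and treated as immediate --- so there is nothing to compare against; your write-up simply makes explicit what the paper leaves to the reader, including the caveat about units.
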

In many cases, there will also be a bound on the structural rules in proofs. For
instance, the number of uses of multiplicative structural rules that require two
or more premises, such as the mix rule, is also linearly bounded by the size of the
conclusion formulas. Thus the following informal principle: a multiplicative
sequent calculus with \enquote{reasonable} structural rules admits proofs with a
polynomial number of inference rules.\footnote{This also applies to the \emph{graphical proof system} of~\cite{AHS:LICS2020,AHS:LMCS22} that works on general graphs instead of cographs (which correspond to formulas, as discussed in Section~\ref{sec:cographs}).}

We also expect
the correctness of these proofs to hinge only on the local validity of their
inferences (unlike proof nets, whose global correctness criterion makes them
hard to check). According to a similar reasoning as Remark~\ref{rem:shallow},
then, if such a multiplicative calculus were to capture pomset logic, it would be impossible to
verify its inference rules in time polynomial in the size of the formulas,
unless $\NP=\coNP$. This is arguably a strong restriction on the design of
sequent calculi for pomset logic, such as the calculus in the next section.

\subsection{A Reconstruction of Slavnov's Calculus}
\label{sec:slavnov}

Let us now revisit the sound and complete proof system of~\cite{slavnov:19} for
pomset logic in light of our above remarks. It uses \emph{decorated sequents},
which are flat sequents (multisets of formulas) endowed with additional
\enquote{decorations} (analogously to how Retoré's sequents can be seen as flat
sequents decorated with series-parallel orders). The point is that in Slavnov's
sequents, these decorations (defined in Definition~\ref{def:decorated-sequent} below)
take up most of the space; their size may be \emph{exponentially bigger} than
the number of formulas in the sequent. This provides a good reason for inference
rule checking to be superpolynomial in the total size of formulas --- this is
indeed necessary since this sequent calculus is, as we shall see, multiplicative
and \enquote{reasonable} in the above sense.

For the sake of completeness, we describe briefly here Slavnov's system,
specialized to pomset logic. The paper~\cite{slavnov:19} also introduces and
deals with a related but different extension of $\MLL$ called
\enquote{semicommutative $\MLL$}, and it derives its sequent calculus for pomset
logic from the one for semicommutative $\MLL$. Our exposition avoids this
detour. We also aim at giving a high-level idea of what makes the system work.
We decompose our presentation of the system into three \enquote{levels}
(inspired by the two-level treatment of~\cite{slavnov:19}).

\paragraph{First level: preproofs on flat sequents}

First, consider the \enquote{old-fashioned} calculus on flat sequents whose
rules are as follows:
\[ \vlinf{\text{axiom}}{}{\sqnp{a,a\lneg}}{} \qquad
  \vlinf{\text{mix}}{}{\sqnp{\Gamma,\Delta}}{\Gamma \quad \Delta} \qquad
  \vlinf{\ltens\text{-intro}}{}{\sqnp{\Gamma,\Delta,A\ltens
      B}}{\sqnp{\Gamma,A}\quad\sqnp{\Delta,B}}
  \qquad
  \vlinf{\odot\text{-intro}}{}{\sqnp{\Gamma,A\odot B}}{\sqnp{\Gamma,A,B}}\
  \text{for}\ \odot\in\{\lpar,\lseq\}
  \]
In other words, this system extends the usual cut-free sequent calculus for
$\MLLm$ with an introduction rule for $\lseq$ which is exactly the same as for
$\lpar$. Let us call \emph{Slavnov pre-proofs} the derivation trees using these
rules. Obviously, some formulas that are not valid in pomset logic, such as
$(a\ltens b)\lpar(a\lneg\lseq b\lneg)$, may admit Slavnov pre-proofs.

An important observation is that \emph{there is a canonical map from Slavnov
  pre-proofs to tree-like RB-prenets} that preserves the conclusion sequent. It
can be defined inductively by interpreting each of the inference rules as an
operation on proof nets in the obvious way; for instance the $\ltens$-intro rule
corresponds to taking the union of two RB-prenets and adding a gadget for the
newly added $\ltens$ connective (cf.\ Figure~\ref{fig:rb-tree}).

The goal of the \enquote{second level} of the proof system will then be to
filter out Slavnov pre-proofs that translate into correct tree-like RB-nets.

(For $\MLL$ (resp.\ $\MLLm$) sequent calculus proofs, i.e.\ Slavnov pre-proofs
without the $\lseq$-intro and mix rules (resp.\ the $\lseq$-intro rule), it is
well-known that the result of the translation is a correct $\MLL$ (resp.\
$\MLLm$) proof net: in those cases, it corresponds to a
\enquote{desequentialization} operation whose idea goes back to Girard's
original paper~\cite{girardLL}.)

\paragraph{Second level: decorations with \enquote{multi-reachability} information}

At this stage, it is natural to add data that keeps track of paths in
RB-prenets, in order to reflect the correctness criterion.

\begin{defi}\label{def:decorated-sequent}
  A \emph{decoration} on a flat sequent $\Gamma$ is a set $\decor$ of sets of ordered
  pairs of formula occurrences. The pair $(\Gamma,\decor)$ is then called a
  \emph{decorated sequent}.
\end{defi}

  To each Slavnov pre-proof $\pi$ of a flat sequent $\Gamma$, we associate a
  decoration $\decor_\pi$ --- and thus the decorated sequent $\Sdecor_\pi =
  (\Gamma,\decor_\pi)$ --- as follows: first translate it into a tree-like
  RB-prenet $\gG$, and then let the set $\{(A_1,B_1),\dots,(A_n,B_n)\} \in \decor_\pi$
  whenever there exists a family of \emph{pairwise disjoint} \ae-paths
  $(P_i)_{i \in \{1,\dots,n\}}$ such that  for each $i$ the path $P_i$ goes from the conclusion vertex
  of $\gG$ corresponding to $A_i$ to the vertex corresponding to $B_i$.

  The key property is now:
\begin{clm}
  The decorated sequent corresponding to a Slavnov-pre-proof is entirely
  determined by the last rule and the decorated sequents corresponding to the
  sub-pre-proofs of the premises.
\end{clm}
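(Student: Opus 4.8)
The plan is to prove this by induction on $\pi$, that is, by a case distinction
on its last rule $\rr$. In each case one first unfolds the definition of the
translation from Slavnov pre-proofs to tree-like RB-prenets (which, as recalled
above, reads each inference rule as an operation on prenets in the style of
\Cref{fig:rb-tree}): writing $\gG$ for the prenet of $\pi$ and $\gG_1,\dots,\gG_k$
for those of the immediate sub-pre-proofs, the prenet $\gG$ is obtained from the
$\gG_i$ either by a disjoint union (for $\mixr$, and for the two-premise
$\ltens$-intro) or, in addition, by gluing in a small fixed gadget for the newly
introduced connective (for $\ltens$-, $\lpar$- and $\lseq$-intro); and for the
axiom rule $\gG$ is a single matching edge. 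Since the conclusion sequent of
$\pi$ and the way the premise formula occurrences sit inside it are visibly
determined by $\rr$ alone, the task reduces to expressing $\decor_\pi$ --- the
set of families of pairwise disjoint \ae-paths of $\gG$ joining conclusion
vertices --- in terms of the $\decor_{\pi_i}$. So the proof amounts to
exhibiting, for each shape of $\rr$, a fixed combinatorial operation on (tuples
of) sets of sets of ordered pairs of formula occurrences, and checking that it
computes $\decor_\pi$.

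First I would dispatch the easy cases. For the axiom rule $\gG$ has just two
vertices joined by a matching edge, so $\decor_\pi = \{\varnothing,
\{(a,a\lneg)\}, \{(a\lneg,a)\}\}$. For $\mixr$ there is no edge between $\gG_1$
and $\gG_2$, so every \ae-path of $\gG$ lies in one side and a disjoint family
splits uniquely into a disjoint family on each side; thus $\decor_\pi = \{\,P_1
\cup P_2 \mid P_1 \in \decor_{\pi_1},\ P_2 \in \decor_{\pi_2}\,\}$. The
$\lpar$-intro gadget is also harmless: its apex vertex (the one for $A\lpar B$)
has degree one and is a dead end for \ae-paths, so at most one \ae-path of a
disjoint family can touch it, and that path enters the premise prenet through
$A$ or through $B$; unfolding this yields $\decor_\pi$ from $\decor_{\pi_1}$ by a
renaming that identifies the $A$- and $B$-endpoints with $A\lpar B$ and forbids
their simultaneous use.

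The heart of the argument --- and the step I expect to cost the most --- is the
$\lseq$-intro case, the $\ltens$-intro case being a variant of it. Here the
gadget adds a \emph{directed} non-matching edge between the vertices of $A$ and
of $B$, and unlike the $\lpar$ gadget this edge can be traversed by \ae-paths.
The crucial observation is that, since \emph{both} endpoints of this new edge
lie in the \emph{same} premise prenet $\gG_1$, an \ae-path of $\gG$ that uses
the new edge falls apart, once that edge is removed, into \emph{two} disjoint
\ae-paths of $\gG_1$, whose endpoints are the source of the edge paired with
$A$, and $B$ paired with the target, in exactly that orientation. Consequently,
deciding whether a pair $(X,Y)$ can appear in a family of $\decor_\pi$ forces one
to query $\decor_{\pi_1}$ for families that contain \emph{both} $(X,A)$ and
$(B,Y)$ --- besides the easier cases in which the conclusion path uses the
gadget only to reach the apex vertex, or not at all, and the global splitting of
the whole disjoint family. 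Iterating this phenomenon over several $\lseq$-intro
rules is exactly what makes the sizes of the families unbounded, and is the
reason decorations must be sets of \emph{sets} of pairs rather than sets of
pairs. For $\ltens$-intro the same analysis applies, except that the new edge is
undirected and its two endpoints lie in the two different premise prenets, so a
traversing \ae-path splits into one piece in $\gG_1$ and one in $\gG_2$. What
remains after this analysis --- enumerating the admissible decomposition
patterns together with their orientation constraints, and verifying that the
reassembled walks are genuinely elementary and alternating and that disjointness
is preserved in both directions --- is a finite but somewhat tedious
verification, which I would carry out rule by rule.
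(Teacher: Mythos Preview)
Your approach is essentially the paper's: a case analysis on the last rule, translating each rule into a fixed gadget and then decomposing every disjoint family of \ae-paths in the conclusion prenet into families in the premise prenet(s) plus at most one path that touches the gadget. The only difference is emphasis---the paper works out the $\ltens$-intro case explicitly (and makes the ``at most one path of the family can enter the gadget'' observation there, which you state for $\lpar$ but leave implicit for $\lseq$ and $\ltens$), while you treat $\lseq$-intro as the representative case and call $\ltens$ a variant; both choices are fine.
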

For instance, there exists a function $\Fdecor_{\ltens}$ such that given a proof
$\pi$ equal to
\begin{prooftree}
  \AxiomC{$\pi_1$}
  \UnaryInfC{$\sqnp{\Gamma,A}$}
  \AxiomC{$\pi_2$}
  \UnaryInfC{$\sqnp{\Delta,B}$}
  \LeftLabel{\footnotesize $\ltens$-intro}
  \BinaryInfC{$\sqnp{\Gamma, \Delta, A\ltens B}$}
\end{prooftree}
we have $\Sdecor_{\pi} = \Fdecor_{\ltens}(\Sdecor_{\pi_1},\Sdecor_{\pi_2},A,B)$.\footnote{Strictly speaking, the two last arguments point to formula occurrences in the
  conclusions of~$\pi_1,\pi_2$.}
To see why this claim holds, observe that
a family of disjoint \ae-paths between conclusions in the tree-like RB-prenet
corresponding to $\pi$ consists of the union of:
\begin{itemize}
\item such a family in the prenet for $\pi_1$, not touching $A$;
\item such a family in the prenet for $\pi_2$, not touching $B$;
\item either the empty set, or a singleton containing one of the following:
  \begin{itemize}
  \item an \ae-path between (the vertex for) some formula occurrence in $\Gamma$ and
    some other in $\Delta$ going through the RB-tree gadget for $A\ltens B$;
  \item an \ae-path from some vertex from either $\Gamma$ or $\Delta$ to the
    conclusion vertex for $A\ltens B$;
  \item the reverse of either of the above two possibilities.
  \end{itemize}
\end{itemize}
An explicit expression for $\Fdecor_{\ltens}$ can be obtained by reasoning along
these lines. Similar (simpler) analyses can be carried out for the other
connectives, and for the axiom and mix rules.

This allows us to lift each of the previous inference rules on flat sequents to
a rule on decorated sequents, for example the decorated version of
$\ltens$-intro would be
\[ \frac{\quad\Sdecor \qquad\qquad\qquad \Sdecor'\quad}{\Fdecor_{\ltens}(\Sdecor,\Sdecor',A,B)}
  \qquad\text{where $A$ (resp.\ $B$) is a formula occurrence in $\Sdecor$
    (resp.\ $\Sdecor'$)} \]
These decorated inference rules can be read as a proof system, and we have:
\begin{clm}
  The derivation trees generated by those rules are exactly the ones that can be
  obtained in the following way: start from a Slavnov pre-proof (with flat sequents) and for
  each node, replace its value by $\Sdecor_\pi$ where $\pi$ is the sub-pre-proof
  rooted at that node.
\end{clm}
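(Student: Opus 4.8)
The plan is to prove this set equality by establishing the two inclusions separately, each by a straightforward structural induction, using as the only nontrivial input the claim stated just above: that the assignment $\pi \mapsto \Sdecor_\pi$ satisfies exactly the recursion $\Sdecor_\pi = \Fdecor_\rr(\Sdecor_{\pi_1},\dots,\Sdecor_{\pi_k},\dots)$ that governs the decorated rules, with the axiom case handled analogously. A second, purely bookkeeping observation is needed: forgetting the decoration in an instance of a decorated rule yields an instance of the corresponding flat Slavnov rule. For example, the underlying flat sequent of $\Fdecor_{\ltens}(\Sdecor,\Sdecor',A,B)$ is $\sqnp{\Gamma,\Delta,A\ltens B}$ whenever $\Sdecor,\Sdecor'$ have underlying flat sequents $\sqnp{\Gamma,A},\sqnp{\Delta,B}$, and the designated principal occurrences $A,B$ are the same on the decorated and the flat side; similarly for $\lpar$-intro, $\lseq$-intro, mix, and axiom.

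For the inclusion $\supseteq$, let $\pi$ be a Slavnov pre-proof, and decorate each node $\nu$ of $\pi$ by $\Sdecor_{\pi_\nu}$, where $\pi_\nu$ is the sub-pre-proof rooted at $\nu$. I would argue by induction on $\pi$ that the resulting labelled tree is a valid derivation in the decorated calculus. In the base case $\pi$ is an axiom and $\Sdecor_\pi$ is exactly the value produced by the decorated axiom rule. If $\pi$ ends with a rule $\rr$ (one of mix, $\ltens$-, $\lpar$- or $\lseq$-intro) applied to sub-pre-proofs $\pi_1,\dots,\pi_k$, then by the induction hypothesis the sub-trees rooted at the premises are valid decorated derivations with roots $\Sdecor_{\pi_1},\dots,\Sdecor_{\pi_k}$; and by the recursion from the preceding claim, $\Sdecor_\pi$ is precisely the conclusion obtained by applying the decorated version of $\rr$ to $\Sdecor_{\pi_1},\dots,\Sdecor_{\pi_k}$ with the same principal occurrences as in $\pi$. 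Hence the whole decorated tree is a valid derivation.

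For the inclusion $\subseteq$, let $\delta$ be a derivation in the decorated calculus. Forgetting all decorations turns $\delta$ into a labelled tree $\pi$ whose every inference step is, by the bookkeeping observation above, an instance of a flat Slavnov rule; thus $\pi$ is a Slavnov pre-proof with the same (flat) conclusion. It remains to check that the decorated sequent carried by $\delta$ at each node $\nu$ is precisely $\Sdecor_{\pi_\nu}$, i.e.\ that $\delta$ coincides with the node-by-node $\Sdecor$-decoration of $\pi$. This follows by induction on $\delta$: each decorated rule is functional in its premises (this is the very content of the existence of $\Fdecor_{\ltens}$ and its analogues), so the root decorated sequent of $\delta$ equals $\Fdecor_\rr$ applied to the root decorated sequents of the immediate sub-derivations, which by the induction hypothesis are $\Sdecor_{\pi_1},\dots,\Sdecor_{\pi_k}$; and by the preceding claim this value is $\Sdecor_\pi$. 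This closes both inclusions.

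I expect no real obstacle beyond the bookkeeping just described: all the combinatorial substance --- the definition of $\Fdecor_{\ltens},\Fdecor_{\lpar},\Fdecor_{\lseq}$ (and the mix/axiom cases) together with the verification that $\Sdecor_\pi$ obeys the matching recursion --- is already packaged in the preceding claim, and the present statement merely re-expresses it as the equivalence between computing a decoration bottom-up from a pre-proof and building the decorated derivation by the lifted rules. The one point worth a word of care is that the correspondence between decorated derivations and pre-proofs is \emph{not} claimed to be injective --- distinct pre-proofs of the same sequent may carry distinct decorations, and the decorations at incomparable nodes need not determine the pre-proof --- but since the statement asserts only a set equality, this is immaterial.
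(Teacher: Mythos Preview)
Your proof is correct and is the natural way to establish this claim. The paper itself states this claim without proof, treating it as an immediate consequence of the preceding claim (that $\Sdecor_\pi$ is determined recursively from the last rule and the $\Sdecor_{\pi_i}$ of the sub-pre-proofs); your argument by double inclusion via structural induction is exactly the routine verification the paper leaves implicit.

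One small remark: your closing caveat about non-injectivity is unnecessary and slightly off. The correspondence between Slavnov pre-proofs and decorated derivations is in fact a bijection: forgetting decorations recovers the underlying pre-proof, and (as your $\supseteq$ direction shows) the decoration at each node is uniquely determined by the sub-pre-proof rooted there. So distinct pre-proofs yield distinct decorated derivations and vice versa. This does not affect your argument, since set equality is all that is claimed, but you need not hedge.
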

Let us call these derivation trees \emph{decorated pre-proofs}.

\paragraph{Third level: side condition using the decorations}

At this point, we have obtained a new proof system that, in the end, proves the
same flat sequents as the former Slavnov pre-proofs. So it is still unsound with
respect to pomset logic. To remedy that, it remains to leverage the additional
\enquote{multi-reachability} information provided by the decorations (in fact,
we use only reachability by a single \ae-path between two formulas).

\begin{clm}\label{clm:deco-side-condition}
  A decorated pre-proof translates to a correct tree-like RB-net if and only if
  we have $\{(B,A)\}\notin\decor$ for every instance of a $\lseq$-intro rule
  in it of the form below:
  \[ \frac{(\sqnp{\Gamma,A,B},\decor)}{(\sqnp{\Gamma,A\lseq B},\decor')} \]
\end{clm}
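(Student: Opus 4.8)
The plan is to connect the translation-into-RB-prenets semantics of decorated pre-proofs with the correctness criterion for tree-like RB-nets, exploiting the fact that the decoration $\decor$ records \emph{exactly} the reachability-by-a-single-\ae-path information for conclusion vertices. First I would set up notation: given a decorated pre-proof $\Pi$ with flat conclusion $\sqnp{\Gamma,A\lseq B}$, let $\pi$ be the underlying Slavnov pre-proof and $\gG = \treeRB{\Gamma\{A\lseq B\},\linking}$ its image under the canonical translation; recall that in a tree-like RB-prenet every \ae-cycle is automatically chordless, so $\gG$ is a correct tree-like RB-net iff it has no \ae-cycle at all. The proof then proceeds by an induction on the structure of the pre-proof, using the already-established fact (the \enquote{key property} claim) that $\decor_\pi$ is computed compositionally via the functions $\Fdecor_{(-)}$.

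The heart of the matter is the following local analysis at a $\lseq$-intro node. Consider the instance
\[ \frac{(\sqnp{\Gamma,A,B},\decor)}{(\sqnp{\Gamma,A\lseq B},\decor')} \]
coming from a sub-pre-proof $\pi_0$ of $\sqnp{\Gamma,A,B}$ whose tree-like RB-prenet is $\gG_0$. In passing from $\gG_0$ to $\gG_0$-with-the-$\lseq$-gadget-for-$A\lseq B$, the new RB-tree piece contributes a fresh $B$-edge (the root of $A\lseq B$) and two $R$-edges joining the new conclusion vertex downward and, crucially, a \emph{directed} $R$-edge from the $A$-side root to the $B$-side root. I would argue that every \ae-cycle in the new prenet must pass through this directed $R$-edge: the only way a cycle can use the new gadget is to traverse that edge, since the other new $R$-edges hang off a $B$-edge whose other endpoint (the conclusion vertex) is a leaf with no further $R$-edge (the same observation that is used repeatedly in Section~\ref{sec:counterexample}). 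Tracing such a cycle, after entering through the $A$-root and leaving through the $B$-root, it must return from (the vertex corresponding to) $B$ back to (the vertex corresponding to) $A$ inside $\gG_0$ along an \ae-path. Conversely, any \ae-path in $\gG_0$ from the $B$-conclusion-vertex to the $A$-conclusion-vertex closes up, via the new directed gadget edge, into an \ae-cycle in the new prenet. Hence: \emph{the new prenet acquires an \ae-cycle through the $A\lseq B$-gadget precisely when $\gG_0$ contains an \ae-path from $B$ to $A$}, i.e.\ precisely when $\{(B,A)\}\in\decor$. (Here one must be slightly careful about the direction convention in the definition of $\decor_\pi$ and in the RB-tree of Figure~\ref{fig:rb-tree}: the arrow goes from the left root to the right root, so it is $(B,A)$ and not $(A,B)$ that is forbidden — this matches the statement.) The other inference rules ($\lpar$-intro, axiom, mix, and $\ltens$-intro) introduce gadgets that can create no new \ae-cycle that was not already present below, because the new vertices they add are either leaves hanging off a $B$-edge with no downward $R$-edge, or connect strictly upward; so for those rules the prenet is correct iff all immediate sub-prenets are. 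This is where I would invoke the compositional nature of $\decor$ to justify that no \enquote{global} interaction across the two premises of $\ltens$-intro or mix can create a cycle — any such cycle decomposes as in the case analysis spelled out just before Claim~\ref{clm:deco-side-condition}.

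Putting the induction together: the translation of $\Pi$ is a correct tree-like RB-net iff, at every node, no new \ae-cycle is created; by the local analysis this happens iff the side condition $\{(B,A)\}\notin\decor$ holds at every $\lseq$-intro node (all other nodes impose no constraint). I expect the main obstacle to be making the \enquote{no new cycle at $\ltens$-intro/mix} step fully rigorous: one has to verify that an \ae-cycle in the glued prenet either lives entirely in one of the two sub-prenets (hence handled by induction) or is routed through the new $\ltens$-gadget in a way that — because that gadget's conclusion vertex is again a leaf with a single adjacent $R$-edge pair — cannot in fact close into a cycle at all, only into \ae-paths between conclusions of the two halves. This is exactly the same phenomenon exploited in the simplifications of Figure~\ref{fig:simplifications}, and it should go through cleanly, but it requires carefully enumerating how an \ae-path can enter and leave the $\ltens$-gadget. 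Once that is settled, the $\lseq$ case carries the real content and the equivalence in Claim~\ref{clm:deco-side-condition} follows.
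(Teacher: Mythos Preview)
Your proposal is correct and follows essentially the same approach as the paper: an induction on the pre-proof where the $\lseq$-intro case is analyzed via the directed edge in the RB-tree gadget, and all other rules are shown not to create new \ae-cycles. The paper is terser on the latter point, simply invoking the well-known soundness of $\MLLm$ desequentialization rather than spelling out the local analysis of the $\ltens$-intro and mix gadgets as you do, but the underlying argument is the same.
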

\begin{proof}
  In the inductive translation of Slavnov pre-proofs to prenets, if the last
  rule is any other than $\lseq$-intro, then the sub-pre-proofs for its premises
  are all mapped to correct nets if and only if the translation of the whole
  pre-proof is itself a correct net. In fact, this is precisely why the
  desequentialization of $\MLLm$ sequent proofs into proof nets is sound, a
  well-known fact. However, for a $\lseq$-intro rule (using the above
  notations), the corresponding operation on tree-like RB-prenets may create new
  \ae-cycles. From the shape of the RB-tree gadget associated to $\lseq$ (cf.\
  Figure~\ref{fig:rb-tree}) one can see that such new cycles can only be
  composed of the directed edge of this gadget from $A$ to $B$ plus an \ae-path
  from $B$ to $A$ in the prenet for the premise. The existence of the latter is
  precisely equivalent to $\{(B,A)\}\in\decor$.
\end{proof}
Note that the assumption in the above claim only consists in purely local
\enquote{side conditions} on inference rules, hence:
\begin{defi}
  A \emph{decorated proof} is a derivation tree in the system whose inference
  rules are those of decorated pre-proofs except that the $\lseq$-intro rule is
  subject to a \emph{side condition}: it can only be applied to
  $(\sqnp{\Gamma,A,B},\decor)$ (to introduce $A\lseq B$) when
  $\{(B,A)\}\notin\decor$.
\end{defi}
The \enquote{if} part of Claim~\ref{clm:deco-side-condition} can then be
rephrased as the \emph{soundness of the decorated proof system} for pomset logic.

Let us also sketch briefly a \emph{completeness} argument. Given a correct
pomset logic proof net $\gG$, first consider a copy $\gG'$ where every $\lseq$ has been
turned into $\lpar$. This is still a correct net (replacing $\lseq$ by $\lpar$
removes an edge, so it can destroy \ae-cycles, not create them), in fact $\gG'$ is
an $\MLLm$ proof net. There exists an $\MLLm$ sequent proof whose inductive
translation (desequentialization) is $\gG'$ --- this is the
\emph{sequentialization theorem} (a result involving non-trivial combinatorics,
originating in~\cite{girardLL}; see~\cite{handsomeTCS,uniquePM} for a discussion
of its \enquote{equivalence} with earlier results in mainstream graph theory).
Next, in this sequentialization, replace the relevant occurrences of $\lpar$ by
$\lseq$; we obtain a Slavnov pre-proof (since in this system the rule for
$\lseq$ is the same as the $\MLLm$ rule for $\lpar$) which lifts uniquely to a
decorated pre-proof. Finally, one must check that the latter satisfies the side
conditions; this comes from the correctness of the pomset logic proof net $\gG$ that we
started with, plus the \enquote{only if} part of
Claim~\ref{clm:deco-side-condition}.


\section{Conclusion}

In the first paper of this series~\cite{SIS}, Guglielmi announced the task of
the present one in the following way:
\begin{quote}
  It is still open whether the logic in this paper, called $\BV$, is the same as
  pomset logic. We conjecture that it is actually the same logic, but one
  crucial step is still missing, at the time of this writing, in the equivalence
  proof. This paper is the first in a planned series of three papers dedicated
  to $\BV$. [\ldots] In the third part, some of my colleagues will hopefully show
  the equivalence of $\BV$ and pomset logic, this way explaining why it was
  impossible to find a sequent system for pomset logic.
\end{quote}
Surprisingly, the hoped-for equivalence turned out to be false; in
\Cref{sec:comparing}, we exhibited an explicit formula provable in pomset logic,
but not in $\BV$. What first led us to seek such a counter-example was the
discovery of the complexity-theoretic hardness results of \Cref{sec:complexity},
according to which the conjectured equivalence would have implied $\NP=\coNP$.
This, plus Slavnov's recent system~\cite{slavnov:19}, put into question the
established narrative about the impossibility of sequent calculi for pomset
logic, so we revisited this topic in \Cref{sec:sequents} (and showed in passing
that an old sequent calculus with cuts was in fact equivalent to $\BV$).

\subsection{Related topics}\label{sec:related-work}

As we hope that this paper may serve as a reference for readers who wish to get
acquainted with $\BV$ and pomset logic (hence the lengthy \Cref{sec:prelim}), we
will broadly survey here some works that are connected to these two systems,
without limiting ourselves to provability or complexity-theoretic aspects.

\subsubsection{Applications and semantics of $\BV$ and pomset logic}

One of the first applications of self-dual non-commutativity was Reddy's Linear
Logic Model of State~\cite{reddy}. This work, whose ultimate goal is to study
mutable state in programming languages, introduces $\mathsf{LLMS}$, which is an extension 
of intuitionistic\footnote{In the context of linear logic,
  \enquote{intuitionistic} means that the sequents are two-sided with the right
  side being limited to a single formula. The sequents in~\cite{reddy} are of
  the form $\Gamma\vdash A$ where $\Gamma$ is what we call an ordered sequent.}
linear logic with some connectives, one of which is $\lseq$. $\mathsf{LLMS}$
comes with a semantics in coherence spaces where the interpretation of $\lseq$
coincides with that for pomset logic. The proof system for $\mathsf{LLMS}$ is a
sequent calculus similar to Retoré's one (\S\ref{sec:sequent-retore}). It also
admits a semantics in Dialectica categories~\cite[\S4]{paiva:LLMS}.

As for the proof nets of pomset logic, they have no known notion of categorical
semantics; in fact, in light of the connections between categorical logic and
deep inference~\cite{HughesDeep}, it might be argued that any presentation of
pomset logic as an \enquote{initial something-category} would amount to giving a
deductive proof system for it. However the same connections make modeling $\BV$
categorically a straightforward matter, as has been done in~\cite{DI-PCS} where
a new concrete semantics (probabilistic coherence spaces) is also given as an
example of $\BV$-category. There have been recent works relating $\BV$ and
$\BV$-categories to quantum causality~\cite{BluteGIPS14,causal-bv}.

Finally, let us mention that Retoré and his collaborators have applied pomset
logic to mathematical linguistics (see~\cite[Section~7]{retore:2021}). This
provides an alternative to the usual approach in \emph{categorial
  grammars}~\cite{CategorialGrammar} which relies on the another kind of
non-commutative logic that we shall cover next.

\subsubsection{Other non-commutative variants of linear logic}

Linearity and non-commutativity first appeared in the study of typed
$\lambda$-calculi in the Lambek calculus~\cite{lambek:58}, whose introduction
was motivated by the aforementioned linguistic applications. We might thus
consider it retrospectively as the first non-commutative logic, even though the
\emph{formulas-as-types correspondence}\footnote{Also known as
  \enquote{proofs-as-programs} or \enquote{Curry--Howard} correspondence.}
between typed $\lambda$-calculi and constructive logics was not known at the
time. In the Lambek calculus, the order of arguments of a function matters: thus
$A \multimap B \multimap C \not\equiv B \multimap A \multimap C$ where
$\multimap$ is the linear implication connective.

In a classical linear logic framework, where $A \multimap B$ may be defined as
$A\lneg \lpar B$, this translates into $A \lpar B \not\equiv B \lpar A$ -- a
non-commutativity in the literal sense. This entails the non-commutativity of
its dual connective $\ltens$ and we have $(A \ltens B)\lneg = B\lneg \lpar
A\lneg$. (In contrast, pomset logic keeps $\ltens,\lpar$ commutative while
adding the new connective $\lseq$, and the self-duality of $\lseq$ does
\emph{not} permute its arguments.) The standard system with those properties is
\emph{cyclic\footnotemark{} linear logic}; see~\cite{yetter:quantales} for its
sequent calculus and proof nets, \cite{DeepCyclic} for a deep inference system
and \cite{AbrusciMaieli} for pointers to more recent work on cyclic $\MLL$.
\footnotetext{This is entirely unrelated to the so-called \enquote{cyclic
    proofs} (also known as \enquote{circular proofs}) for logics with fixed
  points or (co)inductive definitions, a trendy research topic at the time of
  writing.}

On $\lambda$-terms or proof nets, non-commutativity corresponds to a
\emph{planarity} condition; to our knowledge, this was first remarked by Girard
in~\cite[Section~II.9]{TowardsGoI} just after his discovery of linear
logic~\cite{girardLL}. For more recent works pursuing such topological ideas,
see e.g.~\cite{Permutative,AbramskyTemperleyLieb,ribbon}. In particular, renewed
interest in the non-commutative linear $\lambda$-calculus has come from the
discovery of a deep connection with the combinatorics of planar
maps~\cite{ZeilbergerG14}, including bijective and enumerative aspects.

The aforementioned works consider proofs or $\lambda$-terms as static
combinatorial objects, but they can also be seen as programs. In this
perspective, unexpected computational consequences of non-commutativity in the
$\lambda$-calculus have recently been uncovered in an automata-theoretic
setting~\cite{pradic}.

Finally, let us mention Abrusci and Ruet's logic~\cite{AbrusciRuet,AbrusciRuet2}
where commutative and non-commutative versions of the connectives $\ltens$ and $\lpar$
coexist.

\subsubsection{Proof nets vs denotational semantics}

Pomset logic comes from trying to extract a syntactic correctness criterion from
the coherence space semantics: the interpretation of a pre-proof net in
coherence spaces can be defined by means of so-called experiments, and we want
the result of the experiments to be a valid member of the semantics. (To be
precise, we want the set of points obtained by experiments to form a clique.)
For $\MLLm$ proof nets, Retoré showed that this condition is equivalent to
correctness~\cite{cohMLL}, and the correctness criterion for pomset proof nets
was designed to extend this correspondence (this is discussed
in~\cite{retore:pomset}).

Pagani has applied a similar methodology to $\mathsf{MELL}$
(Multiplicative-Exponential Linear Logic) pre-proof nets: he shows
in~\cite{visibleMELL} that the validity of coherence space experiments --- using
a certain \enquote{non-uniform} interpretation of the exponentials --- is
equivalent to a certain graph-theoretical condition, \emph{visible acyclicity},
which is weaker than the usual correctness criterion for
$\mathord{\mathsf{MELL\!+\!mix}}$. This is later extended to differential
interaction nets in~\cite{visibleDiLL}; since coherence spaces are not a
semantics of differential linear logic, the result of~\cite{visibleDiLL} is
formulated with respect to Ehrhard's finiteness spaces instead.

A similarity between correctness for pomset proof nets and visible acyclicity is
that both involve \emph{directed} edges and cycles. Thanks to this, it is
straightforward to show that visible acyclicity is $\coNP$-hard, by adapting
the proof for pomset logic; however we do not know whether, conversely, it is in
$\coNP$.

Let us also mention Tranquili's \emph{hypercorrectness} criterion for $\mathsf{MALL}$
(Multiplicative-Additive) pre-proof nets, coming from their semantics in
hypercoherences~\cite{hypercorrectness}. Here again the condition
obtained is weaker than the usual correctness criterion --- so there are
hypercorrect $\mathsf{MALL}$ pre-proof nets that are not sequentializable. (Coherence
spaces would allow even more non-sequentializable pre-proof nets, for instance
a version of Berry's famous \enquote{Gustave function}.)

\subsubsection{Extensions of $\BV$}

Given that $\BV$ and pomset logic are \enquote{multiplicative} logics, it is
natural to make extensions with other primitives, like the additives
and exponentials of linear logic, or other modalities or quantifiers.

This has indeed been done, but so far only for $\BV$. The first such extension was adding the exponential of linear logic to $\BV$, leading to the logic $\NEL$, which is studied in the fourth and fifth paper of this series~\cite{SIS-IV,SIS-V}.

In \cite{roversi:binder}, Roversi adds a self-dual binder to $\BV$, in order to establish a correspondence to the linear $\lambda$-calculus, in the spirit of the \emph{formulas-as-types} paradigm.

The next natural extension was adding the additives, leading to the logic $\MAV$~\cite{ross:MAV}, which has then been extended by nominal quantifiers (and standard first-order quantifiers)~\cite{horne:etal:tocl19,horne:etal:mscs19} in order to simulate private names in process algebras, as for example the $\pi$-calculus. This is in the line of research by Bruscoli~\cite{bruscoli:ccs} who used $\BV$ to simulate reductions in $\CCS$, following the \emph{formulas-as-processes} paradigm.

\subsubsection{Beyond formulas}
\label{sec:cographs}
The formulas-as-processes paradigm has recently motivated another line of research. The restrictions on digraphs, that define dicographs which correspond to formulas (see Definitions~\ref{def:dicograph},~\ref{def:relation-web} and Theorem~\ref{thm:dicograph}) and that therefore make proof theory possible in the first place, are also an obstacle to the formulas-as-processes paradigm because there are processes that do contain the forbidden configurations in~\eqref{eq:forbidden} in Definition~\ref{def:Nfree}, and do therefore not correspond to formulas.
This suggests to define a proof system directly on the graphs instead of the formulas, and use the modular decomposition tree instead of the formula tree. This idea (first briefly mentioned in  \cite[Section~5]{cohgraphs}) has been explored in \cite{AHS:LICS2020,AHS:LMCS22} and \cite{CDW:20} for undirected graphs and in \cite{AHS:FSCD22} for digraphs. It turns out that if we drop the cograph/dicograph condition, there is a much larger space of possible proof systems, that still waits to be explored.

\subsection{Open problems}

For a long time, it was believed that there was a canonical extension of $\MLLm$
with the connective~$\lseq$ that had both a deductive proof system ($\BV$) and
proof nets with a simple correctness criterion (pomset logic). Now that we have
refuted this, several questions arise:
\begin{itemize}
\item We might want to design truly well-behaved deductive proofs for pomset
  logic --- given the obstructions that we have seen in this paper, this looks
  challenging. Slavnov's sequent calculus is a start, but it is not clear for us
  whether a cut-elimination procedure can be defined directly on it without
  going through a translation into proof nets. And even without insisting on the
  proof system being deductive, the requirement of tractable proof checking
  rules out proof nets by themselves (Remark~\ref{rem:pomset-not-p}).
\item Our results might also be interpreted as suggesting that of the two
  logics, $\BV$ was the \enquote{right} one all along. Then it would be
  desirable to have a system of proof nets for $\BV$. Perhaps it suffices to
  extend the correctness criterion of pomset logic so that it excludes more
  pre-proofs. If that were the case, then the problem of
  \enquote{$\BV$-correctness} of pre-proof nets would be equivalent to the
  $\BV$-provability problem for balanced formulas. This also raises the question
  of the complexity of the latter: $\NP$-completeness would rule out $\coNP$
  critera of the sort \enquote{there does not exist some bad structure (e.g.\
    some kind of cycle) in the pre-proof net}. Alternatively, the right notion
  of proof net could involve not just the formula tree and axiom linking, but
  some extra structure too (maybe an order on the axiom linking?); there are
  some precedents for this in the theory of $\MLL$ proof nets with units (with
  many variants, recapitulated in~\cite[Table~1]{FreeStarAutonomous}).
\item More generally, now that we have two logics that (i) are built
  from the connectives $\lpar,\lseq,\ltens$, (ii) are conservative over
  $\MLLm$, and (iii) admit cut elimination, the question arises whether
  these are the only two or whether there is a hierarchy of such logics
  with increasing proof complexity.
\end{itemize}
During the research for this paper, another interesting question arose.
We conjecture the following generalization of the
construction of the formula in \Cref{sec:comparing}: given a balanced tautology
of classical logic, one can always \enquote{make the axiom links directed} in
some way (cf.\ Remark~\ref{rem:medial}) to get a provable formula in pomset
logic.

\paragraph{Acknowledgments}
We would like to thank Christian Retoré for instructive discussions.

\bibliographystyle{alphaurl}
\bibliography{bibliography}

\end{document}